\documentclass[11pt,fleqn]{article}
\usepackage[utf8]{inputenc}
\usepackage{fullpage}

\usepackage{amsthm, amssymb}
\usepackage[dvipsnames]{xcolor}
\usepackage[hidelinks]{hyperref}
\usepackage{todonotes}
\usepackage{qcircuit}
\usepackage{tikz}
\usepackage{tikzsymbols}
\usetikzlibrary{calc}

\usepackage{pgfplots}
\pgfplotsset{compat=1.16}
\usepackage{svg}
\usepackage{mathtools}
\usepackage{titling}
\usepackage{textcomp}
\usepackage{mdframed}
\usepackage{titletoc}
\titlecontents{section}
  [1.5em]                                 %
  {}                             %
  {\contentslabel{1.5em}}                 %
  {}                                      %
  {\titlerule*[0.5pc]{.}\contentspage}    %

\usepackage{libertine}
\usepackage{libertinust1math}

\usepackage{tikz}
\usetikzlibrary{shapes,backgrounds}

\usepackage{caption}
\captionsetup[figure]{width=0.9\textwidth, font=small}

\usepackage{complexity}

\newif\ifstoc
\stocfalse

\newif\ifarxiv
\arxivtrue

\newif\ifsubmission
\submissionfalse
\ifstoc\submissiontrue\fi
\ifarxiv\submissiontrue\fi

\ifstoc
\else
    \ifarxiv
    \linespread{1.1}
    \else
    \usepackage{setspace}
    \onehalfspacing
    \fi
\fi

\usepackage{graphicx, subfigure}
\usepackage{enumitem}
\usepackage{xparse}
\usepackage{physics}
\usepackage{color}
\usepackage{comment}

\newcommand{\x}{\mathbf{x}}
\renewcommand{\i}{\mathbf{i}}
\newcommand{\y}{\mathbf{y}}
\newcommand{\e}{\mathrm{e}}

\newcommand{\bb}{\mathbf{b}}

\newcommand{\CC}{\mathbb{C}}
\newcommand{\FF}{\mathbb{F}}
\newcommand{\NN}{\mathbb{N}}

\newcommand{\RR}{\mathbb{R}}
\newcommand{\ZZ}{\mathbb{Z}}

\newcommand{\Aa}{\mathcal{A}}
\newcommand{\Bb}{\mathcal{B}}

\newcommand{\Hh}{\mathcal{H}}
\newcommand{\Ll}{\mathcal{L}}

\newcommand{\Oo}{\mathcal{O}}

\renewcommand{\H}{\mathbf{H}}
\renewcommand{\G}{\mathbf{G}}

\newcommand{\Exp}{\mathop{\mathbb{E}\hspace{0.13em}}}

\newcommand{\inv}[1]{#1^{-1}}

\newcommand{\id}{\mathrm{id}}
\newcommand{\negl}{\mathrm{negl}}
\renewcommand{\poly}{\mathrm{poly}}

\newcommand{\diag}{\mathrm{diag}}

\renewcommand{\Pr}{\mathop{\mathbb{P}\hspace{0.05em}}}

\newcommand{\eps}{\varepsilon}
\renewcommand{\epsilon}{\varepsilon}

\newcommand{\st}{\text{ s.t. }}
\newcommand{\half}{\frac{1}{2}}

\newcommand{\defeq}{\mathrel{\overset{\makebox[0pt]{\mbox{\normalfont\tiny\sffamily def}}}{=}}}

\usepackage{centernot}

\usepackage{complexity}

\renewcommand{\exp}{\mathsf{exp}}

\newcommand{\bits}{\{0,1\}}

\newcommand{\fn}[3]{#1: #2 \rightarrow #3}

\newcommand{\qubit}{\CC^{2}}
\newcommand{\qubits}[1]{(\qubit)^{\otimes #1}}

\renewcommand{\hat}{\widehat}
\renewcommand{\tilde}{\widetilde}

\newcommand{\yes}{\mathsf{yes}}
\newcommand{\no}{\mathsf{no}}

\newcommand{\Eqref}[1]{\text{Eq.~}(\ref{#1})}
\renewcommand{\eqref}[1]{\text{eq.~}(\ref{#1})}

\renewcommand{\exp}{\mathrm{exp}}

\usepackage{todonotes}

\newcounter{commentcount}

\newenvironment{mathinlay}[1][]%
  {
  \begin{mdframed}
  [leftmargin=1em,rightmargin=1em,#1]\begingroup\small}%
  {\endgroup
  \end{mdframed}
  }

\newcommand{\reg}[1]{\mathsf{#1}} 
\allowdisplaybreaks 

\usepackage{tikz}
\usetikzlibrary{shapes,backgrounds}

\usepackage{cleveref}
\crefformat{section}{Section~\textsection#2#1#3} %
\crefformat{subsection}{Subsection~\textsection#2#1#3}
\crefformat{subsubsection}{Subsubsection~\textsection#2#1#3}

\newtheorem{theorem}{Theorem}[section]
\newtheorem*{theorem*}{Theorem}
\newtheorem{definition}[theorem]{Definition}
    \crefname{definition}{definition}{definitions}
\newtheorem{lemma}[theorem]{Lemma}
    \crefname{lemma}{lemma}{lemmas}

    \crefname{conjecture}{conjecture}{conjectures}
\newtheorem{corollary}[theorem]{Corollary}
    \crefname{corollary}{corollary}{corollaries}
\newtheorem*{corollary*}{Corollary}
\newtheorem{fact}[theorem]{Fact}
    \crefname{fact}{fact}{facts}
\newtheorem{remark}[theorem]{Remark}
    \crefname{remark}{remark}{remarks}
\newtheorem{claim}[theorem]{Claim}
    \crefname{claim}{claim}{claims}
\newtheorem*{claim*}{Claim}  %

\newenvironment{xalign}[1][]{
    \subequations
    \label{#1}
    \align
}
{
    \endalign
    \endsubequations
}

\makeatletter
\let\oldendproof\endproof
\renewcommand{\endproof}{\oldendproof\par\noindent\ignorespaces}
\makeatother

\makeatletter
\renewcommand{\maketitle}{\bgroup\setlength{\parindent}{0pt}
\begin{flushleft}
  \LARGE {\@title} %
  \vspace{4.5mm}
  
  \normalsize \@author \linebreak
\end{flushleft}\egroup
}
\makeatother

\title{Separating $\mathsf{QMA}$ from $\mathsf{QCMA}$ with a classical oracle}

\ifstoc
\author{Authorship omitted for STOC submission}
\else
\usepackage{authblk}
\author[1]{John Bostanci}
\author[2,3]{Jonas Haferkamp}
\author[4]{Chinmay Nirkhe}
\author[5,6]{Mark Zhandry}

\affil[1]{\small Columbia University, \textit{New York, NY, USA}}
\affil[2]{\small Saarland University, \textit{Saarbr\"ucken, Germany}}
\affil[3]{\small Harvard University, \textit{Cambridge, Mass., USA}}
\affil[4]{\small University of Washington, \textit{Seattle, Wash., USA}}
\affil[5]{\small Stanford University, \textit{Stanford, Calif., USA}}
\affil[6]{\small NTT Research, Inc, \textit{Sunnyvale, Calif., USA}}

\fi

\date{}

\newcommand{\vac}{\mathrm{vac}}

\begin{document}
\maketitle

\begin{abstract}
    We construct a classical oracle proving that, in a relativized setting, the set of languages decidable by an efficient quantum verifier with a quantum witness ($\QMA$) is strictly bigger than those decidable with access only to a classical witness ($\QCMA$). The separating classical oracle we construct is for a decision problem we coin \emph{spectral Forrelation} -- the oracle describes two subsets of the boolean hypercube, and the computational task is to decide if there exists a quantum state whose standard basis measurement distribution is well supported on one subset while its Fourier basis measurement distribution is well supported on the other subset. This is equivalent to estimating the spectral norm of a ``Forrelation'' matrix between two sets that are accessible through membership queries. \\

    Our lower bound derives from a simple observation that a query algorithm with a classical witness can be run multiple times to generate many samples from a distribution, while a quantum witness is a ``use once'' object. This observation allows us to reduce proving a $\QCMA$ lower bound to proving a sampling hardness result which does not simultaneously prove a $\QMA$ lower bound. To prove said sampling hardness result for $\QCMA$, we observe that quantum access to the oracle can be compressed by expressing the problem in terms of bosons -- a novel ``second quantization'' perspective on compressed oracle techniques, which may be of independent interest. Using this compressed perspective on the sampling problem, we prove the sampling hardness result, completing the proof.
\end{abstract}

\pagebreak

\setcounter{tocdepth}{1}
\tableofcontents
\newpage

\part{Introduction}
\label{part:introduction}
\label{sec:introduction}

The problem of finding a standard oracle separation between $\QMA$ (the class of problems that can be verified with a quantum computer and quantum witness) and $\QCMA$ (the class of problems that can be verified with a quantum computer and \emph{classical} witness) is a central open problem in the field of quantum query complexity and is the first question mentioned in Aaronson's list of open query complexity problems~\cite{aaronson2021open}.  The goal of separating $\QMA$ from $\QCMA$ is, in some sense, to understand the following question:
\begin{quote}
    \begin{center}
    \textit{
        Do quantum witnesses offer more power than classical witnesses?
        }
    \end{center}
\end{quote}

This question was first posed by Aharonov and Naveh~\cite{aharonov2002quantum}, and partially answered by the work of Aaronson and Kuperberg~\cite{aaronson2007quantum}, which provided a \emph{quantum} oracle separation between the two classes. Because $\P \subseteq \QCMA \subseteq \QMA \subseteq \PSPACE$, any unconditional separation of the two complexity classes would imply $\P \neq \PSPACE$ and seems unlikely without significantly stronger new tools.

However, the quantum oracle separation of~\cite{aaronson2007quantum} could be considered an unsatisfying oracle separation between $\QMA$ and $\QCMA$, as it avoids answering deeper questions about the power of quantum witnesses over classical ones. The separation constructs a unitary property testing problem for which a verifier must exactly know a Haar random state to solve the problem, essentially forcing that any classical witness for the problem must provide a full classical description of the Haar random state\footnote{The notion of ``knowing'' the state is meant to be more intuitive rather than formal due to the result of \cite{irani2021quantum}, where it is shown that having an oracle that solves the separating problem of \cite{aaronson2007quantum} does not allow one to synthesize the state specified by the oracle.}. The question of whether $\QMA$ equals $\QCMA$ necessitates more than a lower bound on the classical description complexity of a random quantum state; it requires proving that even the single bit identified by the $\QMA$ decision problem cannot be verified by a classical witness. It could be that if $\mathsf{QMA}$ does equal $\mathsf{QCMA}$, the $\mathsf{QCMA}$ verifier for a $\mathsf{QMA}$-complete problem would not receive a ``verbatim'' description of the quantum witness (such as a circuit preparing the quantum witness), but rather some other classical information derived (potentially inefficiently) from the instance or the original quantum witness, which can be used to answer the decision problem but not necessarily to reproduce the quantum witness.

By forcing the oracle to be classical, one hopes to identify more meaningful reasons why useful properties -- beyond the full description of a quantum witness -- should be hard to write down in a classical witness.  A final reason for studying the problem of a classical oracle separation is that the question lies in the rich field of quantum query complexity and has been linked to open questions in quantum cryptography, such as the existence of quantum money~\cite{lutomirski2011component,nehoran2024computational} and pseudorandomness against quantum adversaries~\cite{liu2024qma}.  One hopes that finding a classical oracle separation between $\mathsf{QMA}$ and $\mathsf{QCMA}$ will provide an improved understanding of the complexity theory underlying these cryptographic questions.

\section{Proof overview}
\begin{theorem}[Classical oracle separation between $\QMA$ and $\QCMA$]
\label{thm:main}
There exists a pair of oracles $\fn{S,U}{\bits^*}{\bits}$ and a language $\Ll^{S,U}$ such that $\Ll^{S,U} \in \QMA^{S,U} \setminus \QCMA^{S,U}$ -- i.e., there exists a polynomial-time quantum verifier taking as input quantum witnesses as input for deciding membership in $\Ll^{S,U}$, while no polynomial-time quantum verifier taking as input classical witnesses can decide membership in $\Ll^{S,U}$.
\end{theorem}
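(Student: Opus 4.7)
The plan is to instantiate the separating language $\Ll^{S,U}$ as the \emph{spectral Forrelation} problem hinted at in the abstract. I interpret the oracles $S,U$ as describing subsets $S, U \subseteq \bits^n$ via membership queries, and I define the promise: in a YES instance there is a state $|\psi\rangle$ with $\|\Pi_S |\psi\rangle\|^2 \ge 1 - \delta$ and $\|\Pi_U H^{\otimes n} |\psi\rangle\|^2 \ge 1 - \delta$, while in a NO instance every state has $\min\{\|\Pi_S|\psi\rangle\|^2, \|\Pi_U H^{\otimes n}|\psi\rangle\|^2\} \le 1/2$. Equivalently, the YES/NO cases correspond to whether $\|\Pi_U H^{\otimes n} \Pi_S\|_{\mathrm{op}}$ is close to $1$ or bounded away from $1$.

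For the $\QMA$ upper bound, the honest prover sends the top singular vector $|\psi\rangle$ of $\Pi_U H^{\otimes n} \Pi_S$. The verifier chooses a uniformly random bit $b \in \bits$; if $b=0$ it measures $|\psi\rangle$ in the computational basis and accepts iff the outcome lies in $S$, and if $b=1$ it applies $H^{\otimes n}$ first and accepts iff the outcome lies in $U$. Completeness is immediate from the promise, and for NO instances one of the two sub-tests rejects with probability $\ge 1/4$, giving a standard constant gap which can be amplified.

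For the $\QCMA$ lower bound, I would plant a hard distribution over YES instances in which $S$ and $U$ are random subsets of a carefully chosen size, correlated so that a ``planted'' state $|\psi\rangle$ exists but $S$ (or $U$) is pseudorandom to any efficient oracle algorithm. The key reduction, following the abstract, is the \emph{witness reusability} observation: a $\QCMA$ verifier $V^{S,U}(w)$ taking a classical witness $w$ may be invoked polynomially many times on independent ancilla; by inspecting the measurement statistics inside its accepting branches, one extracts $\poly(n)$ approximately independent samples from a distribution concentrated on $S$ (respectively, via intermediate Hadamards, on $U$). Thus ruling out $\QCMA$ reduces to ruling out efficient quantum \emph{sampling} from (a distribution close to) the uniform distribution on $S$, given only membership oracle access to $(S, U)$ together with an arbitrary classical advice string of length $\poly(n)$.

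The main obstacle is proving this sampling lower bound, and my plan is to use a compressed-oracle analysis reformulated in the ``second-quantized'' / bosonic language advertised in the abstract. Concretely, because the joint prior over $(S,U)$ is invariant under the permutation group acting on $\bits^n$, one can track the algorithm-plus-oracle state in a Fock space whose modes are indexed by points of the hypercube, replacing the combinatorially unwieldy database by an occupation-number state that records only \emph{how many} queries have targeted each stratum of the planting. In this picture the algorithm's progress toward producing a valid $S$-sample becomes a Fock-space overlap that I would control by a progress-measure / martingale argument, giving a bound of the form $\exp(-\Omega(n^c))$ per attempt; a union bound over all $\exp(\poly(n))$ classical advice strings then defeats $\QCMA$ and establishes \Cref{thm:main}. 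The delicate steps will be (i) choosing the planting distribution so that honest $\QMA$ completeness coexists with a strong pseudorandomness of $S$ given $U$, and (ii) verifying that the bosonic compression commutes with the algorithm's Fourier-basis queries to $U$, since these queries interact nontrivially with the standard-basis structure used to encode $S$.
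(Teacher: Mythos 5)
Your high-level blueprint matches the paper's (spectral Forrelation, a two-projector $\QMA$ verifier, witness reusability $\to$ a sampler, a bosonic compressed oracle for the sampling bound), but two of the steps you treat as routine are exactly where the argument lives, and as written they do not go through. First, the extraction of samples from a $\QCMA$ verifier: ``invoking $V^{S,U}(w)$ polynomially many times and inspecting the measurement statistics inside its accepting branches'' gives you neither \emph{distinct} points of $S$ (nothing prevents every run from revealing the same element, and $q$ bits of advice can already hard-code $\Omega(q/n)$ elements, so only distinctness beyond that is useful) nor a sampler that avoids querying $S$. The paper's mechanism is different: it restricts attention to \emph{strong} yes instances -- pairs $(S,U)$ such that $(\Delta,U)$ is a \emph{no} instance for every small $\Delta\subseteq S$ (\Cref{def:strong-instance}, \Cref{lem:strong-instance}) -- and runs a BBBV-style hybrid between $\mathcal{O}_S$ and $\mathcal{O}_\Delta$ where $\Delta$ is the set of points found so far (\Cref{lem:repeated-sampler}). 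Strongness is what forces the verifier to put query mass on $S\setminus\Delta$ in each round, which simultaneously guarantees a fresh sample and removes all $S$-queries, yielding a $U$-only sampler (\Cref{thm:formal-removal-of-S}). Your version also leaves the planted distribution unspecified at thresholds (completeness $1-\delta$) that the paper's construction cannot meet -- its yes instances sit near $59/100$, and the paper notes that near-perfect completeness is out of reach since $H^{\otimes n}\ket{S}$ has support everywhere -- and it omits the design constraint that $U$ must depend on the \emph{squared} Hadamard amplitudes $\gamma_y^{(S)}$ (otherwise the sign information makes the instance $\BQP$-solvable via state synthesis, and non-strong instances admit short classical witnesses).

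Second, the hardness side: a ``progress-measure / martingale argument giving $\exp(-\Omega(n^c))$ per attempt, then a union bound over advice'' is precisely the per-sample inductive strategy the paper argues cannot work. The $\QCMA$-derived lower bound is only $2^{-q}\cdot\Omega(t)^{-2v}$, so the contradicting upper bound must be smaller than a quasi-exponentially small quantity; conditioning on earlier guesses being correct post-selects on exponentially unlikely events, so any per-query or per-guess additive error compounds and the induction collapses to a trivial bound (this is discussed explicitly in \Cref{sec:concluding-remarks}). The paper instead proves a single global bound on $\Pi_{\mathrm{succ}}$ for all $v$ guesses at once, by showing the post-query state is (close to) a quasi-even condensate (\Cref{thm:low-collision-overlap}, via flat polynomial approximations of the exponential Kraus operators and the double-hopping operators) and then bounding $\hat n_{z_1}\cdots\hat n_{z_v}$ on that subspace (\Cref{thm:main-sampling-upper-bound}, \Cref{thm:big-upper-bound-thm}). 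The step you flag as ``verifying that the bosonic compression commutes with the Fourier-basis queries to $U$'' is not a verification but the bulk of the work, and your proposal contains no substitute for the quasi-even-condensate structure that makes the $v$-fold bound possible. (The final diagonalization from the per-length property-testing lower bound to a language in $\QMA^{S,U}\setminus\QCMA^{S,U}$ is also needed but is the routine part.)
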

\noindent This is the main result of this work. While we express the statement in terms of two oracles, as our construction is most natural to describe in terms of a quantum algorithm comparing two oracles, it is easy enough to convert the statement into that of a single oracle.

\subsection{The challenges of an oracle separation}

To prove this classical oracle separation between $\QMA$ and $\QCMA$, it suffices to construct a property testing problem about classical oracles that can be decided with quantum witnesses but cannot be decided with classical witnesses. The transformation from the property testing problem to an oracle separation between complexity classes is a diagonalization argument that takes some work but is fairly standard.

The central challenge in separating $\QMA$ from $\QCMA$ is illustrated in the following thought experiment: Consider measuring a quantum witness $\ket{\psi}$ for a $\QMA$ problem to generate a classical witness. If the resulting classical object is accepted by the $\QMA$ verifier, then the problem must have been in $\QCMA$.
Therefore, for a problem to be in $\QMA \setminus \QCMA$, it follows that for any quantum witness $\ket{\psi}$ and any computational basis $\ket{x}$, $\abs{\braket{x}{\psi}} \leq \negl(n)$, as otherwise $x$ would serve as a good classical witness for the problem\footnote{This can be extended to show that the measurement distribution for any quantum witness for the problem in \emph{every computationally efficient basis} cannot have more than $\negl(n)$ support on any basis vector.}. Here, the phrase $\negl(n)$ is used to refer to any function that is smaller than every function $1/\poly(n)$. Therefore, assuming $\QMA \neq \QCMA$, the $\QMA$ verification procedure must certify (at least implicitly) that the quantum witness is a superposition of a super-polynomial number of basis vectors. 

Most techniques for verifying such a complex superposition require highly structured oracles, which presents an inherent challenge for proving a $\QCMA$ lower bound. This is because most techniques we currently know of for proving lower bounds against quantum query algorithms are not amenable to highly structured oracles. Additionally, a classical witness would naturally be treated as a form of advice about the oracle, and most quantum query complexity techniques are not very good at distinguishing between quantum advice and classical advice.  If a typical technique succeeded in proving that a language is outside of $\mathsf{QCMA}$, but it can not distinguish between quantum and classical advice, it would likely show that the language is outside of $\mathsf{QMA}$ as well, failing to give a separation.  

Our starting point is the following simple observation first made by \cite{nehoran2024computational}  and later by Zhandry~\cite{zhandry2024toward}, which separates the functionality of quantum and classical witnesses: if a highly-entangled superposition of computational basis states $\ket{\phi}$ is efficiently generated from a classical witness $w$, then we can generate a polynomial number of measurement samples from $\ket{\phi}$, whereas if $\ket{\phi}$ is generated from a quantum witness $\ket{\psi}$, then we expect that only one measurement sample can be extracted. This is because the classical witness can be efficiently copied while the quantum witness cannot be. Somewhat paradoxically, this demonstrates a particular way in which a classical witness is \emph{more} powerful than a quantum witness. It is precisely this boost in power that we show is too good to be true, thereby proving the impossibility of an efficient verification of a classical witness. This observation, however, does not yet indicate how to actually design the oracles used in the separation.

\subsection{Spectral Forrelation}
\label{sec:spectral-forrelation-intro}

Zhandry~\cite{zhandry2024toward} gives a candidate oracle for a separation and an initial analysis, but ultimately was unable to prove the separation. Our oracles are inspired by Zhandry's, though our approach to analyzing them is very different. 
Zhandry~\cite{zhandry2024toward} defines a variant of the problem that we will call \emph{spectral Forrelation}\footnote{The name comes from the notion of Forrelation, defined by Aaronson~\cite{aaronson2010bqp}. He defined the ``Forrelation'' between two functions $f,g:\bits^n \to \{\pm1\}$ as the quantity $\bra{+}^{\otimes n} \diag(f) \cdot H^{\otimes n} \cdot \diag(g) \ket{+}^{\otimes n}.$
When $f,g$ are described by oracles, Aaronson~\cite{aaronson2010bqp,aaronson2015forrelation} gives a simple $\BQP$ algorithm for deciding if the absolute value of the Forrelations is $\geq 3/5$ or $\leq 1/100$: simply prepare $|+\rangle^{\otimes n}$, apply the reflection $\diag(g)$, perform the Hadamard transform $H^{\otimes n}$, and apply the reflection $\diag(f)$, and measure to see if the output is $|+\rangle^{\otimes n}$. On the other hand, Aaronson shows that any algorithm making randomized classical queries to $S$ and $U$ needs an exponential number of queries to decide Forrelation.}, which is plausibly in $\QMA\setminus\QCMA$: we say a pair of subsets $S,U\subseteq \{0,1\}^n$ are $\alpha$-spectrally Forrelated\footnotetext{Note that Zhandry~\cite{zhandry2024toward} actually used different notation and used the QFT mod $N=2^n$ in place of the Hadamard transform, but the underlying idea of two projectors in anti-commuting bases remains the same.} if
\begin{xalign}
    \alpha &= \norm{\Pi_U\cdot H^{\otimes n}\cdot \Pi_S}_\mathrm{op}^2 = \max_{\norm{\ket{\psi}} = 1} \norm{\Pi_U\cdot H^{\otimes n}\cdot \Pi_S \ket{\psi}}^2 \\
    &\text{where}~\Pi_U = \sum_{x \in U} \ketbra{x},~ \Pi_S = \sum_{x \in S} \ketbra{x}\,.
\end{xalign}

\begin{figure}[h]
    \centering
    \includesvg[width=0.5\linewidth]{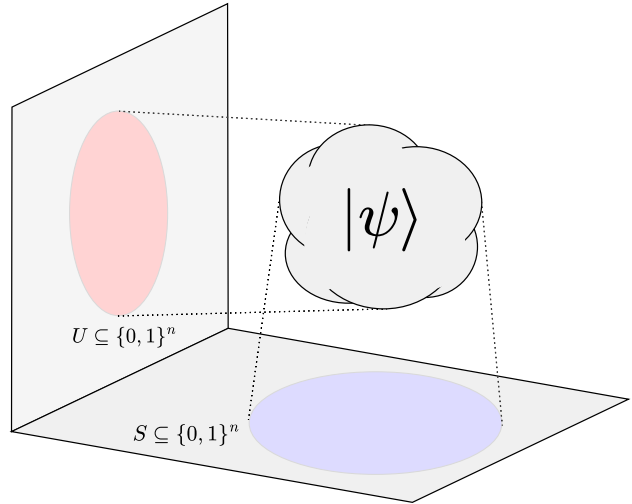}
    \caption{\textit{A cartoon of spectral Forrelation.
    The subsets $S$ and $U$ occupy support in the standard/position and Hadamard/momentum bases, respectively. The sets $S$ and $U$ are $\ge \alpha$-spectrally Forrelated if there exists a state $\ket{\psi}$ such that $\norm{\Pi_U \cdot H^{\otimes n} \cdot \Pi_S \ket{\psi}}^2 \ge \alpha$. Equivalently, there exists a state $\ket{\psi}$ for which the induced classical distributions by measuring in the standard and Hadamard bases are well supported on $S$ and $U$, respectively.}}
\end{figure}

Spectral Forrelation can also be expressed as a property of functions $S,U: \bits^n \to \bits$, where the set identified with a function is the pre-image of $1$. A decision version of the spectral Forrelation problem can be defined for two parameters $\alpha > \beta$. The task is to decide if the pair $(S,U)$ are at least $\geq \alpha$-spectrally Forrelated (yes instance) or at most $\leq \beta$-spectrally Forrelated (no instance), promised that one of the cases holds.
For $\alpha - \beta \geq 1/\poly(n)$, yes instances of spectral Forrelation can be verified with a $n$-qubit quantum witness, namely the top singular vector $\ket{\psi}$ of $\Pi_U \cdot H^{\otimes n}\cdot \Pi_S$ (similar to plain Forrelation~\cite{aaronson2010bqp}, but using $\ket{\psi}$ instead of $|+\rangle^{\otimes n}$). On the other hand, it can be shown that without $|\psi\rangle$, Spectral Forrelation is hard even for quantum query algorithms. This puts Spectral Forrelation in $\QMA\setminus\BQP$. The question remains: is spectral Forrelation outside $\QCMA$?

\subsection{$\QCMA$ algorithms imply good one oracle samplers}
Let us assume, for contradiction, that there exists a $\QCMA$ query algorithm $\mathcal{A}$ for spectral Forrelation requiring a $q = q(n)$ sized classical witness and making $t = t(n)$ oracle queries, where $q(n), t(n)$ are polynomials. A simple transformation can convert $\mathcal{A}$ into an algorithm which separately makes $\le t$ queries to $S$ and $\le t$ queries to $U$. 

To prove the impossibility of a $\QCMA$ algorithm, we restrict our attention to a subset of all spectrally Forrelated pairs (i.e., yes instances): First, we require pairs $(S,U)$ with the property that the oracle $S$ is \emph{sparse}, consisting of approximately $\ell$ non-zero entries where $\ell = 2^{cn}$ for some small constant $c$.  Secondly, we require that $(\Delta, U)$ is a no instance for all subsets $\Delta \subseteq S$ with $\abs{\Delta} \ll \ell$.  We describe pairs $(S, U)$ satisfying this second property as being \emph{strong}, and we will sketch in~\Cref{sec:intro-sampler-too-good-to-be-true} a technique for generating a distribution over sparse and strong yes instances.

Let $w = w(S,U)$, be a witness certifying that $(S,U)$ is a yes instance. In other words, $\mathcal{A}^{(S,U)}(w)$ accepts with high probability\footnote{Note that the perfect completeness of $\QCMA$ algorithms with access to classical oracles was proven by Jordan \emph{et. al.}~\cite{qcma-perfect-completeness} but we do not need that result here.}. Since we take $(S, U)$ to be a strong yes instance and thus have that for any small subset $\Delta$, $(\Delta, U)$ is a no instance of spectral Forrelation, it follows that $\mathcal{A}^{(\Delta, U)}(w)$ accepts with low probability. 

Let us express the algorithm $\mathcal{A}^{(S,U)}(w)$ as a sequence of unitaries interspersed with queries to the $S$ oracle. Assuming that the queries to the $U$ oracle are included in the unitaries $\{V_j\}$, the state of the algorithm $\mathcal{A}$ immediately before its final measurement is given by:
\begin{equation}
    \text{pre-measurement state of}~\mathcal{A}^{(S,U)}(w) = V_t \mathcal{O}_S V_{t-1} \mathcal{O}_S \ldots \mathcal{O}_S V_0 \ket{w, 0}.
\end{equation}
Where we take $\mathcal{O}_S$ to be the phase oracle for the function that checks membership in $S$.  Then the state $V_j V_{j-1} \ldots V_0 \ket{w, 0}$ is equivalent to running the algorithm $\mathcal{A}^{(\emptyset, U)}(w)$ until immediately prior to the $j$-th oracle query. Since $(S,U)$ is a yes instance, $(\emptyset, U)$ is a no instance, and $\mathcal{A}$ must distinguish the two cases, it must be that $\mathcal{A}$ is actually putting significant query weight on points where $S$ and $\emptyset$ differ, namely points in $S$. As a consequence, it follows from a hybrid argument similar to that of Bennett~\emph{et. al.}~\cite{BBBV}, that picking a uniformly random index $j \in [t]$ and measuring the query register of $V_j V_{j-1} \ldots V_0 \ket{w, 0}$ will yield a sample $x_1$ from $S$ with probability $\Omega(t^{-2})$. In other words, there exists an algorithm which only queries $U$ and takes a witness $w$, but produces a sample $x_1$ from $S$ with noticeable probability. This is not surprising, as information about $x_1$ could be encoded in $w$. 

\begin{figure}[h]
\centering
\begin{tikzpicture}[>=latex,scale=1.05,rounded corners=2pt]

\def\cubesize{0.6}
\def\depth{0.15}

\newcommand{\cubeU}[1]{
  \coordinate (A) at (#1,0);
  \coordinate (B) at ($(A)+(\cubesize,0)$);
  \coordinate (C) at ($(A)+(0,\cubesize)$);
  \coordinate (D) at ($(B)+(0,\cubesize)$);
  \coordinate (E) at ($(A)+(\depth,\depth)$);
  \coordinate (F) at ($(B)+(\depth,\depth)$);
  \coordinate (G) at ($(C)+(\depth,\depth)$);
  \coordinate (H) at ($(D)+(\depth,\depth)$);

  \coordinate (M) at ($(A)+(0.375,0.375)$);

  \draw[thick,fill=white] (A)--(B)--(D)--(C)--cycle;
  \draw[thick,fill=gray!10] (C)--(D)--(H)--(G)--cycle;
  \draw[thick,fill=gray!5] (B)--(D)--(H)--(F)--cycle;
  \draw[thick] (A)--(E)--(F)--(B);
  \draw[thick] (E)--(G);
  \draw[thick] (G)--(H);

  \node at (M) {\footnotesize \textit{U}};
}
\fill[blue!5] (-4.8,-2.5) rectangle (-0.1,2.5);
\node at (-2.45,2.7) {\small {Quantum witness}};

\node[draw,rounded corners,fill=white,minimum width=0.6cm] (psi) at (-3.5,0) {$|\psi\rangle$};
\begin{scope}[shift={(-2.55,-0.33)}]\cubeU{0}\end{scope}
\draw[->,thick] (psi.east) -- (-2.6,0);
\draw[->,thick] (-1.7,0) -- (-0.8,0);
\node[above left=-1pt and -2pt] at (-0.9,0) { $x$};
\fill[green!5] (0.3,-2.5) rectangle (5.2,2.5);
\node at (2.75,2.7) {\small {Classical witness}};

\node[draw,rounded corners,fill=white,minimum width=0.6cm] (w) at (0.8,1.62) {$w$};

\def\xC{2.3}
\def\xOut{3.9}
\def\yA{1.3}
\def\yB{-0.1}
\def\yC{-1.5}

\begin{scope}[shift={(\xC,\yA)}]\cubeU{0}\end{scope}
\begin{scope}[shift={(\xC,\yB)}]\cubeU{0}\end{scope}
\begin{scope}[shift={(\xC,\yC)}]\cubeU{0}\end{scope}

\foreach \yy in {\yA,\yB,\yC}{
  \draw[->,thick] (w.east) -- (1.6,\yy+0.32) -- (\xC-0.15,\yy+0.32);
}

\foreach \y/\lbl in {\yA/x_1,\yB/x_2,\yC/x_3}{
  \draw[->,thick] (\xC+\cubesize+0.35,\y+0.32) -- (\xOut,\y+0.32);
  \node[above left=-1pt and -3pt] at (\xOut-0.05,\y+0.32) { $\lbl$};
}

\draw[->,thick]
  (\xC+\cubesize+0.35,\yA+0.32)
    .. controls (4.6,1.0) and (3.4,0.9)
    .. (\xC+0.25,\yB+0.75);
\draw[->,thick]
  (\xC+\cubesize+0.35,\yB+0.32)
    .. controls (4.6,-0.3) and (3.4,-0.4)
    .. (\xC+0.25,\yC+0.757);

\node at (\xC+0.3,-2.0) {$\vdots$};

\end{tikzpicture} \caption{\textit{Adaptive sampling from a classical witness.
A quantum witness $|\psi\rangle$ yields a single sample $x$ upon measurement with an algorithm accessing only $U$. Whereas, a classical witness $w$ can be reused across successive sampling rounds with an adaptive sampler accessing only $U$ and prior samples to generate multiple \emph{distinct} samples.}}
\label{fig:adaptive_sampling}
\end{figure}

A more surprising fact is that a small modification of this sampler can be used to generate many unique samples\footnote{Note that, as written, there is no guarantee that repeating this experiment with the empty oracle would generate a different sample.}. 
Let us condition on the event that $x_1 \in S$. Now consider picking a uniformly random index $j \in [t]$ but this time measuring the query register of $V_j \mathcal{O}_{\Delta} V_{j-1} \mathcal{O}_{\Delta} \ldots \mathcal{O}_{\Delta} V_0 \ket{w, 0}$ where $\mathcal{O}_{\Delta} = \id - 2 \ketbra{x_1}{x_1}$. This is equivalent to running the algorithm $\mathcal{A}^{\{x_1\}, U}(w)$ until immediately prior to the $j$-th oracle query. Since $(\{x_1\}, U)$ is also a no instance, the hybrid argument between $(S,U)$ and $(\{x_1\}, U)$ yields that the generated sample $x_2$ satisfies
\begin{equation}
    \Pr \left[ x_2 \in S \setminus \{x_1\} ~\middle \vert x_1 \in S \right] \geq \Omega(t^{-2}).
\end{equation}
In other words, conditioned on $x_1 \in S$, this procedure generates a unique second point $x_2 \in S$.
In general, we can repeat this process $v$ times for $v \ll \ell$, each time generating a novel sample from $S$ conditioned on the previous samples being from $S$. Therefore, there exists a sampler that only queries $U$ and produces $v$ unique samples from $S$ with probability $\geq (\Omega(t^{-2}))^v$ when provided the correct witness $w$.  See \Cref{fig:adaptive_sampling} for a cartoon illustration of this iterated sampler. For $v \geq \Omega(q/n)$, where $q$ is the length of the witness, this yields unexpected consequences, as it means that the sampling algorithm cannot simply read the names of samples from the witness; it must also continue to find more samples from its access to $U$. We can make this explicit by constructing a new sampler which guesses the witness $w$ initially at a $2^{-q}$ cost in success probability, yielding the following theorem.

\begin{theorem}[Good samplers from $\QCMA$ algorithms (informal)] \label{thm:informal-removal-of-S}
Assume there exists a classical witness query algorithm $\mathcal{A}$ for spectral Forrelation requiring a $q = q(n)$ sized classical witness and making $t = t(n)$ oracle queries. Let $(S,U)$ be a \emph{strong} yes instance of spectral Forrelation. For all $v = v(n)$ polynomial in $n$, there exists a query algorithm $\mathsf{CumulativeSampler}$ such that $\mathsf{CumulativeSampler}^U$ makes no queries to $S$, $vt$ queries to $U$, and produces $v$ unique samples from $S$ with probability at least
    $\ge 2^{-q} \cdot \Omega(t)^{-2v}$.
\end{theorem}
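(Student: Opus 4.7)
The plan is to formalize the iterative-sampling argument outlined in the overview. Denote the assumed $\QCMA$ verifier by $\mathcal{A}$, taking classical witness $w\in\{0,1\}^q$ and making $t$ queries to $S$ and $t$ queries to $U$, with the $U$-queries absorbed into unitaries $\{V_j\}_{j=0}^t$. Fix a strong yes instance $(S,U)$ with valid witness $w$: $\mathcal{A}^{(S,U)}(w)$ accepts with probability $\geq 2/3$, while $\mathcal{A}^{(\Delta,U)}(w)$ accepts with probability $\leq 1/3$ for every $\Delta \subseteq S$ with $|\Delta| \ll \ell$. Write $\ket{\phi_j^\Delta} = V_{j-1}\mathcal{O}_\Delta V_{j-2}\cdots\mathcal{O}_\Delta V_0\ket{w,0}$ for the state of $\mathcal{A}^{(\Delta,U)}(w)$ just before its $j$-th query to the first oracle, and let $\Pi_{S\setminus\Delta}$ act on the query register.

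The first step is to invoke the BBBV hybrid lemma~\cite{BBBV}: since $\mathcal{O}_S$ and $\mathcal{O}_\Delta$ agree outside $S\setminus\Delta$, the trace distance between the final states of $\mathcal{A}^{(S,U)}(w)$ and $\mathcal{A}^{(\Delta,U)}(w)$ is at most $O\!\bigl(\sqrt{t \sum_{j=1}^{t}\|\Pi_{S\setminus\Delta}\ket{\phi_j^\Delta}\|^2}\bigr)$. Because the two acceptance probabilities differ by $\Omega(1)$, rearranging gives
\[
\tfrac{1}{t}\sum_{j=1}^t \bigl\|\Pi_{S\setminus\Delta}\ket{\phi_j^\Delta}\bigr\|^2 \;\geq\; \Omega(t^{-2}).
\]
Equivalently, the sub-routine ``pick $j\in[t]$ uniformly, prepare $\ket{\phi_j^\Delta}$, and measure the query register'' outputs an element of $S\setminus\Delta$ with probability $\geq \Omega(t^{-2})$, while making only $t$ queries to $U$ and zero queries to $S$ (the oracle $\mathcal{O}_\Delta$ is implemented from classical memory).

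Second, I would chain $v$ copies of this sub-routine to form $\mathsf{CumulativeSampler}$. Initialize $\Delta_0 = \emptyset$, and for $i=1,\dots,v$ run the sub-routine with $\Delta = \Delta_{i-1}$ to obtain $x_i$, then set $\Delta_i = \Delta_{i-1}\cup\{x_i\}$. Since $|\Delta_{i-1}| \leq v \ll \ell$ at every step, the strong-yes assumption ensures $(\Delta_{i-1},U)$ remains a no instance and the BBBV estimate above applies verbatim. The iterations are independent executions on fixed classical oracles with a fixed witness, so by the chain rule
\[
\Pr\!\bigl[\,x_i \in S\setminus\Delta_{i-1} \text{ for all } i=1,\dots,v\,\bigr] \;\geq\; \Omega(t^{-2})^v \;=\; \Omega(t)^{-2v},
\]
and conditioned on this event the $x_i$'s are $v$ distinct elements of $S$. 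Total queries are $vt$ to $U$ and $0$ to $S$.

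Finally, to eliminate the dependence on $w$, I replace it with a uniformly random $q$-bit string; the correct $w$ is guessed with probability $2^{-q}$, yielding the $2^{-q}\cdot\Omega(t)^{-2v}$ bound in the theorem statement. The main obstacle is the careful application of the BBBV hybrid inequality to each iteration with the \emph{simulated} oracle $\mathcal{O}_{\Delta_{i-1}}$ in place of $\mathcal{O}_S$: one must ensure that the $\Omega(t^{-2})$ sampling bound holds for every $\Delta_{i-1}$ of size $<v$ simultaneously (using the strong-yes property, which is designed precisely to give no-instance behavior for all such $\Delta$), and that conditioning on the previous successes does not interfere with the bound at the next step. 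Because each iteration is a fresh, independent execution of $\mathcal{A}$ with fixed inputs, the bound is unconditional for each $\Delta_{i-1}$, so the chain rule combines cleanly without measurement-conditioning subtleties.
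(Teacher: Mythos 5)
Your proposal matches the paper's own proof essentially step for step: the paper's Lemma~\ref{lem:repeated-sampler} is exactly your BBBV-style hybrid argument showing each run of the single-round sampler (using the simulated oracle $\mathcal{O}_{\Delta}$ and only $t$ queries to $U$) outputs a point of $S\setminus\Delta$ with probability $\Omega(t^{-2})$, and Theorem~\ref{thm:formal-removal-of-S} chains $v$ such runs with the strong-yes property guaranteeing every $(\Delta_{i-1},U)$ is a no instance, then guesses the witness at a $2^{-q}$ cost. The argument is correct and no further comparison is needed.
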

Our primary goal then is to show that the sampler promised by~\Cref{thm:informal-removal-of-S} is too good to be true, implying that the assumed classical witness query algorithm cannot exist. A first intuition as to why this sampler should not exist is to consider a query algorithm $\widetilde{\mathsf{CumulativeSampler}}$ that only makes $vt$ queries to $S$ and produces $v$ unique samples from $S$. It would be natural to expect that access directly to $S$ (as opposed to some spectrally Forrelated $U$) should only be more useful for outputting samples from $S$. Yet, by the result of Hamoudi and Magniez~\cite{hamoudi2023quantum}, we know an upper bound on the success probability of $\widetilde{\mathsf{CumulativeSampler}}$ is at most $O(t^2 \ell/2^n)^v$. If we compare this upper bound to the supposed $\mathsf{CumulativeSampler}$ guaranteed by Theorem~\ref{thm:informal-removal-of-S}: for $t,\ell$ small relative to $2^n$ and $v\geq \Omega(q/n)$, we have $O(t^2 \ell/2^n)^v\ll2^{-q} \cdot \Omega(t)^{-2v}$. Therefore, if the sampler $\mathsf{CumulativeSampler}$ from~\Cref{thm:informal-removal-of-S} were to actually exist, it would mean that quantum access to any set $U$ spectrally Forrelated with $S$ yields a significantly better sampler than quantum access to the set $S$ itself! This is the first indication that we should be able to prove a $\QMA$ versus $\QCMA$ oracle separation using this insight about samplers. Of course, this is just intuition; we will actually need to prove that query access to $U$ does not help too much in producing points in $S$.

Let us emphasize that we cannot derive an analogous theorem from a quantum witness query algorithm. While we can run the query algorithm with a random witness by using $\id/2^q$ as a proxy for the quantum witness, we do not know how to construct a sampling algorithm that continuously produces samples. The ${\mathsf{CumulativeSampler}}$ in~\Cref{thm:informal-removal-of-S} relies on being able to copy the \emph{same} witness from one iteration to the next, which is impossible for quantum witnesses. Therefore, proving that the result of~\Cref{thm:informal-removal-of-S} is too good to be true does not yield a lower bound for quantum witness query algorithms. This is the fundamental reason that our $\QCMA$-lower bounding technique can separate it from $\QMA$.

\begin{remark}Note that Zhandry~\cite{zhandry2024toward} utilized a similar approach of turning a $\QCMA$ verifier into a sampler, except instead of removing oracle queries to $S$, his approach is to remove oracle queries to $U$ while keeping queries to $S$. He removes queries to $U$ under a general conjecture about the quantum indistinguishability of two oracle distributions whose $k$-wise marginals are close in relative error. Unfortunately, this general conjecture turns out to be false: (plain) Forrelation gives oracles whose $k$-wise marginals are close in relative error to random oracles (as proved by Aaronson~\cite{aaronson2015forrelation}), but the Forrelation algorithm gives a distinguisher\footnote{We thank Uma Girish for pointing this out to us in the initial phases of this work.}. This refutes the general conjecture of Zhandry, but not necessarily the concrete application to removing the oracle $U$. Nevertheless, we opt for a different approach which sidesteps this issue by removing queries to $S$ before analyzing $U$.
\end{remark}

\subsection{A family of strong yes instances}
\label{sec:intro-sampler-too-good-to-be-true}

One method of proving that~\Cref{thm:informal-removal-of-S} is too good to be true is to construct a distribution over strong yes instances and to argue that no sampling algorithm could succeed with the probability guaranteed by~\Cref{thm:informal-removal-of-S} with respect to this distribution. The introduction of strong yes instances is fundamental for two reasons. The first is type-checking; in order to argue about the behavior of the sampler, we need both yes and no instances of the original problem. By considering strong yes instances, we are implicitly considering no instances. Second, being a strong yes instance is a bare-minimum requirement for separating $\QMA$ from $\QCMA$, as yes instances that are not strong have short classical witnesses: suppose $(\Delta, U)$ is a yes-instance for some $\Delta \subset S$ such that $\abs{\Delta} = \poly(n)$, and consider the top eigenvector $\ket{\Psi_\Delta}$ of $\Pi_\Delta \cdot H^{\otimes n} \cdot \Pi_U \cdot H^{\otimes n} \cdot \Pi_\Delta$. Since $\Delta\subset S$, we also have that $\bra{\Psi_\Delta}\Pi_S \cdot H^{\otimes n}\cdot \Pi_U\cdot H^{\otimes n}\cdot \Pi_S\ket{\Psi_\Delta}$ is large. In particular, this means that $\ket{\Psi_\Delta}$ is also a good witness for the instance $(S,U)$. However, $\ket{\Psi_\Delta}$ can be classically described using $O(\abs{\Delta} \cdot \poly(n))$ bits (since its support is limited to $\Delta$, and therefore there are only $\abs{\Delta}$ many amplitudes required to describe it). Therefore, any hope of proving a separation between $\QMA$ and $\QCMA$ relies on studying the behavior of strong yes instances.

We now describe how to sample from a distribution over strong yes instances. Recall that we pick $\ell = 2^{cn}$ for some small constant $c$.  We first construct a (multi)set $S = \{s_1, \ldots, s_\ell\}$ by uniformly randomly sampling $\ell$ elements of $\bits^n$ with replacement. Observe that with all but $1 - \Omega(\ell^2/2^n)$ probability, the elements of $S$ will be distinct. We then construct a distribution over sets $U$ which, with high probability, is spectrally Forrelated with $S$, with $\ket S$ -- the uniform superposition over $S$ -- serving as the witness state. Concretely, we first compute the terms for $y \in \bits^n$,
\begin{equation} \label{eq:def1-of-gamma}
    \gamma_y^{(S)} \defeq \frac{1}{\ell} \sum_{i,j \in [\ell]} (-1)^{y \cdot (s_i \oplus s_j)}.
\end{equation}
Observe that $\gamma_y^{(S)}$ equals $2^n$ times the square of the amplitude $H^{\otimes n}\ket{S}$ places on $y$. We construct the set $U$ by adding each $y \in \bits^n $ to $U$ with independent probability $1 - \frac{1}{2}\e^{-\kappa\gamma_y^{(S)}}$. Here $\kappa$ is a small constant, say $1/10$. 

\begin{figure}[h]
 \centering

\begin{tikzpicture}[>=latex,scale=1.1]

\fill[blue!5] (-2,-1.2) rectangle (0.8,1.2);
\node at (-0.6,1.55) {\small set $S$ of size $\ell$};

\foreach \x/\y in {-1.6/0.3,-1.1/0.6,-0.8/-0.7,-0.4/0.1,-1.4/-0.4,
                   -1.2/0.9,-0.7/0.7} {
  \fill[blue!70] (\x,\y) circle (2.3pt);
}
\node[blue!70] at (-1.2,-1.0) {$S$};

\node at (3.8,1.55) {\small density of $H^{\otimes n}\ket{S}$};

\pgfplotsset{
  colormap={whitered}{
    color(0cm)=(white)
    color(0.8cm)=(red!15)
    color(1.6cm)=(red!35)
    color(2.4cm)=(red!55)
    color(3.2cm)=(red!75)
    color(4.0cm)=(red!85)
  }
}

\begin{scope}[shift={(2.4,-1.2)}]
\begin{axis}[
    hide axis,
    view={0}{90},
    width=2.8cm,
    height=2.4cm,
    scale only axis,
    colormap name=whitered,
    colorbar=false,
    samples=30, samples y=30,
    domain=0:2.8,
    y domain=0:2.4,
    zmin=0, zmax=1.8
]
\addplot3 [surf, shader=flat]
  {exp(-2.5*((x-0.6)^2+(y-0.6)^2))
   +0.7*exp(-3.5*((x-2.2)^2+(y-2.0)^2))
   +0.8*exp(-4.5*((x-2.3)^2+(y-0.5)^2))
   +0.5*exp(-3.8*((x-1.2)^2+(y-1.6)^2))};
\end{axis}
\end{scope}

\fill[red!5] (6.8,-1.2) rectangle (9.6,1.2);
\node at (8.2,1.55) {\small sample of set $U$};

\foreach \x/\y in {7.3/-0.6,7.3/-0.3,7.2/-0.7,7.4/-0.5,7.4/-0.5,7.3/-0.7,7.4/-0.2,7.8/0.25,9.0/1.0,9.1/0.9,8.8/0.7,8.4/-0.5,8.9/0.4,9.1/-0.7,9.2/-0.8,9.1/-0.7,8.9/-0.4,8.9/-0.6,8.8/-0.3} {
  \fill[red!70] (\x,\y) circle (2.3pt);
}
\node[red!70] at (8.8,-1.0) {$U$};

\draw[->,thick] (0.8,0) -- (2.4,0)
  node[midway,above]{\small $H^{\otimes n}$};

\draw[->,thick] (5.2,0) -- (6.8,0)
  node[midway,above]{\small sampling};

\end{tikzpicture} 
\caption{\textit{A depiction of how the pair $(S,U)$ is sampled: (a) First, the multiset $S$ is first sampled uniformly randomly; (b) second, for each $y \in \bits^n$, a parameter $\gamma_y^{(S)}$ is calculated; (c) third, a set $U$ is sampled by including $y$ with independent probability $1 - \half \e^{-\kappa \gamma_y^{(S)}}$.}}
\end{figure}
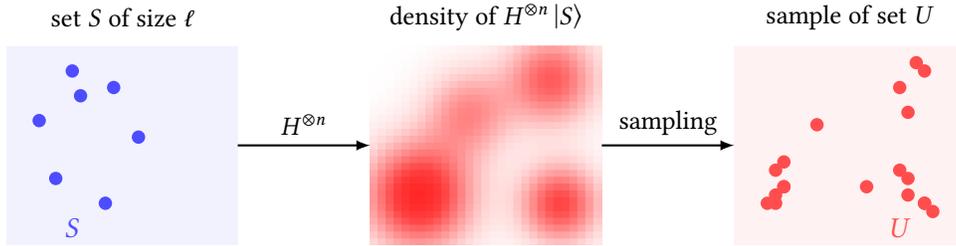

\begin{remark}Zhandry~\cite{zhandry2024toward} constructs $(S,U)$ in a similar manner, but used a Haar-random state on the support of $|S\rangle$ as the witness state, and used a slightly different probability distribution for the sampling of $U$.
\end{remark}

We prove that a random pair $(S,U)$ sampled via this distribution, which we call $\mathsf{Strong}$, is a strong yes instance with overwhelming probability. The proof of this construction is a somewhat involved medley of concentration inequalities and polynomial approximations. An intuition for why this procedure yields strong yes instances with high probability is that we expect the matrix $M=\Pi_S \cdot H^{\otimes n} \cdot \Pi_U \cdot H^{\otimes n} \cdot \Pi_S$ to concentrate tightly around $a\ketbra{S} + b\cdot \id_S$ for constants $a$ and $b$, where $\id_S$ is the identity matrix restricted to the coordinates of $S$. The concentration in the last statement holds even only with respect to the randomness involved in sampling $U$ from $S$. We expect such a concentration since $H^{\otimes n}|S\rangle$ has high amplitude exactly on elements that are more likely to be included in $U$, while for states $|\psi\rangle$ supported on $S$ but orthogonal to $|S\rangle$, we expect the amplitude of $H^{\otimes n}|\psi\rangle$ to be distributed largely independently of the amplitudes in $U$. 

Due to this concentration, with high probability $\ket{S}$ is a quantum witness that $(S,U)$ are $\approx (a+b)$-spectrally Forrelated. Moreover, for any subset $\Delta\subseteq S$, the restriction of $a\ketbra{S} + b\cdot \id_S$ to $\Delta$ is just $\frac{a\abs{\Delta}}{\ell}\ketbra{\Delta}+b\cdot\id_\Delta$, whose top eigenvalue is just $(b + a\abs{\Delta}/\ell)\ll (a+b)$. By choosing our thresholds for yes and no instances appropriately, we have a large family of strong yes instances.

In our actual proof, we do not require proving tight concentration bounds for the entries of the matrix $M$, though we expect that such concentration bounds do hold. Nevertheless, guided by this intuition, we are able to bound the top eigenvalues of $M$ and its restrictions to small subsets $\Delta$.

More specifically, we prove that with probability $1 - 2^{-\Omega(n)}$, a pair $(S,U)$ sampled according to said procedure is a strong yes instance. Combined with the ${\mathsf{CumulativeSampler}}$ of~\Cref{thm:informal-removal-of-S}, we derive that the ${\mathsf{CumulativeSampler}}$ must produce samples from $S$ with probability at least  $2^{-q} \cdot \Omega(T)^{-2v}$ when run on $U$ for a pair $(S,U)$ sampled according to said procedure.

\begin{remark}Looking ahead to our $\QCMA$ lower-bound, choosing $\gamma_z^{(S)}$ proportional to the \emph{squared} amplitude is critical. It turns out that, if we instead choose $\gamma_y^{(S)}$ to be proportional to the (non-squared) amplitude, then the instance $(S,U)$ we obtain is actually in $\BQP$, and so also in $\QCMA$. 
This is because we can approximately synthesize $H^{\otimes n} \ket{S}$ given access to just the sign of $\mel{y}{H^{\otimes n}}{S}$ by generating the state $\frac{1}{\sqrt{2^n}} \sum_y \mathrm{sgn}(\mel{y}{H^{\otimes n}}{S}) \ket{y}$ (as first observed by Irani~\emph{et. al.}~\cite{irani2021quantum}). Therefore, the critical information is embedded in the signs of $\mel{y}{H^{\otimes n}}{S}$. However, by choosing $\gamma_y^{(S)}$ to be proportional to the squared amplitude, this information is not accessible to the verification algorithm. 
A second reason for the importance of squaring is that $\gamma_y^{(S)}$ is invariant under shifts of $S$, which plays a crucial role in our sampling upper bound. 
\end{remark}

\subsection{Sampling success probability upper bounds}

The remainder of the proof is a sampling probability upper bound, which we can express as the following statement. We use the notation $\poly(\cdot)$ to vastly simplify the statement; the technical statement is~\Cref{thm:big-upper-bound-thm}.
\begin{theorem}[Sampling probability upper bound (informal)]
    Let $\mathsf{Strong}$ be the distribution over pairs $(S,U)$ discussed previously. Then any $T$-query algorithm accessing only $U$ produces $v$ distinct samples from $S$ with probability at most
    \begin{equation}
        \qty(\frac{\poly(n,T)}{\poly(\ell)})^{v} + \qty(\frac{\poly(n, T) \sqrt{\ell}}{2^{\Omega(n)}})^{v}.
    \end{equation}
\end{theorem}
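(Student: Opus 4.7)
The plan is to prove this upper bound by combining Zhandry's compressed oracle technique with a second-quantized bosonic representation of $S$. The key structural feature is that the distribution $\mathsf{Strong}$ is symmetric under permutations of the labels $s_1, \ldots, s_\ell$, since the rule for including $y$ in $U$ depends on $S$ only through $\gamma_y^{(S)} = \frac{1}{\ell}\abs{\sum_{i \in [\ell]} (-1)^{y \cdot s_i}}^2$, a symmetric function of the multiset. This symmetry invites encoding $S$ as an occupation-number state on a Fock space over $\bits^n$, where $\gamma_y^{(S)}$ becomes a quadratic polynomial in creation/annihilation operators, putting the correlated oracle in a form amenable to a compressed-oracle analysis.

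Concretely, I would first purify the randomness: lift $\mathsf{Strong}$ to a quantum state on a bosonic Fock register for $S$ tensored with a compressed-database register for $U$, where each coordinate $y$ carries the conditional purification $\sqrt{1-p_y(S)}\ket{0} + \sqrt{p_y(S)}\ket{1}$ for $p_y(S) = 1 - \tfrac{1}{2}\e^{-\kappa \gamma_y^{(S)}}$. Each $U$-query then becomes a unitary on this joint space whose effect on the compressed database is a bosonic operator built from second-quantized site operators at $y$. I would define a progress measure $\Pi_v$ projecting onto output-register states whose $v$ entries are distinct and all match occupied sites of the Fock state for $S$, so that the theorem reduces to a bound on $\norm{\Pi_v \ket{\phi_T}}^2$ established by a hybrid across the $T$ queries.

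The central lemma is a per-query progress bound: a single compressed $U$-query can grow the amplitude on having $k+1$ output entries in $S$ given $k$ already by at most $\poly(n,T)/\sqrt{\poly(\ell)}$. This rests on two facts. First, under $\mathsf{Strong}$ the $\gamma_y^{(S)}$ concentrate near a universal constant, so the effective $U$-oracle is close to the $S$-independent uniform oracle and carries only little information about any specific $s_i$. Second, since $\gamma_y^{(S)}$ is invariant under simultaneous shifts $s_i \mapsto s_i \oplus t$, any information the algorithm extracts about an individual $s_i$ must be ``coset-level,'' forcing the off-diagonal matrix elements of the bosonic query operator between Fock sectors differing on a single particle to scale as an inverse power of $\ell$. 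Chaining this bound through $T$ queries and iterating across the $v$ requested samples yields the first term $(\poly(n,T)/\poly(\ell))^v$.

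The second term $(\poly(n,T)\sqrt{\ell}/2^{\Omega(n)})^v$ then collects the bad events where the bosonic compressed-oracle approximation breaks down: birthday-type collisions in $S$ (probability $O(\ell^2/2^n)$), deviations of $\gamma_y^{(S)}$ from its concentrated value, and Fock occupations exceeding the truncation used in the Fock-space analysis. Each bad event is bounded by $\poly(n,T)\sqrt{\ell}/2^{\Omega(n)}$ via standard concentration and a union bound over queries, and the errors compound multiplicatively when we demand $v$ distinct samples. The main obstacle I anticipate is the per-query bosonic bound; constructing the exact Fock-space form of a compressed $U$-query and extracting the $1/\poly(\ell)$ factor will require harmonic-analytic control over $\gamma_y^{(S)}$ via the identity $\ell \cdot \gamma_y^{(S)} = \abs{\hat S(y)}^2$, combined with an induction over how many elements of $S$ have already been measured into the output register.
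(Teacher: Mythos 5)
Your purification setup is correct and matches the paper: representing $S$ as $\ell$ bosons in the zero-momentum mode, purifying $U$ conditionally via a compressed-database register per coordinate $y$, and interpreting each $U$-query as a Fock-space operator built from the $\gamma_y^{(S)}$ map. The paper does exactly this (\Cref{sec:bosons}). However, your two central claims contain genuine gaps.

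First, the claim that ``under $\mathsf{Strong}$ the $\gamma_y^{(S)}$ concentrate near a universal constant, so the effective $U$-oracle is close to the $S$-independent uniform oracle'' is false. For random $S$, each $\gamma_y^{(S)}$ is approximately $\chi^2_1$-distributed: mean $1$, variance $2$, with a heavy right tail. It does not concentrate, and the inclusion probabilities $1 - \tfrac12 e^{-\kappa\gamma_y^{(S)}}$ vary substantially across $y$. Worse, this claim is essentially Zhandry's refuted conjecture: the paper explicitly notes that plain Forrelation gives oracles whose $k$-wise marginals are close to random yet the Forrelation algorithm distinguishes, so ``close to $S$-independent'' is not a usable premise. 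The $\QCMA$ algorithm really does learn a lot about the structure of $S$; the argument has to bound what it can \emph{do} with that information, not pretend the information is absent.

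Second, and more fundamentally, your per-query progress bound chained across the $v$ requested samples by induction will not deliver the claimed exponential-in-$v$ bound. Each sample's success probability is exponentially small, so post-selecting on the first $k$ guesses being correct conditions on an event of exponentially small probability; the post-selected state is very far from the pre-selection state, so any additive error in the bosonic/compressed approximation is blown up by the reciprocal of that tiny probability and compounds catastrophically across iterations. The paper discusses precisely this failure mode (in the concluding remarks): one can show the first guess succeeds with probability $\approx \ell/2^n$ via momentum conservation, but after post-selection that conservation is broken only to additive error, and the errors compound to give a trivial bound. The paper's resolution is to avoid induction entirely: characterize the post-query state as a $(r,v/4)$-quasi-even condensate (bounding the number of odd-occupied momentum modes, leveraging that $\tilde{\H}_y$ creates \emph{pairs} of excitations), and then bound $\norm{\mathsf{QEC}_{(r,v/4)} \cdot \hat n_{z_1}\cdots\hat n_{z_v}\cdot \mathsf{QEC}_{(r,v/4)}}_{\mathrm{op}}$ in one shot via a max row $1$-norm argument (\Cref{lem:ham-intermediate}). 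Your shift-invariance observation is the right intuition for why a \emph{single} guess is hard, but you need the pairing structure of the quasi-even condensate — not single-particle momentum conservation — to extend this to $v$ guesses at once.
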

\noindent This theorem will contradict the conclusion of~\Cref{thm:informal-removal-of-S} for a choice of $v = \Omega(q)$. \\

To give intuition for the form of this bound, we can consider two edge cases.  First, when $\ell$ is large, it becomes easy to sample points from $S$, as random points are likely to be in $S$.  This gives us the second term in the sum, which becomes large when $\ell$ approaches $2^{O(n)}$.  At the same time, when $\ell$ becomes too small, $U$ becomes more concentrated, potentially revealing information about the structure of $S$.  Our proof implicitly balances these two effects to achieve our sampling upper bound.

An insightful technique for proving upper bounds on the success probability of low-query quantum algorithms is to consider the behavior of the query algorithm on a superposition of possible oracles. This technique was first used in the adversary method of Ambainis~\cite{ambainis-adversary} (to generalize the Bennett~\emph{et. al.}~\cite{BBBV} lower bound for unconstrained search). Instead of viewing the oracles as getting access to unitaries that modify the state of the algorithm, the adversary method treats the oracle as a long vector and each oracle query as a fixed phase kickback unitary on the joint state of the algorithm and oracle. For example, a query to the oracle $U$ can be described by the controlled phase unitary,
\begin{equation}
    \ket{b,x,y} \otimes \ket{\mathrm{tt}_U} \mapsto (-1)^{b\cdot U(y)} \ket{b,x,y} \otimes \ket{\mathrm{tt}_U}
\end{equation}
where $\ket{b,x,y}$ is the state of the algorithm and $\ket{\mathrm{tt}_U}$ is the long vector description of the oracle (here, $\mathrm{tt}$ stands for ``truth table''). 
We note that it is known that this kind of ``phase'' oracle is equivalent (up to a Hadamard transform) to the standard oracle.
In this manner, it is natural to consider the behavior of an algorithm when run on the superposition over oracles. Studying the behavior when run on the superposition is a useful way of arguing query lower bounds and sampling probability upper bounds over the randomness in the oracle distribution and the randomness of the algorithm. A central challenge in proving quantum query lower bounds is designing a suitable perspective on the superposition over oracles that is clean enough to prove lower bound statements. Zhandry's compressed oracle technique~\cite{zhandry2019record} is one method for effectively describing the superposition over oracles. Hamoudi and Magniez~\cite{hamoudi2023quantum} were some of the first to use the compressed oracle technique to prove sampling probability upper bounds for problems such as unconstrained search and collision finding.

\subsection{A bosonic perspective} To prove the desired sampling upper bound in this particular scenario, we introduce a new compression technique. We construct a compression of the superposition over oracles $(S,U)$ by expressing the oracle in terms of bosons. This perspective will naturally clean up much of the technical calculations as well as provide a physical perspective on the query algorithm. This view of the superposition over oracle pairs $(S,U)$ can be interpreted as a ``first quantization'' of compressed oracle techniques, which may be of independent interest. To understand this bosonic perspective in detail, it is helpful to initially ignore the oracle $U$ and instead concentrate on only constructing a purification of uniformly sampling a multiset $S$ of size $\ell$.

There are two natural ways to express the multiset $S$: we can express the set as a vector in $(\{0,1\}^n)^\ell$, each coordinate corresponding to one point in $S$, and the value at that coordinate telling us the value of that point. This representation is, however, not unique, as permuting the vector does not change the multiset.
Alternatively, we can express the multiset as a vector in $\ZZ_{\geq 0}^{2^n}$ of 1-norm $\ell$ with the $x$'th entry representing how many times $x$ appears in $S$. The second perspective can be seen as tossing $\ell$ indistinguishable balls into $2^n$ bins with each toss being uniformly random. \\

\begin{figure}[h]
\centering
\begin{tikzpicture}[scale=1, line join=round, line cap=round]

\def\cubesize{1.0}
\def\sep{1.6}      %
\def\depth{0.4}

\newcommand{\cube}[1]{
    \coordinate (A) at (#1,0);
    \coordinate (B) at ($(A)+(\cubesize,0)$);
    \coordinate (C) at ($(A)+(0,\cubesize)$);
    \coordinate (D) at ($(B)+(0,\cubesize)$);
    \coordinate (E) at ($(A)+(\depth,\depth)$);
    \coordinate (F) at ($(B)+(\depth,\depth)$);
    \coordinate (G) at ($(C)+(\depth,\depth)$);
    \coordinate (H) at ($(D)+(\depth,\depth)$);
    \draw[thick,fill=white] (A)--(B)--(D)--(C)--cycle;
    \draw[thick,fill=gray!10] (C)--(D)--(H)--(G)--cycle;
    \draw[thick,fill=gray!5] (B)--(D)--(H)--(F)--cycle;
    \draw[thick] (A)--(E)--(F)--(B);
    \draw[thick] (E)--(G);
    \draw[thick] (G)--(H);
}

\foreach \i in {0,...,5}{
    \cube{\i*\sep}
}

\node at (6.25*\sep+0.5*\cubesize,0.5*\cubesize) {$\cdots$};

\cube{7.2*\sep}

\foreach \i in {1,...,5}{
    \foreach \j in {1,2}{
        \fill ($(0,0)+(\j*0.4,\i*0.15)$) circle (2pt);
    }
}

\foreach \a/\b in {0.3/0.5,0.7/0.5}{
    \fill ($(2*\sep,0)+(\a*\cubesize,\b*\cubesize)$) circle (2pt);
}

\foreach \a/\b in {0.3/0.5,0.7/0.5}{
    \fill ($(4*\sep,0)+(\a*\cubesize,\b*\cubesize)$) circle (2pt);
}

\fill ($(7.2*\sep,0)+(0.5*\cubesize,0.5*\cubesize)$) circle (2pt);

\end{tikzpicture} \caption{\textit{A depiction of the bosonic Fock state $\ket{\psi} = \ket{10, 0, 2, 0, 2, 0, \ldots, 1}$.}}
\end{figure}

The quantum mechanical analog of a multiset is a collection of \emph{bosons}. Consider a system of $\ell$ bosons, each of which occupies a ``state'' from $\{0,1\}^n$. A characteristic of bosons is that any number of bosons may occupy the same state. Thus, the state of the $\ell$ bosons\footnote{Whereas, for a fermionic system, by the exclusion principle, two fermions cannot occupy the same state, and so a fermionic system corresponds to an ordinary set.} is exactly described by a multiset $S$ over $\{0,1\}^n$. Then, the two perspectives above give two different ways to represent a bosonic system.

We now give a brief primer on bosonic systems. Bosonic systems are described in terms of \emph{modes}, each of which corresponds to an independent quantum degree of freedom. In physics, a mode may be defined in position space, where it is associated with a localized site or spatial region, or in momentum space, where it is associated with a plane-wave excitation. These two descriptions are related by a Fourier transform, so that switching between position and momentum modes is analogous to changing bases in a Hilbert space.  
In the position basis, one specifies a set of operators $\{\hat a_x, \hat a_x^\dagger\}$ that annihilate or create a boson at position $x$, respectively. The number of bosons in a particular (position) mode is unrestricted, with the number operator $\hat n_x = \hat a_x^\dagger \hat a_x$ measuring the occupation of (position) mode $x$. This representation is natural when considering local interactions or spatially constrained dynamics.  
In the momentum basis, one works instead with operators $\{\tilde{a}_x, \tilde{a}_x^\dagger\}$ that annihilate or create excitations of definite momentum $x$, respectively. Analogously, the number of bosons in a momentum mode is unrestricted, with the number operator $\tilde{n}_x = \tilde{a}_x^\dagger \tilde{a}_x$ measuring the occupation of momentum mode $x$. As one might suspect, the total number of bosons in the position basis $\hat N = \sum_x \hat n_x$ equals the total number of bosons in the momentum basis $\tilde{N} = \sum_x \tilde{n}_x$. The momentum description is particularly convenient for systems with translation invariance, where the state of the system takes a simple diagonal form in momentum space.

In this work, we consider a simplified ``computer science'' perspective on bosons. Our system consists of $2^n$ modes indexed by elements of $\bits^n$. Sometimes it will be convenient to index them using the isometry $[2^n] \equiv \bits^n$; so the 0-momentum mode is equivalent to the $0^n$-momentum mode. Instead of relating the position and momentum creation/annihilation operators by the Fourier transform, we define our momentum creation/annihilation operators in terms of the Hadamard transform: 
\begin{equation}
\tilde{a}_x = \frac{1}{\sqrt{2^n}} \sum_{z \in \bits^n} (-1)^{x\cdot z} \hat a_z, \quad \tilde{a}_x^\dagger = \frac{1}{\sqrt{2^n}} \sum_{z \in \bits^n} (-1)^{x\cdot z}\hat a_z^\dagger. \label{eq:fourier-transform-of-boson-ops}
\end{equation}
In this equation, the indexing variables $x, z$ are elements of $\bits^n$ and $x \cdot z$ equals the inner product of the vectors over $\FF_2$.
In~\Cref{sec:prelim-bosons}, we elaborate on the mathematics of bosons. For now, however, simply observe that the $0$-momentum mode creation operator, $\tilde{a}_{0}^\dagger$, is the uniform superposition over all position creation operators as it equals $\frac{1}{\sqrt{2^n}} \sum_z \hat a_z^\dagger$. Therefore, the uniform superposition of a single boson in a random position mode is a single boson in the 0-momentum mode. This generalizes to: the uniform superposition of $\ell$ bosons each in an independently random position mode is $\ell$ bosons in the 0-momentum mode.
We can interpret this as a computer science version of the \emph{Heisenberg uncertainty principle}~\cite{heisenberg-uncertainty-principle}: certainty about the momentum of a set of bosons implies maximal uncertainty about the positions of the bosons.

Lastly, a convenient basis for studying bosonic states is the \textit{Fock} basis, which records the number of bosons in each mode. One can write down Fock basis states in either the position or momentum basis. The bosonic (position) Fock basis is a collection of \emph{orthonormal} states of the form $\ket{\ell_0, \ell_1, \ldots, \ell_{2^n-1}}$ with each $\ell_x \in \ZZ_{\geq 0}$ where the total number of bosons is $\sum_x \ell_x$. In this setting, the annihilation and creation operators can be defined as the operators such that:
\begin{xalign}[eq:boson-action-on-fock]
     \ket{\ell_0, \ldots, \ell_x - 1, \ldots, \ell_{2^n-1}} &= \frac{\hat a_x}{\sqrt{\ell_x}} \ket{\ell_0, \ldots, \ell_x, \ldots, \ell_{2^n-1}} \\
    \ket{\ell_0, \ldots, \ell_x + 1, \ldots, \ell_{2^n-1}} &= \frac{\hat a_x^\dagger}{\sqrt{\ell_x + 1}} \ket{\ell_0, \ldots, \ell_x, \ldots, \ell_{2^n-1}}.
\end{xalign}

\subsection{The hardness of sampling without queries}

Having set up this machinery, we observe that a purification of sampling a uniformly random multiset of size $\ell$ is exactly the state of $\ell$ bosons in the 0-momentum mode. And having defined a suitable purification, we can progress to sampling probability bounds. In the bosonic perspective, guessing elements of $S$ is equivalent to identifying position modes where the algorithm believes bosons reside. A first step to proving that~\Cref{thm:informal-removal-of-S} is too good to be true, is showing that an algorithm that makes 0 queries to the oracle cannot identify with good probability position modes which contain bosons.

This ``baby'' problem is easy enough to prove via classical combinatorics, but we will purposefully resolve it by appealing to more quantum mechanical techniques. 
Observe that, by the relationship of position and momentum operators~(\eqref{eq:fourier-transform-of-boson-ops}), the state of a single boson in a fixed momentum mode $\tilde a_x^\dagger \ket \vac$ is equal to $\sum_y (-1)^{x \cdot y} \hat{a_y}^\dagger \ket \vac$, a superposition of being in every position mode, with phases corresponding to $x$. Therefore, the probability that any position guess $y$ is correct is $2^{-n}$. This is one example of a Heisenberg uncertainty principle~\cite{heisenberg-uncertainty-principle} with a more general form being that if the state of a single boson is supported on at most $r$ momentum modes $x_1, \ldots, x_r$, i.e., the state equals $\sum_{i = 1}^r \alpha_i \tilde{a}_{x_i}^\dagger \ket \vac$, then the probability that any position guess $y$ is correct is $\leq r/2^n$. 
In greater detail, what we can actually prove is for any $z \in \bits^n$, $\ev{\hat n_z}{\psi} = r \ell/2^n$ for any $\ell$-boson state supported on at most $r$ modes. 
Recall that $\hat n_z$ is the operator \textit{counting} the number of bosons at location $z$. We can define $\Pi_{\hat n_z > 0}$ as the projector onto having a non-zero number of bosons in location $z$. Observe that $\hat n_z \ge \Pi_{\hat n_z > 0}$. Then, the probability that a guess of location $z$ is correct is equal to $\ev{\Pi_{\hat n_z > 0}}{\psi} \leq \ev{\hat n_z}{\psi} = r\ell/2^n$. 
From here, we can conclude that the first guess has probability at most $r \ell/2^n$ of being correct, irrespective of which position mode $z$ is guessed.

In this argument, we made a conceptual shift that turns out to be quite useful mathematically. Instead of studying the expectation of the state with respect to $\Pi_{\hat n_z > 0}$, we chose to study the expectation with respect to $\hat n_z$. This is mathematically equivalent to applying Markov's inequality ($\Exp[\Pi_{X > 0}] = \Pr[X > 0] \leq \Exp[X]$ for any random variable $X \geq 0$). This weaker bound will be sufficient for our argument. This argument naturally generalizes: to upper bound the probability of finding bosons at each of $z_1,\ldots,z_v$ for distinct indices $z_1,\ldots,z_v$ (which is equal to the expectation with respect to $\Pi_{\hat{n}_{z_1} > 0} \ldots \Pi_{\hat{n}_{z_v} > 0}$), the Markov inequality upper bound lets us instead bound the expectation with respect to $\hat n_{z_1} \ldots \hat n_{z_v}$. Next, observe that $\hat n_{z_1} \ldots \hat n_{z_v}$ equals $\hat a_{z_1}^\dagger \ldots \hat a_{z_v}^\dagger\hat a_{z_v} \ldots \hat a_{z_1}$ when the indices are distinct. Therefore, the expectation of $\hat n_{z_1} \ldots \hat n_{z_v}$ is equal to $\norm{\hat a_{z_v} \ldots \hat a_{z_1} \ket{\psi}}^2$, the norm of the state after applying $v$ annihilation operators.

To analyze the norm of applying two annihilation operators $\hat a_{z_2}\hat a_{z_1}$ when the state of the oracle is $\ell$-bosons in the 0-momentum mode, we can study how the multiplicative norm decreases with each sequential operator application. We observe that the normalized state after annihilating a boson at location $z_1$ will again only be supported on bosons in the 0-momentum mode: annihilating this boson did not affect the remaining bosons, and so the resulting state will be $\ell-1$ bosons in the 0-momentum mode. It follows that the following annihilation decreases the norm multiplicatively by at least $(\ell - 1)/2^n$. We can continue this observation proving an upper bound of $\leq (\ell/2^n)^v$ for the norm of the vector after $v$ annihilations. This proves a bound on the success probability of guessing $v$ locations without making any queries at all. If our initial state was supported on $r$ distinct momentum modes, the same argument would have produced an upper bound of $\leq (r \ell/2^n)^v$.

\subsection{Understanding the action of queries to the $U$ oracle}

To prove an upper bound on the sampling probability after some queries to $U$, we need to extend the prior sampling probability upper bound to a broader set of initial states. The previous argument was particular to the state of $\ell$ bosons in the 0-momentum mode. To extend the argument, it is first illustrative to understand what the superposition of the algorithm and oracle registers looks like after making $T = \poly(n)$ queries to the $U$ oracle. What we discover is that the state of the oracle after $T$ queries can be described as what we coin a \emph{quasi-even condensate} and unpack below.

First, the phrase \textit{condensate}. We borrow the term condensate from many-body physics, where it denotes a regime in which a macroscopic fraction of bosons occupy a single-particle mode. In our case, we use it to refer to a system where almost all the bosons are in the 0-momentum mode. Concretely, for us, a momentum Fock state is an $r$-condensate if at least $\ell - r$ of the bosons are in the 0-momentum mode. Equivalently, at most $r$ bosons are not in the 0-momentum mode. An $r$-condensate may be a superposition of momentum Fock states that are $r$-condensates. This work considers states where $\ell = 2^{cn}$ and we will consider $r = \poly(n)$, so only a negligible fraction of the bosons are not in the $0$-momentum mode.

Second, the phrase \textit{quasi-even}. For a momentum Fock state, we say that the state is $o$-quasi-even if at most $o$ of the momentum modes (except the 0-mode) have an odd number of bosons in them. A general state is $o$-quasi-even if it is entirely supported on momentum Fock states which are $o$-quasi-even. A priori, the definition of quasi-even isn't nearly as motivated as the definition of a condensate (which has a natural physical interpretation). However, upon analysis of queries to the $U$ oracle, studying how quasi-even a state is will become apparent.

Both being a condensate and being quasi-even are properties in the momentum basis. Therefore, we can define projectors onto the momentum Fock states that satisfy them. Since the projectors are both diagonal in the momentum Fock basis, they commute. So, the definition of a state as a $(r,o)$-quasi-even condensate is well-defined. Roughly speaking, our characterization theorem shows that the post-query state $\ket*{\psi_{\mathrm{PQ}}}$ satisfies the following: for all $\iota > 0$, there exists another state $\ket*{\psi'}$ such that (a) $\norm{\ket*{\psi_{\mathrm{PQ}}} - \ket*{\psi'}} \leq \iota$ and (b) $\ket*{\psi'}$ is a $(r,o)$-quasi-even condensate for $r = \poly(n,T, \log(1/\iota))$ and $o \ll v/4$ with overwhelming probability (i.e., $(1 - \poly(T) / \sqrt{\ell})^{O(v)}$) as long as $\poly(T) \ll \ell$. 

This particular characterization of the post-query state is important because we are additionally able to show that the sampling success probability of any algorithm for which the post-query state is such a quasi-even condensate decays exponentially fast with $v$. These two ingredients combine to prove our sampling probability upper bound. We have not yet answered two fundamental questions: (1) why are post-query states effectively quasi-even condensates and how do we prove it, and (2) why is there a sampling probability upper bound for quasi-even condensates.

The answer to the first question comes from understanding the behavior of a single query to the oracle $U$ at $y$. What we can formally prove is that a query at $y$ applies a polynomial in the exponential function of the ``double $y$-momentum hopping operator'', which is defined as:
\begin{equation}
    \tilde{\H}_y \defeq \frac{1}{\ell} \sum_{x, x' \in \bits^n} \tilde{a}_{x \oplus y}^\dagger \tilde{a}_{x' \oplus y}^\dagger \tilde{a}_x \tilde{a}_{x'}.
\end{equation}
As the name may suggest, the action of $\tilde{\H}_y$ on a momentum Fock state is to pick two modes $x, x'$ which contain bosons and ``hop'' each of the bosons by $y$. The action of the $\tilde{\H}_y$ moves around momentum Fock states, but perhaps surprisingly, it is diagonal in the position basis and its action can be described in terms of the coefficients $\gamma_z^{(S)}$ which were defined in~\eqref{eq:def1-of-gamma}.

\begin{figure}[h]
\centering
\begin{tikzpicture}[scale=1, line join=round, line cap=round]

\def\cubesize{1.0}
\def\sep{1.6}
\def\depth{0.4}

\newcommand{\cube}[1]{
  \coordinate (A) at (#1,0);
  \coordinate (B) at ($(A)+(\cubesize,0)$);
  \coordinate (C) at ($(A)+(0,\cubesize)$);
  \coordinate (D) at ($(B)+(0,\cubesize)$);
  \coordinate (E) at ($(A)+(\depth,\depth)$);
  \coordinate (F) at ($(B)+(\depth,\depth)$);
  \coordinate (G) at ($(C)+(\depth,\depth)$);
  \coordinate (H) at ($(D)+(\depth,\depth)$);
  \draw[thick,fill=white] (A)--(B)--(D)--(C)--cycle;
  \draw[thick,fill=gray!10] (C)--(D)--(H)--(G)--cycle;
  \draw[thick,fill=gray!5] (B)--(D)--(H)--(F)--cycle;
  \draw[thick] (A)--(E)--(F)--(B);
  \draw[thick] (E)--(G);
  \draw[thick] (G)--(H);
}

\foreach \i in {0,...,5}{\cube{\i*\sep}}
\node at (6.25*\sep+0.5*\cubesize,0.5*\cubesize) {$\cdots$};
\cube{7.2*\sep}

\foreach \i/\j in {1/1,2/1,3/1,4/1,1/2,2/2,3/2,4/2}{
  \fill ($(0,0)+(\j*0.4,\i*0.15)$) circle (2pt);
}
\foreach \a/\b in {5/1,5/2}{
  \draw[red,dotted, thick] ($(0,0)+(\b*0.4,\a*0.15)$) circle (2pt);
}

\foreach \a/\b in {0.3/0.5,0.7/0.5}{
  \fill ($(2*\sep,0)+(\a*\cubesize,\b*\cubesize)$) circle (2pt);
}

\foreach \a/\b in {0.3/0.5,0.7/0.5}{
  \fill[red] ($(3*\sep,0)+(\a*\cubesize,\b*\cubesize)$) circle (2pt);
}

\foreach \a/\b in {0.3/0.5,0.7/0.5}{
  \fill ($(4*\sep,0)+(\a*\cubesize,\b*\cubesize)$) circle (2pt);
}

\fill ($(7.2*\sep,0)+(0.5*\cubesize,0.5*\cubesize)$) circle (2pt);

\end{tikzpicture} \caption{\textit{A depiction of $\tilde{a}_{0 \oplus 3}^\dagger \tilde{a}_{0 \oplus 3}^\dagger \tilde{a}_0 \tilde{a}_0 \ket{\psi}$ which is one component in the sum $\tilde{\H}_3 \ket{\psi}$.}}
\end{figure}
In slightly more detail, recall that we choose to include $y \in U$ with (independent) probability $1 - \e^{-\kappa\gamma_y^{(S)}}/2$ where $\gamma_y^{(S)}$ was defined in~\eqref{eq:def1-of-gamma} and $\kappa$ is a small constant (think $1/10$). The use of the exponential functions will have further technical implications for the rest of the proof, but for intuition, observe that the first-order Taylor approximation of this probability equals\footnote{An astute reader might question why the constants $1/2$ and $\kappa$ appear. Both of these constants are necessary for our proof techniques, but are probably not necessary.}~$1/2 + \kappa\gamma_y^{(S)}/2$. On the other hand, recall that shifts, or ``hops'', in the momentum basis are diagonalized in the position basis. For analogous reasons, $\tilde{\H}_y$ is a diagonal matrix in the position Fock basis. 
Furthermore, for position Fock basis state $\ket{\mathrm{tt}_S}$ described by the multiset $S$ of size $\ell$, the corresponding diagonal entry is $\ev{\tilde{\H}_y}{\mathrm{tt}_S} = \gamma_y^{(S)} - 1$. See~\eqref{eq:action-of-single-hop-operator} for details.  Thus, we see that the diagonal entries of $\tilde{\H}_y$ in the position Fock basis are closely related to the probability that $z$ is included in $U$. This relationship allows us to express the effect of querying $U$ at $z$ in terms of $\tilde{\H}_y$.

What we will show is that a query to a random $U$ according to the prescribed description can be expressed as an exponential operator in the terms $-\gamma_y^{(S)}$, which by the prior relation is an exponential operator in the terms $-\tilde{\H}_y$.
Proving this to be true is significantly more challenging, and we save most of the details for the technical content of this paper.
Of the numerous challenges required to prove this statement, the first is that for a fixed $S$ oracle, we need to analyze the result of querying a purification of the distribution of $U$ oracles. This is because evaluating whether the samples guessed by the algorithm are correct does not require $U$; therefore, the quantity we are actually interested in studying is the reduced density matrix on the algorithm's register and the register of the oracle encoding the $S$ oracle -- i.e., we can take the partial trace over the register encoding the $U$ oracle. To make calculating this partial trace easier, we encode queries to the $U$ oracle using another compressed oracle. By doing so, we can express the action of queries to the $U$ oracle on the algorithm and $S$ oracle registers in terms of \emph{Kraus operators}. These Kraus operators can be defined as a function of $\tilde{\H}_y$ for the locations $y$ queried by the algorithm. 

This in turn leads us to the second challenge: the Kraus operators (formally defined in~\eqref{eq:kraus-operators}) will be exponential functions in $- \tilde{\H}_y$. The exponential function appears here because we include $y \in U$ with probability $1 - \e^{-\kappa \gamma_y^{(S)}}/2$. Recall that our intended goal was to prove that the action of a query can be well approximated by a polynomial in $\tilde{\H}_y$. However, we can observe from the definition that the values $\gamma_y^{(S)}$ can lie in the range $[0, \ell]$. But, any polynomial approximation of $\e^{- \kappa \gamma_y^{(S)}}$ that is $\eps$-accurate on the entire region $[0,\ell]$ will require a very large degree --- for example, a Taylor series approximation will require degree $\Omega(\ell \log(1/\eps))$. Furthermore, a polynomial of such degree in $\tilde{\H}_y$ applied to the initial state will no longer be a condensate, and we don't know of techniques for proving sampling upper bounds for states which are not condensates.
The point being that an approximation on the entire range $[0,\ell]$ is untenable. However, it is also unnecessary.
Instead, what we observe is that each $\gamma_y^{(S)}$ for $y \neq 0$ is approximately distributed as the square of a normal Gaussian. Furthermore, since our analysis is, in some sense, computing the average success probability over all possible $(S,U)$ pairs, what is actually required is that the polynomial approximation of the exponential function is good \emph{proportional} to the distribution over $\gamma_y^{(S)}$. In other words, since most of the mass is around 1, the polynomial approximation needs to be very strong in this region, but the approximation can be exponentially bad at large values, as the probability mass of $\gamma_y^{(S)}$ is likewise exponentially small. We show that applying ``flat'' polynomial approximations~\cite{bakshi2024learning} to the exponential function\footnote{An astute reader might question why we choose to use an exponential function to define the probability of including $y \in U$. Indeed, we too spent much time attempting to prove the result without the use of an exponential function. If we had been able to use a polynomial function, then there would be no need to consider a suitable approximation. However, we were unable to find a low-degree polynomial that is within $[0,1]$ on the range $[0,\ell]$ and has sufficient signal to provide a $\QMA$ algorithm for this problem. Therefore, we required using the exponential function to guarantee that the probability of including $y \in U$ was well defined. In turn, this necessitated the use of a flat approximation. We suspect that a truncation function (for example, $\max\{1, \epsilon \gamma_y^{(S)}\}$ would have also sufficed, provided there exists a good flat approximation of this function.} developed by Narayanan~\cite{narayanan2024improved} suffices to give a $\poly(n,t,\log(1/\iota))$-degree polynomial that well-approximates the action of a query for our means.  See \Cref{subsec:polynomial_overview} for a technical overview of the polynomial approximation we take.

\subsection{A characterization of post-query states}

We promote this observation to prove that the action of the queries can be approximated to $\iota$-precision by a polynomial of degree $\poly(n,T,\log(1/\iota))$ in the operators $\tilde{\H}_y$. 
This will prove that the state is close to the subspace of $r$-condensates for $r = \poly(n,T,\log(1/\iota))$ as each action of the double-momentum hopping operator $\tilde{\H}_y$ will move at most 2 bosons from the 0-momentum subspace. 

However, recall that we start with $\ell = 2^{cn}$ bosons in the 0-momentum mode. Therefore, any $\poly(n)$ query algorithm, in expectation, moves an imperceptible number of the bosons from the 0-momentum mode. Again, this is the motivation behind the nomenclature choice of ``condensate''. Had we been able to prove that the state was an $r$-condensate for $r \ll v$, the proof would be over as we could apply an uncertainty principle. Unfortunately, this is not the case, as the sampler we construct makes $T = vt$ queries, if we started from a $\QCMA$ algorithm that made $t$ queries. 
What we will be able to prove is that with each query, the probability that the state becomes less ``quasi-even'' is negligible in $n$. In total, it is very unlikely that the post-query state is not a quasi-even condensate. To understand why this is the case, let us imagine a simplified setting where the algorithm makes a query, the action on the oracle register is equal to applying the double-hopping operator $\tilde{\H}_y$. This is an obvious simplification as the operator is not unitary, but ignore this for now.  We think of this operator, in words, as the following operation: pick two random bosons at modes $x$ and $x'$, shift them to be at $x \oplus y$ and $x' \oplus y$, and multiply the norm of the state by the number of bosons already at modes $x \oplus y$ and $x' \oplus y$.  Then it is clear to see that on a condensate (i.e., a state such that the vast majority of bosons are in the $0$-momentum mode), the double-hopping operator is almost entirely characterized by its restriction onto the $0$- and $y$-momentum modes (i.e., shifting bosons into or out of the condensate).  If this really was an exact characterization, then the state of a quantum algorithm querying this non-physical operation would have exactly $2$ bosons in every momentum mode that was queried by the algorithm, and every non-zero momentum mode would be perfectly paired up.  The real double-hopping operator has some probability (we bound this by $\poly(r) / \sqrt{\ell}$) on each query to affect a boson that is not in the condensate, but since this number is negligible compared to the number of queries the algorithm makes, we can prove a strong bound on the probability that we observe $\ll v/4$ many momentum modes occupied by an odd number of bosons. 

The analysis is not as simple, though, as the exact action of a query is not the double-hopping operator, but consists of terms looking like the exponential of the double-hopping operator: $\e^{-\kappa \tilde{\H}_y}$.  This presents new challenges to analyzing the effects of queries, which we resolve by using techniques from perturbation theory. Having shown that the double-hopping operator mostly does not change the number of odd indices in the condensate, we interpret it as a small perturbation (i.e., the part of the double-hopping operator that would change the number of odd indices) added to a large operator (i.e., the part of the double-hopping operator that preserves the evenness of all entries).  Applying the Dyson series for the exponential function allows us to bound the effect of $\e^{-\kappa \tilde{\H}_y}$ on the number of even entries in a similar manner as before.    

\subsection{Momentum conservation and boson pairs}

One might look at the property of being a quasi-even condensate outlined in the previous subsection and wonder why it \emph{should} say anything about the probability that a sampler succeeds at sampling many points from the oracle.  To see this, it will be instructive to think about an analog of Noether's theorem~\cite{Noether1918} to our oracle.  In classical mechanics, Noether's theorem roughly states that conservation laws correspond to symmetries in a system.  In our case, one such conservation law is the total momentum of the system.  In particular, for all $y$, the double-hopping operator $\tilde{\H}_y$ maps a system with total momentum $0$ (where we define total momentum to use addition over $\mathbb{F}_2$) to a system that also has total momentum $0$.  Analogous with classical mechanics, this corresponds exactly to translational invariance.  In other words, the state of the system does not change if we apply the shift operator, $\mathsf{Shift}_{x} \hat{a}_{z} \mathsf{Shift}_x = \hat{a}_{z \oplus x}$.  Thus, the conservation of momentum trivially implies a bound on the probability that any algorithm outputs a single point of $S$.  

However, making this observation about the system as a whole is too brittle on its own to bound the probability of sampling more than one point.  At a proof level, an algorithm that successfully guesses one point from $S$ will change the total momentum of the system by effectively annihilating one of the bosons.  Even worse, an algorithm that queries the entire truth table of $U$ would expect to be able to output many points of $S$ given a single point in $S$.  Thus, we must use a more fine-grained feature of the states of an algorithm to provide an upper bound on the sampling probability.  

One way to reconcile this is to make the following observation: translational invariance applies to \emph{any} collection of bosons that have $0$-total momentum, simply because applying the shift operator corresponds to applying a phase in the momentum basis, and having $0$-total momentum is equivalent to the phases canceling each other out.  Thus, whenever bosons are paired up (i.e., there is another boson in the same momentum mode), the pair of bosons will obey translational invariance, and (in a loose sense, since all bosons are identical) the sampler is very unlikely to sample one of the bosons from this pair.  This property is far more robust than the original observation about $0$-total momentum, as successfully sampling a boson affects at most one of the paired up bosons, although our sampler upper bound proof does not use an inductive argument.  This critical observation is the reason why having almost all of the momentum modes occupied by an even number of bosons allows us to prove a sampler upper bound.  

While the intuition that paired up bosons should be hard to find, our proof that algorithms whose purified states are supported on quasi-even condensates are hard to sample from does not use an inductive argument.  Instead, we directly upper bound the spectral norm of any operator of the form $\hat{n}_{z_1}\ldots \hat{n}_{z_v}$ on the subspace of quasi-even condensates using the max row $1$-norm, and use the quasi-even and condensate properties to carefully bound the constructive interference that quasi-even condensates can generate.

\subsection{Putting it all together}

All together, we have argued that the existence of a $\QCMA$ algorithm for spectral Forrelation implies a sampler which morally guesses the positions of bosons given only query access to their momentum information. We then prove, by considering the query action on the purification of all oracles, that such a sampler cannot perform well unless it makes a superpolynomial number of queries. 

The proof technique described in this paper is the technique we arrived at after considerable research and challenges. We highlight the challenges at the end of the result in our concluding remarks~(\Cref{sec:concluding-remarks}).

\section{History of the $\QMA$ versus $\QCMA$ problem}

The question of whether $\QMA$ equals $\QCMA$ first appeared in the survey of Aharanov and Naveh~\cite{aharonov2002quantum} with the first indication of a separation given by Aaronson and Kuperberg~\cite{aaronson2007quantum}. The following is a brief review of the progress made on the problem since then: 
An early candidate classical oracle separation was given by Lutomirski~\cite {lutomirski2011component}, but the candidate lacked a proof.  More recently, a number of results have made progress towards the goal of a classical oracle separation by proving separations under different restrictions on how the oracle is accessed.  Fefferman and Kimmel~\cite{fefferman2015quantum} showed a separation assuming that the oracle is an ``in-place permutation oracle'', a non-standard model where the oracle irreversibly permutes the input state; the resultant object was ``less quantum'' than a reflection about a Haar-random state but still inherently quantum.~\cite{fefferman2015quantum} also presents the first techniques for proving query lower bounds for $\QCMA$; they lift an $\AM$ lower bound for set size estimation of Goldwasser and Sipser~\cite{goldwasser-sipser} to a $\QCMA$ lower bound for set size. Unfortunately, as is the case for many $\QCMA$ lower bounds, the same lower bound will apply for $\QMA$, thereby not providing a separation. The corresponding $\QMA$ lower bound for set size estimation was proven later using Laurent polynomial techniques by Aaronson \emph{et. al.}~\cite{aaronson_et_al:LIPIcs.CCC.2020.7}. Future results starting from Natarajan and Nirkhe~\cite{natarajan2024distribution} showed $\QMA$ versus $\QCMA$ separations for weakened notions of a prover or verifier. Natarajan and Nirkhe build on the prior works for set size estimation to construct a set size estimation problem with an efficient $\QMA$ algorithm by adding graph structure. For example,~\cite{natarajan2024distribution} proved the separation assuming the witness was only a function of some portion of the oracle -- this can be equivalently expressed as a distribution testing problem. Recently, Agarwal and Kundu~\cite{agarwal2025cautionarynotequantumoracles} have shown that one should be cautious when dealing with distributional oracles or unitary oracles as separations with respect to such oracles may not imply classical oracle separations. Secondly, a line of work has used quantum advantage relative to unstructured oracles~\cite{yamakawa2024verifiable} to separate $\mathsf{QMA}$ from $\mathsf{QCMA}$ in settings where the quantum verifier was limited: \cite{liu2022non,li2023classical} gave a separation assuming the verifier can only make classical oracle queries, and more recently \cite{ben2024oracle} gave a separation which allows the verifier to make quantum queries, but assumes the adaptivity of the queries is sub-logarithmic.  

Motivated by this seemingly dual requirement of having to apply quantum query complexity tools to highly structured oracles, a pair of works, by Zhandry~\cite{zhandry2024toward} (which this work takes some inspiration from) and Liu, Mutreja, and Yuen~\cite{liu2024qma}, showed connections between the $\mathsf{QMA}$ versus $\mathsf{QCMA}$ problem and \emph{pseudorandomness} against quantum adversaries.  
\cite{liu2024qma} improved the lower bound analysis of~\cite{natarajan2024distribution} and proposed a conjecture about the pseudorandomness of $\delta$-dense permutations, which, if proven, demonstrates another classical oracle separation between $\mathsf{QMA}$ and $\mathsf{QCMA}$.  
Previously,~\cite{guo2021unifying} proved that the $\delta$-dense conjecture would imply that any quantum algorithm making queries to a random oracle can be simulated by an efficient classical one, a major open question in quantum query complexity~\cite{aaronson2009need}.  A similar open problem would need to be resolved in order to make the separation in~\cite{liu2024qma} unconditional. We nevertheless believe that the results and the statistical conjecture of \cite{liu2024qma} remain interesting, even in light of this result, since proving that the oracle of \cite{liu2024qma} works to separate $\mathsf{QCMA}$ and $\mathsf{QMA}$ (either through their conjecture about $\delta$-dense permutations or by different means) would provide an alternate oracle separation that sheds light on the Aaronson-Ambainis conjecture.

\section{Observations and open questions}

\begin{enumerate}[leftmargin=1.2em]
    \item \textbf{Black-box separations and the two basis thesis}~Recently, Ma and Natarajan~\cite{ma2025basessufficeqma1completeness} identified a $\QMA_1$-complete family of local Hamiltonians where every term is 6-local and either diagonal in the standard or Hadamard bases. The collection of local Hamiltonian terms diagonal in the standard basis forms a constraint satisfaction problem (CSP). Likewise, the local Hamiltonian terms diagonal in the Hadamard basis form a second CSP. If we let $S$ be the solutions to the first CSP and $U$ the solutions to the second CSP, Ma and Natarajan's result can be expressed as the statement: ``Deciding 1 vs $1 - 1/\poly(n)$ spectral Forrelation is $\QMA_1$-complete even when $S$ and $U$ are the solution sets to CSPs''. Previously, Cubitt and Montanaro~\cite{doi:10.1137/140998287} had proven that the local Hamiltonian problem where all terms are of the type $\alpha_{ij} X_i \otimes X_j$ or $\beta_{ij} Z_i \otimes Z_j$ is $\QMA$-complete for a completeness-soundness gap of $1/\poly(n)$. However, Cubitt and Montanaro's Hamiltonians were necessarily \emph{frustrated} as they were built from perturbation theory gadgets. 
    
    The result presented here is an oracular variant of these two results.
    Ma and Natarajan's result also shows that it is $\QMA$-complete to decide the $1 - 1/\exp(n)$ vs $1 - 1/\poly(n)$ spectral Forrelation problem even when $S$ and $U$ are the solution sets to CSPs. Whereas this result shows that the problem of deciding $59/100$ vs $57/100$ spectral Forrelation for general sets $S$ and $U$ is not in $\QCMA$. Therefore, if $\QCMA$ were to equal $\QMA$, our black-box separation concretely says that the $\QCMA$ algorithm \textit{must depend on the structure} of the two CSPs. This is the analog of how the unconstrained search problem lower bound for $\BQP$~\cite{BBBV} proves that if a $\BQP$ algorithm exists for $\NP$, it must depend on the structure of the CSPs.

    Furthermore, our oracle separation does not have perfect completeness, which, to the best of our knowledge, all previous candidate oracle separations did. And we do not know a technique for adapting this protocol to have perfect completeness, as the state $H^{\otimes n} \ket{S}$ has some support on every basis vector. 
    
    More broadly, our results fall into a larger family of quantum computation results that satisfy the ``two-basis thesis''---that, computationally speaking, it suffices to consider computation or Hamiltonian terms that are either in the standard or Hadamard basis. Other results that satisfy the two-basis thesis include the BB84 protocol~\cite{bb84}, Weisner's quantum money scheme~\cite{weisner10.1145/1008908.1008920}, Aaronson and Christiano's money scheme~\cite{aaronson2012quantum}, the Mermin-Peres magic square game~\cite{mermin1990simple,peres1990incompatible}, Mahadev's measurement protocols~\cite{Mahadev:EECS-2018-88}, and quantum codes such as the NLTS Hamiltonian construction of Anshu, Breuckmann, and Nirkhe~\cite{nlts10.1145/3564246.3585114}.

    \item \textbf{Upgrading quantum oracle separations}~Variations of the Aaronson and Kuperberg~\cite{aaronson2007quantum} oracle are incredibly prevalent in quantum complexity theory and quantum cryptography. Having demonstrated that we can suitably replace the~\cite{aaronson2007quantum} quantum oracle with a classical oracle in this particular setting, we posit that more results are due for upgrades to classical oracles. In particular, we identify the question of separating $\QMA$-search from $\QMA$-decision with respect to a classical oracle as incredibly pertinent; the quantum oracle result was proven by Irani \emph{et. al.}~\cite{irani2021quantum}.
    
    \item \textbf{Cryptographic primitives.}~More broadly, quantum oracles are equally prevalent in quantum complexity theory and quantum cryptography. For example, many complexity-theoretic and cryptography \emph{separations} are proven by means of quantum oracles. One notable example is Kretschmer's separation~\cite{kretschmer:LIPIcs.TQC.2021.2} between quantum pseudorandomness and (for example) one-way functions. Very recently, there has been an explosion of unitary oracle separations separating various notions of quantum cryptography~\cite{EC:CheColSat25,EC:BosCheNeh25,EC:BMMMY25,C:GolZha25,EPRINT:Barhoush25,EPRINT:GLMY25,EPRINT:AnaGulLin25}.
    Often, for the cryptographic use cases, there hasn't been an easy classical oracle replacement. Can we use the techniques introduced in this work to make these improvements?

    \item \textbf{The computational complexity of clonability}~Nehoran and Zhandry~\cite{nehoran2024computational} introduce the concept of $\mathsf{ClonableQMA}$, which is the class of decision problems decidable with a quantum witness that is also efficiently clonable. \textit{It is not difficult to see that our proof also extends to separate $\mathsf{ClonableQMA}$ from $\QMA$ with respect to a classical oracle, as the sampler generated in~\Cref{thm:informal-removal-of-S} can be constructed given the clonability.} The complexity class $\mathsf{ClonableQMA}$ was identified as the complexity-theoretic generalization of many cryptographic tasks that build on the idea that some states are hard to clone while still easy to verify. Using the constructed oracle separation, can we work backwards and identify new cryptographic protocols that can be proven secure with respect to a classical oracle?

    \item \textbf{Boolean function analysis}~Liu, Mutreja, and Yuen~\cite{liu2024qma} identify an inherent connection between $\QMA$ vs $\QCMA$ and the Aaronson-Ambainis conjecture~\cite{aaronson2009need}, a major open question in boolean function analysis. There are two versions of the Aaronson-Ambainis conjecture: one in terms of the influence of boolean functions and the other in terms of quantum query algorithms. Our result does not directly address either version of the Aaronson-Ambainis conjecture, but we hope that it might offer a new interesting perspective with which the problem might be tackled. 
    
    \item \textbf{Matching upper bounds.} We've shown that spectral Forrelation is hard for $\QCMA$. However, we do not know what the optimal $\BQP$ or $\QCMA$ algorithms for this oracle are. Similarly, the problem of sampling points from $S$ given oracle access to $S$ and $U$ (or just $U$) seems like an interesting quantum query complexity problem in its own right.  What is the best sampling algorithm for producing $v$ problems in $S$ from access to both $S$ and $U$?

    \item \textbf{Proof improvements}~Lastly, we admit that many of the components of our proof are likely suboptimal. We made many decisions in our proof that could be modified to make the proof simpler. One particular decision that stands out is to include an element $y$ in the set $U$ with probability $1 - \e^{-\kappa \gamma_y^{(S)}}/2$. Is there a better choice of function that makes the proof simpler? We chose the exponential function since we could find a family of good polynomial approximations to the functions, and previously Zhandry~\cite{zhandry2024toward} used a similar function since it was amenable to integration. However, we do not know if this was optimal, as it caused other challenges. Other decisions, such as analyzing the bosonic system in terms of quasi-even condensates, could also be optimized. Is there a better generalization (instead of quasi-even condensates) for which we can prove sampling probability upper bounds? We suspect and hope that a simpler reformulation of this proof will be found. 
\end{enumerate}

\section{Outline of the paper}

This paper is broken up into multiple parts. In the remainder of~\Cref{part:introduction}, we introduce relevant notations, definitions, and formalize the definitions of the complexity classes $\QMA$, $\QCMA$, and oracle separations. We, however, defer introducing the quantum mechanics of bosons to~\cref{sec:prelim-bosons} of~\cref{part:sampler-upper-bound}. Next,~\Cref{part:qcma-to-sampler} and~\Cref{part:sampler-upper-bound,part:poly-query-implies-qec} play dual roles to each other, together proving the impossibility of an efficient query algorithm with a polynomial-sized witness for spectral Forrelation in the property-testing regime. 

\Cref{part:qcma-to-sampler} proves that an efficient $\QCMA$ algorithm for the spectral Forrelation problem implies a sampling probability \emph{lower bound} for a particular sampling task, namely sampling points from $S$ given oracle access to $U$, the heavy points of the Hadamard transform of $\ket{S}$.  The proof highly resembles the hybrid method originally used by \cite{BBBV}, with the key idea being to use the fact that a classical witness can be reused without degrading its quality to get arbitrarily many samples from $S$.

Then,~\Cref{part:sampler-upper-bound,part:poly-query-implies-qec} prove a strictly smaller sampling \emph{upper bound} for that same task.  \Cref{part:sampler-upper-bound} introduces a broad family of states, which we call quasi-even condensates, that (approximately) satisfy a large number of translational symmetries.  Using a combinatorial argument, we bound the maximum success probability of any sampler whose purifying register is a quasi-even condensate.  Finally, \Cref{part:poly-query-implies-qec} shows that after only making a few queries to $U$, the purifying register of every query algorithm will be supported on quasi-even condensates.  \Cref{part:poly-query-implies-qec} introduces a new compressed oracle technique for sparse oracles, and combines it with flat polynomial approximations to the exponential and tools from perturbation theory.  

Finally,~\Cref{part:concluding-remarks} brings together all the important results from the previous parts to prove a property testing oracle separation which can be lifted to a complexity class separation. We end~\Cref{part:concluding-remarks} with concluding remarks about how we came to this proof.
\section{Preliminaries}

\paragraph{Organizational notes}
Some of the proofs in this paper are naturally lengthy and require smaller technical/mathematical lemmas to prove. Notationally, we write these required lemmas in inline boxed environments. The intention is that a first pass reading of the paper can effectively skip over these lemmas by only reading the main listed lemmas/theorems in the boxed environments. 

\subsection{Mathematical notation}
\label{sec:math_notation}

The following notations are used in this work. Most are standard, but we reiterate them for the sake of clarity.
\begin{itemize}
\item $\norm{\cdot}$ or $\norm{\cdot}_{\mathrm{op}}$ for a matrix will refer to the operator norm of a matrix, unless specified otherwise, and $\norm{\cdot}_1$ denotes the matrix $1$-norm, $\norm{X}_1 = \Tr[\abs{X}]$. $\norm{\cdot}$ for a vector will refer to the $2$-norm unless specified otherwise.
\item When using the notation $\prod_{i = a}^{b} X_{i}$ for $X_i$ that do not commute, the product expands from left to right starting from $a$.  For example, when $a < b$, it expands as $X_{a} X_{a+1}\ldots X_{b}$.  When $b < a$, it expands as $X_{a} X_{a-1}\ldots X_{b}$.
\item Given a $x \in \{0, 1\}^{n}$, we define $1_{x}$ to be the length $2^{n}$ tuple with entries indexed by $n$-bit strings, with $1$ in the index corresponding to $x$, and $0$ elsewhere.
\item For two functions $f, g : \NN \rightarrow \RR$, we say that $f = O(g)$ and $g = \Omega(f)$ if $\exists~n_0 \geq 0$ and $c > 0$ such that $f(n) \leq c g(n)$ for all $n \geq n_0$. 
\item We employ the notation $\delta_{x,y}$ for the indicator function of $x = y$ and we also use $\delta(b)$ for the indicator function when $b$ is a boolean predicate such as $b = (x \in A)$.
\item The binomial function ${a \choose b}$ can be extended naturally to non-integer valued $a$. In this paper, we will only need to use $a = 1/2$. For $b > 0$,
\begin{equation}
    \binom{1/2}{b}
\defeq \frac{(1/2)(1/2 - 1)(1/2 - 2)\cdots(1/2 - b + 1)}{b!}\,.
\end{equation}
We also define $\binom{1/2}{0} \defeq 1$.  An important fact that we use is that the absolute value of this is always $< 1$ for $b > 0$.
\end{itemize}

\paragraph{Indexing}

Throughout most of this paper, we will be indexing over the set $\bits^n$; this set is isomorphic to the set $[N]$ where $N = 2^n$ under the lexicographic ordering. It is sometimes convenient to switch indexing sets between $\bits^n$ and $[N] \defeq \{0, \ldots, 2^n - 1\}$ for notational simplicity.

\subsection{Quantum query complexity}

\paragraph{Quantum query algorithms}
For this paper, we employ the following definitions of quantum query circuits/algorithms. These are standard definitions in the literature and are restated here for convenience. We use the term ``an oracle of size $n$'' to refer to a function $\Oo: \bits^n \rightarrow \bits$. This term matches notions from query complexity. We also use the phrase ``an oracle'' to refer to a function $\Oo: \bits^* \rightarrow \bits$ for a function mapping arbitrary strings to bits.

\begin{definition}[Quantum query circuit/algorithm]
For this paper, we define a \emph{quantum query algorithm} as an inputless quantum circuit that interacts coherently with a boolean oracle function $\Oo : \{0,1\}^n \to \{0,1\}$.
The oracle is accessed via a \emph{phase query gate}
$\ket{b, x} \;\longmapsto\; (-1)^{b\cdot \Oo(x)} \ket{b, x}$, which acts on an $n+1$-qubit query register.  The algorithm is described by an alternating sequence of unitaries (drawn from any fixed universal gate set) and oracle gates. Here $t$ denotes the \emph{number of oracle queries} made by the algorithm and $n$ denotes the size of the boolean oracle function. After all gates are applied, a designated output qubit is measured in the computational basis to determine acceptance.  
The initial state is assumed to be all qubits as $\ket{0}$ and all intermediate unitaries may act on an arbitrary (but finite) number of qubits. This definition naturally extends to multiple oracles (as will be the use case in this paper).
\end{definition}
This model captures the standard query-complexity viewpoint: there is no explicit input string—only access to the oracle $\Oo$ of a particular input size $n$. Furthermore, we don't place any restrictions on the complexity of the interleaved unitaries, and the overall circuit size is not of importance beyond the number of queries to $\mathcal{O}$.
We note that the phase query gate described here is equivalent to other standard notions of oracle access, including the bit-flip oracle $\ket{b, x} \;\longmapsto\; \ket{b \oplus \Oo(x), x}$.

We note that a pair of functions $(S, U): \bits^{n-1} \rightarrow \bits$ can be represented as a single function on $n$ bits by 
\begin{equation}
    \mathcal{O}(b, x) = \begin{cases}
        S(x)& \text{if}\quad b = 0\,,\\
        U(x)& \text{if}\quad b = 1\,.
    \end{cases} 
\end{equation}
Furthermore, controlled-access to the ``combined unitary'' can be constructed from controlled-access to both $S$ and $U$.
Therefore, for the remainder of this paper (and the statement of~\Cref{thm:main}), we use the notation of accessing two oracles as it is equivalent and notationally easier to digest.

\begin{definition}[Quantum query algorithm with witness]
A quantum query algorithm may also receive an auxiliary \emph{witness}. There are two main types of witnesses we consider.
\begin{itemize}
    \item  A \emph{quantum witness} is a state $\ket{\psi}$ on $q$ qubits.
    \item A \emph{classical witness} is a bit string $w \in \{0,1\}^q$, treated as a computational-basis state.
\end{itemize}
The definition is identical to the previous except the input state is now $\ket{\psi} \otimes \ket{0 \ldots 0}$ or $\ket{w} \otimes \ket{0 \ldots 0}$. The algorithm’s acceptance probability may depend on both the oracle and the witness.
\end{definition}

\paragraph{Quantum complexity theory}

The previous definitions of quantum query circuits/algorithms are defined in terms of a parameter $n$, which specifies the input length for oracle queries. Most of the paper will involve proving property-testing lower bounds for such algorithms. Only in the end will we need to reconcile these definitions with quantum complexity theory. \\

In complexity theory, we are interested in families of quantum algorithms that run on all possible input sizes. Notationally, $n$ is often used as the length of the input, which we want to classify as a yes or no instance of a decision problem. An oracular complexity class is one defined in terms of a function $\Oo: \bits^* \rightarrow \bits$ (or equivalently a family of functions $\Oo: \bits^n \rightarrow \bits$ for each integer length $n$). \\

To formally define the quantum complexity classes $\BQP^\Oo$, $\QCMA^\Oo$, and $\QMA^\Oo$, we will need to define the notion of a uniform quantum oracle algorithm. For the purposes of this result, we will only need to define $\P$-uniform quantum oracle algorithms. This is due to the result of Yao~\cite{yao-qc}, which proved that quantum complexity classes can be defined in terms of $\P$-uniform families of quantum circuits.

\begin{definition}[Quantum oracle circuits] We define a {family of quantum oracle circuits/algorithms} $\{\Aa_n\}_{n \ge 1}$,
where the index $n$ corresponds to the length of the explicit input to the computational problem.
Each $\Aa_n$ can be represented concretely as a quantum circuit consisting of:
\begin{itemize}
    \item elementary quantum gates drawn from a fixed, complete gate set (for example, Hadamard, phase, and controlled-NOT),
    \item \emph{oracle phase gates} providing coherent access to the Boolean oracles
    \begin{equation}
    \Oo_k : \{0,1\}^{k} \to \{0,1\}~\text{by}~\ket{b, x} \mapsto (-1)^{b\cdot \Oo_k(x)} \ket{b, x}~\text{for}~x \in \bits^k, b \in \bits\,.
    \end{equation}
\end{itemize}
This can be viewed as the circuit model definition of accessing $\Oo$ at various lengths. \\

The family $\{\Aa_n\}$ is $\P$-\emph{uniform} if there exists a deterministic polynomial-time Turing machine $M$ that, on input $1^n$, outputs a full classical description of the circuit $\Aa_n$.  
Because $M$ runs in time polynomial in $n$, the resulting circuit $\Aa_n$ must satisfy the following polynomial bounds for some polynomial functions:
\begin{itemize}
    \item it takes as input a classical input of size $n$ and consists of unitary gates followed by the measurement of a single qubit for a binary output,
    \item it contains at most $\poly(n)$ gates (either oracle or elementary), and
    \item the largest length $k$ which it can query $\Oo_k$ is at most $\poly(n)$.
\end{itemize}
This ensures that the circuit family represents an efficiently describable quantum algorithm operating within polynomial resources. 
\end{definition}

Using the definition of $\P$-uniform quantum oracle algorithms, we define the standard oracle quantum complexity classes. The natural extension of $\BQP$ is the following.
\begin{definition}[Oracle $\BQP$]
    A promise language $\Ll^\Oo = (\Ll_\yes, \Ll_\no)^\Oo \subseteq \{0,1\}^*$ is in $\BQP^{\Oo}$ if there exists a $\P$-uniform family of quantum oracle circuits $\Aa_n$ such that for every input $x$ of length $n = \abs{x}$,
\begin{xalign}
    \text{(Completeness)} & \quad x \in \Ll_\yes \implies \Pr[\Aa_n^\Oo(x)\text{ accepts}] \ge \tfrac{2}{3}, \\
    \text{(Soundness)} & \quad x \in \Ll_\no \implies \Pr[\Aa_n^\Oo(x)\text{ accepts}] \le \tfrac{1}{3}.
\end{xalign}
\end{definition}
 The pair of constants $2/3$ and $1/3$ is not important. Standard parallel repetition techniques can be used to choose different constants or even functions as close as $(1/2 + 1/\poly(n), 1/2 - 1/\poly(n))$ or as far apart as $(1 - 2^{-\poly(n)}, 2^{-\poly(n)})$. Next, we define the generalizations of $\QCMA$ and $\QMA$.
\begin{definition}[Oracle $\QCMA$]
    A promise language $\Ll^\Oo = (\Ll_\yes, \Ll_\no)^\Oo \subseteq \{0,1\}^*$ is in $\QCMA^{\Oo}$ if there exists a $\P$-uniform family of quantum oracle circuits $\Aa_n$ with $\Aa_n$ accepting a witness of length $q(n)$, such that for every input $x$ of length $n = \abs{x}$,
\begin{xalign}
    \text{(Completeness)} & \quad x \in \Ll_\yes \implies \exists~w \in \bits^{q(n)} \st \Pr[\Aa_n^\Oo(x,w)\text{ accepts}] \ge \tfrac{2}{3}, \\
    \text{(Soundness)} & \quad x \in \Ll_\no \implies \forall~\tilde w \in \bits^{q(n)},~\Pr[\Aa_n^\Oo(x, \tilde w)\text{ accepts}] \le \tfrac{1}{3}.
\end{xalign}
\end{definition}
\begin{definition}[Oracle $\QMA$]
    A promise language $\Ll^\Oo = (\Ll_\yes, \Ll_\no)^\Oo \subseteq \{0,1\}^*$ is in $\QMA^{\Oo}$ if there exists a $\P$-uniform family of quantum oracle circuits $\Aa_n$ with $\Aa_n$ accepting a witness of length $q(n)$, such that for every input $x$ of length $n = \abs{x}$,
\begin{xalign}
    \text{(Completeness)} & \quad x \in \Ll_\yes \implies \exists~\ket{\psi} \in \qubits{q(n)} \st \Pr[\Aa_n^\Oo(x,\ket{\psi})\text{ accepts}] \ge \tfrac{2}{3}, \\
    \text{(Soundness)} & \quad x \in \Ll_\no \implies \forall~\ket*{\tilde \psi}\in \qubits{q(n)},~\Pr[\Aa_n^\Oo(x, \ket*{\tilde \psi})\text{ accepts}] \le \tfrac{1}{3}.
\end{xalign}
\end{definition}

\subsection{Sets, multisets, and functions}

Throughout the paper, we use the following notation to refer to sets, multisets, and the functions associated with them.  
\begin{itemize}
    \item Given a multiset $S$, we use the notation $\{x_1, \ldots, x_{\ell}\}$ to refer to the \textit{unordered} collection of elements in $S$, where the elements $x_1, \ldots, x_{\ell}$ may be non-distinct.  
    \item We abuse notation and take $S(x)$ to be the indicator function $\delta(x \in S)$, which is $1$ if $x$ is in $S$ with any multiplicity, and $0$ otherwise.
    \item We will use the notation $\mathcal{O}_{S}$ to refer to the phase oracle for the function $S(\cdot)$, i.e., the unitary that acts as $\mathcal{O}_{S} \ket{b, x} = (-1)^{b \cdot S(x)} \ket{b, x}$.
    \item When applying operations between sets, like $S \setminus T$, we refer to the operation of first mapping $S$ and $T$ to the set of distinct elements in $S$ and $T$ respectively, and then outputting the operation applied to the resulting sets.  We also use the notation $T \subseteq S$ to mean that $T$ is a subset of the set of distinct elements of $S$.
    \item We also define a projector $\Pi_{S}$ for a multiset $S$ to be the projection onto basis states $x$ in $S$, treating $S$ as a set, i.e., $\sum_{x \in S} \ketbra{x}$.
    \item Throughout the paper, we use the terminology ``oracle access to a multiset $S$'' to mean query access to $\mathcal{O}_{S}$, and may write an algorithm querying $\mathcal{O}_{S}$ as $\mathcal{A}^{S}$ for brevity.  Algorithms may receive access to multiple multisets, in which case we write $\mathcal{A}^{(S_1, S_2)}$.  
    \item We will also write, for a multiset $S = \{s_1, \ldots, s_{\ell}\} \subseteq \bits^n$, the state $\ket{S} = \frac{1}{\sqrt{\ell}} \sum_{i = 1}^{\ell} \ket{s_i}$ in the Hilbert space $\qubits{n}$.  Note that when $S$ has no multiplicities greater than $1$, this is a normalized state, but otherwise it may be unnormalized.  This is as opposed to the state $\ket{\mathrm{tt}_{S}}$ in the Hilbert space $\CC^{\ZZ_{\ge 0}^{2^n}}$, the classical description of the multiset.  
\end{itemize}

\subsection{Spectral Forrelation}

\begin{definition}[Spectral Forrelation] We say two subsets $S, U \subset \bits^n$ are $\alpha$-spectrally Forrelated if
\begin{align}
    \alpha = \norm{\Pi_U\cdot H^{\otimes n}\cdot \Pi_S}_\mathrm{op}^2 = \max_{\norm{\ket{\psi}} = 1} \norm{\Pi_U\cdot H^{\otimes n}\cdot \Pi_S \ket{\psi}}^2 .
\end{align}
Here, the projectors $\Pi_{U}$ and $\Pi_{S}$ are the projectors onto the subspaces spanned by the basis vectors $\ket{x}$ for $x \in U$ and $x \in S$, respectively.
\end{definition}

\begin{remark}
    First note that this definition can be extended to multisets $S$ and $U$ by taking the set of elements included in $S$ and $U$.  In this way, spectral Forrelation is defined to be a property of pairs of multisets as well.

    This definition can also be extended to functions/oracles $S, U: \bits^n \rightarrow \bits$ by taking the sets $S$ and $U$ to be the pre-images of $1$ for the functions $S$ and $U$ respectively. In this way, spectral Forrelation is also defined as a property of a pair of oracles. 
\end{remark}

\begin{remark}
    The spectral Forrelation of any two oracles is a number $\in [0,1]$. By definition, if either oracle is the constant function $0$ (i.e., corresponds to the empty set), then the spectral Forrelation is $0$.
\end{remark}

In this paper, we will use the notation ``at least $\alpha$-spectrally Forrelated'' to mean that a pair of subsets are $\beta$-spectrally Forrelated for some $\beta \geq \alpha$, and similarly for ``at most $\alpha$-spectrally Forrelated''.  We will typically refer to a pair $(S, U)$ that is at least $59/100$-spectrally Forrelated as a yes instance of spectral Forrelation, and a pair $(S, U)$ that is at most $57/100$-spectrally Forrelated as a no instance.  

The following is implicit in \cite{zhandry2024toward}, but we re-prove it here.  

\begin{theorem}[$\QMA$~containment]
\label{thm:spectral_forrelation_qma_containment}
For any $\alpha > \beta$, there is a $O(1/(\alpha - \beta)^2)$ quantum query algorithm with a $n$-qubit quantum witness that, given oracle access to oracles $S, U: \bits^n \to \bits$, accepts with probability at least $2/3$ if they are  at least $\alpha$-spectrally Forrelated (yes instances), and probability at most $1/3$ if they are at most $\beta$-spectrally Forrelated  (no instances).
\end{theorem}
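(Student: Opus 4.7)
The plan is to use the top right singular vector of $M \defeq \Pi_U H^{\otimes n} \Pi_S$ as the quantum witness, verify it with a two-query circuit whose acceptance probability is exactly the spectral Forrelation of $(S,U)$, and then close the $(\alpha, \beta)$-gap to $(2/3, 1/3)$ via Marriott--Watrous amplification so that the witness remains a single $n$-qubit state.

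First I would specify the base verifier. Given an $n$-qubit witness register holding $\ket{\psi}$, attach two ancilla qubits $A_1, A_2$ initialized to $\ket{0}$ and, keeping everything coherent: (i) write $S(x)$ into $A_1$ using a single phase query to $S$ conjugated by Hadamards on $A_1$ (the standard reduction from the phase oracle to the bit oracle); (ii) apply $H^{\otimes n}$ to the witness register; (iii) write $U(y)$ into $A_2$ with one phase query to $U$. Accept iff the computational-basis measurement of $(A_1, A_2)$ returns $(1,1)$. By the principle of deferred measurement, the acceptance probability equals $\norm{\Pi_U H^{\otimes n} \Pi_S \ket{\psi}}^2$, and the verifier uses exactly one query to each oracle.

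Taking $\ket{\psi}$ to be the top right singular vector of $M$, the acceptance probability becomes $\norm{M}_{\mathrm{op}}^2$, which is by definition the spectral Forrelation of $(S,U)$. Hence on yes instances there is a witness accepted with probability $\geq \alpha$, and on no instances every witness is accepted with probability $\leq \beta$, irrespective of what the prover sends.

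To amplify to a $(2/3, 1/3)$-gap without enlarging the witness, I would apply Marriott--Watrous amplification to the coherent base verifier. MW treats the circuit as a fixed unitary $V$ (built from the two oracle calls, the Hadamards on $A_1$, and $H^{\otimes n}$) followed by projection onto $\Pi_{\mathrm{acc}} = \ketbra{11}_{A_1A_2}$; alternating reflections about $\Pi_{\mathrm{acc}}$ and about the ancilla-initialization projector $\ketbra{00}_{A_1A_2}$ (both conjugated by $V, V^\dagger$) estimate $\norm{\Pi_{\mathrm{acc}} V \ket{\psi}\ket{00}}^2$ to additive error $\varepsilon = (\alpha-\beta)/3$ using $O(1/\varepsilon^2) = O(1/(\alpha-\beta)^2)$ applications of $V$ and $V^\dagger$. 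Each $V$ costs two oracle queries and the procedure consumes only a single copy of the $n$-qubit witness, so the total query complexity is $O(1/(\alpha-\beta)^2)$, as claimed. The main obstacle, which is minor, is checking that the coherent base verifier fits the Marriott--Watrous template, namely a fixed unitary followed by a projective measurement on freshly introduced ancillas; this is immediate here since every step of the base verifier is a fixed unitary in the oracles and the only measurements act on $A_1, A_2$.
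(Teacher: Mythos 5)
Your proposal is correct and follows essentially the same route as the paper: a two-query coherent verifier whose acceptance probability is exactly $\norm{\Pi_U H^{\otimes n}\Pi_S\ket{\psi}}^2$ (with the top singular vector as the honest witness), followed by Marriott--Watrous amplification using $O(1/(\alpha-\beta)^2)$ queries on a single copy of the $n$-qubit witness. The only cosmetic difference is that you phrase the oracle calls as bit-writes into ancillas via Hadamard-conjugated phase queries, whereas the paper draws the equivalent controlled-phase circuit directly; the analysis is identical.
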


\begin{proof}
    There is a simple verifier that accepts yes instances with probability $\ge \alpha$ and accepts no instances with probability $\le \beta$:
    Let $\ket{\psi}$ be the $n$-qubit quantum witness. The following quantum circuit describes the verifier.

    $$
\Qcircuit @C=1em @R=1.2em {
    & \lstick{\ket{0}} & \gate{H} & \ctrl{2} & \qw      & \qw  &\gate{H}    & \meter & \cw \\
    & \lstick{\ket{0}} & \gate{H} & \qw      & \qw      & \ctrl{1} &\gate{H} & \meter & \cw \\
    & \lstick{\ket{\psi}} 
        & \qw
        & \gate{S} 
        & \gate{H^{\otimes n}} 
        & \gate{U} 
        & \qw & \qw & \qw
}
$$ ~

The verifier accepts if both measurements equal 1. It is easy to check that the probability that both measurements output 1 is exactly $\norm{\Pi_U \cdot H^{\otimes n} \cdot \Pi_S \ket{\psi}}^2$. By definition, there exists a $\ket{\psi}$ such that this probability is $\ge \alpha$ for yes instances and for all $\ket{\psi}$, this probability is $\le \beta$ for no instances. Using the Marriott-Watrous amplification~\cite{Marriott2005}, using $O(1/(\alpha - \beta)^2)$ queries, we can convert the verifier into one that decides spectral Forrelation with completeness-soundness of $(2/3, 1/3)$.
\end{proof}
Note that the previous circuit is effectively measuring the input witness with projectors $\Pi_S$ and $H^{\otimes n} \cdot \Pi_U \cdot H^{\otimes n}$. The Marriott-Watrous amplification protocol~\cite{Marriott2005} will consist of alternating these two measurements and using the classical outputs from the measurements to decide the problem.

\newpage
\part{From $\QCMA$ algorithms to samplers}
\label{part:qcma-to-sampler}

The goal of \Cref{part:qcma-to-sampler} is to show that there is a reduction from a $\QCMA$ algorithm that decides spectral Forrelation (i.e., whether two oracles of length $n$ are at least $59/100$-spectrally Forrelated or at most $57/100$-spectrally Forrelated, given a classical witness), to a sampler such that for all $(S, U)$ that are at least $59/100$-spectrally Forrelated, the sampler outputs many points from $S$, only querying $U$.  

\Cref{sec:qcma-to-one-sampler} will describe a general reduction for taking $\QCMA$ algorithms that query two oracles and decide between two families of (pairs of) oracles, and producing samplers that query one of the two oracles.  The reduction will work for families of oracles satisfying a special property. Given a family of yes and no instances defined on pairs of oracles $(S, U)$, we say that $(S, U)$ is a \emph{strong yes instance} if $(S, U)$ is a yes instance, and for any small $\Delta \subseteq S$, $(\Delta, U)$ is a no instance.  This section will show that any $\QCMA$ algorithm that distinguishes between families of pairs of yes and no oracle instances of this problem can be used to sample points from strong yes instances.  Finally, \Cref{sec:strong_yes} shows that if we define yes instances to be sets $(S, U)$ that are at least $59/100$-spectrally Forrelated, and no instances to be sets $(S, U)$ that are at most $57/100$-spectrally Forrelated, there is a procedure for sampling from a large family of strong yes instances.  These constants ($59/100$ and $57/100$) are artifacts of the proof, and not necessarily fundamental to spectral Forrelation.

\section{Constructing samplers from strong yes instances} \label{sec:qcma-to-one-sampler}

Assume the existence of a $t$-query quantum query algorithm $\mathcal{A}^{(S,U)}$ with $q$-bit classical witness which solves the spectral Forrelation problem. Formally, assume the quantum query algorithm $\mathcal{A}^{(S,U)}$ has the following properties:
\begin{enumerate}
    \item (Completeness) If $(S,U)$ is at least $59/100$-spectrally Forrelated, there exists a witness $w \in \bits^q$ such that $\mathcal{A}^{(S,U)}(w)$ accepts with probability at least $2/3$. 
    \item (Soundness) If $(S,U)$ is at most $57/100$-spectrally Forrelated, then for all witnesses $\tilde w \in \bits^q$, $\mathcal{A}^{(S,U)}(\tilde w)$ accepts with probability at most $1/3$.
\end{enumerate}
In this section, we describe pairs $(S, U)$ that are at least $59/100$-spectrally Forrelated as yes instances of spectral Forrelation, and pairs $(S, U)$ that are at most $57/100$-spectrally Forrelated as no instances of spectral Forrelation.  

As described in the preliminaries, oracle access to the multisets $S$ and $U$ can be described as the linear extensions of the following maps
    \begin{xalign}
        \ket{b, x}\ket{z} &\overset{\mathcal{O}_S}{\mapsto} (-1)^{b \cdot S(x)} \ket{b, x}\ket{z}\,, \\
        \ket{b, x}\ket{z} &\overset{\mathcal{O}_U}{\mapsto} (-1)^{b \cdot U(x)} \ket{b, x}\ket{z}\,.
\end{xalign}
where $b$ is a control bit, $x$ describes the input to the oracle, and $z$ describes the state of the remainder of the system. The algorithm $\mathcal{A}^{(S,U)}$ can be thought of starting from a state $\ket{w}\ket{0\ldots0}$ and applying a sequence of general unitaries interlaced with $t$ queries to $S$ and $t$ queries to $U$. The algorithm $\mathcal{A}^{(S, U)}$ concludes by measuring the first qubit in the standard basis.

\subsection{Sampling from $S$} 
We can treat the algorithm $\mathcal{A}^{(S, U)}$ as $(\mathcal{A}^{U})^{S}$ --- i.e., an algorithm $\mathcal{A}^{U}$, consisting of standard unitary gates, as well as $U$ gates, that makes queries the $S$ oracle.  When expressed this way, the state of the algorithm $(\mathcal{A}^{U})^{S}$ immediately before its final measurement is given by
\begin{equation}
\label{eq:qcma_algorithm_state_sampler}
    V_t \mathcal{O}_S V_{t-1} \mathcal{O}_S \ldots \mathcal{O}_S V_0 \ket{w, 0}
\end{equation}
with the unitaries $\{V_j\}$ including queries to $U$.  Consider the following algorithm, which takes as input a witness $w$ and a set $\Delta \subseteq \bits^{n}$. The algorithm $\mathsf{Sampler}^U(w, \Delta)$ is a sampler for generating an additional sample from $S$ given a prior subset $\Delta$ of found points.  Recall that for a multiset $S$ and set $\Delta$, we are taking $S \setminus \Delta$ to be the set of elements that appear with multiplicity at least $1$ in $S$ that do not appear in $\Delta$.  

\begin{mathinlay}
    \textbf{Query algorithm} $\mathsf{Sampler}^U(w, \Delta)$:
    \begin{enumerate}
            \item Sample $j \leftarrow \{0, \ldots, t-1\}$ uniformly randomly.
            \item Compute the state $V_j \mathcal{O}_{\Delta} V_{j-1} \mathcal{O}_{\Delta} \ldots \mathcal{O}_{\Delta} V_0 \ket{w,0}$, where $\mathcal{O}_{\Delta}$ is the unitary defined by
            \begin{equation}
                \ket{b,x}\ket{z} \overset{\mathcal{O}_{\Delta}}{\mapsto} (-1)^{b\cdot\delta(x \in \Delta)} \ket{b,x}\ket{z}\,.
            \end{equation}
            \item Measure the state in the standard basis for output $(b, x,z)$.
            \item If $x \in \Delta$, output the alphabetically first symbol not in $\Delta$. Else, output $x$.
    \end{enumerate}
\end{mathinlay}

A minor adjustment can be made to ensure that $\mathsf{Sampler}$ makes exactly $t$ queries to $U$. At a high level, the idea behind the sampler is that for $\mathcal{A}^{U}$ to distinguish between $S$ and $\Delta$, it must be querying points in $S$ that are not in $\Delta$, since the oracles $f_{\Delta}$ and $S$ are identical outside of those points.  We prove the following claim about $\mathsf{Sampler}$. The proof closely resembles the hybrid method of \cite{BBBV}.

\begin{lemma}
\label{lem:repeated-sampler}

Let $(S,U)$ be a yes instance of the oracle problem (i.e., at least $59/100$-spectrally Forrelated).  Let $\Delta\subseteq S$ be a subset of the set of elements in $S$ such that $(\Delta, U)$ is a no instance (i.e., at most $57/100$-spectrally Forrelated). Let $w$ be a witness that causes the query algorithm $\mathcal{A}$ to accept $(S, U)$ with probability at least $2/3$. Then, the algorithm $\mathsf{Sampler}^U(w, \Delta)$ makes $t$ queries to the oracle $U$, no queries to the oracle $S$, and produces a sample from $S \setminus \Delta$ with probability at least $ \left(\frac{1}{36t^2}\right)$.

\end{lemma}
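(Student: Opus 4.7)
The natural strategy is a standard BBBV-style hybrid argument~\cite{BBBV}: if the query algorithm distinguishes the oracles $\mathcal{O}_S$ and $\mathcal{O}_\Delta$ by its acceptance probability, then it must place noticeable query weight on the set where they differ, namely $S \setminus \Delta$. Since the sampler prepares exactly the intermediate states of the $\Delta$-run and then measures their query registers, this query weight directly lower-bounds its success probability.

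First, I would use completeness and soundness to lower-bound the $L_2$ distance between the two full pre-measurement states. Let $\ket{\phi^S}$ and $\ket{\phi^\Delta}$ denote the pre-measurement states of $\mathcal{A}^{(S,U)}(w)$ and $\mathcal{A}^{(\Delta,U)}(w)$, respectively, as in~\eqref{eq:qcma_algorithm_state_sampler}. Since $(S,U)$ is a yes instance and $(\Delta, U)$ is a no instance, the single-qubit acceptance measurement yields probabilities differing by at least $2/3 - 1/3 = 1/3$. The trace distance between the pure states is therefore $\ge 1/3$, so $|\langle\phi^S|\phi^\Delta\rangle| \le \sqrt{8}/3 = 2\sqrt{2}/3$, and consequently $\norm{\ket{\phi^S} - \ket{\phi^\Delta}}^2 \ge 2 - 4\sqrt{2}/3$.

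Second, I would apply the BBBV hybrid argument to the two phase oracles $\mathcal{O}_S$ and $\mathcal{O}_\Delta$, which agree outside of $S \setminus \Delta$. Define $\ket{\xi_k} \defeq V_k \mathcal{O}_\Delta V_{k-1} \cdots \mathcal{O}_\Delta V_0 \ket{w, 0}$ --- precisely the state that $\mathsf{Sampler}^U(w, \Delta)$ prepares when it samples $j = k$ --- and let $q_k$ denote the probability that measuring its query register in the computational basis returns a point of $S \setminus \Delta$ (with the control qubit set to $1$). Telescoping $\ket{\phi^S} - \ket{\phi^\Delta}$ one oracle slot at a time and using $\norm{(\mathcal{O}_S - \mathcal{O}_\Delta) \ket{\xi_k}}^2 = 4 q_k$, followed by Cauchy--Schwarz, gives $\norm{\ket{\phi^S} - \ket{\phi^\Delta}}^2 \le 4 t \sum_{k = 0}^{t-1} q_k$. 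Combining with the previous bound yields $\frac{1}{t} \sum_k q_k \ge (2 - 4\sqrt{2}/3)/(4 t^2) > 1/(36 t^2)$, where the last inequality is a direct numerical check.

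Third, I would observe that whenever the measurement in step~3 of the sampler returns some $(b, x, z)$ with $x \in S \setminus \Delta$, we have $x \notin \Delta$, so step~4 outputs $x$ itself, which lies in $S \setminus \Delta$ as required. Hence the sampler's success probability is at least $\frac{1}{t} \sum_k q_k > 1/(36 t^2)$. The only mildly subtle point is to expand the BBBV telescoping from the correct side, so that the query-weight bound references the $\Delta$-run states $\ket{\xi_k}$ actually prepared by the sampler (which has no access to $\mathcal{O}_S$) rather than the $S$-run states; the remaining inequalities, including the numerical check $(2 - 4\sqrt{2}/3)/4 \approx 0.0286 > 1/36 \approx 0.0278$, are routine.
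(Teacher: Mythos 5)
Your proposal is correct and follows essentially the same route as the paper: a BBBV-style hybrid argument that converts the $2/3$ vs.\ $1/3$ acceptance gap into query weight of the $\Delta$-run states on $S \setminus \Delta$, which is exactly the sampler's success probability. The only (cosmetic) difference is the first step — you use the pure-state fidelity relation to get $\norm{\ket{\phi^S}-\ket{\phi^\Delta}}^2 \ge 2 - 4\sqrt{2}/3$ and then check numerically that $(2-4\sqrt{2}/3)/4 > 1/36$, whereas the paper uses $\tfrac{1}{2}\norm{\ketbra{h_t}-\ketbra{h_0}}_1 \le \norm{\ket{h_t}-\ket{h_0}}$ to get the bound $1/3$ directly and then applies Cauchy--Schwarz; both give the stated $1/(36t^2)$.
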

\begin{proof}
    We begin by defining a sequence of hybrids.  Recall that the verification algorithm's state immediately before its final measurement is given by \eqref{eq:qcma_algorithm_state_sampler}.  Let the $j$-th hybrid state $\ket*{h_{j}(w)}$ be define as
    \begin{equation}
    \ket*{h_j(w)} \defeq V_{t} \mathcal{O}_{S} \ldots \mathcal{O}_{S} \underbrace{V_{j} \mathcal{O}_\Delta V_{j-1} \mathcal{O}_\Delta \ldots V_1 \mathcal{O}_\Delta V_0 \ket{w,0}}_{\defeq\  \ket{\psi_j(w)}}\,,
\end{equation}
where we call the state $\ket{\psi_{j}(w)}$ the $j$-th prefix state.  Intuitively, the prefix state $\ket{\psi_{j}(w)}$ corresponds to running the algorithm $\mathcal{A}$ with oracles $(\Delta, U)$ until the $(j+1)$-th query, and the hybrid state $\ket*{h_j(w)}$ corresponds to replacing the first $j$ queries to the $S$ oracle with queries to the $\Delta$ oracle.  

Then, $\ket*{h_{0}(w)}$ corresponds to running $\mathcal{A}$ on $(\Delta, U)$ up until the final measurement and $\ket*{h_t(w)}$ corresponds to running $\mathcal{A}$ on $(S, U)$ up until the final measurement.  Since $(\Delta, U)$ is a no instance of spectral Forrelation and $(S, U)$ is a yes instance of spectral Forrelation, there is a measurement that accepts $\ket*{h_0(w)}$ with probability at most $1/3$ and accepts $\ket*{h_t(w)}$ with probability at least $2/3$, namely to measure the first qubit in the computational basis.  Therefore, we have that
\begin{xalign}
    \frac{1}{3} &\leq \Tr\left[\ketbra{0} \left(\ketbra{h_t(w)} - \ketbra{h_0(w)}\right)\right]\\
    &\leq \frac{1}{2} \norm{\ketbra{h_t(w)} - \ketbra{h_0(w)}}_{1}\\
    &\leq \norm{\ket*{h_t(w)} - \ket*{h_0(w)}}\,.
\end{xalign}
A proof of these inequalities can be found in e.g.~\cite{wilde2016quantum}.  By the triangle inequality, we have that
\begin{xalign}
    \frac{1}{3} &\leq \norm{\ket{h_t(w)} - \ket{h_0(w)}} \\
    &\leq \sum_{j = 1}^t \norm{\ket*{h_j(w)} - \ket*{h_{j-1}(w)}} \\
    &\leq \sum_{j = 0}^{t-1} \norm{ (\mathcal{O}_{\Delta} - \mathcal{O}_S) \ket*{\psi_j(w)}} \\
    &= 2 \sum_{j = 0}^{t-1} \norm{ \Gamma \ket*{\psi_j(w)}}
\quad\text{where}~\Gamma \defeq  \sum_{x \in S \setminus \Delta} \ketbra{1, x} \otimes \id\,.
\end{xalign}
Here, we use the fact that the $j$ and $(j+1)$'st hybrids differ only by the oracle query immediately after the $j$-th prefix state, and we recall that $S$ is the controlled phase flip oracle that applies a sign of $-1$ when $b = 1$ and $x \in S$, so $\mathcal{O}_{\Delta} - \mathcal{O}_{S}$ is exactly twice the projector onto strings of the form $\ket{1, x}$ for $x \in S \triangle \Delta = S \setminus \Delta$ as $\Delta \subseteq S$.  Here, recall that operations like set minus and symmetric difference act on $S$ and $\Delta$ as if they were sets.  For each $j \in \{0, \ldots, t-1\}$, express
\begin{equation}
    \ket*{\psi_j(w)} = \sum_{x} \beta_{x}^{(j)} \ket{x} \otimes \ket*{\psi_j(x,w)}~\text{for}~\beta_{x} \in \mathbb{R}^{+}. \label{eq:psi-decomposition}
\end{equation}
Here, the control bit $b$ is captured in $\ket{\psi_{j}(x, w)}$.  Then, using the Cauchy-Schwarz inequality, we have that,
\begin{equation}
    \frac{1}{6} \leq \sum_{j = 0}^{t-1} \sqrt{\sum_{x \in S \setminus \Delta} (\beta_{x}^{(j)})^2} \leq \sqrt{t} \sqrt{\sum_{j = 0}^{t-1} \sum_{x\in S \setminus \Delta} (\beta_{x}^{(j)})^2}\,.
\end{equation}
This implies that,
\begin{equation}
    \frac{1}{36t^2} \leq \frac{1}{t} \sum_{j = 0}^{t-1} \sum_{x\in S \setminus \Delta} (\beta_{x}^{(j)})^2\,. \label{eq:prob-of-output-coplants}
\end{equation}
The right hand side is exactly the probability that $\mathsf{Sampler}^{U}(w, \Delta)$ outputs a $x$ from $S \setminus \Delta$. 
\end{proof}

\subsection{Witness-free sampler} 
The algorithm $\mathsf{Sampler}^U(w, \Delta)$ requires a witness $w$ to run. Furthermore, conditioned on $\mathsf{Sampler}^U(w, \Delta)$ outputting a novel point from $S$, we can iterate the sampler to generate multiple points. This generates a notion of a Cumulative Sampler
which also requires a witness. However, we can simply guess the witness $w$ at a cost to the success probability of the algorithm:

\begin{mathinlay}
    \textbf{Query algorithm} $\mathsf{CumulativeSampler}^U$:
    \begin{enumerate}
        \item Sample a random $w \in \bits^q$.
        \item Initialize $\Delta \leftarrow \emptyset$.
       \item For $v$ rounds, run $\mathsf{Sampler}(w, \Delta)$ with fresh independent randomness. Append output $x$: $\Delta \leftarrow \Delta \cup \{x\}$.
       \item Output the resulting $\Delta$.
    \end{enumerate}
\end{mathinlay}

Notice that the algorithm $\mathsf{CumulativeSampler}^U$ takes no witness as an input. Meaning the same sampler is producing samples from $S$ with the aforementioned probability for \emph{every} yes instance $(S,U)$ such that $(\Delta, U)$ are no instances for small subsets $\Delta$. We formalize this observation in the following theorem.

\begin{theorem}[Good samplers from $\QCMA$ algorithms]
\label{thm:formal-removal-of-S}
Assume there exists a quantum query algorithm $\mathcal{A}$ with classical witness for spectral Forrelation instances such that for instances of size $n$, $\mathcal{A}$ takes a $q$-sized classical witness and makes $t$ oracle queries. Let $(S,U)$ be a yes instance of spectral Forrelation and $v \in \NN$ be such that $(\Delta, U)$ is a no instance of spectral Forrelation for all subsets $\Delta \subset S$ with $\abs{\Delta} \leq v$. There exists a query algorithm $\mathsf{CumulativeSampler}$ (with implicit dependence on $v$) such that for $\mathsf{CumulativeSampler}^U$ makes 0 queries to $S$, $vt$ queries to $U$ and produces $v$ unique samples from $S$ with probability at least
\begin{equation}
    \ge 2^{-q} \cdot \left(\frac{1}{36t^2}\right)^v\,. \label{eq:formal-prob-bound}
\end{equation}
\end{theorem}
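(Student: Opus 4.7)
The plan is to derive the theorem as a direct iteration of \Cref{lem:repeated-sampler}, with the single initial $2^{-q}$ factor arising from the uniform guess of the classical witness. I would begin by reducing to the case in which $w$ is the fixed ``good'' witness from the completeness clause of the hypothesized algorithm $\mathcal{A}$ --- that is, a witness causing $\mathcal{A}^{(S,U)}(w)$ to accept with probability at least $2/3$. Such a $w$ exists by the completeness of the $\QCMA$ algorithm and the assumption that $(S,U)$ is a yes instance. Since $\mathsf{CumulativeSampler}$ draws $w$ uniformly from $\{0,1\}^q$ in step 1, it matches this good witness with probability exactly $2^{-q}$, and the remainder of the analysis will condition on this event.

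Next I would introduce the events $E_i$ for $i \in \{0, 1, \ldots, v\}$, defined as ``after $i$ rounds of $\mathsf{Sampler}^U(w, \cdot)$, the accumulated set $\Delta_i$ is a size-$i$ subset of $S$.'' Note $E_0$ holds trivially since $\Delta_0 = \emptyset$. The core observation is that on the event $E_{i-1}$, we have $\Delta_{i-1} \subseteq S$ with $|\Delta_{i-1}| = i - 1 \le v - 1$, so the hypothesis of the theorem guarantees that $(\Delta_{i-1}, U)$ is a no instance of spectral Forrelation. Consequently \Cref{lem:repeated-sampler} applies to the $i$-th call of $\mathsf{Sampler}^U(w, \Delta_{i-1})$, yielding
\begin{equation}
    \Pr\bigl[E_i \,\big|\, E_{i-1}, w \text{ good}\bigr] \;\ge\; \frac{1}{36 t^2}\,.
\end{equation}
Because each round of $\mathsf{Sampler}$ uses fresh, independent randomness (and the same fixed $w$), I can chain these conditional bounds together to obtain
\begin{equation}
    \Pr\bigl[E_v \,\big|\, w \text{ good}\bigr] \;=\; \prod_{i=1}^{v} \Pr\bigl[E_i \,\big|\, E_{i-1}, w \text{ good}\bigr] \;\ge\; \left(\frac{1}{36t^2}\right)^v.
\end{equation}
Combining with the $2^{-q}$ cost of the initial guess yields the claimed bound \eqref{eq:formal-prob-bound}. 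The query-count bookkeeping is immediate: $\mathsf{CumulativeSampler}$ makes $v$ calls to $\mathsf{Sampler}^U$, each of which makes exactly $t$ queries to $U$ and zero queries to $S$, giving $vt$ queries to $U$ overall and no queries to $S$ (the oracle $\mathcal{O}_\Delta$ used inside $\mathsf{Sampler}$ is implemented classically from the explicitly maintained set $\Delta$).

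The only substantive point requiring care is the legitimacy of multiplying the conditional probabilities, and this is where I would dwell in the write-up. Two subtleties must be addressed: (i) the conditioning on ``$w$ good'' is a statement about a fixed (deterministic) witness, not a random variable correlated with the sampler's coins, so the randomness in distinct rounds of $\mathsf{Sampler}$ is genuinely independent; and (ii) \Cref{lem:repeated-sampler} is stated for an arbitrary subset $\Delta \subseteq S$ with $(\Delta, U)$ a no instance, and the random $\Delta_{i-1}$ produced by the previous rounds lies in this class precisely on the event $E_{i-1}$, which is exactly what the conditioning isolates. With these points clarified, the chaining inequality is formally justified and the proof is complete.
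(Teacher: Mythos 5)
Your proposal is correct and follows essentially the same route as the paper's proof: guess the good witness at cost $2^{-q}$, then chain \Cref{lem:repeated-sampler} over the $v$ rounds by conditioning on the previous samples all lying in $S$, which guarantees $(\Delta_{i-1},U)$ is a no instance so the lemma applies at each step. The extra care you give to the independence of the per-round randomness and to the validity of invoking the lemma with a random $\Delta_{i-1}$ is a fair elaboration of what the paper leaves implicit, not a different argument.
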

\begin{proof}
    Define the following events with respect to the running of $\mathsf{CumulativeSampler}$. Let $G$ be the event that the witness $w$ sampled is a good witness for the pair $(S,U)$. Second, let $E_1 \ldots, E_v$ be the events that the corresponding guesses are in $S$. By construction, the samples are distinct. The probability they are all correct is
    \begin{xalign}
        \Pr[E_v, \ldots, E_1] &\geq \Pr[G] \Pr[E_v, \ldots, E_1 | G ]  \\
        &\geq \Pr[G] \cdot \prod_{j = 1}^v \Pr[E_j | E_{j-1}, \ldots, E_1, G] \\
        &\geq 2^{-q} \cdot \left(\frac{1}{36t^2}\right)^v\,.
    \end{xalign}
    Here, we apply \Cref{lem:repeated-sampler}, which tells us that the probability that the $j$-th sample is in $S$, conditioned on the first $j-1$ samples being in $S$ is at least $\frac{1}{36t^2}$.  
\end{proof} %
\section{Strong yes instances for spectral Forrelation}

\label{sec:strong_yes}
 \Cref{thm:formal-removal-of-S} demonstrates that there exists a sampler algorithm with good success probability that, when given oracle access to the $U$ oracle in a yes instance $(S,U)$ of spectral Forrelation, with the additional property that $(\Delta, U)$ is a no instance of spectral Forrelation for all subsets $\Delta \subset S$ such that $\abs{\Delta} \leq v$, outputs $v$ points from $S$. We define such yes instances as \emph{strong} yes instances.
\begin{definition}[Strong yes instance]
    For any $t_1,t_2,v$ with $t_1<t_2$, we will say that a pair $(S,U)$ is a $(t_1,t_2,v)$-strong yes instance if
    \begin{enumerate}
        \item (completeness): $(S,U)$ are at least $t_2$-spectrally Forrelated.
        \item (soundness): For any subset $\Delta \subset S$ such that $\abs{\Delta} \leq v$, $(\Delta, U)$ are at most $t_1$-spectrally Forrelated.
    \end{enumerate}
\label[definition]{def:strong-instance}
\end{definition}
Intuitively, disproving the consequence derived in~\Cref{thm:formal-removal-of-S} will require constructing many strong yes instances and proving that no small query sampler can be successful for all these yes instances. To create yes instances, we want to sample a random multiset $S$ of size $\ell$, and then construct (with high probability) a set $U$ that is Forrelated with $S$. For this, define
    \begin{equation}
        \gamma_y^{(S)} = \qty(\frac{1}{\sqrt{\ell}} \sum_{i \in [\ell]} (-1)^{y \cdot s_i})^2 = \frac{1}{\ell}\sum_{i,j} (-1)^{y \cdot(s_i + s_j)} = 1 + \frac{1}{\ell} \sum_{i \neq j} (-1)^{y \cdot (s_i + s_j)}\,. \label{eq:def-of-gamma-s}
    \end{equation}
Observe that $\gamma_y^{(S)}$ equals $2^n \cdot \abs{\mel{y}{H^{\otimes n}}{S}}^2$, where $\ket{S}$ is the superposition over $S$, weighted by the multiplicities of the elements, divided by $\sqrt{\ell}$.  When $S$ is a set (i.e., has no multiplicities that are not $0$ or $1$), this is a normalized state, but if it is a multiset it may not be normalized. The following lemma proves that there exists a distribution that is overwhelmingly supported on strong yes instances. In conjunction with~\Cref{thm:formal-removal-of-S}, it gives that a $\QCMA$ algorithm implies a sampler for a particular distribution. In the next section (\Cref{sec:bosons}), we will prove that such a sampler cannot exist.

\begin{definition}[The $\mathsf{Strong}$ distribution]\label[definition]{def:def-of-strong-dist}
    Let the distribution $\mathsf{Strong}_{\kappa}$ over pairs $(S,U)$ be defined by
    \begin{enumerate}[label=(\alph*)]
        \item sampling the multiset $S = \{s_1, \ldots, s_\ell\}$ of size $\ell$ by sampling each $s_i$ uniformly randomly from $\bits^n$,
        \item followed by, sampling a set $U$ by adding each point $y \in \bits^n \setminus \{0^n\}$ to $U$ with probability $1 - \e^{-\kappa\gamma_y^{(S)}}/2$ where $\gamma_y^{(S)}$ is defined in~\eqref{eq:def-of-gamma-s}. 
    \end{enumerate}
    The distribution $\mathsf{Strong}= \mathsf{Strong}_\kappa$ can also be seen as a distribution over oracle pairs $(S,U)$ by considering the indicator oracles for the multisets.
\end{definition}
Note that in the distribution $\mathsf{Strong}_{\kappa}$, $U$ is always a set, even though $S$ may be have multiplicities.  
Second, observe that the point $y = 0^n$ has a corresponding probability of $1 - \e^{\kappa \ell}/2$ which is an invariant irrespective of the instance. Therefore, querying $y$ provides no information for a verification algorithm about the instance that it couldn't have achieved by flipping a coin. 
\begin{lemma}[The strong yes property]\label[lemma]{lem:strong-instance}
    For all $\kappa \in [0, 1]$, $\rho \geq 0$ such that $t_1 < t_2$ in~\eqref{eq:t1-t2-def}, a pair $(S,U)$ sampled $\sim \mathsf{Strong}_{\kappa}$ is a $(t_1,t_2,v)$-strong yes instance, except with probability at most \begin{equation}\ell^6 2^{-n} +2\ell^2 \exp\qty(-\frac{\rho^2 2^n}{2\ell^2}),\end{equation} where
    \begin{xalign}[eq:t1-t2-def]
        t_1&=\frac{1+\kappa}{2}+\frac{v}{\ell}+\rho\,,\\
        t_2&=\frac{1+3\kappa}{2}-\frac{15\kappa^2}{4}-\frac{5\kappa}{\ell}-\rho\,.
    \end{xalign}
\end{lemma}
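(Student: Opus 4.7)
The plan is to simultaneously control completeness and soundness by studying the Hermitian matrix $M_\Delta \defeq \Pi_\Delta H^{\otimes n} \Pi_U H^{\otimes n} \Pi_\Delta$ for $\Delta \subseteq S$, noting that $\|\Pi_U H^{\otimes n}\Pi_\Delta\|_{\mathrm{op}}^2 = \|M_\Delta\|_{\mathrm{op}}$ and, since $\Pi_\Delta \Pi_S = \Pi_\Delta$ whenever $\Delta \subseteq S$, that $M_\Delta = \Pi_\Delta M_S \Pi_\Delta$. First I would discard an event of probability $\leq \binom{\ell}{2}/2^n$, namely the event that the sampled multiset $S = \{s_1,\ldots,s_\ell\}$ has a repeated element; all subsequent computations condition on $S$ being a genuine set, so that $\ket{S}$ is a unit vector and $\Pi_S \ket{S} = \ket{S}$.

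Next, for a fixed distinct $S$, I would compute the conditional expectation $\Exp_U[M_S]$ in the standard basis. Plugging in $p_y = 1 - e^{-\kappa \gamma_y^{(S)}}/2 = \tfrac{1}{2} + \tfrac{\kappa}{2}\gamma_y^{(S)} - \tfrac{\kappa^2}{4}(\gamma_y^{(S)})^2 + O(\kappa^3 (\gamma_y^{(S)})^3)$ together with the Fourier identity $\sum_y (-1)^{y\cdot z} = 2^n \delta_{z,0}$ and the definition of $\gamma_y^{(S)}$, the expectation takes the form $\Exp_U[M_S] \approx (\tfrac{1+\kappa}{2} - \tfrac{\kappa}{\ell})\,\mathrm{id}_S + \kappa \ketbra{S}$ plus correction matrices whose entries count nontrivial additive relations $s_{i_1}+\cdots+s_{i_k}=0$ among the sampled points. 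Ruling out the four- and six-wise coincidences needed to control the $\kappa^2$ and $\kappa^3$ corrections costs an event of probability $\leq O(\ell^6/2^n)$ by Markov's inequality, which is the origin of the $\ell^6 2^{-n}$ term in the lemma, after which I obtain $\Exp_U[\bra{S}M_S\ket{S}] \geq \tfrac{1+3\kappa}{2} - \tfrac{15\kappa^2}{4} - \tfrac{5\kappa}{\ell}$ and, for every $\Delta \subseteq S$ with $|\Delta|\leq v$, $\|\Exp_U[M_\Delta]\|_{\mathrm{op}} \leq \tfrac{1+\kappa}{2} + \tfrac{\kappa(|\Delta|-1)}{\ell} \leq \tfrac{1+\kappa}{2} + \tfrac{v}{\ell}$ (using $\kappa \leq 1$), by diagonalising the rank-one update $\kappa\ketbra{S}$ restricted to $\Delta$.

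Then I would handle the fluctuation $M_S - \Exp_U[M_S]$ coming from the randomness in $U$. Each entry is an independent bounded sum of the form $\tfrac{1}{2^n}\sum_y (U(y)-p_y)(-1)^{y\cdot(s+s')}$ of $2^n$ terms each bounded by $1/2^n$, so Hoeffding's inequality gives a tail probability $\leq 2\exp(-\rho^2 2^n/(2\ell^2))$ for entrywise deviations of size $\rho/\ell$. A union bound over the $\leq \ell^2$ entries combined with the elementary inequality $\|A\|_{\mathrm{op}} \leq \ell \|A\|_{\max}$ for $\ell\times\ell$ Hermitian matrices delivers $\|M_S - \Exp_U[M_S]\|_{\mathrm{op}} \leq \rho$ except with probability $\leq 2\ell^2 \exp(-\rho^2 2^n/(2\ell^2))$, matching the second error term in the statement. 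Completeness now follows from $\bra{S}M_S\ket{S} \geq \Exp_U[\bra{S}M_S\ket{S}] - \rho \geq t_2$, and soundness follows uniformly in $\Delta$ since $\|M_\Delta - \Exp_U[M_\Delta]\|_{\mathrm{op}} = \|\Pi_\Delta(M_S - \Exp_U[M_S])\Pi_\Delta\|_{\mathrm{op}} \leq \|M_S - \Exp_U[M_S]\|_{\mathrm{op}} \leq \rho$, yielding $\|M_\Delta\|_{\mathrm{op}} \leq \tfrac{1+\kappa}{2} + \tfrac{v}{\ell} + \rho = t_1$.

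The most delicate step will be the analysis of the higher-order terms in the Taylor expansion of $p_y$. Since $\gamma_y^{(S)}$ can be as large as $\ell$, a naive uniform truncation of $1 - e^{-\kappa x}/2$ on $[0,\ell]$ is far too loose; what rescues the computation is that for a typical random $S$ the moments $\Exp_y[(\gamma_y^{(S)})^k]$ concentrate around the small combinatorial constants $1, 3, 15, \ldots$, so the $\kappa^2$ and $\kappa^3$ corrections merely eat into the slack defining $t_2$. Translating these heuristic moment bounds into the precise constants $\tfrac{15\kappa^2}{4}$ and $\tfrac{5\kappa}{\ell}$ amounts to careful bookkeeping of which four- and six-wise additive coincidence patterns appear, and the tail estimates for those coincidence counts are exactly what give rise to the $\ell^6 2^{-n}$ contribution to the failure probability.
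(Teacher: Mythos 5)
Your overall architecture matches the paper's: the same matrix $M^{S,U}=\Pi_S H^{\otimes n}\Pi_U H^{\otimes n}\Pi_S$, an entrywise Hoeffding bound with a union bound over $\ell^2$ entries and an $\|\cdot\|_{\mathrm{op}}\le \ell\,\|\cdot\|_{\max}$ (Gershgorin-type) step to get operator-norm concentration at scale $\rho$, a ``no nontrivial XOR relations among up to six elements of $S$'' event of probability $\le \ell^6 2^{-n}$, and relation-counting to pin down the constants. The place where your proposal has a genuine gap is the soundness half. You obtain $\|\Exp_U[M_\Delta]\|_{\mathrm{op}}\le \frac{1+\kappa}{2}+\frac{v}{\ell}$ by diagonalising the \emph{first-order} (in $\kappa$) approximation $(\frac{1+\kappa}{2}-\frac{\kappa}{\ell})\id_S+\kappa\ketbra{S}$ restricted to $\Delta$, but you never control what the dropped terms of the expansion of $1-\frac{1}{2}e^{-\kappa\gamma}$ do to the norm of the restricted submatrix. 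The remainder beyond second order, $R(x)\defeq 1-\frac{1}{2}e^{-\kappa x}-\frac{1}{2}-\frac{\kappa x}{2}+\frac{\kappa^2 x^2}{4}$, is pointwise non-negative, so the corresponding correction matrix $\Pi_\Delta H^{\otimes n}\,\mathrm{Diag}[(R(\gamma_y^{(S)}))_y]\,H^{\otimes n}\Pi_\Delta$ is PSD and can only \emph{increase} $\|\Exp_U[M_\Delta]\|$; the best entrywise bound your moment heuristics give is $|{\rm entry}|\le \frac{\kappa^3}{12}\cdot\frac{1}{2^n}\sum_y(\gamma_y^{(S)})^3=\Theta(\kappa^3)$ for good $S$, and a $v\times v$ matrix with entries of size $\Theta(\kappa^3)$ and unknown sign pattern can have operator norm $\Theta(\kappa^3 v)$, which for $v=\Theta(\ell)$ and constant $\kappa$ dwarfs the $v/\ell$ slack in $t_1$. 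Expanding the exponential to all orders and counting relations does not rescue this either, since goodness only rules out coincidences among up to six elements, and for $\ell=2^{\Omega(n)}$ nontrivial higher-order coincidences genuinely occur. Your ``heuristic moment bounds'' paragraph only addresses the single quadratic form $\bra{S}\Exp_U[M_S]\ket{S}$ (completeness), not the uniform bound over all small $\Delta$.

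The paper avoids this entirely with a two-sided PSD sandwich: the exact pointwise inequalities $\frac{1}{2}+\frac{\kappa}{2}x-\frac{\kappa^2}{4}x^2\le 1-\frac{1}{2}e^{-\kappa x}\le \frac{1}{2}+\frac{\kappa}{2}x$, valid for \emph{all} $x\ge 0$, give diagonal-matrix orderings that are preserved under conjugation by $H^{\otimes n}$ and under passing to principal submatrices, so $A^S_{[\Delta]}\preccurlyeq \Exp_U[M^{S,U}]_{[\Delta]}\preccurlyeq B^S_{[\Delta]}$ with $B^S$ (linear truncation) used for soundness and $A^S$ (quadratic truncation) for completeness; no remainder matrix ever needs to be bounded, and no tail behavior of $\gamma_y^{(S)}$ enters. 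If you replace your signed Taylor expansion by these one-sided bounds your argument goes through and essentially becomes the paper's proof; as a secondary point, note also that with a signed $O(\kappa^3\gamma^3)$ remainder your completeness constant would degrade to $t_2-O(\kappa^3)$, whereas the non-negativity of $R$ (equivalently, using the quadratic lower bound rather than a truncation with error) is what yields the exact $t_2$ claimed in the lemma.
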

 In particular, if $500 \le \ell \ll 2^{n/6}$, by setting $\rho = \frac{2\ell^2}{2^{n}} \ln\left(\frac{2^{n}}{2\ell^{4}}\right)$ such that $2\ell^2 \exp\qty(-\frac{\rho^2 2^{n}}{2\ell^2}) = \ell^6 2^{-n}$ and choosing $\kappa=1/10$, and we have that except with probability at most $2\ell^6 2^{-n}$, $(S,U)$ is a $(57/100,59/100,\ell/100)$-strong yes instance\footnote{We emphasize that the constants here are not particularly important; any $t_2 > t_1 + 1/\poly(n)$ would have sufficed.}.   Further note that for some choices of $\kappa, \rho$, for example, when $t_1 \geq 1$ and $t_2 \leq 0$, the lemma trivially holds, and when $t_1 \geq t_2$, it is not very useful.   
Before we give the proof of~\Cref{lem:strong-instance}, we remark on the choice of adding $y$ with probability a function of $\gamma_y^{(S)}$.

\begin{remark}
    It is important that the oracle $U$ is constructed to have $y$ added with probabilities that are functions of $\gamma_y^{(S)} = 2^{n} \abs{\mel*{y}{H^{\otimes n}}{S}}^2$.  For certain choices of $U$ where the probability of including $y$ in $U$ is a function of $\mel*{y}{H^{\otimes n}}{S}$, or in particular its sign, there are ways to synthesize the state $\ket{S}$ using oracle access to $U$, see e.g.~\cite{irani2021quantum}.  
\end{remark}

\begin{proof}[Proof of~\Cref{lem:strong-instance}] For a list $S=\{s_1,\ldots,s_\ell\}$ and a positive integer $k$, we say an equation $s_{i_1}\oplus\ldots\oplus s_{i_{2k}}=0^n$ is ``trivial'' if the index multiset $\{i_1,\ldots,i_{2k}\}$ contains each index with even multiplicity. Notice that such trivial equations hold independent of the values of $s_i$. We say that $S$ is ``good'' if for $k=1,2$ or $3$, the only identities $s_{i_1}\oplus\ldots\oplus s_{i_{2k}}=0^n$ that hold are the trivial identities.
\begin{mathinlay}
\begin{claim}Except with probability $\ell^6/2^n$ over the choice of random $S$ of size $\ell$, $S$ is good. 
\end{claim}
\begin{proof}There are at most $\ell^6+\ell^4+\ell^2$ equations we need to consider. Since we only need to handle non-trivial equations, it is actually straightforward to bound the number of non-trivial equations of $2,4,6$ terms by the number of ways to choose $2$, $4$ or $6$ elements from $S$, which we can then upper bound as ${\ell \choose 6} + {\ell \choose 4} + {\ell \choose 2} \leq \ell^6$. For each non-trivial equation, over the choice of random $S$, the value of $s_{i_1}\oplus\cdots\oplus s_{i_{2k}}$ is just a random element in $\{0,1\}^n$. As such, the probability that the non-trivial equation holds is exactly $2^{-n}$. Union-bounding over all non-trivial equations gives the claim.
\end{proof}
\end{mathinlay}
Since the probability that $S$ is not good is very small, we will only calculate the probability that $(S,U)$ is a strong yes instance conditioned on $S$ being good.
For a matrix $M$ and a subset $\Delta$, let $M_{[\Delta]}$ denote the principal submatrix of $M$ obtained by discarding all rows and columns outside of $\Delta$.
Let $M^{S,U}=\Pi_{S} \cdot H^{\otimes n} \cdot \Pi_{U} \cdot H^{\otimes n} \cdot \Pi_{S}$. Notice that all the rows and columns of $M^{S,U}$ indexed by $x\notin S$ will be 0, so with a slight abuse notation, we will think of $M^{S,U}$ as the the sub-matrix of $H^{\otimes n} \cdot \Pi_{U} \cdot H^{\otimes n}$.

Completeness then corresponds to showing that the largest eigenvalue of $M^{S,U}$ is $\ge t_2$. To prove this, it suffices to construct a state $\ket{\psi}$ such that $\ev{M^{S,U}}{\psi} \ge t_2$. We use $|\psi\rangle$ equaling the uniform superposition over $S$. Soundness, on the other hand, corresponds to proving that $\|M^{S,U}_{[\Delta]}\|\leq t_1$ for any subset $\Delta\subseteq S$ s.t. $\abs{\Delta} \leq v$.

Let $M^S=\Exp_U[M^{S,U}]$ where the expectation is over $U$ sampled based on $S$. We will use concentration statements to relate the maximal eigenvalues of $M^{S,U}$ and its principal submatrices to those of $M^S$, which allows us to focus on $M^S$.

\begin{mathinlay}
\begin{claim}\label[claim]{claim:expectedMentries} For all $S$, $\displaystyle \Pr_U\left[\max_{x,x'}\left|M^S_{x,x'}-M^{S,U}_{x,x'}\right|\right]> \rho]\leq 2\ell^2 \e^{-\rho^2 2^n/2}$.
\end{claim}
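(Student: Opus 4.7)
}
The plan is to write each matrix entry $M^{S,U}_{x,x'}$ as a sum of independent bounded random variables indexed by $y\in\{0,1\}^n$, recognize $M^S_{x,x'}$ as its expectation, and then apply a Hoeffding bound followed by a union bound over the $\ell^2$ relevant pairs $(x,x')\in S\times S$.

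First I would unpack the matrix element. Since $(H^{\otimes n})_{x,y}=2^{-n/2}(-1)^{x\cdot y}$, for $x,x'\in S$ we get
\begin{equation}
M^{S,U}_{x,x'} \;=\; \frac{1}{2^n}\sum_{y\in\{0,1\}^n}\mathbb{1}[y\in U]\,(-1)^{y\cdot(x\oplus x')}.
\end{equation}
Taking the expectation over $U\sim\mathsf{Strong}_\kappa(S)$, where each $y\neq 0^n$ is included independently with probability $p_y\defeq 1-\tfrac12 \e^{-\kappa\gamma_y^{(S)}}$, yields
\begin{equation}
M^{S}_{x,x'} \;=\; \frac{1}{2^n}\sum_{y}p_y\,(-1)^{y\cdot(x\oplus x')},
\end{equation}
(with $p_{0^n}$ treated as a fixed constant, since $y=0^n$ contributes deterministically). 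Subtracting,
\begin{equation}
M^{S,U}_{x,x'}-M^{S}_{x,x'} \;=\; \frac{1}{2^n}\sum_{y\neq 0^n} Z_y,\qquad Z_y\defeq(\mathbb{1}[y\in U]-p_y)\,(-1)^{y\cdot(x\oplus x')}.
\end{equation}

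Next I would observe that for fixed $(x,x')$, the $Z_y$ are independent, mean-zero, and bounded: $Z_y\in[-1,1]$ since $|\mathbb{1}[y\in U]-p_y|\le\max(p_y,1-p_y)\le 1$. Hoeffding's inequality therefore gives, for each individual pair $(x,x')$,
\begin{equation}
\Pr_U\!\left[\,\bigl|M^{S,U}_{x,x'}-M^{S}_{x,x'}\bigr|>\rho\,\right]\;=\;\Pr_U\!\left[\Bigl|\sum_{y\neq 0^n}Z_y\Bigr|>\rho\cdot 2^n\right]\;\le\;2\exp\!\left(-\frac{(\rho\cdot 2^n)^2}{2\cdot 2^n}\right)\;=\;2\e^{-\rho^2 2^n/2}.
\end{equation}
Finally, I would union bound over the at most $\ell^2$ index pairs $(x,x')\in S\times S$ to conclude
\begin{equation}
\Pr_U\!\left[\max_{x,x'\in S}\bigl|M^{S,U}_{x,x'}-M^{S}_{x,x'}\bigr|>\rho\right]\;\le\; 2\ell^2\,\e^{-\rho^2 2^n/2},
\end{equation}
which is exactly the stated bound.

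The argument is essentially a textbook concentration calculation, so there is no serious technical obstacle; the only thing one must be a little careful about is the deterministic treatment of the $y=0^n$ term (it contributes $0$ to the fluctuation) and the fact that $|Z_y|\le 1$ uniformly regardless of $p_y$, which is what makes Hoeffding applicable cleanly. Everything else is just bookkeeping of the Hadamard entries and a union bound.
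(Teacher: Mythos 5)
Your proposal is correct and follows essentially the same route as the paper: write each entry as a sum of $2^n$ independent bounded random variables indexed by $y$, identify $M^S_{x,x'}$ as the expectation, apply Hoeffding to get $2\e^{-\rho^2 2^n/2}$ per pair, and union bound over the $\ell^2$ pairs in $S\times S$. The only cosmetic difference is that you explicitly center the variables and note the deterministic $y=0^n$ term, which the paper leaves implicit.
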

\begin{proof}Fix $S$. For $y\in\{0,1\}^n$, let $U_y$ denote the random variable which is 1 if $y\in U$ and 0 otherwise. For $x,x'\in S$, write 
\begin{xalign}
    M^{S,U}_{x,x'}&=\frac{1}{2^n}\sum_{y\in U}(-1)^{(x\oplus x')\cdot y}=\frac{1}{2^n}\sum_{y\in\{0,1\}^n}(-1)^{(x\oplus x')\cdot y}U_y\,,\\
    M^S_{x,x'}&=\Exp_U[M^{S,U}_{x,x'}]=\frac{1}{2^n}\sum_{y \in \bits^{n}}(-1)^{(x\oplus x')\cdot y}\Exp_U[U_y]\,.
\end{xalign}
Observe that, since $S$ is fixed, $M^{S,U}_{x,x'}$ is the sum of $2^n$ independent (but not identical) random variables with each $\in \{\pm 2^{-n}\}$. The independence lets us apply Hoeffding's inequality, showing that
\begin{equation}
\Pr_u\qty[\abs{M^{S,U}_{x,x'}-M^S_{x,x'}}\geq\rho]\leq 2\e^{-\rho^2 2^n/2} \,.
\end{equation}
\Cref{claim:expectedMentries} follows by union-bounding over all $\ell^2$ pairs $(x,x') \in S^2$.
\end{proof}
\begin{claim}\label[claim]{claim:expectedMeigenvalue}For any $S$, except with probability at most  $2\ell^2 \e^{-\rho^2 2^n/2\ell^2}$ over the choice of $U$, we have that:
\begin{xalign}
    &\|M^{S,U}\|\geq \|M^S\| - \rho\,, \\
    &\text{and for any subset}~\Delta\subseteq S,~\|M^{S,U}_{[\Delta]}\|\leq \|M^S_{[\Delta]}\| +\rho\,.
\end{xalign}
\end{claim}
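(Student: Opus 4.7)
The plan is to derive both inequalities as direct consequences of the entrywise concentration in Claim \ref{claim:expectedMentries}, applied with parameter $\rho/\ell$ in place of $\rho$. This gives that, except with probability at most $2\ell^2 \exp(-\rho^2 2^n/(2\ell^2))$, the entrywise error matrix $E \defeq M^{S,U} - M^S$ satisfies $\abs{E_{x,x'}} \leq \rho/\ell$ for every pair $(x,x') \in S \times S$. This single event is the common bad event that controls both inequalities in the claim simultaneously, including the ``for any subset $\Delta \subseteq S$'' quantifier, since entrywise closeness on $S \times S$ is inherited by every principal submatrix.

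Conditioning on the good event, for any $\Delta \subseteq S$ the principal submatrix $E_{[\Delta]}$ is an $\abs{\Delta} \times \abs{\Delta}$ matrix whose entries are bounded in magnitude by $\rho/\ell$, so each of its rows and columns has absolute sum at most $\abs{\Delta}\rho/\ell$. Combining this with the standard Schur bound $\norm{A}_{\mathrm{op}} \leq \sqrt{\norm{A}_1 \norm{A}_\infty}$ yields
\begin{equation}
\norm{E_{[\Delta]}}_{\mathrm{op}} \;\leq\; \abs{\Delta} \cdot \rho/\ell \;\leq\; \rho,
\end{equation}
where the last inequality uses $\abs{\Delta} \leq \abs{S} = \ell$. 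The operator-norm triangle inequality then gives the soundness bound $\norm{M^{S,U}_{[\Delta]}} \leq \norm{M^S_{[\Delta]}} + \rho$ for every such $\Delta$. Specializing to $\Delta = S$ (using the paper's convention that $M^{S,U}$ is treated as an operator on the $\ell$-dimensional subspace indexed by $S$) and applying the reverse triangle inequality gives $\norm{M^{S,U}} \geq \norm{M^S} - \rho$, the completeness bound.

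There is essentially no obstacle, and no routine calculation to grind through beyond the invocation of Hoeffding already carried out in Claim \ref{claim:expectedMentries}. The only point that requires care is the rescaling: passing from an entrywise bound to an operator-norm bound via the Schur test costs a factor of $\abs{\Delta} \leq \ell$, and this is exactly absorbed by invoking Claim \ref{claim:expectedMentries} at threshold $\rho/\ell$, which replaces $\rho^2 2^n / 2$ by $\rho^2 2^n/(2\ell^2)$ in the Hoeffding exponent and reproduces the probability stated in the claim.
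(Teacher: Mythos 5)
Your argument is correct and follows the paper's proof essentially verbatim: invoke Claim~\ref{claim:expectedMentries} at threshold $\rho/\ell$, pass from entrywise control to operator-norm control on every principal submatrix, and finish with the (reverse) triangle inequality. The only cosmetic difference is that the paper bounds $\norm{E_{[\Delta]}}_{\mathrm{op}}$ via the Gershgorin circle theorem whereas you use the Schur bound $\norm{A}_{\mathrm{op}}\leq\sqrt{\norm{A}_1\norm{A}_\infty}$; both are elementary devices for converting an entrywise bound into the identical operator-norm bound of $\abs{\Delta}\rho/\ell$, so the two proofs are interchangeable.
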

\begin{proof}We invoke \Cref{claim:expectedMentries} with $\rho'=\rho/\ell$ to bound $|M^{S,U}_{x,x'}-M^S_{x,x'}|\leq \rho'=\rho/\ell$ for all $x,x'$, except with probability $2\ell^2 \e^{-\rho^2 2^n/2\ell^2}$. Under this condition, by the Gershgorin circle theorem\footnote{ The Gershgorin circle theorem states that for a complex $W = (W_{xy})$ matrix, every eigenvalue of $W$ lies in the union of the discs $D(W_{xx}, R_x)$ where $R_x = \sum_{y: y \neq x} \abs{W_{xy}}$.},  $\|M^{S,U}-M^S\|_{\mathrm{op}}\leq \ell\rho'=\rho$ and $\|M^{S,U}_{[\Delta]}-M^S_{[\Delta]}\|_{\mathrm{op}}\leq |\Delta|\rho'\leq \ell\rho'=\rho$ for any subset $\Delta\subseteq S$. \Cref{claim:expectedMeigenvalue} then follows by the triangle inequality.
\end{proof}
\end{mathinlay}
 It remains to lower-bound $\norm{M^S}_{\mathrm{op}}$ and upper-bound $\norm{M^S_{[\Delta]}}_{\mathrm{op}}$. To do this, we compute approximate matrices in PSD ordering.
Define
\begin{xalign}
 A^S& \defeq H^{\otimes n} \cdot {\sf Diag}\left[\left(\frac{1}{2}+\frac{\kappa}{2}\gamma_y^{(S)}-\frac{\kappa^2}{4}(\gamma_y^{(S)})^2\right)_y\right] \cdot H^{\otimes n}\,,\\
  B^S& \defeq H^{\otimes n} \cdot {\sf Diag}\left[\left(\frac{1}{2}+\frac{\kappa}{2}\gamma_y^{(S)}\right)_y\right] \cdot H^{\otimes n}\,.
\end{xalign}
Here, the notation $\mathsf{Diag}[\left(f(y)\right)_{y}]$ denotes the diagonal matrix where for all $y \in \bits^{n}$, the $(y, y)$'th entry of the matrix is $f(y)$.  
\begin{mathinlay}
\begin{claim}\label[claim]{claim:psdsandwich}For any $S$ and any subset $\Delta\subseteq\{0,1\}^n$ (including $\Delta=\{0,1\}^n$), we have that $A^S_{[\Delta]}\preccurlyeq M^S_{[\Delta]} \preccurlyeq B^S_{[\Delta]}$,
and in particular $\|A^S_{[\Delta]}\| \leq \|M^S_{[\Delta]}\| \leq \|B^S_{[\Delta]}\| $.
\end{claim}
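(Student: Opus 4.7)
The plan is to observe that $M^S$ is itself naturally of the same form as $A^S$ and $B^S$, namely $H^{\otimes n}$ conjugated with a diagonal matrix. Once this is established, the PSD sandwich reduces to an entrywise scalar inequality, which then transfers through the conjugation and through the principal-submatrix restriction.

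First, I would compute $\Exp_U[\Pi_U]$. Since $\Pi_U = \sum_{y} \delta(y \in U)\, \ketbra{y}$ is diagonal in the standard basis and the indicators $\delta(y \in U)$ are independent Bernoullis with $\Pr[y \in U] = 1 - \tfrac{1}{2}e^{-\kappa \gamma_y^{(S)}}$ (using the natural extension of the sampling formula to $y = 0^n$ noted in the remark after \Cref{def:def-of-strong-dist}), one immediately gets $\Exp_U[\Pi_U] = \mathsf{Diag}\bigl[(1 - \tfrac{1}{2}e^{-\kappa \gamma_y^{(S)}})_y\bigr]$. Viewing $M^{S,U}$ as the matrix $H^{\otimes n} \Pi_U H^{\otimes n}$ as the paper does just above the claim, this gives
\[ M^S \;=\; H^{\otimes n} \cdot \mathsf{Diag}\!\left[\left(1 - \tfrac{1}{2}e^{-\kappa \gamma_y^{(S)}}\right)_y\right] \cdot H^{\otimes n}. \]

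Now $A^S$, $M^S$, and $B^S$ are all of the form $H^{\otimes n} D H^{\otimes n}$ for a diagonal matrix $D$. Since conjugation by the unitary $H^{\otimes n}$ preserves the PSD order, and since two diagonal matrices satisfy $D_1 \preccurlyeq D_2$ exactly when their diagonal entries compare entrywise, the sandwich $A^S \preccurlyeq M^S \preccurlyeq B^S$ reduces to showing, for each $y \in \bits^n$,
\[ \tfrac{1}{2} + \tfrac{\kappa}{2}\gamma_y^{(S)} - \tfrac{\kappa^2}{4}\bigl(\gamma_y^{(S)}\bigr)^2 \;\leq\; 1 - \tfrac{1}{2}e^{-\kappa \gamma_y^{(S)}} \;\leq\; \tfrac{1}{2} + \tfrac{\kappa}{2}\gamma_y^{(S)}. \]
Setting $t = \kappa \gamma_y^{(S)} \geq 0$, this is equivalent to the standard scalar bounds $1 - t \leq e^{-t} \leq 1 - t + t^2/2$. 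The lower bound is the tangent-line inequality at $t=0$ for the convex function $e^{-t}$; the upper bound follows by noting that $f(t) = 1 - t + t^2/2 - e^{-t}$ satisfies $f(0) = f'(0) = 0$ and $f''(t) = 1 - e^{-t} \geq 0$ for $t \geq 0$, hence $f(t) \geq 0$.

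Finally, restricting to principal submatrices preserves the PSD order: for any $\ket{v}$ supported on $\Delta$ we have $\bra{v}A^S\ket{v} \leq \bra{v}M^S\ket{v}$, giving $A^S_{[\Delta]} \preccurlyeq M^S_{[\Delta]}$, and similarly for $M^S_{[\Delta]} \preccurlyeq B^S_{[\Delta]}$. Since $M^S$ and $B^S$ are PSD (each diagonal entry is at least $1/2$), their operator norms equal their top eigenvalues, and the submatrix PSD inequalities yield $\lambda_{\max}(A^S_{[\Delta]}) \leq \|M^S_{[\Delta]}\| \leq \|B^S_{[\Delta]}\|$, which is all that is needed downstream: the completeness part only uses a witness state $\ket{\psi}$ to lower-bound $\bra{\psi}M^S\ket{\psi}$ via $\bra{\psi}A^S\ket{\psi}$, while the soundness part uses the full operator-norm bound from $B^S$. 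There is no substantive obstacle in this proof; the only mild subtlety is that $A^S$ itself need not be PSD (its diagonal entries can be negative for large $\gamma_y^{(S)}$), but this only affects the interpretation of $\|A^S_{[\Delta]}\|$ and not the downstream use.
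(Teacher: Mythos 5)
Your proof is correct and follows essentially the same route as the paper's: compute $\Exp_U[\Pi_U]$ as the diagonal matrix $\mathsf{Diag}[(1-\tfrac12 e^{-\kappa\gamma_y^{(S)}})_y]$, reduce the PSD sandwich to the scalar bounds $1-t\le e^{-t}\le 1-t+t^2/2$ on the diagonal entries, and transfer through conjugation by $H^{\otimes n}$ and restriction to principal submatrices. Your added caveat that $A^S$ need not be PSD (so the first norm inequality should really be read as a bound on the top eigenvalue) is a minor refinement the paper's proof glosses over, and you correctly observe it does not affect the downstream use of the claim.
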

\begin{proof}Recall that $\Exp_U[\Pi_U]={\sf Diag}[(1-\frac{1}{2}\e^{-\kappa\gamma_y^{(S)}})_y]$. We can use the small Taylor series expansions of $\e^{-x}$ to bound, for all non-negative real $x$,
\begin{equation}\frac{1}{2}+\frac{\kappa}{2}x-\frac{\kappa^2}{4}x^2\leq 1-\frac{1}{2}\e^{-\kappa x}\leq \frac{1}{2}+\frac{\kappa}{2}x
\end{equation}
for all non-negative real $x$. For diagonal matrices, as PSD ordering is equivalent to the ordering of the diagonal entries:
\begin{equation}
    {\sf Diag}\left[\left(\frac{1}{2}+\frac{\kappa}{2}\gamma_y^{(S)}-\frac{\kappa^2}{4}(\gamma_y^{(S)})^2\right)_y\right]\preccurlyeq {\sf Diag}\left[\left(1-\frac{1}{2}\e^{-\kappa\gamma_y^{(S)}}\right)_y\right]\preccurlyeq {\sf Diag}\left[\left(\frac{1}{2}+\frac{\kappa}{2}\gamma_y^{(S)}\right)_y\right]\,.
\end{equation}
As PSD ordering is preserved under transformations $M\mapsto C^\dagger MC$, it follows that $A^S\preccurlyeq M^S \preccurlyeq B^S$. As PSD ordering is also preserved for all principal submatrices, this proves \Cref{claim:psdsandwich}.
\end{proof}
\end{mathinlay}
Now we are ready to bound the maximal eigenvalues of (principal submatrices of) $A^S$ and $B^S$ and, in turn, the maximal eigenvalues of $M^S$. 
Recall that for all good $S$, all $\ell$ elements in the multiset $S$ are distinct and so are the elements in the sumset $S\oplus S = \{x \oplus y : x, y \in S. x \neq y\}$.

\noindent \paragraph{Bounding the top eigenvalue of $B^S$.} For $x \in \bits^n$, let $\delta_x$ be the indicator function for $x=0^n$. Observe that:
\begin{xalign}
    B^S_{x,x'}&=\frac{1}{2^n}\sum_y (-1)^{(x\oplus x')\cdot y}\left(\frac{1}{2}+\frac{\kappa}{2}\gamma_y^{(S)}\right)\\
    &=\frac{1}{2}\left(\frac{1}{2^n}\sum_y (-1)^{(x\oplus x')\cdot y}\right)+\frac{\kappa}{2\ell}\left(\frac{1}{2^n}\sum_y (-1)^{(x\oplus x')\cdot y}\sum_{x_0,x_0'\in S}(-1)^{(x_0\oplus x_0')\cdot y}\right)\\
    &=\frac{1}{2}\delta_{x\oplus x'}+\frac{\kappa}{2\ell}\sum_{x_0,x_0'\in S}\delta_{x\oplus x'\oplus x_0\oplus x_0'}\,.
\end{xalign}
For diagonal entries $x=x'$, $\delta_{x\oplus x'}=1$. Moreover, for diagonal entries $\delta_{x\oplus x'\oplus x_0\oplus x_0'}=0$ unless $x_0=x_0'$; there are exactly $\ell$ such cases since $S$ is good. Thus, the diagonal entries all equal $B^S_{x,x}=\frac{1}{2}+\frac{\kappa}{2}$.
Meanwhile, for off-diagonal entries $x\neq x'$, $\delta_{x\oplus x'}=0$. Moreover, because $S$ is good, $\delta_{x\oplus x'\oplus x_0\oplus x_0'}=0$ except for the two cases $(x_0,x_0')=(x,x')$ or $(x_0',x_0)=(x,x')$. Thus, the off diagonal entries all equal $B^S_{x,x'}=\frac{\kappa}{\ell}$.
Observe, then, that the matrix $B^S_{[\Delta]}$ can be more succinctly expressed as
\begin{equation}
    B^S_{[\Delta]} = \qty( \half + \frac{\kappa}{2} - \frac{\kappa}{\ell}) \id_\Delta + \frac{\kappa \abs{\Delta}}{\ell} \ketbra{\Delta}
\end{equation}
where $\ket{\Delta} = \frac{1}{\sqrt{\Delta}} \sum_{x \in \Delta} \ket{x}$, the uniform superposition over $\Delta$. $\ket{\Delta}$ is thus the top eigenvector.
Therefore, 
\begin{equation}
    \norm{B^S_{[\Delta]}}_{\mathrm{op}} = \ev{B^S_{[\Delta]}}{\Delta} \leq \frac{1 + \kappa}{2} + \frac{\abs{\Delta}\kappa}{\ell}\,.
\end{equation}

\noindent \paragraph{Bounding the top eigenvalue of $A^S$.} This term is slightly more complicated due to the quadratic term in $\gamma_y^{(S)}$. We have:
\begin{xalign}
    A^S_{x,x'}&=\frac{1}{2^n}\sum_y (-1)^{(x\oplus x')\cdot y}\left(\frac{1}{2}+\frac{\kappa}{2}\gamma_y^{(S)}-\frac{\kappa^2}{4}(\gamma_y^{(S)})^2\right)\\
    &=B_{x,x'}^S-\frac{\kappa^2}{4\ell^2}\left(\frac{1}{2^n}\sum_y \sum_{x_0,x_1,x_0',x_1'}(-1)^{(x\oplus x')\cdot y}(-1)^{(x_0\oplus x_0')\cdot y}(-1)^{(x_1\oplus x_1')\cdot y}\right)\\
    &=B_{x,x'}^S-\frac{\kappa^2}{4\ell^2}\sum_{x_0,x_1,x_0',x_1'}\delta_{x\oplus x'\oplus x_0\oplus x_0'\oplus x_1\oplus x_1'}\,.
\end{xalign}
We have already evaluated $B_{x,x'}^S$, and instead focus on evaluating the final term. For diagonal entries $x=x'$, since $S$ is good, there are only two ways for $\delta_{x\oplus x'\oplus x_0\oplus x_0'\oplus x_1\oplus x_1'}=\delta_{x_0\oplus x_0'\oplus x_1\oplus x_1'}$ to be non-zero:
\begin{itemize}
    \item $x_0=x_0'$ and $x_1=x_1'$. There are $\ell^2$ such terms.
    \item $x_0\neq x_0'$, and either $(x_1,x_1')=(x_0,x_0')$ or $(x_1',x_1)=(x_0,x_0')$. There are $\ell\times (\ell-1)\times 2$ such terms.
\end{itemize}
This gives a total of $3\ell^2-2\ell$ such terms. As a consequence, we have 
\begin{equation}
A^S_{x,x}=B^S_{x,x}-\frac{\kappa^2}{4\ell^2}(3\ell^2-2\ell)=\frac{1}{2}+\frac{\kappa}{2}-\frac{3\kappa^2}{4}+\frac{\kappa^2}{2\ell}\,.
\end{equation}
For off-diagonal entries $x\neq x'$, the only way for $x\oplus x'\oplus x_0\oplus x_0'\oplus x_1\oplus x_1'$ to be 0 is for one of $x_0,x_0',x_1,x_1'$ to be $x$, another to be $x'$, and the remaining two must be equal. This again follows from the goodness of $S$. There are $4\times 3$ ways of choosing which of the four elements are equal to $x$ and $x'$, and $\ell$ ways to choose the remaining pair. This slightly over-counts, since for example $(x,x',x_0,x_0',x_1,x_1')=(x,x',x,x,x,x')$ would be counted 3 times, for $x_0=x$, $x_0'=x$ and $x_1'=x$. There are 8 terms of this form (4 where there is one $x'$ and 3 $x$ among $(x_0,x_0',x_1,x_1')$, and 4 where there is one $x$ and 3 $x'$). Each term of this form is over-counted 2 extra times (for a total of 3 times). The correct number of terms with $x\oplus x'\oplus x_0\oplus x_0'\oplus x_1\oplus x_1'=0$ is therefore: $12\ell-16$.
This means the off-diagonal entries are equal to:
\begin{equation}A^S_{x,x'}=B^S_{x,x'}-\frac{\kappa^2}{4\ell^2}(12\ell-16)=\frac{\kappa}{\ell}-\frac{3\kappa^2}{\ell}+\frac{4\kappa^2}{\ell^2}\,.\end{equation}
Similar to the calculation for $B^S_\Delta$,
\begin{equation}
    A^S = \left[\frac{1}{2}
+ \kappa \!\left(\frac{1}{2} - \frac{1}{\ell}\right)
+ \kappa^{2} \!\left(-\frac{3}{4} + \frac{7}{2\ell} - \frac{4}{\ell^{2}}\right)\right] \id_S + \left[ \kappa - 3\kappa^2+\frac{4\kappa^2}{\ell} \right] \ketbra{S}\,.
\end{equation}
Therefore, the top eigenvector is indeed $\ket{S}$ and 
\begin{xalign}
\norm{A^S}_{\mathrm{op}} &= \ev{A^S}{S} \\
&=
\frac{1}{2}
+ \kappa\!\left(\frac{3}{2}-\frac{1}{\ell}\right)
+ \kappa^{2}\!\left(-\frac{15}{4}+\frac{15}{2\ell}-\frac{4}{\ell^{2}}\right) \\
&\geq \frac{1}{2}+\frac{3\kappa}{2}-\frac{15\kappa^2}{4}-\frac{5 \kappa}{\ell}\,.
\end{xalign}
Combining the previous Hoeffding's inequality with these bounds on the spectral norm completes the proof.\end{proof}

\newpage

\part{Sampling probability upper bound for quasi-even condensates}
\label{part:sampler-upper-bound}

The proof components in~\Cref{part:qcma-to-sampler} combine to prove the existence of a $\QCMA$ algorithm for spectral Forrelation implies sampling probability \emph{lower bound}. More specifically,~\Cref{sec:qcma-to-one-sampler} showed how to transform a $\QCMA$ algorithm for deciding between yes (at least $59/100$-spectrally Forrelated) and no (at most $57/100$-spectrally Forrelated) instances into a successful sampler that samples points from strong yes instances, and~\Cref{lem:strong-instance} in~\Cref{sec:strong_yes} constructed a distribution $\mathsf{Strong}_\kappa$ which is, with overwhelming probability, supported on strong yes instances. For the rest of the paper, we will choose $\kappa = 1/10$ and use $\mathsf{Strong} = \mathsf{Strong}_{1/10}$ for brevity.

Together, we have proven that if there exists a $\QCMA$ decision algorithm, then there exists a lower bound on the success probability of a sampler at guessing points from $S$ given query access to $U$ when $(S,U)$ is sampled according to the definition of~\Cref{lem:strong-instance}. The remainder of the proof is to marry this probability lower bound with a contradictory probability upper bound. In this section, we will use compressed oracle/purification techniques defined in terms of bosons.
The main theorem for~\Cref{part:sampler-upper-bound} is~\Cref{thm:big-upper-bound-thm}, and it will be stated after we introduce the bosonic framework in~\Cref{sec:prelim-bosons}.

Before starting, we would like to emphasize that the proofs and bounds we generate are almost certainly far from tight. We will loosely bound terms for legibility, and we acknowledge that the resulting bounds will not be tight. Nevertheless, they will be sufficient to contradict~\Cref{thm:formal-removal-of-S}.

\section{Quantum mechanics of bosons}

\label{sec:prelim-bosons}

As stated in the Introduction, our proof will require understanding purifications of quantum oracles as bosonic systems.  This section will provided the facts and definitions needed to understand the rest of the proof. For the purposes of this note, it suffices to consider a bosonic system with $2^n$ modes with the modes indexed by $n$-bit vectors. For notational convenience, it is also useful to be able to index the modes with $\{0, \ldots, 2^n - 1\}$ under the standard isometry. In particular we refer to the $0^n$-mode as the 0-mode.

\subsection{A natural basis}
The bosonic (position) Fock basis is a collection of \emph{orthonormal} states of the form $\ket{\ell_0, \ell_1, \ldots, \ell_{2^n-1}}$ with each $\ell_x \in \NN$. This is the basis state corresponding to $\ell_0$ bosons in the 0-mode, $\ell_1$ bosons in the 1-mode, etc. The total number of bosons is $\sum_x \ell_x$. 

We will exclusively consider states with $\ell$ total bosons; however, it is mathematically helpful to be able to add and remove bosons at will to describe the transformation from one state of $\ell$ bosons to another. \\

\subsection{The second quantization}
Let $\ket \vac \defeq \ket{0, \ldots, 0}$ be the vacuum state representing $0$ bosons in the system. Abiding by the traditional notation from physics, let $\hat a_x$ and $\hat a_x^\dagger$ be the annihilation and creation operators for a boson in the $x$-th mode, respectively.  These are defined by their action on the position Fock basis as follows
\begin{xalign}
\label{eq:creation_annihilation_definition}
    \hat{a}_x \ket{\ell_0, \ldots, \ell_{x}, \ldots, \ell_{2^{n}-1}} &= \sqrt{\ell_{x}} \ket{\ell_0, \ldots, \ell_{x}-1, \ldots, \ell_{2^{n}-1}}\quad\text{and,}\\
    \hat{a}_x^{\dagger} \ket{\ell_0, \ldots, \ell_{x}, \ldots, \ell_{2^{n}-1}} &= \sqrt{\ell_{x}+1} \ket{\ell_0, \ldots, \ell_{x}+1, \ldots, \ell_{2^{n}-1}}\,.
\end{xalign}

We will refer to these as the annihilation and creation operators for bosons in the \emph{position} basis. Note that these operators are \emph{not} unitary.  Very roughly, the factor $\sqrt{\ell_x}$ corresponds to the fact that the bosons are indistinguishable and so we do not know which of the $\ell_x$ bosons were annihilated. Likewise, for creation. In calculations about bosons, it is useful to oscillate between the Fock representation and the creation/annihilation perspective. In physics, the two representations are referred to as first and second quantizations, respectively. 

It follows from the definition of the position creation operator that 
\begin{equation}
    \frac{1}{\sqrt{\prod_{x = 0}^{2^n-1} \ell_x!} }\prod_{x = 0}^{2^n-1} \left(\hat a_x^\dagger\right)^{\ell_x} \ket{\vac} = \ket{\ell_0, \ldots, \ldots, \ell_{2^n-1}}\,.
\end{equation}
The commutation relations for bosonic position operators are given by
\begin{equation}
    [\hat a_x, \hat a_y^\dagger] = \hat a_x \hat a_y^\dagger - \hat a_y^\dagger \hat a_x 
= \delta_{xy}\,, \qquad [\hat a_x, \hat a_y] = [\hat a_x^\dagger, \hat a_y^\dagger] = 0\,. \label{eq:bosonic-commutation} \end{equation}
One may expect all annihilation and creation operators to commute since, for example, annihilating a boson at $x$ seems independent of creating a boson at $y$. This is true generally, except for annihilating and creating at the same mode $x$. Intuitively, the reason for a lack of commutation is that if there are zero bosons at mode $x$, annihilation actually does nothing since there is nothing to annihilate. This causes an asymmetry since annihilation-before-creation may have nothing to annihilate, but annihilation-after-creation will always have something to annihilate.

\subsection{A momentum basis}
We can also define the annihilation and creation operators in the \emph{momentum} basis by the Hadamard transform. Note, that this is our \textit{computer science} interpretation. Usually, the transform from position to momentum basis is given by the quantum Fourier transform over the group $\ZZ_{2^n}$.
\begin{xalign}
    \tilde{a}_y &\defeq \frac{1}{\sqrt{2^n}} \sum_{x \in \bits^n} (-1)^{x\cdot y} \hat a_x\,, \\
    \tilde{a}_y^\dagger &\defeq \frac{1}{\sqrt{2^n}} \sum_{x \in \bits^n} (-1)^{-x\cdot y} \hat a_x^\dagger\,.
\end{xalign}
The commutation relations for momentum operators can be derived to be the analogs of~\cref{eq:bosonic-commutation}:
\begin{equation} \label{eq:bosonic-commutation2}
    [\tilde a_x, \tilde a_y^\dagger] = \tilde a_x \tilde a_y^\dagger - \tilde a_y^\dagger \tilde a_x 
= \delta_{xy}\,, \qquad [\tilde a_x, \tilde a_y] = [\tilde a_x^\dagger, \tilde a_y^\dagger] = 0\,. 
\end{equation}
Having defined momentum annihilation and creation operators, we can derive that there exists a second Fock basis that can be used to describe states. This is the momentum Fock basis, and it is a collection of orthonormal states $\ket{\ell_0, \ell_1, \ldots, \ell_{2^n-1}}$ where the integers $\ell_x$ describe how many bosons are in each \emph{momentum} mode. It is the exact analog of the position Fock basis.

\subsection{Number operators}
Additionally, we can define the position and momentum number operators as $\hat n_x \defeq \hat a_x^\dagger \hat a_x$ and $\tilde{n}_x \defeq \tilde{a}_x^\dagger \tilde{a}_x$, respectively.  These are diagonal matrices in the position and momentum Fock bases that multiply a Fock basis state by the \emph{number} of bosons in the $x$'th mode --- i.e., $\hat{n}_{0} \ket{\ell_{0}, \ldots} = \ell_{0} \ket{\ell_{0}, \ldots}$, hence the name. These have the following commutation relations with the creation and annihilation operators.
\begin{xalign}
    \hat n_x \hat a_x &= \hat a_x^\dagger \hat a_x \hat a_x = (\hat a_x \hat a_x^\dagger -1)\hat a_x = \hat a_x(\hat a_x^\dagger \hat a_x - 1) = \hat a_x (\hat n_x - 1)\,, \\
    \hat n_x \hat a_x^\dagger &= \hat a_x^\dagger \hat a_x \hat a_x^\dagger = \hat a_x^\dagger (1 + \hat a_x^\dagger \hat a_x) = \hat a_x^\dagger (1 + \hat n_x)\,.
\end{xalign}
Intuitively, the number of bosons \emph{after} annihilating is just one less than the number \emph{before} annihilating (and vice-versa for creating). We also define the total number operator $\hat N = \sum_x \hat n_x$. One can verify that $\hat N = \tilde N$.

\subsection{A random bosonic setup}

A typical problem in (classical) combinatorics might start with ``place $\ell$ indistinguishable balls uniformly randomly into $N$ boxes''. The quantum mechanical interpretation is to uniformly place $\ell$ bosons into $N=2^n$ modes. One exact purification of this setup is to consider the state with $\ell$ bosons in the 0-momentum mode, which equals
\begin{equation}
\label{eq:bosonic_initial_state}
    \frac{1}{\sqrt{\ell!}} \left(\tilde{a}_0^\dagger\right)^\ell \ket \vac = \frac{1}{\sqrt{\ell! \cdot 2^{n\ell}}} \sum_{x_1, \ldots, x_\ell} \left(\prod_{i = 1}^\ell \hat a_{x_i}^\dagger \right) \ket \vac
\end{equation}
with the right side demonstrably equivalent to creating $\ell$ bosons uniformly randomly into the vacuum.
Observe that the $\sqrt{\ell!}$ additional normalization term might seem strange initially; however, it is indeed necessary. Indeed, this is a consequence of the bosons being indistinguishable, and therefore not ``knowing'' the order in which they are created. This results in overcounting, which is fixed by dividing by $\sqrt{\ell!}$.
Furthermore, suppose we measure this state in the position Fock basis and interpret the measurement as a multiset.  Note that the probability of measuring a particular multiset $S$, which contains element $x$, $\ell_x$ many times, when measuring \eqref{eq:bosonic_initial_state} can be calculated as
\begin{equation}
    \Pr[S] = \frac{\ell!}{2^{n\ell}\prod_{x} \ell_{x}!}\,,
\end{equation}
which is exactly the uniform distribution over multisets of size $\ell$.  

\subsection{Bosonic Hilbert space}

Bosonic systems on $2^n$ modes are states in an infinite-dimensional Hilbert space. As there is no bound on the number of bosons in a bosonic system, this Hilbert space is infinite-dimensional. However, the Hilbert space can be expressed as the direct sum of finite-dimensional Hilbert spaces by restricting to a fixed number of bosons:
\begin{equation}
    \Hh_{\mathrm{boson}} = \bigoplus_{\ell = 0}^\infty \Hh_{\mathrm{boson}}^{(\ell)}~\text{where}~\Hh_{\mathrm{boson}}^{(\ell)} = \mathrm{span}~\text{of Fock states of $\ell$ bosons}\,.
\end{equation}
Since $\hat N = \tilde N$, the number of bosons in the position and momentum bases is the same; therefore, the transformation mapping between the position and momentum Fock bases (i.e., the Hadamard transform) is block-diagonal with respect to this decomposition. 

In this work, we will restrict ourselves to considering bosonic systems with a fixed number, $\ell$, of bosons. Therefore, the states are in the space $\Hh_{\mathrm{boson}}^{(\ell)}$. Both the position and momentum Fock bases of this Hilbert space can be indexed by non-negative integer tuples of length $2^n$ with total sum $\ell$. Observe that this is isomorphic to the set of truth-tables of multisets $S \subseteq \bits^n$ of size $\ell$.  In the rest of the paper, we identify the states $\ket{\mathrm{tt}_{S}}$ with the position Fock basis state with bosons in locations given by the elements of $S$.

\section{Sampler upper bound statement and organization}
\label{sec:thm-statement}

This section will state the main theorem of \Cref{part:sampler-upper-bound}, and give some intuition for how the proof will go, and the organization of the rest of the part. The goal of~\Cref{part:sampler-upper-bound} is to prove a probability \emph{upper bound} on the success of any polynomial query sampler when run on the $\mathsf{Strong}$ distribution defined in~\Cref{def:def-of-strong-dist}.  This is the counterpart to \Cref{thm:formal-removal-of-S}, which proves a contradictory lower bound on the success probability, assuming a $\QCMA$ algorithm for spectral Forrelation exists. 

\subsection{Theorem statement}
The main theorem of this part is the following.
\begin{theorem}[Sampling probability upper bound]
\label{thm:big-upper-bound-thm}
    For all $v$, for all quantum algorithms $\Aa^U$ accessing an oracle $U$ and outputting $v$ distinct outputs, while making $t$ queries per output, if a pair of oracles $(S,U)$ are sampled according to distribution $\mathsf{Strong}$ (defined in~\Cref{def:def-of-strong-dist}), then the probability that all $v$ of the outputs of $\Aa^U$ are elements of $S$ is at most
    \begin{equation}
        \leq 2\left(\frac{4v((vt)^{30} + v (vt)^{20})\sqrt{\ell}}{2^{n/4}}\right)^{v} + \left(\left(\frac{(vt)^{4}}{\ell^{1/32}}\right)^{v} + e^{-5vt}\right)^2\,.
    \end{equation}
\end{theorem}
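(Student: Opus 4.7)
The plan is to prove the sampling probability upper bound by purifying the entire oracle distribution $\mathsf{Strong}$ into a bosonic framework, analyzing the effect of each query to $U$ on the purified state, and ultimately reducing the success probability to a norm bound that can be controlled by a Heisenberg-type uncertainty principle. Concretely, first I would purify the uniform sampling of the multiset $S$ of size $\ell$ by the state $\frac{1}{\sqrt{\ell!}}(\tilde{a}_0^{\dagger})^{\ell}\ket{\vac}$, i.e., $\ell$ bosons in the $0$-momentum mode, so that measuring the position Fock basis yields the uniform distribution on multisets. I would then purify the $U$ sampling on top of this via a compressed oracle for the Bernoulli distribution over inclusion of each $y$, so that the joint state of algorithm, $S$-register and $U$-register is a pure state on which queries act unitarily. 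The success probability of the sampler on outputs $z_1,\ldots,z_v$ (assumed distinct since the algorithm outputs distinct samples) is then exactly an expectation $\ev{\Pi_{\hat{n}_{z_1}>0}\cdots \Pi_{\hat{n}_{z_v}>0}}{\psi_{\mathrm{PQ}}}$, which by Markov's inequality (as noted in the overview) is bounded by $\ev{\hat{n}_{z_1}\cdots \hat{n}_{z_v}}{\psi_{\mathrm{PQ}}} = \norm{\hat{a}_{z_v}\cdots \hat{a}_{z_1}\ket*{\psi_{\mathrm{PQ}}}}^{2}$ on distinct indices.

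Next I would identify the action of a query at $y$ on the reduced ($\mathrm{algorithm},S$)-register after tracing out the $U$-purification. Since the inclusion probability of $y$ is $1 - \tfrac{1}{2}e^{-\kappa\gamma_y^{(S)}}$, the Kraus operators are (up to conjugation by Hadamards) of the form $\tfrac{1}{\sqrt{2}}e^{-\kappa\tilde{\H}_y/2}$ and a residual, where $\tilde{\H}_y$ is the double $y$-momentum hopping operator from the overview, which is diagonal in the position Fock basis with eigenvalues $\gamma_y^{(S)}-1$. I would then replace the exponential by a flat polynomial approximation of degree $d = \poly(n,vt,\log(1/\iota))$, using Narayanan's flat approximations: this approximation only needs to be tight on the bulk of the distribution of $\gamma_y^{(S)}$ (which is approximately the square of a standard Gaussian, concentrated near $1$), and may be poor on the exponentially rare tail. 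The truncation cost is absorbed into the additive $\iota$ error, to be balanced at the end.

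With this polynomial in hand, I would characterize the post-query state $\ket*{\psi_{\mathrm{PQ}}}$ as $\iota$-close to an $(r,o)$-quasi-even condensate with $r = \poly(n,vt,\log(1/\iota))$ and $o \ll v/4$. The condensate property follows because each application of $\tilde{\H}_y$ moves at most two bosons in and out of the $0$-momentum mode, so a degree-$d$ polynomial in the hopping operators leaves at most $2dT$ bosons outside the $0$-momentum mode; since $\ell \gg 2dT$ this is negligible compared to $\ell$. The quasi-evenness requires more work: I would decompose $\tilde{\H}_y$ into the ``evenness-preserving'' part (which pairs a newly excited boson with an existing one, or which moves two bosons together in and out of the condensate) and a ``disturbing'' part that changes the parity of some non-zero momentum occupation, then apply the Dyson series to $e^{-\kappa\tilde{\H}_y/2}$ to show that the disturbing part contributes only $\poly(r)/\sqrt{\ell}$ amplitude per query. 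A union bound over $vt$ queries then gives the overall $(1 - \poly(vt)/\sqrt{\ell})^{O(v)}$ probability, and I expect this to be the main obstacle, since carefully controlling the perturbative expansion and how excited bosons can re-enter the condensate requires tracking several correlated amplitudes at once.

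Finally, on the subspace of $(r,o)$-quasi-even condensates I would bound $\norm{\hat{a}_{z_v}\cdots \hat{a}_{z_1}\ket*{\psi}}^{2}$ by directly estimating the spectral norm of $\hat{n}_{z_1}\cdots\hat{n}_{z_v}$ restricted to that subspace via the maximal row $1$-norm. Writing $\hat{n}_{z} = 2^{-n}\sum_{x,x'}(-1)^{z\cdot(x\oplus x')}\tilde{a}_{x}^{\dagger}\tilde{a}_{x'}$ and expanding the product, each term either pairs two momentum creation/annihilation indices (giving the ``harmless'' factor $\approx \ell/2^n$ expected from translational invariance of paired bosons, i.e., a computer-science Noether theorem) or breaks a pair (which can happen at most $o$ times out of $v$). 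The quasi-even condensate structure ensures that constructive interference is limited to contributions that come from breaking at most $o$ pairs, giving a per-sample multiplicative factor of roughly $\ell/2^{n/2}$ on quasi-even contributions and $\ell/2^{n}$ on the purely paired part. Summing over the $2^{nv}$ choices of $(z_1,\ldots,z_v)$ and accounting for the failure probabilities (non-condensate, non-quasi-even, polynomial approximation error $\iota$, and the ``$S$ not good'' event from \Cref{lem:strong-instance}), I would choose $\iota \approx e^{-5vt}$ and optimize $r$ so that the resulting bound matches the two stated terms $\qty(4v((vt)^{30}+v(vt)^{20})\sqrt{\ell}/2^{n/4})^{v}$ (from quasi-evenness failure and the $\ell/2^{n/2}$ factor) and $\qty((vt)^{4}/\ell^{1/32})^{v}$ (from the pure condensate bound), completing the proof of \Cref{thm:big-upper-bound-thm}.
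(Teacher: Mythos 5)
Your proposal follows essentially the same route as the paper: purify $S$ as $\ell$ bosons in the $0$-momentum mode with a compressed-oracle purification of $U$, reduce the success probability via Markov to $\norm{\hat a_{z_v}\cdots\hat a_{z_1}\ket{\psi_{\mathrm{PQ}}}}^2$, show the post-query state is $\iota$-close to an $(r,v/4)$-quasi-even condensate using flat (Narayanan-style) polynomial approximations of the exponential in the double-hopping operators together with a Dyson-series control of the parity-disturbing part, and bound the number-operator product on that subspace by its maximal row $1$-norm before balancing $\iota\approx e^{-5vt}$. This matches the paper's proof structure (its \Cref{thm:main-sampling-upper-bound}, \Cref{cor:sandwiching}, and \Cref{thm:low-collision-overlap} combined by a triangle inequality), with only cosmetic differences in how the per-query parity-error accounting is phrased.
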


In this theorem, the exponents are not important and are the consequence of loose bounding for legibility. What matters is that for $v, t = \poly(n)$ and $\ell = 2^{cn}$, the numerators are significantly smaller than the denominators. Therefore, as $v$ grows, this quantity decreases exponentially fast. We will actually prove a slightly stronger statement, namely we will show a bound for any algorithm that first makes $T$ queries to the oracle $U$ and then outputs $v$ guesses. For $T = vt$, this is a strictly larger class of algorithms than those that make $t$ queries per guess. The rationale for studying this stronger model is that it natively handles the complexities of the memory of the algorithm between guesses. 

\subsection{Proof overview and intuition}

As suggested in the Introduction, we will heavily use the bosonic framework to analyze the success probability of a sampler algorithm. The bosonic framework will be used to analyze a uniform superposition over the pairs $(S,U)$ sampled from the $\mathsf{Strong}$ distribution defined in~\Cref{def:def-of-strong-dist}. 
By considering a uniform superposition over pairs $(S,U)$, we can conveniently analyze the average-case success probability of the sampler.

Our proof strategy for proving~\Cref{thm:big-upper-bound-thm} is reminiscent of ideas introduced by Hamoudi and Magniez~\cite{hamoudi2023quantum}. We will define a family of subspaces $\{\mathsf{QEC}_{(r, o)}\}$ indexed by positive integers $r$ and $o$, with the property that if the state after $T$ queries is mostly contained in $\mathsf{QEC}_{(r, o)}$ for $r \leq \poly(n)$ and $o \leq v/4$, then the success probability bound for the guessing algorithm is enough to prove~\Cref{thm:big-upper-bound-thm}. Once we have established that states supported on the subspace $\mathsf{QEC}_{(r, o)}$ have a low sampling success probability, we will prove that the state of all $T$-query algorithms querying the purified $(S, U)$ oracles is supported almost entirely on the $\mathsf{QEC}_{(r, o)}$ subspace for some $r$ that is polynomial in $T$.  Putting these together we will have a sampling probability upper bound.

The notation $\mathsf{QEC}_{(r, o)}$ refers to something we call a \emph{$(r,o)$-quasi-even condensate}. Informally, a $(r,o)$-quasi-even condensate is a state in the span of momentum Fock states where the number of \emph{odd} number operators is $\le o$ and the number of momentum modes that are non-zero is at most $r$.  See the following remark for more details about this choice of naming.

\begin{remark}[Nomenclature]
\label{remark:condensate-nomenclature}
We borrow the term condensate from many-body physics, where it denotes a regime in which a macroscopic fraction of bosons occupy a single-particle mode—typically the zero-momentum mode—giving rise to collective, mean-field–like behavior. In our setting, we use condensate analogously to indicate that most bosons remain in the zero-momentum mode, with only a small number of excitations occupying other modes. The modifier quasi-even reflects that the occupations across modes are almost parity-symmetric: only a few modes have odd occupation numbers.

 Hence, a quasi-even condensate refers to a subspace of Fock states that are both condensate-like (dominated by the zero-momentum mode) and nearly even under mode-parity. This terminology is not meant in a thermodynamic sense, but rather as a descriptive analogy capturing the structure of states that remain close to an even-parity condensate configuration with a few parity defects.
\end{remark}

\subsection{Organization}

We begin by proving a sampling probability upper-bound for quasi-even condensates in~\Cref{sec:bad_subspace}. Then, in~\Cref{sec:bosons} we prove how to represent the state of an algorithm that has made a few queries to the oracle $U$ when run in superposition over all pairs $(S,U) \sim \mathsf{Strong}$ (defined in~\Cref{def:def-of-strong-dist}). Next, in~\Cref{part:poly-query-implies-qec} we prove that the state of an algorithm that has made polynomial in $n$ many queries is a quasi-even condensate. This proves that we can apply the previously derived sampling probability upper-bound and completes the proof. %

\section{Sampler upper bounds for quasi-even condensates}
\label{sec:bad_subspace}

This section will define states that are quasi-even condensates and prove an upper bound on the sampling success probability of all algorithms whose states are supported on quasi-even condensates. 
In future sections, we prove that the post-query state of all polynomial query samplers is close to a quasi-even condensate.

Recall that the larger goal is to prove an upper bound on the success probability of a query algorithm on a random instance from $\mathsf{Strong}$ (defined in~\Cref{def:def-of-strong-dist}).  We can imagine that the query algorithm is split into two steps: (a) the first is the $T$ queries of interaction with the oracle $U$ and (b) the measurement with respect to $S$ of the guesses.
Interaction with a $(S,U) \sim \mathsf{Strong}$ can be studied by first purifying the distribution over oracles and interacting coherently with each pair $(S,U)$ in superposition. By purifying, we mean that queries to the oracle $U$ are replaced by the linear extension of the unitary
\begin{equation}
    \ket{b, x, y} \ket{\mathrm{tt}_S} \ket{\mathrm{tt}_U} \mapsto (-1)^{b \cdot U(y)} \ket{b, x,y} \ket{\mathrm{tt}_S} \ket{\mathrm{tt}_U}
\end{equation}
where $\mathrm{tt}_{(\cdot)}$ is our notation for the truth table of the corresponding oracle.
Checking the $v$ guesses output by the algorithm is equivalent to measuring the final (entangled) state of the algorithm and oracle with respect to 
\begin{equation} \label{eq:pi_success}
    \Pi_{\mathrm{succ}} \defeq \sum_{\substack{z_1, \ldots, z_v \in (\bits^n)^{v}\\\text{ distinct}}} \ketbra{z_1, \ldots, z_v} \otimes \qty(\sum_{S: z_1, \ldots, z_v \in S} \ketbra{\mathrm{tt}_S}).
\end{equation}
The success probability of this sampler over the distribution of random oracles $(S,U)$ is equal to the probability that measuring $\Pi_{\mathrm{succ}}$ on post-query state of the sampler, when run on the purification of the oracles, accepts.  We now define $(r, o)$-quasi-even condensates.

\begin{definition}[Quasi-even condensates] \label[definition]{def:low-collision-subspace}
    Let $u = (u_x)_{x \in \bits^n}$ be a tuple of non-negative integers such that $\sum_{x} u_{x} = \ell$, representing a momentum Fock state of $\ell$ bosons. Then we say that $u$ describes an \emph{$(r, o)$-quasi-even condensate} if 
    \begin{enumerate}
        \item (Condensate) $u_0 \geq \ell - r$. I.e., most of the bosons are in the 0-mode.
        \item (Quasi-even) At most $o$ many $u_{x}$, except for $u_0$, are odd.
    \end{enumerate}
    We define \emph{$(r, o)$-quasi-even condensates} to be any state in the span of momentum Fock states corresponding to quasi-even condensate tuples, or $\mathrm{span}\{~\ket{u} : u \text{ is } (r, o)\text{-quasi-even condensate}~\}$.
    We further define $\mathsf{QEC}_{(r, o)}$ to be the projector onto this subspace. Additionally, we define projectors $\mathsf{Con}_r$ and $\mathsf{QE}_o$ as the projectors onto $r$-condensates and $o$-quasi-even states, respectively. 
    We also define $\mathsf{QE}_{= o}$ and $\mathsf{QE}_{\geq o}$ to be the projectors onto states with exactly $o$ odd $u_{x}$'s (excluding $u_0$) and $\geq o$ many odd $u_{x}$'s (excluding $u_0$). All of these projectors are diagonal in the momentum Fock basis and therefore commute. By definition,
    \begin{equation}
        \mathsf{QEC}_{r,o} = \mathsf{Con}_r \cdot \mathsf{QE}_o =  \mathsf{QE}_o \cdot \mathsf{Con}_r.
    \end{equation}
\end{definition}
Note that all of the previously defined projectors require a state of exactly $\ell$ bosons.
Within this subspace, $\mathsf{Con}_0$ is the projector onto the state that has all $\ell$ bosons in the $0$-momentum mode, i.e., $\ket{\ell, \ldots}$, and $\mathsf{Con}_{\ell}$ is the projector onto all states with $\ell$ bosons. The following theorem is the main result of this section, which shows a sampling probability upper bound for $(r,v/4)$-quasi-even condensates when $r \ll \ell$.

\begin{theorem}[Quasi-even condensate probability upper bound] \label{thm:main-sampling-upper-bound}
Let $\Pi_{\mathrm{succ}}$ be the previously defined success operator and $\mathsf{QEC}_{(r,v/4)}$ be the projector onto $(r,v/4)$-quasi-even condensates, which only acts on the $\reg{S}$ register. Then, \begin{equation}
    \norm{ \mathsf{QEC}_{(r,v/4)} \cdot \Pi_{\mathrm{succ}} \cdot \mathsf{QEC}_{(r,v/4)} }_{\mathrm{op}} \leq 2 \left(\frac{4v (r^3 + vr^2) \sqrt{\ell}}{2^{n/4}}\right)^{v}\,.
\end{equation}
\end{theorem}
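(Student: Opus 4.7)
The plan is to prove the bound in three stages: (i) reduce the success projector $\Pi_{\mathrm{succ}}$ to a product of bosonic number operators, (ii) pass to a max row $\ell_1$-norm in the momentum Fock basis where $\mathsf{QEC}_{(r,v/4)}$ is diagonal, and (iii) estimate individual matrix elements using the condensate and quasi-even structure together with phase cancellation from the Hadamard transform.

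For stage (i), I exploit the block-diagonal form of $\Pi_{\mathrm{succ}}$ in the guess register: since it decomposes as a direct sum over distinct $v$-tuples $(z_1,\ldots,z_v)$, its sandwich with $\mathsf{QEC}_{(r,v/4)}$ has operator norm equal to the maximum block norm. Each block's $S$-register part is the projector $\prod_{i=1}^v \Pi_{\hat{n}_{z_i} > 0}$; all of these projectors and the number operators $\hat{n}_{z_i}$ are simultaneously diagonal in the position Fock basis, so Markov's inequality gives the PSD domination $\prod_i \Pi_{\hat{n}_{z_i} > 0} \preccurlyeq \prod_i \hat{n}_{z_i}$, and the latter factors as $A^\dagger A$ with $A = \hat{a}_{z_v}\cdots \hat{a}_{z_1}$ when the $z_i$ are distinct. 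The task then reduces to bounding $\|A\cdot \mathsf{QEC}_{(r,v/4)}\|_{\mathrm{op}}^2$ uniformly in distinct $(z_1,\ldots,z_v)$.

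For stage (ii), I expand each $\hat{a}_{z_i}$ in the momentum basis via $\hat{a}_z = 2^{-n/2}\sum_y (-1)^{z\cdot y}\tilde{a}_y$, yielding $A = 2^{-nv/2}\sum_{\mathbf y} (-1)^{\sum_i z_i\cdot y_i}\,\tilde{a}_{y_v}\cdots\tilde{a}_{y_1}$. Because $\mathsf{QEC}_{(r,v/4)}$ is a coordinate projector in the momentum Fock basis, I bound $\|\mathsf{QEC} A^\dagger A \mathsf{QEC}\|_{\mathrm{op}}$ by its max row $\ell_1$-norm in that basis, i.e., $\max_{u\in\mathsf{QEC}}\sum_{u'\in\mathsf{QEC}} |\bra{u} A^\dagger A\ket{u'}|$. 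For stage (iii), each matrix element expands as a sum over $2v$ momentum indices $(\mathbf y,\mathbf y')$ of a Fock amplitude $\bra{u}\tilde{a}_{y_1}^\dagger\cdots\tilde{a}_{y_v}^\dagger\tilde{a}_{y_v'}\cdots\tilde{a}_{y_1'}\ket{u'}$ weighted by a Hadamard phase $\prod_i (-1)^{z_i\cdot(y_i+y_i')}$. The Fock amplitude is non-zero only when the signed multiset $\{y_i\} - \{y_i'\}$ realizes $u - u'$. The condensate property forces all non-zero entries of $u$ and $u'$ (and hence of $u-u'$) to lie in supports of size $O(r)$, while the $(v/4)$-quasi-even property on both sides bounds the number of odd-parity entries of $u - u'$ by $v/2$. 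I then split the $2v$ momentum indices into matched pairs (routed to a common mode, generating the full constructive phase sum $2^n$ that cancels two factors of $2^{-n/2}$) and unmatched indices (accounting for the defect $u-u'$ and necessarily confined to the $O(r)$ occupied off-zero modes or the $0$-mode reservoir). Quasi-evenness forces at most $v/2$ unmatched indices, each saving only $2^{-n/2}$ rather than $2^{-n}$, while matched $0$-mode pairs each contribute a factor $\sqrt{\ell}$ from the bosonic normalization $\sqrt{u_0(u_0-1)\cdots}$. Combining these yields a per-index factor of $\sqrt{\ell}/2^{n/4}$, with a polynomial $r^3 + vr^2$ overhead from support counting and mode-multiplicity bookkeeping; the final sum over the $r^{O(v)}$ many reachable $u' \in \mathsf{QEC}$ is then absorbed into the same prefactor.

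The main obstacle is the combinatorial bookkeeping in stage (iii): without the quasi-even property, unmatched indices could grow linearly in $v$, and each would save only a factor $2^{-n/2}$ instead of the full $2^{-n}$ from matched-pair interference, quickly destroying the bound. Quasi-evenness is precisely the structural feature that forces at least $v/2$ of the momentum operators on each side to pair up and benefit from the constructive-phase cancellation, producing the per-index factor $\sqrt{\ell}/2^{n/4}$ rather than the much weaker $\sqrt{\ell}/2^{n/2}$ or worse. Tracking the bosonic amplitudes $\sqrt{u_x(u_x-1)\cdots}$ carefully across the $0$-mode reservoir (with up to $\ell$ bosons) versus the $O(r)$ off-zero occupied modes, and pairing them against the combinatorics of valid assignments and the precise number of permissible $u'$, is the technical heart of the argument.
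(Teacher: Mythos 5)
Your stages (i) and (ii) follow the paper exactly (Markov-style domination $\Pi_{z_1,\ldots,z_v}\preccurlyeq \hat n_{z_1}\cdots\hat n_{z_v}$, block-diagonality over guesses, then the max row $1$-norm in the momentum Fock basis), but stage (iii) — the actual estimate — contains two claims that are false and that break the argument. First, quasi-evenness does \emph{not} force ``at most $v/2$ unmatched indices'': the number of unmatched momentum indices equals the $1$-norm of the defect $u-u'$, which can be as large as $2v$, and indeed the dominant contribution to the row sum comes from columns $u'$ at maximal distance $2v$, where \emph{no} creation/annihilation indices are matched (all $v$ annihilations hit the $0$-mode of $u'$, all $v$ creations land on fresh modes). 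What quasi-evenness controls is the number of modes where the defect is odd (at most $v/4+v/4=v/2$), which is a constraint on the \emph{entropy} of reachable columns, not on the number of unmatched indices. Second, the number of reachable $u'\in\mathsf{QEC}$ is not $r^{O(v)}$ and cannot be absorbed into a polynomial prefactor: the annihilation momentum indices may point to arbitrary modes (they only need to be occupied in $u'$, which holds automatically since $u'$ is built from them), so the count of quasi-even-condensate columns at distance $2d$ is of order $(2^{n})^{d/2+v/4}\,\mathrm{poly}(r,v)^{d/2+v/4}$ — exponentially large in $n$. Bounding this count via a stars-and-bars argument (this is exactly where the quasi-even constraint does its work: fresh modes must mostly receive bosons in pairs, cutting the exponent from $d$ to $d/2+o$) and then balancing it against the per-element bound $v!(2r)^{v+d/2}\ell^{\,v-d/2}/2^{nv}$ in a geometric series over $d$ is the heart of the paper's proof; your proposal omits this counting entirely.

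Relatedly, the ``constructive phase sum $2^n$'' for matched pairs is not correct: summing a matched creation–annihilation pair over its common mode is weighted by occupation numbers, so it contributes at most $\sim\ell$ (i.e.\ $\ell/2^{n}$ after the normalization, consistent with the small-$d$ terms), not $2^{n}$; in fact the paper discards all Hadamard phases by the triangle inequality and never uses interference. Your per-index factor $\sqrt{\ell}/2^{n/4}$ is the right answer, but in the correct accounting it arises as $2^{-n}$ (normalization per guess) $\times\,2^{3n/4}$ (amortized column entropy, $v/2+v/4$ free mode choices for $v$ created bosons) $\times\,\sqrt{\ell}$ (annihilation amplitude from the $0$-mode), not from the matched/unmatched bookkeeping you describe. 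To repair the proposal you would need to replace the phase-cancellation step with (a) a bound on individual matrix elements that charges $\sqrt{\ell}$ only to $0$-mode operators and $\sqrt{r}$ to the at least $d$ off-zero operators, together with a count of admissible momentum-index assignments, and (b) the separate count of quasi-even condensates at each distance $2d$, summed as a geometric series dominated by $d=v$.
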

The key lemma required to prove~\Cref{thm:main-sampling-upper-bound} is the following. It provides an upper bound on the success probability of a quasi-even condensate that makes distinct guesses $z_1, \ldots, z_v$. A priori,~\Cref{lem:ham-intermediate} may seem strange as it calculates the maximum eigenvalue of a product of number operators on the space of quasi-even condensates. However, we will prove that this is sufficient for proving an upper bound on the maximum eigenvalue of $\Pi_{\mathrm{succ}}$ within the space of quasi-even condensates. This proof will rely on the fact that the upper bound proven in~\Cref{lem:ham-intermediate} is independent of the choice of guess locations.

\begin{lemma}
\label[lemma]{lem:ham-intermediate}
For distinct coordinates $z_1, \ldots, z_v \in \bits^n$, %
\begin{equation}
\norm{\mathsf{QEC}_{(r,v/4)} \cdot \hat n_{z_1} \ldots \hat n_{z_v} \cdot \mathsf{QEC}_{(r,v/4)}}_{\mathrm{op}} \leq 2 \left(\frac{4v (r^3 + vr^2) \sqrt{\ell}}{2^{n/4}}\right)^{v}\,.
\end{equation}
\end{lemma}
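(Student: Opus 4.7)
The plan is to bound the operator norm by the maximum row $\ell_1$-norm of the matrix in the momentum Fock basis. First, I would expand each position number operator as
\[
\hat{n}_z = \hat{a}^\dagger_z\hat{a}_z = \frac{1}{2^n}\sum_{x, y \in \bits^n}(-1)^{z \cdot (x \oplus y)}\tilde{a}^\dagger_x \tilde{a}_y,
\]
so that $\hat{n}_{z_1}\cdots \hat{n}_{z_v}$ becomes a sum over $(\vec{x}, \vec{y}) \in (\bits^n)^{2v}$ of ordered products of momentum ladder operators $\prod_{i=1}^v \tilde{a}^\dagger_{x_i}\tilde{a}_{y_i}$, each carrying a phase $(-1)^{\sum_i z_i \cdot (x_i \oplus y_i)}$ and a common normalisation $2^{-nv}$. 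Since $\mathsf{QEC}_{(r,v/4)}$ is a projector diagonal in the momentum Fock basis and $\prod_i \hat{n}_{z_i}$ is self-adjoint (because the $\hat{n}_{z_i}$ commute for distinct $z_i$), Schur's test reduces the problem to bounding
\[
\max_{u \in \mathsf{QEC}_{(r,v/4)}} \sum_{u' \in \mathsf{QEC}_{(r,v/4)}} |\langle u' | \hat{n}_{z_1}\cdots \hat{n}_{z_v} | u \rangle|,
\]
where $u, u'$ range over momentum Fock states in $\mathsf{QEC}_{(r,v/4)}$.

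For fixed $u, u'$, the matrix element $\langle u' | \prod_i \tilde{a}^\dagger_{x_i}\tilde{a}_{y_i}|u\rangle$ vanishes unless $u' - u = \sum_i(1_{x_i} - 1_{y_i})$, and is otherwise a product of square-root factors of intermediate occupation numbers. I would classify each index $i \in [v]$ by the type of pair $(x_i, y_i)$: a \emph{$0$-count} with $x_i = y_i = 0$, contributing a factor at most $u_0 \leq \ell$; a \emph{non-zero count} with $x_i = y_i \neq 0$, contributing $\leq r$; a \emph{condensate hop} where exactly one of $x_i, y_i$ equals $0$, contributing $\leq \sqrt{\ell(r+1)}$; and a \emph{pure hop} where both $x_i, y_i$ are non-zero and distinct, contributing $\leq r$. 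The condensate constraint $u_0, u'_0 \geq \ell - r$ forces condensate hops to (almost) balance between creation and annihilation at the $0$-mode, and the quasi-even constraint $o \leq v/4$ forces pure hops to respect non-zero mode parities up to a budget of at most $v/4$ mismatches. These two constraints together are what let the number of reachable destinations $u'$, and the number of contributing sequences $(\vec x, \vec y)$ landing at each one, be bounded polynomially in $r$ and $v$ rather than exponentially.

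The main obstacle is the combinatorial accounting in carrying this out. A naive bound would give only $2^{-n}$ suppression per off-diagonal hop from the prefactor, but there is substantial constructive interference because many sequences lead to the same $u'$ and the number of quasi-even condensate destinations is itself exponential in $r$. The quasi-even condensate structure is exactly what controls this interference: each stray off-diagonal hop must be matched by a partner hop elsewhere in the sequence that either restores the parity at the non-zero mode or returns a boson to the $0$-mode, so hops effectively come in pairs. Paired hops convert the $2^{-n}$ suppression per hop into $2^{-n/2}$ per pair, which amortises to the claimed $2^{-n/4}$ per index after distributing across the $v$ indices. The polynomial prefactor $r^3 + v r^2$ per index then accounts for the freedom in choosing which non-zero modes the paired hops pass through (bounded by the $O(r)$ excited modes in $u$ and $u'$) together with the Cauchy--Schwarz-type overhead from summing over destinations, while the $\sqrt{\ell}$ per index comes from a single unpaired condensate hop or $0$-count. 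Carrying out this pairing-and-counting argument uniformly over the starting state $|u\rangle$ and the distinct coordinates $z_1, \ldots, z_v$ is expected to be the technically hardest part of the proof.
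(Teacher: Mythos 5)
Your overall strategy is the paper's: bound the operator norm by the maximum row $1$-norm in the momentum Fock basis, expand the position number operators into momentum ladder operators with a $2^{-nv}$ prefactor, and use the condensate and quasi-even constraints to control how many destinations and how many $(\vec x,\vec y)$ sequences can contribute. But the proposal stops exactly where the proof actually lives, and the heuristic you offer in place of the missing accounting does not clearly deliver the stated bound. The paper's argument rests on two quantitative claims that you defer: (1) for fixed quasi-even condensate tuples $u,w$ at distance $2d$, the matrix element $\abs{\mel{w}{\hat n_{z_1}\cdots\hat n_{z_v}}{u}}$ vanishes for $d>v$ and is otherwise at most $v!\,(2r)^{v+d/2}\,\ell^{\,v-d/2}\,2^{-nv}$ — the key points being that at least $d$ of the $2v$ momentum operators must act on non-$0$ modes (each contributing $\sqrt r$ instead of $\sqrt\ell$), that the $\beta$'s must hit one of the $r+1$ occupied modes (giving $\binom{r+1}{v}$ choices), and that the $\alpha$'s are then determined as a multiset (giving only a $v!$ ordering factor); and (2) the number of quasi-even condensate tuples at distance exactly $2d$ from $u$ is at most $(2^{n+1})^{d/2+o}(r+d/2+o)^{d/2+o}$, which the paper establishes by an injective balls-in-bins encoding that splits the difference vector into even and odd parts and handles the parity defects with a canonical assignment. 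Neither of these is supplied by your type classification of indices or by the "paired hops" picture.

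In particular, your claim that paired hops convert $2^{-n}$ per hop into $2^{-n/2}$ per pair, "amortising" to $2^{-n/4}$ per index, is not how the exponent arises and is not justified as stated. In the paper the net factor $(2^n)^{-v/4}$ comes from the global $2^{-nv}$ prefactor fighting the destination count $(2^{n+1})^{d/2+o}$ at the dominant distance $d=v$ with the quasi-even budget $o=v/4$: the $(2^n)^{o}$ loss is precisely the freedom to place the up-to-$2o$ odd-parity defects anywhere among the $2^n$ modes, and the $\sqrt\ell$ per index is $\ell^{\,v-d/2}$ at $d=v$. Without proving the two counting claims above (and summing the resulting geometric series over $d$), there is no control on the per-destination multiplicity of contributing sequences or on the number of reachable quasi-even condensates, so the bound $2\left(4v(r^3+vr^2)\sqrt\ell\,/\,2^{n/4}\right)^{v}$ is not established. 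The plan is the right one, but the quantitative heart of the lemma — the lengthy accounting you explicitly set aside — is the proof, so as it stands there is a genuine gap.
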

\noindent We first prove that~\Cref{lem:ham-intermediate} implies~\Cref{thm:main-sampling-upper-bound} and then finish this section with the proof of~\Cref{lem:ham-intermediate}.
\begin{proof}[Proof of~\Cref{thm:main-sampling-upper-bound}]
Observe that we can reformulate $\Pi_{\mathrm{succ}}$ as the following:
\begin{equation} 
    \Pi_{\mathrm{succ}} = \sum_{\substack{z_1, \ldots, z_{v} \in (\bits^n)^{v} \\ \mathrm{distinct}}} \ketbra{z_1, \ldots, z_v} \otimes \Pi_{z_1, \ldots, z_v}\,.
\end{equation}
where define $\Pi_{z_1, \ldots, z_v}$ to be the projection onto states { $\ket{\mathrm{tt}_S}$} that have at least one boson in position modes $z_1, \ldots z_v$.
Next, we note that for distinct $z_1, \ldots, z_v$, the annihilation operators commute and we have that 
    \begin{equation}
        \left(\hat a_{z_1}^{\dagger} \ldots \hat a_{z_v}^{\dagger}\hat a_{z_1} \ldots \hat a_{z_v}\right) = \hat n_{z_1} \ldots \hat n_{z_v} \succcurlyeq \Pi_{z_1, \ldots, z_v}\,.
    \end{equation}
This statement is conceptually equivalent to applying Markov's inequality: that $\Pr[X > 0] \leq \Exp[X]$ for non-negative random variable $X$. Therefore, it follows that 
\begin{xalign}&\norm{\mathsf{QEC}_{(r,v/4)} \cdot \Pi_{\mathrm{succ}} \cdot 
\mathsf{QEC}_{(r,v/4)}}_{\mathrm{op}} \leq \norm{\mathsf{QEC}_{(r,v/4)} \cdot \Lambda_{\mathrm{succ}} \cdot \mathsf{QEC}_{(r,v/4)}}_{\mathrm{op}} \\
&\label{eq:def-lambda-success-repeat}\text{where~}
    \Lambda_{\mathrm{succ}} \defeq \sum_{\substack{z_1, \ldots, z_v \\ \mathrm{distinct}}} \ketbra{z_1, \ldots, z_v} \otimes \left(\hat a_{z_1}^{\dagger} \ldots \hat a_{z_v}^{\dagger}\hat a_{z_1} \ldots \hat a_{z_v}\right)\,.
\end{xalign}
Now we will prove the following bound, applying \Cref{lem:ham-intermediate}.
\begin{equation}
    \norm{ \mathsf{QEC}_{(r,v/4)} \cdot \Lambda_{\mathrm{succ}} \cdot \mathsf{QEC}_{(r,v/4)} }_{\mathrm{op}} \leq 2 \left(\frac{4v (r^3 + vr^2) \sqrt{\ell}}{2^{n/4}}\right)^{v}\,.
\end{equation}
Consider any state $\ket{\varphi}$ supported on $\mathsf{QEC}_{(r,v/4)}$ and write it in its decomposition based on guesses:
    \begin{equation}
        \ket{\varphi} = \sum_{z_1, \ldots, z_v} \alpha_{z_1, \ldots, z_v} \ket{z_1, \ldots, z_v} \otimes \ket*{\varphi_{z_1, \ldots, z_v}}
    \end{equation}
    where $\ket*{\varphi_{z_1, \ldots, z_v}}$ is the remainder of the state (normalized). Then,
    \begin{xalign}
        \ev*{\Lambda_{\mathrm{succ}}}{\varphi} &= \sum_{\substack{z_1, \ldots, z_v \\ \text{distinct}}} \abs{\alpha_{z_1, \ldots, z_v}}^2 \cdot \ev{\hat n_{z_1} \ldots \hat n_{z_v}}{\varphi_{z_1, \ldots, z_v}}\\
        &\leq \max_{\substack{z_1, \ldots, z_v \\ \text{distinct}}} \ev{\hat n_{z_1} \ldots \hat n_{z_v}}{\varphi_{z_1, \ldots, z_v}} \\
        &\leq 2 \left(\frac{4v (r^3 + vr^2) \sqrt{\ell}}{2^{n/4}}\right)^{v}.
    \end{xalign}
 Here, the final line is the application of \Cref{lem:ham-intermediate}.
\end{proof}

It only remains to prove~\Cref{lem:ham-intermediate}, which we prove next.
\begin{proof}[Proof of~\Cref{lem:ham-intermediate}]
    Recall that an upper bound on the spectral norm of a Hermitian matrix is the max $1$-norm over all rows.  Our goal is to study $\hat n_{z_1} \ldots \hat n_{z_v}$ within the space defined by quasi-even condensates. So, we restrict our attention to the row indexed by $u$ in the momentum Fock basis where $u$ is a $(r,o)$-quasi-even condensate tuple. Therefore, the goal is to bound
    \begin{xalign}[eq:decomposition-into-numb-and-bound]
        &\sum_{(r,o)-\text{QEC tuple}~w} \abs{\mel*{w}{\hat n_{z_1} \ldots \hat n_{z_v}}{u}} \\
        &\hspace{2mm}= \sum_{d \geq 0} \qty(\sum_{\substack{(r,o)-\text{QEC tuple}~w \\ \abs{w - u} = 2d}} \abs{\mel*{w}{\hat n_{z_1} \ldots \hat n_{z_v}}{u}} ) \\
        &\hspace{2mm}\le \sum_{d \geq 0} \qty( \max_{\substack{(r,o)-\text{QEC tuple}~w \\ \abs{w - u} = 2d}} \abs{\mel*{w}{n_{z_1} \ldots \hat n_{z_v}}{u}}) \cdot \#\qty{w : \substack{(r,o)-\text{QEC tuple}~w \\ \abs{w - u} = 2d}}\,.
    \end{xalign}
As the previous equation suggests, we will bound the terms in the previous equation using two additional claims.  The first is a bound on the entry of the matrix corresponding to a tuple $w$, at distance $2d$ from a fixed tuple $u$.  

\begin{claim}
\label[claim]{claim:decay-bound-condensate}
    Fix two $(r,o)$-quasi-even condensate tuples $u$, $w$ such that $\abs{u - w} = 2d$ with $\ell \geq 2d$.  
    Then the following holds for all distinct $z_1, \ldots, z_v$.
    \begin{equation}\label{eq:overlapbound}
        \abs{\mel*{u}{\hat n_{z_1}\ldots \hat n_{z_v}}{w}} \leq \begin{cases} \displaystyle
            v! (2r)^{v + d/2} \frac{\ell^{v-d/2}}{2^{nv}} & \text{if}~d \leq v, \\ 0 &\text{if}~d > v. \end{cases}
    \end{equation}
\end{claim}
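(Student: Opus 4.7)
The plan is to evaluate the matrix element $\mel{u}{\hat n_{z_1}\cdots \hat n_{z_v}}{w}$ directly in the momentum basis, leveraging the condensate structure of $u$ and $w$. Since the $z_i$ are distinct, the bosonic commutation relations give $\hat n_{z_1}\cdots \hat n_{z_v} = A^\dagger A$ with $A = \hat a_{z_1}\cdots \hat a_{z_v}$, so the matrix element equals $\langle Au \mid Aw\rangle$. Expanding each $\hat a_z = \frac{1}{\sqrt{2^n}}\sum_y (-1)^{z\cdot y}\tilde a_y$ yields
\begin{equation}
A\ket{w} \;=\; \frac{1}{2^{nv/2}}\sum_{Y=(y_1,\ldots,y_v)}(-1)^{\sum_i z_i\cdot y_i}\sqrt{N(Y;w)}\,\ket{w-[Y]}\,,
\end{equation}
where $[Y]\in \ZZ_{\ge 0}^{2^n}$ is the occupation vector of the multiset $\{y_1,\ldots,y_v\}$ and $N(Y;w)=\prod_y w_y(w_y-1)\cdots(w_y-[Y]_y+1)$ is a product of falling factorials (interpreted as $0$ whenever $w_y<[Y]_y$).

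For the case $d>v$ I would rewrite $\hat n_z = \frac{1}{2^n}\sum_{y''}(-1)^{z\cdot y''}\sum_{y'}\tilde a_{y'+y''}^\dagger \tilde a_{y'}$ and observe that each factor changes the momentum Fock tuple by $\ell_1$-distance at most $2$. Hence $\hat n_{z_1}\cdots \hat n_{z_v}\ket{w}$ is supported on Fock vectors within $\ell_1$-distance $2v$ of $w$, so the matrix element vanishes whenever $\abs{u-w}=2d>2v$.

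For $d\le v$ the inner product $\langle Au \mid Aw\rangle$ selects the \emph{compatible} sequence pairs $(Y_u,Y_w)$ satisfying $[Y_u]-[Y_w]=u-w$. Applying the triangle inequality and the condensate bounds $u_y,w_y\le r$ for $y\ne 0$ together with $u_0,w_0\le \ell$, each summand is at most $\tfrac{1}{2^{nv}}\,\ell^{([Y_u]_0+[Y_w]_0)/2}\,r^{v-([Y_u]_0+[Y_w]_0)/2}$. I would parameterize compatible pairs by a common occupation vector $e$ so that $[Y_u]=e+\delta^+$ and $[Y_w]=e+\delta^-$, where $\delta^+$ and $\delta^-$ are the positive and negative parts of $u-w$ (each of $\ell_1$-norm $d$, with disjoint supports) and $\abs{e}=v-d$ with $e_y\le \min(u_y,w_y)$. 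The number of orderings of $Y_u,Y_w$ with given occupations contributes a multinomial factor $\tfrac{v!}{\prod_y [Y_u]_y!}\cdot \tfrac{v!}{\prod_y [Y_w]_y!}$, which combines cleanly with the factorization $N(Y;u)=\prod_y \binom{u_y}{[Y]_y}\,[Y]_y!$ so that the denominators of the multinomial partially cancel against the $[Y]_y!$ in $N$. Summing over the common $e$ using standard multinomial/generating-function identities (in particular $\sum_{\abs{e'}=k,\,\mathrm{supp}(e')\subseteq[r]}\prod_y 1/e'_y!\le r^k/k!$) then yields the target form $v!(2r)^{v+d/2}\ell^{v-d/2}/2^{nv}$: the $(2r)^{v+d/2}$ factor absorbs the at-most-$r$ choices for each non-zero momentum mode (the factor $2$ accounts for the two independent assignments into $Y_u$ and $Y_w$), and the $v!$ absorbs the residual ordering multinomial.

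The main obstacle is the combinatorial bookkeeping: the per-term estimate $\ell^{([Y_u]_0+[Y_w]_0)/2}r^{v-([Y_u]_0+[Y_w]_0)/2}$ varies across compatible pairs, and the sum over the common multiset $e$ mixes contributions with different splits between powers of $\ell$ and $r$. The condensate bound $u_y,w_y\le r$ for $y\ne 0$ is essential---without it the per-term contribution would not decay with $d$, and the required dependence on $d$ would be lost. The factor of $v!$ arises naturally from the multinomial coefficient counting orderings of the $v$ annihilation events, while the $r^{d/2}$ penalty relative to the $d=0$ case reflects the fact that at least $d$ of the annihilations must land on non-zero momentum modes in order to explain the mismatch between $u$ and $w$.
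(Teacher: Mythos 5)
Your skeleton is essentially the paper's: expand the position number operators in momentum creation/annihilation operators, use the triangle inequality, note that a term survives only if the $v$ annihilations and $v$ creations map $w$ to $u$ (which gives the vanishing for $d>v$, exactly as in the paper), and bound each surviving amplitude by $\sqrt{\ell}$ per $0$-mode operator and $\sqrt{r}$ per non-zero-mode operator, so that at least $d$ of the $2v$ operators contribute $\sqrt{r}$ and the per-term bound $\ell^{v-d/2}r^{d/2}$ follows. Your parameterization of compatible pairs by the common occupation vector $e$ with $[Y_u]=e+\delta^+$, $[Y_w]=e+\delta^-$ is also correct. The only place the two arguments differ is the final count of surviving terms, and that is where your sketch has a genuine gap.

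Concretely, your symmetric accounting carries \emph{two} ordering multinomials, $\tfrac{v!}{\prod_y[Y_u]_y!}\cdot\tfrac{v!}{\prod_y[Y_w]_y!}$, while the target bound contains a single $v!$. The cancellation you invoke against $N(Y;\cdot)$ only removes half of each factorial, because the amplitude is $\sqrt{N}$, not $N$: per side you are left with $v!\prod_y\sqrt{\binom{u_y}{c_y}/c_y!}\le v!\prod_y u_y^{c_y/2}/c_y!$. In the dominant configuration ($e$ concentrated on the $0$-mode, $e_0=v-d$) the resulting contribution is of order $\tfrac{v!^2}{((v-d)!)^2}\,\ell^{v-d}r^{d/2}$, and the quoted identity $\sum_{\abs{e'}=k}\prod_y 1/e'_y!\le r^k/k!$ does not by itself show this is $\le v!(2r)^{v+d/2}\ell^{v-d/2}$; you need $\tfrac{v!}{((v-d)!)^2}\le 2^{v+d/2}r^v\ell^{d/2}$, which is true but requires a separate balancing argument (e.g.\ $\binom{v}{d}\le 2^v$ when $d\le v/2$, and for $d>v/2$ the structural facts that two $(r,o)$-quasi-even condensates force $d\le 3r/2$ together with $\ell\ge 2d$), none of which is a ``standard multinomial identity.'' The simpler repair is the asymmetric count the paper uses: every $\beta_i$ must land on one of the $\le r+1$ occupied modes of $w$ (else the term vanishes), giving at most $(r+1)^v$ ordered annihilation patterns, and for each such pattern the creation \emph{multiset} is uniquely determined by $u-w$, contributing at most $v!$ orderings; this yields $v!(r+1)^v\cdot\ell^{v-d/2}r^{d/2}/2^{nv}\le v!(2r)^{v+d/2}\ell^{v-d/2}/2^{nv}$ directly, with only one factor of $v!$ ever appearing.
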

The proof of this claim is deferred to after the proof of~\Cref{lem:ham-intermediate}.
Our second claim is an upper bound on the number of quasi-even condensates $w$ at distance $2d$ from our initial quasi-even condensate $u$ for all values of $d$.  

\begin{claim}
   \label[claim]{claim:counting-bound-condensate}
       For every $(r, o)$-quasi-even condensate tuple $u$ and $d \geq 0$, the number of $(r, o)$-quasi-even condensate tuples $w$ that are at exactly distance $2d$ from $u$ is upper bounded by
       \begin{equation}\label{eq:condensates_at_distance_d}
           (2^{n+1})^{d/2 + o} (r + d/2 + o)^{d/2 + o}\,.
   \end{equation}
\end{claim}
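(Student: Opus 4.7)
The plan is to parametrize each $(r,o)$-quasi-even condensate $w$ at distance $2d$ from $u$ by its difference vector $e = w - u$, using the canonical decomposition $e = 2a + b$ where $b_x \in \{-1, 0, +1\}$ matches the sign of $e_x$ when $e_x$ is odd (and is $0$ otherwise) and $a_x = (e_x - b_x)/2 \in \ZZ$. The quasi-even hypothesis on both $u$ and $w$ controls the parity structure: positions $x \neq 0$ with $b_x \neq 0$ are exactly those where $u_x$ and $w_x$ have differing parities, so $|\{x \neq 0 : u_x\text{ odd}\}| + |\{x \neq 0 : w_x\text{ odd}\}| \leq 2o$ gives $|\text{supp}(b) \setminus \{0\}| \leq 2o$. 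The identity $|e_x| = 2|a_x| + |b_x|$ together with $\|e\|_1 = 2d$ yields $2\|a\|_1 + \|b\|_1 = 2d$, and any position with $e_x < 0$ forces $u_x \geq |e_x|$, so the negative entries of $e$ must lie in $S_u \cup \{0\}$ (of size $\leq r+1$).

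The key step is bounding the positive and negative supports $P_\pm \defeq \{x \neq 0 : \pm e_x > 0\}$. Writing $\bar e$ for the restriction of $e$ to $x \neq 0$, the zero-sum identity $\sum_x e_x = 0$ together with $\|e\|_1 = 2d$ gives $|e_0| + \|\bar e\|_1 = 2d$ and $|e_0| = |\,\|\bar e^+\|_1 - \|\bar e^-\|_1\,|$, which forces $\max(\|\bar e^+\|_1, \|\bar e^-\|_1) = d$. Each even-magnitude entry of $P_+$ contributes $\geq 2$ to $\|\bar e^+\|_1 \leq d$ (so there are at most $d/2$ even-magnitude positions), while the number of odd-magnitude positions in $P_+$ is bounded by $2o$ via the quasi-even constraint, giving $|P_+| \leq d/2 + o$. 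An analogous argument yields $|P_-| \leq d/2 + o$.

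Having established these structural bounds, I enumerate $w$ by the tuple $(P_+, P_-, V_+, V_-, s)$ where $V_\pm: P_\pm \to \NN$ (with $V_\pm(x) \ge 1$) encodes the magnitudes of $e$ on $P_\pm$ and $s \in \{+,-\}$ records which of $\|\bar e^\pm\|_1$ equals $d$ (the other is then at most $d$). Using $\binom{2^n - 1}{p_+} \leq (2^n)^{p_+}/p_+!$ and $\binom{r}{p_-} \leq r^{p_-}/p_-!$ for the support choices, and $\binom{d-1}{p_+-1}$ and $\binom{d}{p_-}$ for the value compositions realizing $V_+$ (with sum exactly $d$) and $V_-$ (with sum at most $d$), one aims for
\begin{align*}
\#\{(r,o)\text{-QEC }w : |w-u|_1 = 2d\}\ &\le\ 2\sum_{p_+,\,p_- \le d/2 + o}\binom{2^n-1}{p_+}\binom{r}{p_-}\binom{d-1}{p_+-1}\binom{d}{p_-}\\
&\le\ (2^{n+1})^{d/2+o}(r + d/2 + o)^{d/2+o},
\end{align*}
where the final step aggregates the composition binomials and the factorial denominators into a clean factor of $(r + d/2 + o)^{d/2+o}$ using $\binom{m}{k} \leq (m+k)^k/k!$.

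The hard part will be executing this last inequality cleanly. The naive bound $\binom{d-1}{p_+-1} \leq 2^{d-1}$ is hopelessly loose and would introduce an unwanted $2^d$ factor; the composition counts must instead be compounded with the position choices so that the $(p_+!)^2$ and $(p_-!)^2$ denominators absorb the extra "$d$"-growth. The quasi-even support bounds $|P_\pm| \leq d/2 + o$ (rather than the weaker $|P_\pm| \leq d$) are essential here: they constrain the exponents to $d/2 + o$ and allow the binomial product to telescope to the target form. The factor of $2$ in $2^{n+1}$ accounts for the case distinction between $\|\bar e^+\|_1 = d$ and $\|\bar e^-\|_1 = d$, and the extra $d/2 + o$ added to $r$ comes from allowing the source of each negative contribution to range over $S_u \cup \{0\}$ together with the positions potentially introduced via the positive support.
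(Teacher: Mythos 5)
Your overall strategy is essentially the paper's argument in different clothing: you decompose $e=w-u$ into even and odd parts, use $w_x\ge 0$ to force the negative support into $\mathrm{pos}(u)$, bound the number of affected non-zero modes by $d/2+o$, and then count by a stars-and-bars/composition argument (the paper does exactly this via an injective encoding into placements of $\lfloor d/2\rfloor$ many $\pm 2$ balls and $o$ many $\pm 1$ balls). However, the step you yourself flag as "the hard part" is a genuine gap: the final displayed inequality is false as written. Take $d=2$, $o=0$ and $r$ large. Then your middle expression is dominated by the term $p_+=p_-=1$, giving
\begin{equation}
2\sum_{p_+,p_-\le 1}\binom{2^n-1}{p_+}\binom{r}{p_-}\binom{d-1}{p_+-1}\binom{d}{p_-}\;=\;2\,(2^n-1)\,(1+2r)\;\approx\;2^{n+2}r\,,
\end{equation}
which exceeds the target $(2^{n+1})^{1}(r+1)^{1}=2^{n+1}(r+1)$ by roughly a factor of $2$. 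The root cause is that your value-composition factors $\binom{d-1}{p_+-1}$ and $\binom{d}{p_-}$ treat the magnitudes on $P_\pm$ as arbitrary positive integers and discard the parity information you worked to establish: at most $2o$ entries of $e$ (outside mode $0$) can have odd magnitude, and even-magnitude entries contribute in multiples of $2$. The paper's $\pm 2/\pm 1$ ball encoding bakes exactly this structure into the count (compositions of roughly $d/2$ ``double'' units plus at most $o$ ``single'' units), which is what makes the stated constant $(2^{n+1})^{d/2+o}(r+1+d/2+o)^{d/2+o}$ come out; without it, your tuples $(P_+,P_-,V_+,V_-,s)$ include many parity-inconsistent assignments and the bound is too lossy to land on the claimed expression, already at small $d$.

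Two smaller points. First, your justification of $|P_+|\le d/2+o$ has an arithmetic slip: "at most $d/2$ even positions plus at most $2o$ odd positions" gives $d/2+2o$, not $d/2+o$; the correct derivation is $k_{\mathrm{odd}}+2k_{\mathrm{even}}\le \norm{\bar e^+}_1\le d$ together with $k_{\mathrm{odd}}\le 2o$, whence $|P_+|\le d/2+k_{\mathrm{odd}}/2\le d/2+o$ (and similarly for $P_-$), so the conclusion stands. Second, the gap is repairable along your own lines: either redo the magnitude counting with the parity constraint (split each $V_\pm(x)$ into $2\cdot(\text{halved even part})$ plus an optional unit odd part, with at most $2o$ odd units in total — at which point you have reconstructed the paper's ball scheme), or accept a slightly weaker bound with an extra $O(1)^{d/2+o}$ factor, which would still suffice downstream in the proof of the surrounding lemma but does not prove the claim as stated.
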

The proof of this claim is also deferred to after the proof of~\Cref{lem:ham-intermediate}.
To finish the proof, we return to~\eqref{eq:decomposition-into-numb-and-bound} to bound $\sum_{(r,o)-\text{QEC tuple}~w} \abs{\mel*{w}{\hat n_{z_1} \ldots \hat n_{z_v}}{u}}$. We explain the derivation of each subequation after the statement.

\begin{xalign}
    &\sum_{(r,o)-\text{QEC tuple}~w} \abs{\mel*{w}{\hat n_{z_1} \ldots \hat n_{z_v}}{u}} \\
    &\hspace{10mm}\leq \sum_{d \geq 0} \qty( \max_{\substack{(r,o)-\text{QEC tuple}~w \\ \abs{w - u} = 2d}} \abs{\mel*{w}{\hat n_{z_1} \ldots \hat n_{z_v}}{u}}) \cdot \#\qty{w : \substack{(r,o)-\text{QEC tuple}~w \\ \abs{w - u} = 2d}}\\
    &\hspace{10mm}\leq v! (2r)^{v} \sum_{d = 0}^{v} (2r)^{d/2} \frac{\ell^{v - d/2}}{(2^n)^{v}} \left( (2^{n+1})^{d/2 + o} (r + 1 + d/2 + o)^{d/2 + o} \right)\label{line:65c}\\
    &\hspace{10mm}= v! (2r)^{3v/2}\left(\frac{\ell}{2^n}\right)^{v} (2^n)^{o} \sum_{d = 0}^{v} \left(\sqrt{\frac{2^{n+1}}{\ell}}\right)^{d} (r + 1 + d/2 + o)^{d/2 + o}\label{line:65d}\\
    &\hspace{10mm}\leq v! (2r)^{3v/2}\left(\frac{\ell}{2^n}\right)^{v} (2^n)^{o} \sum_{d = 0}^{v} \left(\sqrt{\frac{2^{n+1}}{\ell}}\right)^{d} (r +  1 +v/2 + o)^{d/2 + o}\label{line:65e}\\
    &\hspace{10mm}= v! (2r)^{3v/2}\left(\frac{\ell}{2^n}\right)^{v} (2^n)^{o} (r + v/2 + o)^{o} \sum_{d = 0}^{v} \left(\sqrt{\frac{2^{n+1}}{\ell}}\right)^{d} (r + 1 + v/2 + o)^{d/2}\label{line:65f}\\
    &\hspace{10mm}\leq 2 v! (2r)^{3v/2}\frac{\ell^{v}}{(2^n)^{v - o}} \left(\sqrt{\frac{2^{n+1}}{\ell}}\right)^{v} (r + 1 + v/2 + o)^{v/2 + o}\label{line:65g}\\
    &\hspace{10mm}\leq 2^{2v + 1} v! (r^{3v/2}) (r + 1 + v/2 + o)^{v/2 + o} \frac{\ell^{v/2}}{(2^n)^{v/2 - o}}\,.\label{line:65h}
\end{xalign}
Here, to derive~\eqref{line:65c}, we use the bounds in \Cref{claim:decay-bound-condensate} and \Cref{claim:counting-bound-condensate}.~\Eqref{line:65d} factors out the terms that do not depend on $d$,~\eqref{line:65e} uses that $d\leq v$, and~\eqref{line:65f} again factors out terms that do not depend on $d$. Then we observe that the sum in~\eqref{line:65f} is a geometric series; since the summand 
\begin{equation}
\frac{\sqrt{2^{n+1}}}{  \ell} (r + 1 + v/2 + o) \ge 2
\end{equation}
and, therefore, the series sums to at most twice the largest term in the series (i.e., when $d = v$), giving~\eqref{line:65g}.  Re-arranging terms again gives us the bound in~\eqref{line:65h}
Setting $o = v/4$, we finish the proof, as we have a bound of
\begin{xalign}
    \norm{ \mathsf{QEC}_{(r,o)} \cdot \hat n_{z_1} \ldots \hat n_{z_v} \cdot \mathsf{QEC}_{(r,o)}}_{\mathrm{op}} &\leq \max_{(r, o)-\text{QEC tuple}~u}\sum_{(r,o)-\text{QEC tuple}~w} \abs{\mel*{w}{\hat n_{z_1} \ldots \hat n_{z_v}}{u}} \\
    &\leq \frac{2^{2v + 1} v! (r^{3v/2}) (r + 3v/4 + 1)^{3v/4} \ell^{v/2}}{(2^n)^{v/4}} \\
    &\leq 2 \left(\frac{4v (r^3 + vr^2) \sqrt{\ell}}{2^{n/4}}\right)^{v}\,.
\end{xalign}
Here, we use the crude upper bounds $r^{3/2} \leq r^{2}$ and $3v/4 + 1\leq v$ to remove the constants in the expression.
\end{proof}
Now we prove the additional claims. 
\begin{mathinlay}
\begin{claim*}[Claim~\ref{claim:decay-bound-condensate} restated]
    Fix two $(r,o)$-quasi-even condensate tuples $u$, $w$ such that $\abs{u - w} = 2d$ with $\ell \geq 2d$.  
    Then the following holds for all distinct $z_1, \ldots, z_v$.
    \begin{equation}\label{eq:overlapbound-restated}
        \abs{\mel*{u}{\hat n_{z_1}\ldots \hat n_{z_v}}{w}} \leq \begin{cases} \displaystyle
            v! (2r)^{v + d/2} \frac{\ell^{v-d/2}}{2^{nv}} & \text{if}~d \leq v, \\ 0 &\text{if}~d > v. \end{cases}
    \end{equation}
\end{claim*}
\begin{proof}
    Recall that $\hat a_{z_i} = \frac{1}{\sqrt{2^n}} \sum_{w} (-1)^{w \cdot z_i} \tilde{a}_{w}$.  Then we can write
    \begin{xalign}
        \abs*{\mel*{u}{\hat n_{z_1} \ldots \hat n_{z_v}}{w}} &= \abs*{\mel*{u}{\hat a_{z_1}^{\dagger}\ldots \hat a_{z_v}^{\dagger} \hat a_{z_1}\ldots \hat a_{z_v}}{w}} \\
        &= \abs{\frac{1}{2^{nv}} \sum_{\substack{\alpha_1, \ldots, \alpha_v \\ \beta_1, \ldots, \beta_v}} \qty(\prod_{i = 1}^{v} (-1)^{z_i \cdot (\alpha_i \oplus \beta_i)}) \mel*{u}{\tilde{a}_{\alpha_{1}}^{\dagger}\ldots \tilde{a}_{\alpha_{v}}^{\dagger}\tilde{a}_{\beta_1} \ldots \tilde{a}_{\beta_v}}{w} }\\
        &\leq \frac{1}{2^{nv}} \sum_{\substack{\alpha_1, \ldots, \alpha_v \\ \beta_1, \ldots, \beta_v}}  \abs{\mel*{u}{\tilde{a}_{\alpha_{1}}^{\dagger}\ldots \tilde{a}_{\alpha_{v}}^{\dagger}\tilde{a}_{\beta_1} \ldots \tilde{a}_{\beta_v}}{w}}\,. \label{eq:expansion-gram-matrix}
    \end{xalign}
Next, if $d > v$, observe that any term in~\eqref{eq:expansion-gram-matrix} corresponds to subtracting $1$ from momentum modes $\beta_1, \ldots, \beta_v$, and adding $1$ to momentum modes $\alpha_1, \ldots, \alpha_v$ starting from the quasi-even condensate. Furthermore, the term is non-zero if any only if they correspond to a sequence of additions and subtractions that map $w$ to $u$. By assumption, $u$ and $w$ differ by $2d > 2v$ edits in terms of the 1-norm, and so there is no sequence of $v$ many additions $\alpha_1, \ldots, \alpha_v$ and subtractions $\beta_1, \ldots, \beta_v$ mapping $w$ to $u$, and thus every term in the sum is $0$.

Now we switch to the case when $d \leq v$.  We first bound each of the non-zero terms in the sum.  Fix a choice $\alpha_1, \ldots, \alpha_v$ and $\beta_1, \ldots, \beta_v$, and consider the term
\begin{equation}
    \abs{\bra{u} \tilde{a}^{\dagger}_{\alpha_1} \ldots \tilde{a}^{\dagger}_{\alpha_v} \tilde{a}_{\beta_1} \ldots \tilde{a}_{\beta_v} \ket{w}}\,.
\end{equation}
As noted before, this is only non-zero if subtracting $1$ from the modes $\beta_1, \ldots, \beta_v$ and adding $1$ to the modes $\alpha_1, \ldots, \alpha_{v}$ maps from the Fock state $w$ to $u$.  To bound the term, we notice that when we apply an annihilation operator to a mode $z$, the norm can increase multiplicatively by at most $\sqrt{w_{z}}$, and whenever $z \neq 0$, we can bound $w_{z}$ by $r$, the total number of non-zero bosons.  Next, since $w,u$ both are exactly $\ell$ boson states, at most half of their difference can be accounted for by the 0-mode. Since $w$ and $u$ differ by $2d$, therefore at least $d$ of the combined collection of $\beta$'s and $\alpha$'s must be non-zero, meaning at most $2v-d$ of them can be 0. The $\leq 2v - d$ annihilation operators corresponding to the 0-mode have operator norm bounded by $\sqrt{\ell}$, and the remaining $d$ operators are bounded by $\sqrt{r}$ (when restricting to the subspace of $(r, o)$-quasi-even condensates).  Thus, we have the following upper bound:
\begin{equation}
    \abs{\bra{u} \tilde{a}^{\dagger}_{\alpha_1} \ldots \tilde{a}^{\dagger}_{\alpha_v} \tilde{a}_{\beta_1} \ldots \tilde{a}_{\beta_v} \ket{w}} \leq \ell^{v - d/2} r^{d/2}\,.
\end{equation}

The final step in the proof is to bound the total number of non-zero terms in the sum.  To do this, we make two observations.  We first note that we can bound the number of choices of $\beta_1, \ldots, \beta_v$ that correspond to a non-zero term in the sum by $\binom{r+1}{v}$.  This is because whenever $\beta_i$ acts on a mode that is unoccupied in $w$, it maps $\ket{w}$ to $0$, and there are only $r+1$ many occupied modes in $w$.  

Then, we notice that for any $u,w$ and $\beta_1,\cdots,\beta_v$, there is a unique multiset of creation operators  $\{\alpha_1,\cdots,\alpha_v\}$ such that $\bra{u} \tilde{a}^{\dagger}_{\alpha_1} \ldots \tilde{a}^{\dagger}_{\alpha_v} \tilde{a}_{\beta_1} \ldots \tilde{a}_{\beta_v} \ket{w}\neq 0$. The number of possible ordered lists of $\alpha_1,\cdots,\alpha_v$ is then at most $v!$.  Combining our bounds, we have
\begin{xalign}
    \abs*{\mel*{u}{\hat n_{z_1} \ldots \hat n_{z_v}}{w}} &\leq \frac{1}{2^{nv}} \sum_{\substack{\alpha_1, \ldots, \alpha_v \\ \beta_1, \ldots, \beta_v}}  \abs{\mel*{u}{\tilde{a}_{\alpha_{1}}^{\dagger}\ldots \tilde{a}_{\alpha_{v}}^{\dagger}\tilde{a}_{\beta_1} \ldots \tilde{a}_{\beta_v}}{w}}\\
    &\leq \frac{1}{2^{nv}} \ell^{v - d/2} r^{d/2} \sum_{\substack{\alpha_1, \ldots, \alpha_v \\ \beta_1, \ldots, \beta_v}} \delta\left(\abs{\bra{u} \tilde{a}^{\dagger}_{\alpha_1} \ldots \tilde{a}^{\dagger}_{\alpha_v} \tilde{a}_{\beta_1} \ldots \tilde{a}_{\beta_v} \ket{w}} \neq 0\right)\\
    &= \frac{1}{2^{nv}} \ell^{v - d/2} r^{d/2} v! \binom{r+1}{v}\\
    &\leq v! (2r)^{v + d/2} \frac{\ell^{v - d/2}}{2^{nv}}\,.
\end{xalign}
Here, we apply our bound on the magnitude of $\abs{\mel*{u}{\tilde{a}_{\alpha_{1}}^{\dagger}\ldots \tilde{a}_{\alpha_{v}}^{\dagger}\tilde{a}_{\beta_1} \ldots \tilde{a}_{\beta_v}}{w}}$, then apply our bound on the number of non-zero terms in the sum, and finally re-arrange terms to get the desired expression.
\end{proof}
\end{mathinlay}

Finally, we prove the second of the additional claims.  The proof of the lemma is more combinatorial in nature than the other proofs in this section, so we provide some intuition before the proof.

One way to bound the number of quasi-even condensates at distance $2d$ from a fixed condensate $u$ would be to think about tuples $e$ (for ``error'' relative to $u$) with sum $0$ and $1$-norm $2d$ (representing the distance between $u$ and another quasi-even condensate) and upper bound the ways to construct $e$.  Of course, simply taking all vectors of norm $2d$ would imply an upper bound of $\sim (2^{n})^{2d}$, i.e., to add $d$ units of positive difference, and $d$ units of negative distance to any of the $2^{n}$ entries.  However, this does not use the fact that our condensates are mostly paired up.  To take advantage of this, we might imagine that we split $e = 2e_{\mathrm{even}} + e_{\mathrm{odd}}$, where $e_{\mathrm{odd}}$ is a vector consisting of $\{-1, 0, +1\}$ entries.  Here, we again run into a number of problems.  First, even ignoring $e_{\mathrm{odd}}$, the number of ways to create $e_{\mathrm{even}}$ is still roughly $\sim(2^{n})^{d}$ (i.e., add $d/2$ positive and negative units to the $2^{n}$ units), and secondly, there are many choices for $e_{\mathrm{even}}$ and $e_{\mathrm{odd}}$, because anytime an entry of $e_{\mathrm{odd}}$ is $+1$, we can add $1$ to the corresponding entry of $e_{\mathrm{even}}$ and switch $e_{\mathrm{odd}}$ to be $-1$.  To fix these issues requires first noticing that we can only negative units to entries of $u$ that are non-zero, reducing the number of ways to generate $e_{\mathrm{even}}$ to $\sim (2^{n})^{d/2} \cdot r^{d/2}$, and describing a ``canonical'' way to assign $e_{\mathrm{odd}}$.  These complications require some careful and lengthy accounting, which the following claim handles.

\begin{mathinlay}
    \begin{claim*}[Claim~\ref{claim:counting-bound-condensate} restated]
         For every $(r, o)$-quasi-even condensate tuple $u$ and $d \geq 0$, the number of $(r, o)$-quasi-even condensate tuples $w$ that are at exactly distance $2d$ from $u$ is upper bounded by
         \begin{equation}\label{eq:condensates_at_distance_d-restated}
             (2^{n+1})^{d/2 + o} (r + d/2 + o)^{d/2 + o}\,.
     \end{equation}
     \end{claim*}
        \begin{proof}
        The proof will proceed as follows: We will first define a more abstract counting problem having to do with placing balls in bins with constraints.  We will then show that the answer to this counting problem upper bounds the number of $(r, o)$-quasi-even condensates at distance $d$ from $u$, and further show that the answer is upper bounded by \eqref{eq:condensates_at_distance_d}.\\

        \noindent
        \textit{Defining the counting problem.} 
        For every tuple $u$, define $\mathrm{pos}(u)$ to be the indices of $u$ that have non-zero entries. The counting problem that we consider is the number of ways to assign  balls into $2^n$ bins such that:
        \begin{enumerate}
            \item Each ball is labeled with an integer {\color{red} -2}, {\color{red} -1}, {\color{blue} +1}, or {\color{blue} +2} and the balls of a particular label are identical.
            \item There are exactly $\lfloor d/2 \rfloor$ many ${\color{blue} +2}$ and $\lfloor d/2 \rfloor$ many ${\color{red} -2}$ balls, and exactly $o$ many ${\color{blue} +1}$ and $o$ many ${\color{red} -1}$ balls.
            \item For any bin outside of $\mathrm{pos}(u)$, the sum of the labels of the balls placed in this bin must be $\geq 0$.
        \end{enumerate}

        We first prove that the number of assignments of balls to bins is an upper bound for the number of $(r, o)$-quasi-even condensates at distance $2d$ from a given quasi-even condensate $u$ by constructing an injective map from quasi-even condensates to assignments. \\

        \noindent \textit{Constructing the injective map.} Fix a quasi-even condensate $w$ at distance $2d$ from $u$ and let $e = w - u$ be the entry-wise difference of $u$ and $w$.  Note that because $u$ and $w$ have the same number of bosons, the sum of entries of $e$ must be $0$. Divide $e$ into positive and negative components---i.e., such that
        \begin{equation}
            e_+ \ge 0, e_{-} \ge 0 \text{~such that}~e = e_+ - e_-.
        \end{equation}
        Define $e_{\mathrm{even}} \defeq 2 \left(\lfloor e_{+} / 2 \rfloor - \lfloor e_{-} / 2 \rfloor\right)$, where operations like the floor and dividing by $2$ act entry-wise on the tuple, and define $e_{\mathrm{odd}} \defeq e - e_{\mathrm{even}}$.  
        We first make some remarks about $e_{\mathrm{even}}$ and $e_{\mathrm{odd}}$. For any entry $e_x$,
        \begin{equation}
            (e_{\mathrm{odd}})_x = \begin{cases}
                0 & \text{if}~e_x~\text{is even}, \\
                \mathrm{sgn}(e_x) & \text{if}~e_x~\text{is odd}.
            \end{cases}
        \end{equation}
        We further note that because we started with two quasi-even condensates, the number of non-zero entries in $e_{\mathrm{odd}}$ is upper bounded by $2o$. %
        Finally, we note that the tuple $e_{\mathrm{even}}$ always satisfies $\abs{e_{\mathrm{even}}} \leq 2\lfloor d/2 \rfloor$, meaning that if $d$ is not even, $e_{\mathrm{even}}$ accounts for strictly less than $d/2$ of the distance between $w$ and $u$. \\

        \noindent Now consider the following assignment of balls to bins. Without loss of generality, assume that the sum of entries of $e_{\mathrm{even}}$ is $\le 0$; the positive case follows by a similar logic.
        \begin{enumerate}
            \item For each bin $x$, if $(e_{\mathrm{even}})_x \ge 0$, add $(e_{\mathrm{even}})_{x} / 2$ many ${\color{blue} +2}$ balls to the bin, and if it $e'_x < 0$, add $(e_{\mathrm{even}})_{x} / 2$ many ${\color{red} -2}$ balls to the bin. 
            \item Let $b_{+}$ and $b_{-}$ be the number of ${\color{blue} +2}$ and ${\color{red} -2}$ balls assigned in the previous step, respectively. Note, $b_+ \leq b_-$. Add $\lfloor d/2 \rfloor - b_{-}$ many ${\color{blue} +2}$ and the same number of ${\color{red} -2}$ balls to the $0$-th bin. This assigns all the ${\color{red} -2}$ balls. 
            \item $b_{\mathrm{rem}} \defeq b_- - b_+ \ge 0$ is the number of ${\color{blue} +2}$ that have not been assigned ($\mathrm{rem}$ stands for ``remainder''). 
            Assign an additional $+ 2$ ball to the bins corresponding to the first (in lexicographical ordering) $b_{\mathrm{rem}}$ entries of $e_{\mathrm{odd}}$ that equal $+1$.
            Then, define $e_{\mathrm{balanced}}$ to be equal to $e_{\mathrm{odd}}$ where we subtract $2$ from the entries affected by this step. 
            \item At this step, $e_{\mathrm{balanced}}$ is a tuple where every entry is $\in \{-1, 0,+1\}$ because the previous step only subtracts $2$ from entries in $e_{\mathrm{odd}}$ that were ${\color{blue} +1}$.
            Furthermore, we have that the sum of entries in $e_{\mathrm{balanced}}$ is $0$ and there are at most $2o$ many non-zero entries of $e_{\mathrm{balanced}}$.
            Thus, there are at most $o'\leq o$ many ${+1}$ and $o'$ many $-1$ entries of $e_{\mathrm{balanced}}$. Thus, we can assign $o'$ many ${\color{blue} + 1}$ and ${\color{red}-1}$ balls according to non-zero entries of $e_{\mathrm{balanced}}$, and assign the remaining $o - o'$ many ${\color{blue} +1}$ and ${\color{red} -1}$ balls to the $0$-th bin. 
        \end{enumerate}
        One can verify that this is a valid assignment.
        Finally, if we take the assignment of balls that this procedure outputs and add up the labels of the balls, we recover the vector $e = w - u$.  Since for any two quasi-even condensates $w$ and $w'$, this difference vector can not be the same, we have that our assignment is injective, and thus the number of assignments upper bounds the number of quasi-even condensates at distance exactly $2d$. \\

        \noindent \textit{Bounding the combinatorial identity.} Now, we upper bound the number of assignments of the balls to bins using the stars and bars counting trick\footnote{\textit{Stars and bars} is a combinatorial technique for counting the number of ways to partition $a$ identical items (``stars'') into $b$ distinct bins~\cite{wiki:Stars_and_bars_(combinatorics)}. The number of solutions is ${a + b - 1 \choose b-1}$. This can be derived by observing that any placement is equivalent to placing $a$ (``stars'') and $b - 1$  ``bars'' in a row, with the number of stars between two bars denoting the number of stars to place in the corresponding bin. The identified bijection proves that the number of ways is equal to selecting $b-1$ locations for the bars among $a + b - 1$.}.  We can over-count this by $(a)$ assigning the ${\color{blue} +2}$ and ${\color{blue} +1}$ balls to $2^{n}-1$ bins, and $(b)$ assigning the ${\color{red} -2}$ and $-1$ balls to the $r + 1$ many bins corresponding to $\mathrm{pos}(u)$, as well as the at most $d/2 + o$ bins that the ${\color{blue} +2}$ and ${\color{blue} +1}$ balls were assigned to. 
        We have the following bounds on $(a)$ and $(b)$
        \begin{xalign}
            (a) &\leq {\frac{d}{2} + k + 2^n \choose {\frac{d}{2},k,2^n}} = \frac{\qty(\frac{d}{2} + k + 2^n)!}{\qty(\frac{d}{2})!k!(2^n)!} \leq \qty(2^{n+1})^{\frac{d}{2} + k} \leq \qty(2^{n+1})^{\frac{d}{2} + o}\\
            (b) &\leq {{r + 1 + \frac{d}{2} + o} \choose {r+1, \frac{d}{2},o}} \leq \qty(r + 1 + d/2 + o)^{\frac{d}{2} + o}\,.
    \end{xalign}
    Here, we are ignoring the effect of taking $\lfloor d/2 \rfloor$, since it only decreases the quantities.  Taking the product of $(a)$ and $(b)$ upper bounds the number of ways to assign the balls to bins, which completes the proof.    
    \end{proof}
\end{mathinlay}
 \newpage
\part{Polynomial-query algorithms generate quasi-even condensates}
\label{part:poly-query-implies-qec}
In this part, we complete the proof of the sampling probability upper bound by showing that polynomial-query algorithms are almost entirely supported on quasi-even condensates.  
\section{A bosonic compressed oracle technique}
\label{sec:bosons}

In this section, we describe a natural compressed oracle technique for a family of random sparse functions, like $S$, and identify a natural basis in which queries to $U$ (when $(S, U)$ are sampled from the distribution $\mathsf{Strong}$) has a simple purified description in terms of bosonic hopping operators.

\subsection{Initial bosonic state}
We first need to compute a purification of the choice of multiset $S$. As mentioned in the introduction, one technique is to sample a uniformly random multiset of size $\ell$ is preparing $\ell$ bosons initialized in the 0-momentum mode and then measuring in the position basis. Therefore, a purification of the oracle $S$ is simply $\ell$ bosons in the 0-momentum mode:
\begin{equation}
    \frac{1}{\sqrt{\ell!}} \qty(\tilde{a}_0^\dagger)^\ell \ket \vac.
\end{equation}

\begin{mathinlay}
\begin{claim}
    The state above is equal to the uniform superposition over all multisets of $\ell$ elements, where each element is sampled uniformly at random, independent of the other elements.
\end{claim}
\begin{proof}
    We can write the purification of a uniformly random multiset of $\ell$ elements in the Fock basis as follows:
    \begin{equation}
        \sum_{\substack{\ell_1, \ldots, \ell_{2^n} \in \ZZ_{\geq 0}\\ \sum_{x \in \bits^n} \ell_x = \ell}} \sqrt{p_{\ell_0, \ldots, \ell_{2^n - 1}}}\ket{\ell_0, \ldots, \ell_{2^n-1}}\,, 
    \end{equation}
    where $p_{\ell_0, \ldots, \ell_{2^n - 1}} = \frac{1}{2^{n\ell}}\frac{\ell!}{\prod_{x \in \bits^n} \ell_x!}$ is the probability of measuring getting $\ell_{x}$ copies of $x$ when sampling $\ell$ uniformly random strings.  We can also write out the result of applying $\ell$ creation operators in the $0$-momentum mode as follows:
    \begin{xalign}
        \frac{1}{\sqrt{\ell!}}\qty(\tilde{a}_0^\dagger)^\ell \ket{\vac} &= \frac{1}{\sqrt{\ell! 2^{n\ell}}} \sum_{x_1, \ldots, x_\ell \in \bits^{n}} \left(\prod_{i = 1}^\ell \widehat{a}_{x_i}^\dagger \right) \ket \vac\\
        &= \frac{1}{\sqrt{\ell! 2^{n\ell}}} \sum_{\substack{\ell_0, \ldots, \ell_{2^n-1} \in \ZZ_{\geq 0} \\ \sum_{x \in \bits^n} \ell_x = \ell}} \frac{\ell!}{\prod_{x \in \bits^n} \ell_x!} \left(\prod_{i = 1}^\ell \widehat{a}_{x_i}^\dagger \right) \ket \vac\\
        &= \frac{1}{\sqrt{2^{n\ell}}}\sum_{\substack{\ell_0, \ldots, \ell_{2^n-1} \in \ZZ_{\geq 0} \\ \sum_{x \in \bits^n} \ell_x = \ell}} \sqrt{\frac{\ell!}{\prod_{x \in \bits^n} \ell_x!}} \ket{\ell_0, \ldots, \ell_{2^n-1}}\,.
    \end{xalign}
    We first use the fact that, when sampling a random collection $x_1, \ldots, x_{\ell}$, the Fock basis state $\ket{\ell_0, \ldots, \ell_{2^n-1}}$ appears exactly $\ell! / \prod_{x \in \bits^n} \ell_x!$ times.  In the final line, we use the fact that by the definition of the creation operators, $\ket{\ell_0, \ldots, \ell_{2^n-1}} = \frac{1}{\prod_{x \in \bits^n} \ell_x!} \prod_{x \in \bits^n} \widehat{a}_{x}^{\ell_x} \ket{\vac}$.  Since these two states are equal, we have completed the proof.
\end{proof}
\end{mathinlay}

\subsection{Purified state of algorithm and oracle registers}
Next, we introduce the purification of the state of a query algorithm querying $U$, where $U$ is sampled from the distribution $\mathsf{Strong}$, \Cref{def:def-of-strong-dist}. For this, we will take inspiration from the compressed oracle technique introduced by Zhandry~\cite{zhandry2019record}. %
To construct the compressed oracle, we write out the following purification of the initial state of the system for both $S$ and $U$; we assume the state is expressed on purifying registers $\S$ and $\reg{U}$.
\begin{equation}
    \ket{\mathrm{init}}_{\reg{SU}} \defeq \frac{1}{\sqrt{\ell!}} \qty(\tilde{a}_{0}^{\dagger})^\ell \ket{\vac}_{\reg{S}} \otimes \ket{\bot}^{\otimes 2^n}_{\reg{U}}\,.
\end{equation}
Here, $\reg{U}$ is divided into $\bigotimes_{y \in \bits^n} \reg{U}_{y}$, where each single-qubit register $\reg{U}_y$ is initially in the state $\ket{\bot}$. We also define the initial state restricted to the $\reg{S}$ or $\reg{U}$ registers,
\begin{equation}
    \ket{\mathrm{init}_{S}}_{\reg{S}} \defeq \frac{1}{\sqrt{\ell!}} \qty(\tilde{a}_{0}^{\dagger})^\ell \ket{\vac}_{\reg{S}} \quad\text{and}\quad\ket{\mathrm{init}_{U}}_{\reg{U}} \defeq \ket{\bot}^{\otimes 2^n}_{\reg{U}}\,,
\end{equation}
We then define the following isometry acting on registers $\reg{US}$:
\begin{equation}\begin{split}
    \mathcal{V}_1 &\defeq \sum_{S} { \ketbra{\mathrm{tt}_S}_{\reg{S}}} \otimes \bigotimes_{y \in \bits^n}  \Bigg( \Bigg(\sqrt{1 - \frac{1}{2}\e^{-\kappa\gamma_y^{(S)}}} \ket{0} + \sqrt{\frac{1}{2}\e^{-\kappa\gamma_y^{(S)}}}\ket{1}\Bigg)\bra{\bot}_{\reg{U}_y} \\ & \hspace{2em} + \left(\sqrt{\frac{1}{2}\e^{-\kappa\gamma_y^{(S)}}} \ket{0} - \sqrt{1 - \frac{1}{2}\e^{-\kappa\gamma_y^{(S)}}}\ket{1}\right)\bra{\top}_{\reg{U}_y}\Bigg) \,.
\end{split}\end{equation}
Note that this is a unitary and only acts on $\reg{US}$.  Finally, we define a second isometry acting on registers $\reg{AU}$, with $\reg{A}$ acting as the algorithm's query register.
\begin{equation}
    \mathcal{V}_2 \defeq \sum_{\substack{y \in \bits^n\\ b \in \bits}} \sum_U (-1)^{b \cdot U(y)} \ketbra{b, y}_{\reg{A}} \otimes { \ketbra{\mathrm{tt}_U}_{\reg{U}}}\,.
\end{equation}
Then, we have the following:
\begin{mathinlay}
\begin{lemma}
\label[lemma]{lem:bot_oracle_identical}
    For all query algorithms $\mathcal{A}$, 
    \begin{equation}
        \Tr_{\reg{US}} \left[\mathcal{A}^{\mathcal{V}_1^{\dagger} \cdot \mathcal{V}_2 \cdot \mathcal{V}_1} (\ketbra{0, \mathrm{init}})\right] = \Exp_{S, U} \left[\mathcal{A}^{U}(\ketbra{0})\right]\,.
    \end{equation}
\end{lemma}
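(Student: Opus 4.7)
The plan is a compressed-oracle equivalence argument. I want to show that $\mathcal{V}_1\ket{\mathrm{init}}$ purifies the $\mathsf{Strong}$ sampling of $(S,U)$ onto the $\reg{US}$ registers, and that $\mathcal{V}_2$ then acts as the standard phase oracle on this purification. The key structural observation is that $\mathcal{V}_1$ and $\mathcal{V}_1^{\dagger}$ act only on $\reg{US}$, while each intermediate unitary of $\mathcal{A}$ acts on registers disjoint from $\reg{US}$. Hence $\mathcal{V}_1$ commutes with every such unitary, and the entire algorithm factors as $\mathcal{A}^{\mathcal{V}_1^{\dagger}\mathcal{V}_2\mathcal{V}_1}=\mathcal{V}_1^{\dagger}\cdot\tilde{\mathcal{A}}^{\mathcal{V}_2}\cdot\mathcal{V}_1$, where $\tilde{\mathcal{A}}^{\mathcal{V}_2}$ is the same circuit as $\mathcal{A}$ but using the single operator $\mathcal{V}_2$ in place of each compressed query.

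Next, I would discharge the outer $\mathcal{V}_1^{\dagger}$. Because $\mathcal{V}_1^{\dagger}$ acts only on $\reg{US}$, decomposing $\rho=\sum_{i,j}\rho_{ij}^{\reg{US}}\otimes\ketbra{i}{j}_{\reg{A}}$ and using the cyclic property of the full trace on $\reg{US}$ gives
\[
\Tr_{\reg{US}}\!\bigl[\mathcal{V}_1^{\dagger}\rho\,\mathcal{V}_1\bigr]=\Tr_{\reg{US}}\!\bigl[(\mathcal{V}_1\mathcal{V}_1^{\dagger})\,\rho\bigr].
\]
Moreover, $\mathcal{V}_2$ preserves the image of $\mathcal{V}_1$ (it is block-diagonal on the $\{\ket{0},\ket{1}\}^{\otimes 2^n}$ subspace of $\reg{U}$ and never produces a $\ket{\bot}$ component), so the intermediate state $\rho$ lies in the image of $\mathcal{V}_1$. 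On that subspace $\mathcal{V}_1\mathcal{V}_1^{\dagger}$ acts as the identity, so the outer $\mathcal{V}_1^{\dagger}\!\cdots\!\mathcal{V}_1$ may be dropped without affecting the partial trace over $\reg{US}$.

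The final step is to verify that $\mathcal{V}_1\ket{0,\mathrm{init}}$ is exactly the desired purification of $\mathsf{Strong}$. Expanding the bosonic initial state in the position Fock basis via the identity
\[
\tfrac{1}{\sqrt{\ell!}}\,(\tilde a_0^{\dagger})^{\ell}\ket{\vac}=\sum_{S}\sqrt{P(S)}\,\ket{\mathrm{tt}_S},\qquad P(S)=\frac{\ell!}{2^{n\ell}\,\prod_x\ell_x!},
\]
derived in the bosonic preliminaries, and then applying $\mathcal{V}_1$ term-by-term, the induced computational-basis distribution on $\reg{U}$ conditioned on $\ket{\mathrm{tt}_S}$ is a product of single-qubit Bernoulli distributions with exactly the parameters that define $\mathsf{Strong}$. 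Hence measuring $\reg{US}$ in the computational basis yields a fresh sample $(S,U)\sim\mathsf{Strong}$; since $\mathcal{V}_2$ restricted to $\ketbra{\mathrm{tt}_U}$ is the standard phase oracle for $U$, conditioning on this outcome gives precisely the state of $\mathcal{A}^{U}$ on $\reg{A}$; and tracing out $\reg{US}$ averages over this joint distribution to produce $\Exp_{S,U}[\mathcal{A}^{U}(\ketbra{0})]$, as required.

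The main (and essentially only) obstacle is the bookkeeping: carefully checking the Bernoulli-product calculation, matching the membership/sign conventions used by $\mathcal{V}_1$ and $\mathcal{V}_2$ so that $\mathcal{V}_2$ really does act as the phase oracle of the sampled $U$, and verifying the partial-trace identity that lets us discard the outer $\mathcal{V}_1^{\dagger}$. There is no serious conceptual obstruction beyond the standard purification/compressed-oracle paradigm.
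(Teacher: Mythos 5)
Your proposal is correct and follows essentially the same route as the paper's proof: commute $\mathcal{V}_1$ past the algorithm's unitaries so that consecutive $\mathcal{V}_1\mathcal{V}_1^{\dagger}$ pairs cancel, discard the outermost $\mathcal{V}_1^{\dagger}(\cdot)\mathcal{V}_1$ under the partial trace over $\reg{US}$, and then check that $\mathcal{V}_1\ket{\mathrm{init}}$ purifies $\mathsf{Strong}$ so that $\mathcal{V}_2$ acts as the phase oracle for the sampled $U$ on each branch. Your extra care in discharging the outer conjugation (cyclicity plus the observation that $\mathcal{V}_2$ preserves the image of $\mathcal{V}_1$) is a slight elaboration of the paper's one-line use of unitarity of $\mathcal{V}_1$, not a different argument.
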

\begin{proof}
    Since $\mathcal{V}_1$ only acts on $\reg{US}$, it commutes with the unitaries that $\Aa$ applies. Moreover, the $\mathcal{V}_1$ from one query cancels out the $\mathcal{V}_1^\dagger$ from the next query. The result is that only the inner-most and outer-most $\mathcal{V}_1$ and $\mathcal{V}_1^\dagger$ are left. Thus, we can rewrite the left side of the equation as
    \begin{equation}
        \Tr_{\reg{US}}[\mathcal{V}_1^{\dagger} \Aa^{\mathcal{V}_2}(\ketbra{0} \otimes \mathcal{V}_1\ketbra{\mathrm{init}} \mathcal{V}_1^{\dagger}) \mathcal{V}_1] = \Tr_{\reg{US}}[\Aa^{\mathcal{V}_2}(\ketbra{0} \otimes \mathcal{V}_1\ketbra{\mathrm{init}} \mathcal{V}_1^{\dagger})]\,.
    \end{equation}
    Then we note that applying $\mathcal{V}_1$ to $\ket{\mathrm{init}}$ yields exactly the following state:
    \begin{equation}
        \mathcal{V}_1 \ket{\mathrm{init}} = \frac{1}{\sqrt{\ell! \cdot 2^{n\ell}}}\sum_{s_1, \ldots, s_\ell \in \bits^n} \bigotimes_{y \in \bits^n} \left(\sqrt{1 - 
        \frac{1}{2}\e^{-\kappa\gamma_y^{(S)}}} \ket{0} + \sqrt{\frac{1}{2}\e^{-\kappa\gamma_y^{(S)}}}\ket{1}\right)_{\reg{U}_y}\otimes \hat a_{s_1}^\dagger \ldots \hat a_{s_\ell}^\dagger \ket \vac \,.
    \end{equation}
    From here, it is clear to see that tracing out the $\reg{US}$ register of $\Aa$ yields a random $U$ and $S$ according to the distribution $\mathsf{Strong}$, and that $\Aa$ querying $\mathcal{V}_2$ yields an identical mixed state to $\Aa$ querying a $U$ (where $(S, U)$ are sampled according to $\mathsf{Strong}$).
\end{proof}
\end{mathinlay}
 Next, we make a simplification to this oracle that gives the oracle a nicer form.
\begin{mathinlay}
\begin{corollary}\label[corollary]{cor:krausoperators}
    Define the following Kraus operators acting on $\reg{S}$, parametrized by $y$.
    \begin{equation} \label{eq:kraus-operators}
        E_0^{(y)} \defeq \sum_{S} (1 - \e^{-\kappa\gamma_y^{(S)}}){ \ketbra{\mathrm{tt}_S}}_{\reg{S}}\quad\text{and}\quad E_1^{(y)} \defeq \sum_{S} \sqrt{\e^{-\kappa\gamma_y^{(S)}} (2 - \e^{-\kappa\gamma_y^{(S)}})} { \ketbra{\mathrm{tt}_S}}_{\reg{S}}\,.
    \end{equation}
    Then define $\mathcal{O}$ to be the following unitary acting on registers $\reg{AUS}$.
    \begin{equation} \label{eq:O-is-funny}
        \mathcal{O} \defeq \sum_{y \in \bits^n, b \in \bits} \ketbra{b, y}_{\reg{A}} \otimes \left(\tilde Z_{\reg{U}_y} \otimes \left(E_0^{(y)}\right)_\reg{S} + \tilde X_{\reg{U}_y} \otimes \left(E_1^{(y)}\right)_{\reg{S}}\right)^{b}\,,
    \end{equation}
    where $\tilde X$ and $\tilde Z$ are the usual Pauli operators except in the $\ket{\bot}$ and $\ket{\top}$ basis (as opposed to the $\ket{0}$ and $\ket{1}$ basis).  Note the exponent of $b$ acting on the unitary applied to the $\reg{US}$ registers. Then,
    \begin{equation}
       \Tr_{\reg{US}}[\Aa^{\mathcal{O}}(\ketbra{0, \mathrm{init}})] = \mathbb{E}_{S, U}\left[\Aa^{U}(\ketbra{0})\right]\,. 
    \end{equation}
\end{corollary}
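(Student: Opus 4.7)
The plan is to show the stronger identity $\mathcal{V}_1^{\dagger} \mathcal{V}_2 \mathcal{V}_1 = \mathcal{O}$ as operators on $\reg{AUS}$, and then appeal to \Cref{lem:bot_oracle_identical} for the conclusion. First, I would note that $\mathcal{V}_1$ is in fact a unitary, not merely an isometry: it acts trivially on $\reg{A}$ and decomposes as a direct sum over $S$ of tensor products (over $y$) of the single-qubit operators $V_{y,S}$ on $\reg{U}_y$ sending $\ket{\bot}\mapsto \alpha_{y,S}\ket{0}+\beta_{y,S}\ket{1}$ and $\ket{\top}\mapsto \beta_{y,S}\ket{0}-\alpha_{y,S}\ket{1}$, where $\alpha_{y,S} = \sqrt{1-\tfrac12 \e^{-\kappa\gamma_y^{(S)}}}$ and $\beta_{y,S} = \sqrt{\tfrac12 \e^{-\kappa\gamma_y^{(S)}}}$. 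The two output vectors have unit norm (since $\alpha_{y,S}^2+\beta_{y,S}^2=1$) and are manifestly orthogonal, so $V_{y,S}$ is unitary; moreover $V_{y,S}$ is Hermitian, hence self-inverse.

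Next, I would exploit the block structure of the conjugation. Both $\mathcal{V}_1$ and $\mathcal{V}_2$ are controlled on (and hence diagonal in) the $\reg{AS}$ registers and decompose as tensor products over $y\in\bits^n$ on $\reg{U}$. Thus $\mathcal{V}_1^\dagger \mathcal{V}_2 \mathcal{V}_1$ decomposes accordingly, and for the $b=0$ sector it is trivially the identity. For the $b=1$ sector, it suffices to compute, for each fixed $S$ and $y$, the $2\times 2$ conjugate $V_{y,S}^{\dagger} Z V_{y,S}$, where $Z$ is the Pauli on the $\{\ket 0,\ket 1\}$ basis (implementing the phase oracle at $y$).

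A direct matrix computation in the $\{\ket\bot,\ket\top\}$ basis gives
\begin{equation}
V_{y,S}^{\dagger} Z V_{y,S} = (\alpha_{y,S}^2-\beta_{y,S}^2)\,\tilde Z + 2\alpha_{y,S}\beta_{y,S}\,\tilde X.
\end{equation}
Substituting the definitions yields $\alpha_{y,S}^2-\beta_{y,S}^2 = 1-\e^{-\kappa\gamma_y^{(S)}}$ and $2\alpha_{y,S}\beta_{y,S} = \sqrt{\e^{-\kappa\gamma_y^{(S)}}(2-\e^{-\kappa\gamma_y^{(S)}})}$, which are precisely the diagonal entries of $E_0^{(y)}$ and $E_1^{(y)}$ evaluated at $S$. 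Summing over $S$ (weighted by $\ketbra{\mathrm{tt}_S}_{\reg{S}}$) therefore reproduces the per-$y$ summand of $\mathcal{O}$ in \eqref{eq:O-is-funny}. Combining with the $b=0$ case establishes $\mathcal{V}_1^{\dagger} \mathcal{V}_2 \mathcal{V}_1 = \mathcal{O}$.

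The corollary then follows from \Cref{lem:bot_oracle_identical}, since replacing each query $\mathcal{V}_1^{\dagger}\mathcal{V}_2\mathcal{V}_1$ in the algorithm by $\mathcal{O}$ leaves the reduced state unchanged. There is no real obstacle here; the content is a routine basis-change computation, and the only thing worth double-checking is that the constants $1-\e^{-\kappa\gamma_y^{(S)}}$ and $\sqrt{\e^{-\kappa\gamma_y^{(S)}}(2-\e^{-\kappa\gamma_y^{(S)}})}$ indeed assemble into a Kraus-type completeness relation $(E_0^{(y)})^2 + (E_1^{(y)})^2 = \id$, which in turn confirms that $\mathcal{O}$ is unitary on $\reg{AUS}$ as asserted.
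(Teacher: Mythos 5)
Your proposal is correct and follows essentially the same route as the paper: both establish the operator identity $\mathcal{V}_1^{\dagger}\mathcal{V}_2\mathcal{V}_1=\mathcal{O}$ by direct computation in the $\{\ket{\bot},\ket{\top}\}$ basis (your per-$(y,S)$ $2\times 2$ conjugation $V_{y,S}^{\dagger}ZV_{y,S}=(\alpha^2-\beta^2)\tilde Z+2\alpha\beta\tilde X$ is just a tidier packaging of the paper's expanded expression) and then invoke \Cref{lem:bot_oracle_identical}. The coefficient checks ($\alpha^2-\beta^2=1-\e^{-\kappa\gamma_y^{(S)}}$, $2\alpha\beta=\sqrt{\e^{-\kappa\gamma_y^{(S)}}(2-\e^{-\kappa\gamma_y^{(S)}})}$) and the $b=0$ sector argument are all sound.
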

\begin{proof}
Observe that $(E_0^{(y)})^2 + (E_1^{(y)})^2 = \id$,~$0 \preceq E_0^{(y)} \preceq \id $, and $0 \preceq E_1^{(y)} \preceq \id $. 
From \Cref{lem:bot_oracle_identical}, for any algorithm $\Aa$, querying $\mathcal{V}_1^{\dagger} \cdot \mathcal{V}_2 \cdot \mathcal{V}_1$ is identical to querying $U$ after tracing out $\reg{US}$. To prove the corollary, we show that $\mathcal{V}_1^{\dagger} \cdot \mathcal{V}_2 \cdot \mathcal{V}_1$ is equal to $\mathcal{O}$.  Writing it out, we will get the following:
\begin{xalign}
    &\mathcal{V}_1^{\dagger} \cdot \mathcal{V}_2 \cdot \mathcal{V}_1 \\
    &\hspace{3mm} \begin{matrix*}[l]\displaystyle=\sum_{S,y} \ketbra{1, y}_{\reg{A}} \otimes \Bigg(\left((1 - \e^{-\kappa\gamma_y^{(S)}})\ket{\bot} + \sqrt{\e^{-\kappa\gamma_y^{(S)}} (2 - \e^{-\kappa\gamma_y^{(S)}})} \ket{\top}\right)\bra{\bot}_{\reg{U}_y} \\ \displaystyle \hspace{5mm}+\left(\sqrt{\e^{-\kappa\gamma_y^{(S)}} (2 - \e^{-\kappa\gamma_y^{(S)}})}\ket{\bot} - (1 - \e^{-\kappa\gamma_y^{(S)}}) \ket{\top}\right)\bra{\top}_{\reg{U}_y}\Bigg)\otimes{ \ketbra{\mathrm{tt}_S}}_{\reg{S}} \\
    \hspace{5mm}+\displaystyle \sum_{y}\ketbra{0, y}_{\reg{A}} \otimes \id_{\reg{US}}\end{matrix*}  \\
    &\hspace{3mm}\begin{matrix*}[l]=\displaystyle \sum_{S,y} \ketbra{1,y}_{\reg{A}} \otimes \left((1 - \e^{-\kappa\gamma_y^{(S)}}) \cdot \tilde{Z} + \sqrt{\e^{-\kappa\gamma_y^{(S)}} (2 - \e^{-\kappa\gamma_y^{(S)}})} \cdot \tilde{X}\right)_{\reg{U}_y} \otimes { \ketbra{\mathrm{tt}_S}}_{\reg{S}}\\
    \hspace{5mm}+\displaystyle \sum_{y} \ketbra{0, y} \otimes \id_{\reg{US}}\,.\end{matrix*}
\end{xalign}
Here, $\tilde{Z} = \ketbra{\bot} - \ketbra{\top}$ and $\tilde{X} = \ket{\top}\!\!\bra{\bot} + \ket{\bot}\!\!\bra{\top}$ are the Pauli $Z$ and $X$ operators in the $\ket{\top}$ and $\ket{\bot}$ basis.  Direct calculation shows that this unitary squares to the identity as expected.  Then we can rewrite this using the definition of $E_0^{(y)}$ and $E_1^{(y)}$ as follows:
\begin{equation}
    \mathcal{V}_1^{\dagger} \cdot \mathcal{V}_2 \cdot \mathcal{V}_1 = \sum_{y,b} (\ketbra{b,y}_{\reg{A}} \otimes \id_{\reg{B}}) \otimes \left(\tilde{Z}_{\reg{U}_y} \otimes \left(E_0^{(y)}\right)_{\reg{S}} + \tilde{X}_{\reg{U}_y} \otimes \left(E_1^{(y)}\right)_{\reg{S}}\right)^{b}\,.
\end{equation}
Applying \Cref{lem:bot_oracle_identical} completes the proof.
\end{proof}
\end{mathinlay}
This implies that we can write the purified state of any algorithm $\mathcal{A}$ that only queries $U$ in a simple form. Here we assume the query algorithm can be expressed as a repeating sequence of oracle queries and unitaries $A_{\reg{A}}$ (without loss of generality, we can assume the same unitary $A$ is applied each time):
\begin{equation}
    \underbrace{A \mathcal{O} A \mathcal{O} \ldots \mathcal{O}}_{T~\mathrm{times}} A \ket{0}_{\reg{A}} \ket{\mathrm{init}}_{\reg{US}}
\end{equation}
where there are $T$ queries to the oracle $U$.
The query access to controlled-$U$ described by~\eqref{eq:O-is-funny} can be further abbreviated as
\begin{equation}
    \Oo = \sum_{\substack{y \in \bits^n\\b \in \bits}} \ketbra{b, y}_{\reg{A}} \otimes \qty(\sum_{x \in \bits} \underbrace{\qty(\tilde{Z}^{1-x}  + \tilde{X}^x)_{\reg{U}_{y}}}_{\defeq (K^x)_{\reg{U}_{y}}} \otimes (E_x^{(y)})_{\reg{S}})^{b},
\end{equation}
where we define $K^x$ for notational simplicity.  Note the exponent $b$ in the large parenthetical. Then, an alternating sequence of algorithm and queries can be calculated as follows. 
Given $\y = (y_1, \ldots, y_T)$, $\bb = (b_1, \ldots, b_T)$, let $\y^\bb = (y_i : b_i = 1)$ be the ordered tuple of length $\abs{\bb} \defeq \sum_{i} b_i$ corresponding to indices of $\y$ where $\bb$ is $1$, with $y^{\bb}_i$ being the $i$'th entry of this tuple.  
Then given $\x = (x_1, \ldots, x_{\abs{\bb}})$, define the abbreviations:

\begin{equation}
    (K^\x)_{\y, \bb} \defeq \prod_{i = 1}^{\abs{\bb}} (K^{x_i})_{\reg{U}_{y^{\bb}_i}}, \qquad 
    E_\x^{(\y, \bb)} \defeq \prod_{i = 1}^{\abs{\bb}} E_{x_i}^{(y^{\bb}_i)}, \qquad 
    A_{\y, \bb} \defeq \prod_{i = T}^1 A \cdot \ketbra{b_i, y_i}\,.
\end{equation}
Then, the post-query state will be
\begin{equation}\label{eq:algorithmstate}
    \ket*{\psi_{\mathrm{PQ}}} 
    = \ket*{\psi_{\mathrm{Post-Query}}} 
    \defeq \sum_{\substack{\y \in {\bits^n}^{T}\\ \bb \in \bits^{T}}} 
    \qty[\qty(A_{\y, \bb})_{\reg{A}} \otimes \qty(\sum_{\x \in \bits^{\abs{\bb}}} (K^\x)_{\y, \bb} \otimes E_\x^{(\y, \bb)} )] A \ket{0}_{\reg{A}} \ket{\mathrm{init}}_{\reg{US}}.
\end{equation}
The abbreviated notation will prove itself useful when we start approximating the oracle queries with polynomials.
Note that in the rest of the section, we drop the register indices when they are clear from context, and whenever an operator acts as identity on a register, we omit the $\otimes \id$ part of the operator.  
Having developed an expression for the state after applying $T$ queries, we now press forward and study how the post-query state (expressed in~\eqref{eq:algorithmstate}) is almost entirely supported on quasi-even condensates. This will be the content of the next section. By doing so, we can appeal to~\Cref{thm:main-sampling-upper-bound} to conclude a probability upper bound.

\subsection{Action of the oracle in the momentum basis}

We see that $\mathcal{O}$ consists of Kraus operators $E_0^{(y)}$ and $E_1^{(y)}$ that can be thought of as functions of a map $\ket{\mathrm{tt}_S} \mapsto \gamma^{(S)}_{y} \ket{\mathrm{tt}_S}$.  We now examine the action of this map directly before examining the action of $\mathcal{O}$. By doing so, we will build a natural framework for understanding how to study maps that are more complex functions of $\gamma^{(S)}_{y}$ such as $\ket{\mathrm{tt}_S}$ to either $E_0^{(y)} \ket{\mathrm{tt}_S}$ or $E_1^{(y)} \ket{\mathrm{tt}_S}$.
This action, perhaps surprisingly, has a natural interpretation in the momentum space. To see this, we define the ``single $y$-momentum hopping operator'' $\tilde{\G}_y$ and the ``double $y$-momentum hopping operator'' $\tilde{\H}_y$ as\footnote{We use normalized versions of the single and double $y$-momentum hopping operators. This is because we are only considering systems of $\ell$ bosons. A situation with a variable number of bosons might use the definitions without the $1/\sqrt{\ell}$ and $1/\ell$ constants, respectively.}:
\begin{xalign}
    \tilde{\G}_y &\defeq \frac{1}{\sqrt{\ell}} \sum_{x\in \bits^n} \tilde{a}_{x\oplus y}^\dagger \tilde{a}_{x}. \label{eq:single_jump_operator} \\
    \tilde{\H}_y &\defeq \frac{1}{\ell} \sum_{x, x' \in \bits^n} \tilde{a}_{x\oplus y}^\dagger \tilde{a}_{x'\oplus y}^\dagger \tilde{a}_{x} \tilde{a}_{x'}. \label{eq:double_jump_operator}
\end{xalign}
We also refer to these as single-hopping or double-hopping operators when $y$ is clear from context. (In this work, we never use the analogous \textit{position} hopping operators.)
Morally, for a bosonic system $\ket{\psi}$, we can interpret $\tilde{\G}_y$ as adding momentum $y$ (in superposition) to each boson and $\tilde{\H}_y \ket{\psi}$ as adding momentum $y$ (in superposition) to each pair of bosons. Observe that action by $\tilde{\H}_y \ket{\psi}$ increases the momentum of the entire system by $2y = 0$.  Therefore, the total momentum of the system is conserved under any function of $\tilde{\H}_y$. Second, note that the single- and double-momentum hopping operators are related:
\begin{xalign}
    \tilde{\G}_y^2 &= \frac{1}{\ell} \sum_{x,x'\in \bits^n} \tilde{a}_{x\oplus y}^\dagger \tilde{a}_{x}  \tilde{a}_{x'\oplus y}^\dagger \tilde{a}_{x'} \\
    &= \frac{1}{\ell} \sum_{x,x'\in \bits^n} \tilde{a}_{x\oplus y}^\dagger  \qty(\tilde{a}_{x'\oplus y}^\dagger \tilde{a}_{x} + \delta_{x, x' \oplus y}) \tilde{a}_{x'} \\
    &= \frac{1}{\ell} \sum_{x,x'\in \bits^n} \tilde{a}_{x\oplus y}^\dagger  \tilde{a}_{x'\oplus y}^\dagger \tilde{a}_{x} \tilde{a}_{x'} + \frac{1}{\ell} \sum_{x \in \bits^n} \tilde{a}_{x}^\dagger \tilde{a}_x \\
    &= \tilde{\H}_y + \frac{\tilde{N}}{\ell}.
\end{xalign}
Within the subspace of $\ell$ bosons,
\begin{equation}
\tilde{\G}_y^2 = \tilde{\H}_y + \id.
\end{equation}
We prove the following lemma relating the hopping operators to the map $\ket{\mathrm{tt}_S} \mapsto \gamma_y^{(S)}\ket{\mathrm{tt}_S}$.

\begin{lemma}[Single hopping twice applies $\gamma_y^{(S)}$]
    For all $y \in \bits^{n}$, the following holds.
    \begin{equation}
        \tilde{\G}_y^2 \hat a_{s_1}^\dagger \ldots \hat a_{s_\ell}^\dagger \ket \vac =  \gamma_y^{(S)} \hat a_{s_1}^\dagger \ldots \hat a_{s_\ell}^\dagger \ket \vac\,.
    \end{equation}
\end{lemma}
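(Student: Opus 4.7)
The plan is to show that $\tilde{\G}_y$ is diagonal in the position Fock basis, with eigenvalue $\frac{1}{\sqrt{\ell}}\sum_i (-1)^{s_i \cdot y}$ on the state $\hat{a}_{s_1}^\dagger \cdots \hat{a}_{s_\ell}^\dagger \ket{\vac}$. Squaring this eigenvalue immediately produces $\gamma_y^{(S)}$ by the definition in \eqref{eq:def-of-gamma-s}.

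First, I would substitute the definition of the momentum operators \eqref{eq:fourier-transform-of-boson-ops} into the single $y$-momentum hopping operator \eqref{eq:single_jump_operator}:
\begin{equation}
\tilde{\G}_y \;=\; \frac{1}{\sqrt{\ell}\,2^n} \sum_{x,z,w \in \bits^n} (-1)^{z\cdot(x\oplus y)}\,(-1)^{w\cdot x}\, \hat{a}_z^\dagger \hat{a}_w.
\end{equation}
The inner sum over $x$ gives $\sum_x (-1)^{x\cdot(z\oplus w)} = 2^n \delta_{z,w}$, so $\tilde{\G}_y$ collapses to a weighted sum of number operators:
\begin{equation}
\tilde{\G}_y \;=\; \frac{1}{\sqrt{\ell}} \sum_{z \in \bits^n} (-1)^{z\cdot y}\, \hat{n}_z.
\end{equation}
This is the key structural observation, and it is the only part that requires any calculation.

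Next, I would apply $\tilde{\G}_y$ to the state $\hat{a}_{s_1}^\dagger \cdots \hat{a}_{s_\ell}^\dagger \ket{\vac}$, which is (up to normalization) the position Fock state whose occupation number at each site $z$ is the multiplicity $m_z(S) \defeq \#\{i : s_i = z\}$. Since $\hat{n}_z$ acts diagonally with eigenvalue $m_z(S)$, we get
\begin{equation}
\tilde{\G}_y\, \hat{a}_{s_1}^\dagger \cdots \hat{a}_{s_\ell}^\dagger \ket{\vac}
\;=\; \frac{1}{\sqrt{\ell}} \Big(\sum_{z} (-1)^{z\cdot y}\, m_z(S)\Big)\, \hat{a}_{s_1}^\dagger \cdots \hat{a}_{s_\ell}^\dagger \ket{\vac}
\;=\; \frac{1}{\sqrt{\ell}} \sum_{i=1}^{\ell} (-1)^{s_i\cdot y}\, \hat{a}_{s_1}^\dagger \cdots \hat{a}_{s_\ell}^\dagger \ket{\vac},
\end{equation}
where the last equality just regroups the sum by listing each $s_i$ with its multiplicity.

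Finally, squaring the scalar prefactor yields
\begin{equation}
\tilde{\G}_y^2\, \hat{a}_{s_1}^\dagger \cdots \hat{a}_{s_\ell}^\dagger \ket{\vac}
\;=\; \frac{1}{\ell}\Big(\sum_i (-1)^{s_i\cdot y}\Big)^2\, \hat{a}_{s_1}^\dagger \cdots \hat{a}_{s_\ell}^\dagger \ket{\vac}
\;=\; \frac{1}{\ell} \sum_{i,j} (-1)^{y\cdot(s_i \oplus s_j)}\, \hat{a}_{s_1}^\dagger \cdots \hat{a}_{s_\ell}^\dagger \ket{\vac},
\end{equation}
where we used $(-1)^{s_i\cdot y}(-1)^{s_j\cdot y} = (-1)^{(s_i \oplus s_j)\cdot y}$ over $\FF_2$. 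The right-hand scalar is precisely $\gamma_y^{(S)}$ by \eqref{eq:def-of-gamma-s}, completing the proof. There is no real obstacle here; the only subtlety is noticing that the Hadamard transform turns the single-hop operator into a diagonal operator in the position basis, which is exactly the computer-science analog of the fact that momentum eigenstates are plane waves in position space.
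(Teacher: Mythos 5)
Your proof is correct. The underlying fact is the same as in the paper -- the Hadamard duality makes $\tilde{\G}_y$ diagonal in the position Fock basis with eigenvalue $\tfrac{1}{\sqrt{\ell}}\sum_i(-1)^{y\cdot s_i}$, and squaring gives $\gamma_y^{(S)}$ -- but you execute it at the operator level rather than the state level: you Fourier-transform $\tilde{\G}_y$ itself and obtain the identity $\tilde{\G}_y=\tfrac{1}{\sqrt{\ell}}\sum_z(-1)^{z\cdot y}\hat n_z$, whereas the paper expands each position creation operator $\hat a_{s_i}^\dagger$ in momentum creation operators, commutes $\tilde a_{x\oplus y}^\dagger\tilde a_x$ through, and reindexes $t_i\mapsto t_i\oplus y$ to pick up the phase $(-1)^{y\cdot s_i}$. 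Your operator identity makes the diagonality manifest in one line and is a slightly more reusable statement, while the paper's state-level calculation avoids the intermediate rewriting of $\tilde{\G}_y$; both yield exactly the same eigenvalue, and your handling of multiplicities (via $m_z(S)$) and of the unnormalized state (irrelevant for a diagonal operator) is sound. No gaps.
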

\begin{proof}
We can directly compute the action of $\tilde{\G}_y$ on a position basis state to relate it to $\gamma_y^{(S)}$.  
\begin{xalign}[eq:action-of-single-hop-operator]
    \tilde{\G}_y \hat a_{s_1}^\dagger \ldots \hat a_{s_\ell}^\dagger \ket \vac &= \frac{1}{\sqrt{\ell}}\sum_{x \in \bits^n} \tilde{a}_{x\oplus y}^\dagger \tilde{a}_{x} \hat a_{s_1}^\dagger \ldots \hat a_{s_\ell}^\dagger \ket \vac \\
    &= \frac{1}{\sqrt{\ell N^\ell}} \sum_{x \in \bits^n} \sum_{t_1, \ldots, t_\ell} (-1)^{t_1 \cdot s_1 + \ldots + t_\ell \cdot s_\ell} \tilde{a}_{x\oplus y}^\dagger \tilde{a}_{x}  \tilde{a}_{t_1}^\dagger \ldots \tilde{a}_{t_\ell}^\dagger \ket \vac \\
    &= \frac{1}{\sqrt{\ell N^\ell}} \sum_{i = 1}^\ell \sum_{t_1, \ldots, t_\ell} (-1)^{t_1\cdot s_1 + \ldots + t_\ell\cdot s_\ell} \tilde{a}_{t_i \oplus y}^\dagger \qty( \prod_{k : k \neq i} \tilde{a}_{t_k}^\dagger)  \ket \vac \\
    &= \frac{1}{\sqrt{\ell N^\ell}} \sum_{i = 1}^\ell \sum_{t_1, \ldots, t_\ell} (-1)^{y \cdot s_i} (-1)^{t_1\cdot s_1 + \ldots + t_\ell \cdot s_\ell} \tilde{a}_{t_1}^\dagger \ldots \tilde{a}_{t_\ell}^\dagger \ket \vac \\
    &=  \qty( \frac{1}{\sqrt \ell} \sum_{i = 1}^\ell (-1)^{y \cdot s_i} ) \hat a_{s_1}^\dagger \ldots \hat a_{s_\ell}^\dagger \ket \vac\,.
\end{xalign}
Here, in the second line we applied the definition of the Fourier basis creation operators, $\hat{a}_{s}^{\dagger} = \frac{1}{\sqrt{N}} \sum_{t} (-1)^{s \cdot t} \tilde{a}_{t}^{\dagger}$, and in the third line we applied the commutation relations between the creation and annihiliation operators.  Since we have shown that $\tilde{\G}_y$ is diagonal in the position basis, acting by $\tilde{\G}_y^2$ simply squares the coefficient, so recalling the definition of $\gamma_y^{(S)}$ (defined in~\eqref{eq:def-of-gamma-s}) completes the proof.
\end{proof}
As an immediate corollary, we have that within the subspace of $\ell$ bosons, both $\tilde{\G}_y$ and $\tilde{\H}_y$ are diagonal in the position Fock basis, and that the operator that maps $\ket{\mathrm{tt}_S}$ to $\gamma_y^{(S)} \ket{\mathrm{tt}_S}$ is in fact equal to $\tilde{\G}_y^2 = \tilde{\H}_y + \id$. The formulation of this map in terms of both the single- and double-momentum hopping operators will be convenient. In some cases, one formulation will be more useful than the other.

The double-momentum hopping operator is particularly nice to analyze. Recall our definition of $\mathsf{Con}_r$ as the projector onto states that are $r$-condensates. $\mathsf{Con}_r$ is a projector diagonal in the momentum Fock basis and with the projector capturing all momentum Fock basis states that have a total of $\ell$ bosons and at least $\ell - r$ bosons in the 0-momentum mode. Then, it is easy to observe that

\begin{fact}[Hopping operators preserve condensates] \label[fact]{fact:double-hop-doesnt-hop-much}
For any $r \geq 0$ and $k \geq 0$,
\begin{equation}
\qty(\id - \mathsf{Con}_{r+2k}) \cdot \tilde{\H}_{y_k} \ldots \tilde{\H}_{y_1} \cdot \mathsf{Con}_r = 0.
\end{equation}
Second, for any polynomial $p: \RR^{2^n} \rightarrow \RR$, let $M_p$ be the operator $p(\gamma_0^{(S)}, \ldots, \gamma_{2^n-1}^{(S)}) \ketbra{\mathrm{tt}_S}$. Then,
\begin{equation}
    \qty(\id - \mathsf{Con}_{r+2 \deg p}) \cdot M_p \cdot \mathsf{Con}_r = 0.
\end{equation}
\end{fact}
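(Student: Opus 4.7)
The plan is to prove the two claims in sequence: the first is the substantive content, and the second follows essentially as a corollary once we rewrite $M_p$ as a polynomial in the $\tilde{\H}_y$'s.

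For the first claim, I would proceed by induction on $k$. The base case $k=0$ is trivial. For the inductive step, it suffices to establish the single-step statement
\begin{equation}
(\id - \mathsf{Con}_{r+2}) \cdot \tilde{\H}_y \cdot \mathsf{Con}_r = 0,
\end{equation}
i.e., that $\tilde{\H}_y$ sends the subspace of $r$-condensates into the subspace of $(r+2)$-condensates. Expanding
\begin{equation}
\tilde{\H}_y = \frac{1}{\ell}\sum_{x, x'} \tilde{a}^{\dagger}_{x \oplus y} \tilde{a}^{\dagger}_{x' \oplus y} \tilde{a}_{x} \tilde{a}_{x'},
\end{equation}
I would then argue summand-by-summand: each summand preserves the total boson number (two annihilations, two creations), its annihilation operators can only decrease occupations (in particular, they can only decrease the count outside the $0$-momentum mode or pull bosons out of the $0$-mode), while its creation operators populate at most the two momentum modes $x\oplus y$ and $x'\oplus y$. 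Hence the occupation of non-zero momentum modes can increase by at most $2$ per summand. Applied to a momentum Fock basis element of an $r$-condensate, the result is therefore either zero or a momentum Fock state in the $(r+2)$-condensate subspace. Linearity extends this to all of $\mathsf{Con}_r$, and the induction then propagates the bound through the $k$-fold product.

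For the second claim, the key observation to invoke is the identity already established earlier in the paper: $\tilde{\H}_y + \id$ is diagonal in the position Fock basis, and its eigenvalue on $\ket{\mathrm{tt}_S}$ is precisely $\gamma_y^{(S)}$. In particular, the family $\{\tilde{\H}_y\}_{y\in\bits^n}$ is simultaneously diagonal in the position Fock basis, so these operators pairwise commute, and the substitution $X_y \mapsto \tilde{\H}_y + \id$ into a polynomial $p(X_0, \ldots, X_{2^n-1})$ is unambiguous. Performing this substitution yields
\begin{equation}
p\!\left(\tilde{\H}_0 + \id,\; \ldots,\; \tilde{\H}_{2^n - 1} + \id\right) \;=\; \sum_{S} p\!\left(\gamma_0^{(S)},\ldots, \gamma_{2^n-1}^{(S)}\right)\ketbra{\mathrm{tt}_S} \;=\; M_p.
\end{equation}
Expanding the binomials $(\tilde{\H}_y + \id)^{\text{power}}$ inside each monomial of $p$ writes $M_p$ as a linear combination of products of at most $\deg p$ many $\tilde{\H}_y$ operators (plus lower-order products and a constant term). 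Applying the first part of the fact to each such product, each term sends $\mathsf{Con}_r$ into $\mathsf{Con}_{r+2\deg p}$, so their linear combination $M_p$ does as well.

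The only potential snag is making sure the monomial accounting in the second part is correct, namely that every term in the expansion of $M_p$ has at most $\deg p$ many $\tilde{\H}_y$ factors so that the condensate degree increases by at most $2\deg p$. This is immediate from the degree bound on $p$, so both parts reduce to the clean single-step observation that one application of $\tilde{\H}_y$ can move at most two bosons out of the $0$-momentum mode.
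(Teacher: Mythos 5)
Your proposal is correct and follows essentially the same route as the paper: the first part rests on the observation that each term of $\tilde{\H}_y$ moves at most two bosons out of the $0$-momentum mode (your induction just spells out the $k$-fold product the paper treats implicitly), and the second part uses the same identity $M_p = p(\tilde{\H}_0+\id,\ldots,\tilde{\H}_{2^n-1}+\id)$ from the earlier computation of the single-hopping operator, expanded into products of at most $\deg p$ hopping operators.
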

\begin{proof}
The first equation follows as each term of $\tilde{\H}_y$ can at most move 2 bosons from the 0-momentum mode.
For the second equation, observe that the action $M_p$ can be expressed as
\begin{equation}
    p \qty( \tilde{\G}_0^2, \ldots, \tilde{\G}^2_{2^n-1}) = p \qty(  \tilde{\H}_0 + \id, \ldots, \tilde{\H}_{2^n-1} + \id )
\end{equation}
due to~\eqref{eq:action-of-single-hop-operator} which can then be expressed as a linear combination of terms of the form $\tilde{\H}_{y_{\leq \deg p}} \ldots \tilde{\H}_{y_1}$. Combining with the first equation completes the proof.
\end{proof}
 Next, we prove norm bounds on the Hamiltonians $\tilde{\G}_y$ and $\tilde{\H}_y$ on the subspaces of $\ell$ bosons as well as the subspace $\mathsf{Con}_r$. By the action of creation and annihilation operators on Fock states calculated in~\eqref{eq:creation_annihilation_definition}, on the subspace of $\ell$ bosons, $\norm*{\tilde{\G}_y} \leq \sqrt{\ell}$ and, therefore, $\norm*{\tilde{\H}_y} \leq \ell - 1$. Unfortunately, these norms are too large for our polynomial approximations to handle. Luckily, both norms are significantly smaller within the space of condensates.
\begin{fact}[Norm of hopping operators for condensates] \label[fact]{fact:h-tilde-norm-bound}
    For $y \neq 0$ and all $r\geq 0$, the action of the single- and double-momentum hopping operator has small norm on the space of $r$-condensates. More specifically,
    \begin{xalign}
        \norm{\tilde{\G}_y \cdot \mathsf{Con}_r}_\mathrm{op} &\leq \sqrt{r} + \sqrt{2 + 4r}\,,\text{and} \\
        \norm{\tilde{\H}_y \cdot \mathsf{Con}_r}_\mathrm{op} &\leq 9r + 9\,.
    \end{xalign}
\end{fact}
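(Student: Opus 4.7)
The plan is to split $\tilde{\G}_y$ into pieces based on whether they interact with the $0$-momentum mode, then bound each piece separately on the $r$-condensate subspace. Concretely, decompose
\begin{equation*}
\tilde{\G}_y = A_1 + A_2 + B, \quad A_1 \defeq \tfrac{1}{\sqrt{\ell}} \tilde{a}_y^\dagger \tilde{a}_0, \quad A_2 \defeq A_1^\dagger = \tfrac{1}{\sqrt{\ell}} \tilde{a}_0^\dagger \tilde{a}_y, \quad B \defeq \tfrac{1}{\sqrt{\ell}} \sum_{x \neq 0, y} \tilde{a}_{x \oplus y}^\dagger \tilde{a}_x,
\end{equation*}
where the decomposition uses $y \neq 0$ so that $x=0$ and $x=y$ are the two distinct ``$0$-mode-touching'' terms and the remaining terms act trivially on the $0$-mode. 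The triangle inequality reduces the problem to bounding each of $\|A_1 \cdot \mathsf{Con}_r\|_{\mathrm{op}}$, $\|A_2 \cdot \mathsf{Con}_r\|_{\mathrm{op}}$, and $\|B \cdot \mathsf{Con}_r\|_{\mathrm{op}}$ separately.

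For $A_1$ and $A_2$ I would use that each sends distinct momentum Fock basis states to orthogonal momentum Fock basis states (or to zero), so the operator norm on any Fock-spanned subspace is attained pointwise on individual Fock inputs. For $\ket{\ell_0, \ldots, \ell_y, \ldots} \in \mathsf{Con}_r$ one has $\ell_0 \leq \ell$ and $\ell_y \leq r$ (the latter because $y \neq 0$ and $\ell_0 \geq \ell - r$), which yields the pointwise estimates $\|A_1 \ket{f}\|^2 = \ell_0(\ell_y+1)/\ell \leq r+1$ and $\|A_2 \ket{f}\|^2 = \ell_y(\ell_0+1)/\ell \leq 2r$, and hence $\|A_1 \cdot \mathsf{Con}_r\|_{\mathrm{op}} \leq \sqrt{r+1}$ and $\|A_2 \cdot \mathsf{Con}_r\|_{\mathrm{op}} \leq \sqrt{2r}$.

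For $B$, the key observation is that every term of $B$ acts trivially on the $0$-mode, so $B$ commutes with $\tilde{n}_0$. Decomposing $\mathsf{Con}_r$ into sectors $\{\ell_0 = \ell - k\}$ for $k \in \{0, \ldots, r\}$, on each sector $\sqrt{\ell}\,B$ is the second-quantized (symmetric) lift $\Gamma(T)$ of the single-particle partial isometry $T = \sum_{x \neq 0, y} \ket{x \oplus y}\bra{x}$ acting on the non-zero-mode single-particle space. Since $\|T\|_{\mathrm{op}} = 1$, the standard bound $\|\Gamma(T)\|_{k\text{-boson}} \leq k\|T\| = k$ gives $\|B \cdot \mathsf{Con}_r\|_{\mathrm{op}} \leq r/\sqrt{\ell} \leq \sqrt{r}$, using that $\mathsf{Con}_r$ is non-trivial only when $r \leq \ell$. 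The triangle inequality over the three pieces then delivers the claimed $O(\sqrt{r})$ single-hop bound.

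For the double-hop bound I would use $\tilde{\H}_y = \tilde{\G}_y^2 - \id$ on the $\ell$-boson subspace, together with the fact that each term of $\tilde{\G}_y$ changes the non-zero-mode boson count by at most one, so $\tilde{\G}_y \cdot \mathsf{Con}_r \subseteq \mathsf{Con}_{r+1}$. This gives $\|\tilde{\H}_y \cdot \mathsf{Con}_r\|_{\mathrm{op}} \leq \|\tilde{\G}_y \cdot \mathsf{Con}_{r+1}\|_{\mathrm{op}} \cdot \|\tilde{\G}_y \cdot \mathsf{Con}_r\|_{\mathrm{op}} + 1$, and substituting the single-hop bound yields the claimed $O(r)$. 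The main obstacle is handling $A_1$ and $A_2$: each naively carries a $\sqrt{\ell_0} = O(\sqrt{\ell})$ factor from (de)populating the macroscopically occupied $0$-mode, and only the condensate restriction $\ell_y \leq r$ combined with the Fock-basis orthogonality of the images of $A_1$ and $A_2$ converts this pointwise matrix-element estimate into an operator-norm bound independent of $\ell$.
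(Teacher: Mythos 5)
Your overall strategy is essentially the paper's: split $\tilde{\G}_y$ into the part coupling to the $0$-momentum mode and the rest, bound each on $\mathsf{Con}_r$, and then get the double-hop bound from $\tilde{\H}_y = \tilde{\G}_y^2 - \id$ together with the fact that $\tilde{\G}_y$ maps $r$-condensates to $(r+1)$-condensates (this last step is identical to the paper's). The differences are local: the paper keeps the two $0$-mode terms together as $\tilde{M}_y = \frac{1}{\sqrt{\ell}}(\tilde a_y^\dagger \tilde a_0 + \tilde a_0^\dagger \tilde a_y)$ and bounds it by an explicit two-mode computation on states $\ket{L-j, j}$, whereas you bound $A_1$ and $A_2$ separately via the (correct) observation that $A_i^\dagger A_i$ is diagonal in the momentum Fock basis; and for the remainder $B = \tilde M_y'$ your second-quantization bound $\norm{\sum_j T^{(j)}} \le k\norm{T}$ on the $k$-boson sector is a cleaner justification of the paper's terse $r/\sqrt{\ell}$ claim (you write $\Gamma(T)$ where you mean the additive lift $d\Gamma(T)$, and the $r > \ell$ case should be handled by noting $\mathsf{Con}_r = \mathsf{Con}_\ell$ there, as the paper does, rather than calling it trivial — both minor).

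The one genuine shortfall is quantitative: as written you do not reach the stated constants. Your estimates give $\norm{\tilde{\G}_y\mathsf{Con}_r} \le \sqrt{r+1} + \sqrt{2r} + \sqrt{r} \approx (2+\sqrt{2})\sqrt{r}$, which exceeds the claimed $\sqrt{r} + \sqrt{2+4r} \approx 3\sqrt{r}$ for every $r \ge 1$, and squaring this loss in the double-hop step yields roughly $(2+\sqrt{2})^2 r \approx 11.7\,r$, which exceeds $9r+9$ once $r \ge 4$. So the Fact as stated is not literally established. The fix is small: sharpen the $A_2$ estimate by noting that the matrix element vanishes unless $\ell_y \ge 1$, in which case $\ell_0 \le \ell - 1$ and hence $\ell_y(\ell_0+1)/\ell \le r$, giving $\norm{A_2\mathsf{Con}_r} \le \sqrt{r}$; then $\sqrt{r} + \sqrt{r+1} \le \sqrt{4r+2}$ by concavity of the square root, so the three-piece triangle inequality recovers exactly $\sqrt{r} + \sqrt{2+4r}$ and the double-hop computation reproduces $9r+9$. (Alternatively, every downstream use of this Fact only needs $\norm{\mathsf{Con}_m \tilde{\G}_y^2 \mathsf{Con}_m} \le Mm$ for some universal constant $M$, so your weaker constants would suffice for the paper's applications, but then the statement of the Fact would have to be relaxed.)
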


\begin{proof}
    Let us define
    \begin{equation}
        \tilde{M}_y \defeq \frac{1}{\sqrt \ell} \qty( \tilde a_{y}^\dagger \tilde{a}_0 + \tilde a_{0}^\dagger \tilde{a}_y), \quad\text{and}\quad \tilde{M}_y' \defeq \frac{1}{\sqrt \ell} \sum_{x \notin \{0, y\}} \tilde a_{x \oplus y}^\dagger \tilde{a}_x\,,
    \end{equation}
    and therefore bounding the norm of $\tilde{\G}_y \cdot \mathsf{Con}_r$ can be achieved by bounding the norms of both $\tilde{M}_y\cdot \mathsf{Con}_r$ and $\tilde{M}_y' \cdot \mathsf{Con}_r$. By construction, $\tilde{M}_y$ only depends on the bosons in the 0- and $y$-momentum modes and $\tilde{M}_y'$ only depends on the other momentum modes.
    Observe that $\tilde{\G}_y$ and $\tilde{M}_y$ both commute with $\tilde{n}_0 + \tilde{n}_y$. This can be verified by direct calculation using the commutation relations given in~\eqref{eq:bosonic-commutation2}. Therefore, $\tilde{M}_y$ preserves the sum of the number of bosons in the 0- and $y$-momentum modes. Equivalently, if we divide the bosonic subspace of $\ell$ bosons into orthogonal subspaces based on the value of $\tilde{n}_0 + \tilde{n}_y$, we find that $\tilde{M}_y$ is block-diagonal with respect to this direct sum. Therefore, we can restrict our analysis to states where there are a total of $L$ bosons combined in the $0$- and $y$-momentum modes for $\ell - L \leq r$. 
    Furthermore, since $\tilde{M}_y$ only depends on the bosons in the 0- and $y$-momentum modes, for computing the spectral norm of $\tilde{M}_y \cdot \mathsf{Con}_r$ we can restrict our analysis to states of a total of $L$ bosons entirely contained in the $0$- and $y$-momentum modes\footnote{To formalize this, observe that the Hilbert space corresponding to $\mathsf{Con}_r$ can be factorized as
    \begin{equation}
        \mathsf{Con}_r = \bigoplus_{L = \ell - r}^\ell \qty( \Hh_{\{0,y\}}^{(L)} \otimes \Hh_{\mathrm{rest}}^{(\ell - L)})
    \end{equation}
    where $\Hh_{\{0,y\}}^{(L)}$ is the Fock space of the 0- and $y$-momentum modes restricted to containing exactly $L$ bosons and $\Hh_{\mathsf{rest}}^{(\ell - L)}$ is the Fock space for the remainder of the modes restricted to $\ell - L$ bosons. Since $\tilde{M}_y$ commutes with $\tilde{n}_0 + \tilde{n_y}$ and does not depend on the state of the other modes, we can also factorize $\tilde{M}_y$ as
    \begin{equation}
        \tilde{M}_y = \bigoplus_{L = \ell - r}^\ell \qty( \tilde{M}_y^{(L)} \otimes \id_{\mathrm{rest}}).
    \end{equation}
    Therefore, we can restrict to analyzing only $\tilde{M}_y^{(L)}$ for every $L$.}.
    Every such state can be expressed as a superposition of states $\ket*{L-j,j}$ representing the number of bosons in the $0$- and $y$-momentum modes.
    Then, we can then calculate the effect of $\tilde{M}_y$:
\begin{xalign}
    \sqrt{\ell} \cdot \tilde{M}_y|\phi\rangle&=\sum_{j = 0}^{L} \left(\tilde{a}_y^\dagger \tilde{a}_0 + \tilde{a}_0^\dagger \tilde{a}_y \right)\alpha_j|L-j,j\rangle\\
    &=\sum_{j = 0}^{L} \alpha_j\left(\sqrt{(L-j)(j+1)}|L-j-1,j+1\rangle+\sqrt{(L-j+1)j}|L-j+1,j-1\rangle\right)\\
    &\begin{matrix*}[l]=\displaystyle\left(\sum_{j' = 1}^{L} \alpha_{j'-1}\sqrt{(L-j'+1)j'}|L-j',j'\rangle\right) \\
    \hspace{2em}+\displaystyle\left(\sum_{j'' = 0}^{L-1}\alpha_{j''+1}\sqrt{(L-j'')(j''+1)}|L-j'',j''\rangle\right)\end{matrix*}\\
    &=\sum_{j = 0}^{L}\left( \alpha_{j-1}\sqrt{(L-j+1)j}+\alpha_{j+1}\sqrt{(L-j)(j+1)}\right)|L-j,j\rangle
\end{xalign}
Above, in the first line, we are applying the annihilation and creation operators using ~\eqref{eq:creation_annihilation_definition}. Then we broke the expression into two terms, and did a change of variable $j'=j+1$ and $j''=j-1$, with $\alpha_{-1} = \alpha_{L+1} = 0$. Then we relabeled $j''\rightarrow j$ and $j'\rightarrow j$.
Now we can bound the norm of $\tilde{M}_y|\phi\rangle$, which is:
\begin{xalign}
\left\|\tilde{M}_y|\phi\rangle\right\|^2&=\frac{1}{\ell}\sum_{j = 0}^{L} \left|\alpha_{j-1}\sqrt{(L-j+1)j}+\alpha_{j+1}\sqrt{(L-j)(j+1)}\right|^2\\
&\leq \frac{2}{\ell}\sum_{j = 0}^{L} \left|\alpha_{j-1}\sqrt{(L-j+1)j}\right|^2+\left|\alpha_{j+1}\sqrt{(L-j)(j+1)}\right|^2\\
&=\frac{2}{\ell}\sum_{j = 0}^{L} |\alpha_{j-1}|^2(L-j+1)j + |\alpha_{j+1}|^2(L-j)(j+1)\\
&=\frac{2}{\ell}\left(\sum_{j' = 0}^{L}|\alpha_{j'}|^2(L-j')(j'+1)\right)+\frac{2}{\ell}\left(\sum_{j''=0}^{L}|\alpha_{j''}|^2(L-j''+1)j''\right)\\
&=\frac{2}{\ell}\sum_{j=0}^{L}|\alpha_j|^2 (L+2jL-2j^2)\\
&\leq \frac{2}{\ell}\sum_{j=0}^{L} |\alpha_j|^2(L+2rL)\\
&=2L(1+2r)/\ell \\
&\leq 2+4r\,.
\end{xalign}
Here, we first used that $|a+b|^2\leq 2|a|^2+2|b|^2$, then broke up the sum and performed a change of variable $j'=j-1$ and $j''=j+1$. Then we relabeled $j''\rightarrow j$ and $j'\rightarrow j$, and used the bound $j\leq r$, $-j^2\leq 0$, $L\leq \ell$, and finally that $\sum_j|\alpha_j|^2=1$.  We also use the fact that $(L-j)(j+1) + (L-j+1)j = L - 2Lj - 2j^2$. This gives the bound on $\norm*{\tilde{M}_y  \cdot \mathsf{Con}_r}_\mathrm{op}$.

Similarly, to bound $\norm*{\tilde{M}_y' \cdot \mathsf{Con}_r}_\mathrm{op}$ observe that since $\tilde{M}_y'$ only depends on the modes outside of $0$ and $y$ and there are at most $r$ bosons in said modes, then by concavity of the square root function and~\eqref{eq:creation_annihilation_definition}, the total norm is at most $r/\sqrt{\ell} \leq r/\sqrt{r} \leq \sqrt{r}$ for $r\leq \ell$.
This proves 
the bound for $\norm*{\tilde{M}_y' \cdot \mathsf{Con}_r}_{\mathrm{op}}$ for $r\leq \ell$.
When $r\geq \ell$ we can use the fact that $\norm{\tilde{\G}_y}\leq \sqrt{\ell}\leq \sqrt{r}$.

To get the bound on $\norm*{\tilde{\H}_y \cdot \mathsf{Con}_r}_\mathrm{op}$, we note that $\tilde{\G}_y|\phi\rangle$ is an $(r+1)$-condensate, so we can apply the bound on $\norm*{\tilde{\G}_y \cdot \mathsf{Con}_r}_\mathrm{op}$ twice:

\begin{xalign}
    \norm{\tilde\H_y \ket{\phi}} &\leq 1 + \norm{\tilde{\G}_y^2 \ket{\phi}} \\
    &\leq 1 + \left(\sqrt{r} + \sqrt{2 + 4r}\right)\left(\sqrt{r+1} + \sqrt{2 + 4(r+1)}\right) \\
    &\leq 9r+9\,.
\end{xalign}
Here in the first line we use the fact that $\tilde{\H}_y = \tilde{\G}_y^2 - \id$ and the triangle inequality, and then take the crude upper bound of $9r + 9$.
\end{proof}

\section{Proving the condensate property}

In this section, we prove that, for any $T$-query algorithm, the purified post-query state of the algorithm is mostly supported on condensates, i.e., that most of the bosons (in the momentum basis) remain in the $0$-momentum mode.  Recall that the actual (purified) post-query state of an algorithm $A$ is given by \eqref{eq:algorithmstate}, 
\begin{equation}
    \ket*{\psi_\mathrm{PQ}} = \sum_{\y, \bb} \qty[\qty(A_{\y, \bb})_{\reg{A}} \otimes \qty(\sum_{\x} (K^\x)_{\y, \bb} \otimes e_{\x}\left(\tilde{\G}_{y^{\bb}_1}^2,\ldots, \tilde{\G}_{y^{\bb}_{\abs{\bb}}}^2\right))] A \ket{0}_{\reg{A}} \ket{\mathrm{init}}_{\reg{US}}\,.
\end{equation}
Our primary goal in this section is to show that $\norm{\ket*{\psi_{\mathrm{PQ}}} - \mathsf{Con}_{R} \cdot \ket*{\psi_{\mathrm{PQ}}}}$ is very small for some suitably large (but polynomial in $n$) $R$.  However, it turns out that we will be able to prove a much stronger statement: that restricting the operators $\widetilde{\G}$ to $\mathsf{Con}_R$ subspace does not change the state much.  Formally we define the condensate sandwiched post-query states as follows.
\begin{definition}[Condensate sandwiched post-query states]
For integers $R$, $r \geq 0$, and a quantum adversary $A$ making $T$-queries to an oracle, define the $(R, r)$-sandwiched state to be, 
    \begin{equation}
    \ket*{\tilde{\psi}_{R,r}}\defeq \sum_{\y, \bb} \qty[\qty(A_{\y, \bb})_{\reg{A}} \otimes \qty(\sum_{\x} (K^\x)_{\y, \bb} \otimes \prod_{i=1}^{\abs{\bb}} \mathsf{Con}_r\cdot e_{x_i}\qty(\mathsf{Con}_R\cdot \tilde{\G}_{y^{\bb}_i}^2\cdot\mathsf{Con}_R )\cdot \mathsf{Con}_r )] A \ket{0}_{\reg{A}} \ket{\mathrm{init}}_{\reg{US}}. 
\end{equation}
\end{definition}
It is clear from the definition that restricting $\ket*{\widetilde{\psi}_{R, r}}$ to the $\mathsf{Con}_r$ subspace does not change the state at all, so showing that $\ket*{\widetilde{\psi}_{R, r}}$ is close to $\ket*{\psi_{\mathrm{PQ}}}$ will accomplish our goal.  Our main theorem is the following.
\begin{theorem}[Sandwiching theorem]\label[theorem]{cor:sandwiching}
For every $\iota > 0$, there exist integers $ r =O(n^2T^5\ln^3(T)\ln^2(1/\iota))$ and $R=O(n^3T^6\ln^4(T)\ln^3(1/\iota))$, such that 
\begin{equation}
\norm{\ket*{\psi_{\mathrm{PQ}}} - \ket*{\tilde{\psi}_{R,r}}} \leq \iota.
\end{equation}
\end{theorem}
The main technique used in the proof is a clever polynomial approximation to the exponential function.  To see what we mean by polynomial approximation in this context, we define function post-query states as follows.
\begin{definition}[Function post-query state]
    Let $\mathcal{A}$ be a $T$-query quantum algorithm that implements unitary $A$ between queries.  Given a family of functions $f_\x = f_{(x_1, \ldots, x_T)}$, we define the $f$-post-query state
\begin{align} \label{eq:polynomial_approximation_state_v3}
	\ket*{\psi_f} &\defeq \sum_{\y, \bb} \qty[\qty(A_{\y, \bb})_{\reg{A}} \otimes \qty(\sum_{\x} (K^\x)_{\y, \bb} \otimes f_{\x}\left(\tilde{\G}_{y^{\bb}_1}^2,\ldots, \tilde{\G}_{y^{\bb}_{\abs{\bb}}}^2\right) )] A \ket{0}_{\reg{A}} \ket{\mathrm{init}}_{\reg{US}}.
\end{align}
\end{definition}
Evaluating the functions $f(X)$ for a diagonalizable matrix $X$ happens by applying diagonalizing $X$ and applying $f$ to the diagonal entries. 
Intuitively, the function post-query state corresponds to the (not necessarily normalized) purified state of an algorithm querying the $U$ oracle, if $U_{y}$ was sampled according to $f(\gamma_{y}^{(S)})$.  To relate these states to our original post-query state, we can write down the functions corresponding to $\ket*{\psi_{\mathrm{PQ}}}$.
\begin{definition}[Kraus operator eigenvalue functions]
Define the pair of functions $e_0$ and $e_1$ as follows,
\begin{xalign}
    e_0(\gamma) &\defeq 1-\e^{-\kappa \gamma}, \\
    e_1(\gamma) &\defeq \sqrt{\e^{-\kappa \gamma}(2-\e^{-\kappa \gamma})} = \sqrt{2} \e^{-\kappa \gamma/2} \sqrt{1 - \e^{-\kappa \gamma}/2}\,.
\end{xalign}
Note that we can express the Kraus operators $E_x^{(y)}$ as $\sum_S e_x(\gamma_{y}^{(S)}) \ketbra{\mathrm{tt}_S}$ to simplify the expressions in~\eqref{eq:kraus-operators}.
\end{definition}
For all $B \in \ZZ_{\geq 0}$ and $\x = (x_1, \ldots, x_B)$, we also use the notational shorthand $e_\x(\gamma_1, \ldots, \gamma_B) \defeq \prod_{i=1}^B e_{x_i}(\gamma_i)$.
Using these definition, we can express the post-query state of any algorithm ( \eqref{eq:algorithmstate}) as the function state corresponding to $e_{\x}$, 
\begin{equation}
    \ket*{\psi_\mathrm{PQ}} = \ket*{\psi_{e}}\,.
\end{equation}
We can now state the notion of polynomial approximation that we want.
\begin{theorem}[Polynomial approximation of post-query state]\label{theorem:poly_approximation_v2}
	For every $\iota > 0$, there exists a family of polynomials $\mathrm{AKraus}_{\x}$ (that implicitly depends on $\iota$) in the operators $\tilde{\G}_{y_i}^2$ with 
        \begin{equation}
    \deg \left(\mathrm{AKraus}_{\x}\right) =O(n^2T^5\ln(T)\ln^2(1/\iota))\,,
    \end{equation}
    such that we can bound the distance between the true post-query state and the polynomial approximation by
	\begin{equation}
		\norm{\ket*{\psi_\mathrm{PQ}} - \ket*{\psi_{\mathrm{AKraus}}}} \leq \iota\,.
	\end{equation}
\end{theorem}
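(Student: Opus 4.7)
The plan is to build the family $\mathrm{AKraus}_\x$ multiplicatively, replacing each single-query Kraus operator $e_{x_i}(\tilde{\G}_{y_i^{\bb}}^2)$ by a fixed single-variable polynomial $p_{x_i}(\tilde{\G}_{y_i^{\bb}}^2)$ and setting $\mathrm{AKraus}_\x(\gamma_1,\ldots,\gamma_B) \defeq \prod_{i=1}^{B} p_{x_i}(\gamma_i)$. So the question reduces to: (a) exhibit scalar polynomials $p_0,p_1$ of degree $d=O(n^{2}T^{4}\ln T\ln^{2}(1/\iota))$ that approximate $e_0$ and $e_1$ well in the ``typical'' range of $\gamma$ without blowing up far outside, and (b) run a hybrid argument over the $\leq T$ Kraus slots to swap them one at a time, summing the per-swap errors to $\iota$.

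For step (a) I would first factor $e_1(\gamma)=\sqrt{2}\,\e^{-\kappa\gamma/2}\sqrt{1-\e^{-\kappa\gamma}/2}$ and deal with the two factors separately: the exponential is handled directly and the square-root factor is bounded (value $\le 1$) and analytic on $(0,1]$, so composing a Chebyshev approximation of $\sqrt{1-z/2}$ with a polynomial for $\e^{-\kappa\gamma}$ gives a polynomial for $e_1$. The exponential $\e^{-\kappa\gamma}$ is the place where flat approximations are required: $\gamma_y^{(S)}$ can range over $[0,\ell]$, but any plain Chebyshev/Taylor approximation accurate to $\iota'$ on $[0,\ell]$ has degree $\Omega(\ell)$. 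Instead I will plug in the flat polynomial approximations of Narayanan (building on Bakshi \emph{et.~al.}), which for $R=\poly(n,T,\ln(1/\iota))$ yield a polynomial $p$ of degree $O(\sqrt{R}\,\ln(1/\iota'))$ such that $|p(\gamma)-\e^{-\kappa\gamma}|\le \iota'$ on $[0,R]$ while $|p(\gamma)|\le 2$ (say) on the full spectrum interval $[0,\ell]$. Writing $e_0=1-\e^{-\kappa\gamma}$ and using the composition trick for $e_1$, I get $p_0,p_1$ satisfying analogous bounds.

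For step (b) I will define hybrids $\ket{\psi^{(j)}}$ that use the true $e_{x_i}$ for $i\le j$ and the polynomial $p_{x_i}$ for $i>j$, and bound $\|\ket{\psi^{(j+1)}}-\ket{\psi^{(j)}}\|$. This amounts to bounding the action of $(p_{x_{j+1}}-e_{x_{j+1}})(\tilde{\G}_{y_{j+1}^{\bb}}^2)$ on a state which has already had $j$ Kraus operators and some algorithm unitaries applied to it. The whole reason flat approximations are useful here is that $\|(p_{x}-e_{x})(\tilde{\G}_y^2)\|$ is huge on the full space but is small when restricted to states whose spectral weight for $\tilde{\G}_y^2$ lies below $R$, and $\tilde{\G}_y^2=\tilde{\H}_y+\id$ only shifts the condensate level by at most $2$ per application (\Cref{fact:double-hop-doesnt-hop-much}). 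Concretely, I will show inductively that the partial hybrid is $\iota$-close to a state supported on an $r$-condensate for $r=O(T\cdot \deg(p))$, and on that condensate subspace $\tilde{\G}_y^2$ has operator norm $O(r)$ by \Cref{fact:h-tilde-norm-bound}. Choosing $R=\Theta(r)$ thus makes the flat approximation valid on the relevant spectrum, so each swap contributes error at most $\iota/T$ after accounting for the norms of the other Kraus operators and the unit-norm algorithm unitaries.

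The hard part will be closing this circular dependency cleanly: the condensate bound I need to make the per-hybrid error small is itself being derived from the polynomial approximation. I plan to break the circularity by a two-parameter induction on the hybrid index $j$ and an auxiliary ``budget'' $r_j$ for how far from the condensate the hybrid is allowed to drift, using the polynomial $p_x$ (whose degree $d$ is fixed at the outset from the target $\iota$) to get a deterministic upper bound $r_j\le 2jd$ via \Cref{fact:double-hop-doesnt-hop-much}. Setting $R=2Td+O(1)$ then makes all per-step errors small simultaneously, and the overall bookkeeping produces total degree $Td=O(T)\cdot O(\sqrt{R}\ln(T/\iota))$ which, after substituting $R=\poly(n,T,\ln(1/\iota))$ (the $n^2$ arising from polynomial factors in the range needed to control all $2^n$-indexed hopping operators along the query trajectory), matches the claimed $O(n^{2}T^{5}\ln T\ln^{2}(1/\iota))$. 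The final bound on $\|\ket{\psi_{\mathrm{PQ}}}-\ket{\psi_{\mathrm{AKraus}}}\|$ follows from the telescoping triangle inequality over the $O(T)$ hybrid steps.
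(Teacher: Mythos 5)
Your plan diverges from the paper's at a structurally important point, and the divergence creates a gap you would need to close.

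You propose $\mathrm{AKraus}_\x(\gamma_1,\ldots,\gamma_B) = \prod_i p_{x_i}(\gamma_i)$ with each $p_{x_i}$ a single-variable polynomial in $\gamma_i$ built by composing a truncated square root with a scalar flat approximation of the exponential. The paper does \emph{not} build a product of single-variable polynomials in $\gamma$. It instead (a) writes each Kraus function as a low-degree polynomial $p_x(z)$ in the auxiliary variable $z = \exp(-\kappa W/2)$ (this is where $\mathrm{TSqrt}$ lives, applied to $z^2$, which is safe because $z$ is a contraction), (b) uses commutativity of the $W_i$ to collect the product $\prod_i p_{x_i}(\exp(-\kappa W_i/2))$ into a sum of monomials each of which is a \emph{single} aggregate exponential $\exp(-\kappa\sum_i k_i W_i/2)$, and only then (c) replaces each aggregate exponential by one Flat polynomial (\Cref{lemma:approxetoA}). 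This three-step route is what \Cref{lemma:Krausapproximation} encapsulates, and the proof of \Cref{theorem:poly_approximation_v2} is then a two-line application of \Cref{lemma:Krausapproximation} at $r=0$ followed by the lifting bound of \Cref{lemma:otherregisters} with $\eps = \iota/2^{2nT}$.

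The first concrete problem with your composition is boundedness. You want $|q(\gamma)| \le 2$ on $[0,\ell]$ for the inner flat approximation of $\e^{-\kappa\gamma/2}$, but then $q(\gamma)^2$ can reach $4$, and $\mathrm{TSqrt}$ truncated to degree $d''$ can be as large as $\Theta(2^{d''})$ at argument $4$ (its coefficients alternate and $|\binom{1/2}{k}(-1/2)^k 4^k| \approx 2^k$). So the composite $p_1(\gamma)$ is not bounded by $O(1)$ on $[0,\ell]$, and the norm $\|p_1(\tilde{\G}_y^2)\|_\mathrm{op}$ on the full $\ell$-boson space (whose spectrum for $\tilde{\G}_y^2$ does reach $\ell$) is exponential in $d''$. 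Your hybrid then multiplies by $\prod\|p_{x_i}\|_\mathrm{op}$, which blows up the error budget. The paper avoids this entirely: $\mathrm{TSqrt}$ is never evaluated on anything but $\exp(-\kappa W)$, a PSD contraction, so there is no excursion outside $[0,1]$ to worry about, and the Flat approximation is only ever applied to the whole product afterward.

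The second gap is the scalar-to-operator translation. You argue that a flat polynomial accurate to $\iota'$ on $[0,R]$ gives a small $\|(p_x - e_x)(\tilde{\G}_y^2)\cdot\mathsf{Con}_r\|$ because ``on that condensate subspace $\tilde{\G}_y^2$ has operator norm $O(r)$.'' But $\mathsf{Con}_r$ and $\tilde{\G}_y^2$ do not commute, so a bound on $\|\tilde{\G}_y^2\mathsf{Con}_r\|_\mathrm{op}$ does not directly say that states in $\mathsf{Con}_r$ have all their spectral weight for $\tilde{\G}_y^2$ in $[0,O(r)]$ — it only bounds first moments. You would need a Chebyshev/Markov argument on higher moments (using \Cref{fact:double-hop-doesnt-hop-much} and \Cref{fact:h-tilde-norm-bound} iteratively) to control the tail weight, and then control what the flat polynomial does on that tail. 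The paper's \Cref{lemma:approxetoA} never needs this translation: it proves the operator-norm approximation directly, via the iterative bound on $\|W^j\mathsf{Con}_r\|$ in \Cref{lemma:Taylor-bound-exponential}, rather than deriving it from a scalar interval estimate.

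Finally, the circularity you flag (needing $r \le 2Td$ but $d$ depending on $r$) is a symptom of the product-of-single-variable-polynomials form: if you instead commute the factor being swapped all the way to the right so it acts on $\mathsf{Con}_0$ directly, the condensate level at the point of replacement is $0$, not $2jd$; what remains is to bound the norm of the outer factors, which brings you back to the boundedness issue above. The paper resolves all of this at once by proving \Cref{lemma:Krausapproximation} starting from a general $\mathsf{Con}_r$ and then simply instantiating $r=0$ for \Cref{theorem:poly_approximation_v2}; the parameter $r$ is reintroduced only for \Cref{cor:sandwiching}.
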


\subsection{Overview of the polynomial approximation} \label{subsec:polynomial_overview} Our starting point for the polynomial approximation is the Taylor series approximation of the exponential.  Roughly speaking, taking the $d$'th level truncation to the Taylor series is a good approximation for inputs up to $O(d)$, and has error that scales as $O(e^{x})$ for $x$ much larger than $d$.  Since the best bound we have on the norm of $\widetilde{\G}_y$ is $\norm{\widetilde{\G}_y}_{\mathrm{op}} \leq \sqrt{\ell}$, this would seem to imply that, if we want to apply the Taylor series approximation to the exponential functions in the post-query state, we must take the degree of the approximation to be very high. 

However, we are not trying to approximate $\widetilde{\G}^2_y$ on all inputs, but rather only on the distribution of $\gamma_{y}^{(S)}$ induced by $\ket{\mathrm{init}_{S}}$.  Since $\gamma_y^{(S)}$ is a sum of random $\pm 1$ variables, standard concentration inequalities tell us that the probability that $\gamma_y^{(S)}$ is much larger than $x$ scales as $e^{-x}$, which exactly matches the error in the Taylor series.  Indeed, in the $T = 1$ case, this argument (sometimes called chaining) can be used to show that the Taylor series with degree $O(T \log(1/\epsilon))$ yields a good polynomial approximation state to the original post-query state.  

Going beyond $T = 1$, the state we are trying to approximate now has factors that look like \begin{equation}\exp\left(\widetilde{\G}^2_{y_1} + \ldots + \widetilde{\G}^2_{y_T}\right)\,.\end{equation} From the same logic as before, if the factors $\gamma_{y_i}^{(S)}$ were independent of each other, we would expect that the Taylor series provide a good approximation to the function $\exp\left(\left(\widetilde{\G}_{y_1} + \ldots + \widetilde{\G}_{y_T}\right) / T\right)$, i.e., the average of the $\widetilde{\G}_{y_i}$ operators.  This motivates the idea, originally proposed by Narayan~\cite{narayanan2024improved}, of first writing $\exp(-z)$ as $\left(\exp(-z / T)\right)^{T}$, and then taking the Taylor series approximation for $\exp(-z / T)$.  

Let us first examine what happens when we approximate $\exp(-z/s)$.  Instead of applying the tail bound to $\gamma_{y}^{(S)}$, our chaining argument can be applied to the operators $\widetilde{\G}^2_y$ themselves. We use the observation (proven in \Cref{fact:h-tilde-norm-bound}) that when restricted to the $r$-condensate subspace, $\widetilde{\G}^2_y$ has bounded norm \emph{and} does not increase the number of non-zero bosons by much. When we apply this to the definition of the Taylor series, we find that when we use the Taylor series to approximate $e^{-z / s} = \sum_{m = 0}^{\infty} \frac{(-z)^{m}}{s^{m}m!}$ for an appropriately chosen $s$ depending on $r$, the factor of $\frac{1}{s^{m} m!}$ beats down the norm of $\widetilde{\G}^2_y \cdot \mathsf{Con}_r$, allowing us to prove a strong bound on the approximation error.  

To go from our approximation of $\exp(-z / s)$ to an approximation of $\exp(-z)$, we would need to take $s$ copies of $\exp(-z/s)$, starting from the $\mathsf{Con}_0$ subspace.  However, every time we apply $\exp(-z/s)$, the number of non-zero bosons increases, and we must increase the degree of our polynomial approximation of the next copy of $\exp(-z/s)$.  It turns out that if we take $s$ copies, the final polynomial approximation degree actually explodes exponentially in $s$.  To fix this, we notice that instead of simply taking $s$ copies of $\exp(-z/s)$, we can actually approximate the function $z^{s}$ with a Chebyshev polynomial of degree $\sqrt{s}$.  Applying this approximation allows us to take fewer copies of $\exp(-z/s)$, which lets us arrive at a low-degree approximation to our post-query state.

\begin{remark}
    Using $\e^z=(\e^{-z/s})^s$, approximating $\e^{-z/s}$ via a truncated Taylor series and $z^s$ by a truncated Chebyshev series is identical to a construction of a flat approximation due to Narayanan~\cite{narayanan2024improved}. 
   Using this idea, Narayanan achieves an exponential improvement over the original flat approximation of the exponential function in Ref.~\cite{bakshi2024learning}. 
 An earlier version of our manuscript used Narayanan's result as a black box and combined the flat approximation with tail bounds on $\gamma^{(S)}_y$ in the position basis. 
\end{remark}

Once we have established the polynomial approximations, the proof of~\Cref{cor:sandwiching} works by going back and forth between the polynomial approximation and the original Kraus operators. 
As the action of the $\tilde{\G}^2_y$ Hamiltonians cannot move more than $2$ bosons from the $0$-momentum mode, we can move/insert the projectors $\mathsf{Con}_r$ virtually anywhere in the polynomial approximation for sufficiently large $r$. 
We first use this to move the projectors $\mathsf{Con}_r$ between the Kraus operators and then exploit that the strategy to prove~\Cref{theorem:poly_approximation_v2} also works if we start in $\mathsf{Con}_r$ instead of $\mathsf{Con}_0$ by choosing a degree that depends on $r$ polynomially.
Moreover, we find that the polynomial approximation remains valid after replacing $\tilde{\G}_y^2$ by the sandwiched $\mathsf{Con}_R\cdot \tilde{\G}^2_y \cdot\mathsf{Con}_R$ in a single Kraus operator.
This observation roughly allows us to approximate $\mathsf{Con}_r\cdot e_x(\tilde{\G}^2_y)\cdot \mathsf{Con}_r$ by a polynomial depending on $r$ and $\iota$, move $\mathsf{Con}_R$ into the argument for large enough $R$, and finally reinsert the original Kraus operator.

\subsection{Polynomial approximations to the Kraus operators}
\label{subsubsec:queries-are-polynomial}

We will introduce two polynomial approximations in this section, the first of which is the truncated Taylor expansion of the exponential, given below.
 \begin{equation}\mathrm{Taylor}_d(z)\defeq \sum_{j=0}^{d} \frac{z^j}{j!}\,.
 \end{equation}
Then, we have the following lemma. 
\begin{mathinlay}
   \begin{lemma}[Truncated Taylor condensate approximation] \label[lemma]{lemma:Taylor-bound-exponential}
   Let $W$ be an operator such that there exists a constant $M > 0$ such that for all integers $m > 0$, the action of $W$ on the subspace of $m$-condensates is bounded in norm by $Mm$. Additionally, assume that $W$ maps the subspace of $m$-condensates into the subspace of $m+2$-condensates for all $m$. I.e., $\forall m > 0$,~$W$ satisfies both of the following:
   \begin{xalign}\label{eq:Anormbound}
       &\norm{\mathsf{Con}_m \cdot W \cdot \mathsf{Con}_m}_{\mathrm{op}} \leq Mm \quad\text{and,}\\
       \label{eq:A-kicks-few-bosons}
        &W\cdot \mathsf{Con}_m =\mathsf{Con}_{m+2}\cdot W\cdot \mathsf{Con}_m =\mathsf{Con}_{m+2}\cdot W\cdot \mathsf{Con}_{m+2}\cdot \mathsf{Con}_m\,.
    \end{xalign}
Then, for all $r \geq 0$, $s\geq 3 M$, $1/\e \ge \eps > 0$, and $d\geq 4\ln(1/\varepsilon)+r$, we have
 \begin{equation}\label{eq:Taylor-lemma-bound}
   \norm{\left(\mathrm{Taylor}_d\left(-W/s \right)- \exp(-W/s)\right)\cdot\mathsf{Con}_r}_{\mathrm{op}}\leq \varepsilon\,.
 \end{equation}
 \end{lemma}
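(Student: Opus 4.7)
The plan is to expand the error as the tail of the Taylor series and bound each term using the two structural hypotheses on $W$. Specifically, I would write
\begin{equation}
\bigl(\mathrm{Taylor}_d(-W/s) - \exp(-W/s)\bigr)\cdot\mathsf{Con}_r = -\sum_{j=d+1}^\infty \frac{(-W/s)^j}{j!}\cdot\mathsf{Con}_r\,,
\end{equation}
apply the triangle inequality, and reduce the task to bounding $\|W^j\cdot\mathsf{Con}_r\|_{\mathrm{op}}/(s^j j!)$ for each $j>d$.

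To bound $\|W^j\cdot\mathsf{Con}_r\|_{\mathrm{op}}$, I would iterate \eqref{eq:A-kicks-few-bosons}: since $W$ carries $\mathsf{Con}_{r+2(k-1)}$ into $\mathsf{Con}_{r+2k}$, I can insert projectors and telescope,
\begin{equation}
W^j\cdot\mathsf{Con}_r \;=\; \bigl(\mathsf{Con}_{r+2j}\cdot W\cdot\mathsf{Con}_{r+2(j-1)}\bigr)\cdots\bigl(\mathsf{Con}_{r+2}\cdot W\cdot\mathsf{Con}_r\bigr)\,,
\end{equation}
and then apply \eqref{eq:Anormbound} to each factor (using $\mathsf{Con}_{m}\subseteq\mathsf{Con}_{m+2}$ to bound $\|\mathsf{Con}_{m+2}\cdot W\cdot\mathsf{Con}_m\|\le M(m+2)$). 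This gives the product bound
\begin{equation}
\|W^j\cdot\mathsf{Con}_r\|_{\mathrm{op}} \;\le\; M^j\prod_{k=1}^j (r+2k) \;=\; (2M)^j\,\frac{(r/2+j)!}{(r/2)!}\,,
\end{equation}
so that the full tail estimate becomes
\begin{equation}
\bigl\|\bigl(\mathrm{Taylor}_d(-W/s)-\exp(-W/s)\bigr)\cdot\mathsf{Con}_r\bigr\|_{\mathrm{op}} \;\le\; \sum_{j=d+1}^\infty \alpha^j\binom{r/2+j}{j}\,,\qquad \alpha \;\defeq\; \frac{2M}{s}\le\frac{2}{3}\,.
\end{equation}

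The main obstacle is showing this combinatorial tail is at most $\varepsilon$ under $d\ge 4\ln(1/\varepsilon)+r$. The generating-function identity $\sum_{j\ge 0}\alpha^j\binom{r/2+j}{j}=(1-\alpha)^{-r/2-1}$ shows convergence, but is not tight enough. Instead I would analyse the ratio $a_{j+1}/a_j=\alpha(r/2+j+1)/(j+1)$: once $j\ge r$ the factor $(r/2+j+1)/(j+1)$ is at most $3/2$, so $a_{j+1}/a_j\le 3\alpha/2\le 1$, and once $j\ge Cr$ for a suitable constant the ratio drops to $\alpha(1+o(1))\le 5/6$, giving uniform geometric decay. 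The hypothesis $d\ge r$ then places us past the ``hump'' of $(a_j)$, and the extra $4\ln(1/\varepsilon)$ terms drive the tail down. Concretely, I would split the tail at $j=d+1$, use the bound $\binom{r/2+d+1}{r/2}\le e^{r/2}(1+2(d+1)/r)^{r/2}$, sum the resulting geometric tail, and verify that for $\alpha\le 2/3$ and $d\ge 4\ln(1/\varepsilon)+r$ the product $\alpha^{d+1}\binom{r/2+d+1}{d+1}$ is already $\le\varepsilon$ up to a constant factor absorbed into $\varepsilon\le 1/e$. The bookkeeping between the polynomial-in-$r$ combinatorial factor and the exponential-in-$d$ decay is the delicate step, and making sure the constants $4$ and $3M$ in the hypothesis suffice is exactly what this argument has to check.
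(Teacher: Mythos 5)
Your overall route is the same as the paper's: expand the tail of the Taylor series, telescope the projectors using \eqref{eq:A-kicks-few-bosons}, bound each factor by \eqref{eq:Anormbound}, and reduce to a combinatorial tail $\sum_{j>d}\alpha^j\binom{r/2+j}{j}$. The gap is exactly at the step you defer to ``bookkeeping'': with your (correct) accounting, $\prod_{k=1}^j M(r+2k) = (2M)^j\prod_{k=1}^j(k+r/2)$, so the hypothesis $s\ge 3M$ only gives $\alpha = 2M/s\le 2/3$, and for that value of $\alpha$ the inequality you intend to verify is simply false in the regime $r\gg\ln(1/\eps)$. Concretely, take $\eps=1/\e$ and $d=r+4$: the single term at $j=d+1$ is $(2/3)^{r+5}\binom{3r/2+5}{r/2}$, which already exceeds $35$ at $r=10$ and grows like $3^{r/2}$ (since $\binom{3r/2}{r/2}\approx(27/4)^{r/2}$ beats $(2/3)^{r}=(4/9)^{r/2}$). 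So no summation trick or sharper handling of the ``hump'' can rescue the estimate: the quantity you are summing is itself $\gg\eps$. Moreover, the hypotheses alone do not permit a better bound on $\norm{W^j\cdot\mathsf{Con}_r}_{\mathrm{op}}$: an operator acting as a pure raising map $W\ket{\phi_m}=M(m+2)\ket{\phi_{m+2}}$ (with $\ket{\phi_m}$ a momentum Fock state having $m$ bosons outside the $0$-mode) satisfies both \eqref{eq:Anormbound} and \eqref{eq:A-kicks-few-bosons} and saturates your product bound, so the argument cannot be closed as stated without strengthening the constants --- e.g.\ requiring $s\ge 6M$ (so the ratio drops to $1/3$) or taking $d\ge 4\ln(1/\eps)+Cr$ with $C\ge 3$ or so.

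For comparison, the paper's own write-up closes its tail estimate only because, when passing from $\prod_{k}\norm{\mathsf{Con}_{r+2k}W\mathsf{Con}_{r+2k}}_{\mathrm{op}}\le\prod_k M(r+2k)$ to the binomial form, it records the product as $M^j\prod_k(k+r/2)$ rather than $(2M)^j\prod_k(k+r/2)$; with the resulting ratio $M/s\le 1/3$ the successive-term ratio $Q\le 1/2$ and the geometric bound $2^{r/2+d+2}3^{-(d+1)}\le\eps$ do go through under $d\ge4\ln(1/\eps)+r$. So your more careful bookkeeping does not merely complicate the constants --- it exposes that the stated pair $(s\ge 3M,\ d\ge 4\ln(1/\eps)+r)$ is not sufficient for this proof route, and any completion of your argument has to either adjust those constants or import additional structure on $W$ beyond \eqref{eq:Anormbound} and \eqref{eq:A-kicks-few-bosons}.
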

 \begin{proof}
 Expanding out the Taylor series for $\e^{-W/s}$, we have
    \begin{xalign}[eq:error-term-taylor]
        \norm{\left(\mathrm{Taylor}_d\left(-W/s \right)- \exp(-W/s)\right)\mathsf{Con}_r}_{\mathrm{op}}
        & = \norm{\sum_{j={d}+1}^{\infty} \frac{1}{j!}s^{-j} (-W)^j \cdot \mathsf{Con}_r}_{\mathrm{op}} \\
        & \leq \sum_{j={d}+1}^{\infty} \frac{1}{j!}s^{-j} \norm{W^j \cdot \mathsf{Con}_r}_{\mathrm{op}}\\
        &= \sum_{j={d}+1}^{\infty} \frac{1}{j!}s^{-j}\norm{\left(\prod_{k=j}^1\mathsf{Con}_{2k+r} W\mathsf{Con}_{2k+r}\right)\mathsf{Con}_r}_{\mathrm{op}} \label{eq:Taylor-approx-b}\\
        &\leq \sum_{j={d}+1}^{\infty} \frac{1}{j!}s^{-j}\prod_{k=j}^1\norm{\mathsf{Con}_{2k+r}W\mathsf{Con}_{2k+r}}_{\mathrm{op}} \label{eq:Taylor-approx-c}\\
        &\leq \sum_{j={d}+1}^{\infty} \frac{1}{j!} \qty(\prod_{k=j}^1 (k+r/2))\left(\frac{M}{s}\right)^{j} \label{eq:Taylor-approx-d}\\
        &= \sum_{j={d}+1}^{\infty}{r/2 + j\choose j}\left(\frac{M}{s}\right)^{j}\,, \label{eq:Taylor-approx-e}
    \end{xalign}
    where we first used the triangle inequality, then used~\eqref{eq:A-kicks-few-bosons} to resolve~\eqref{eq:Taylor-approx-b}, sub-multiplicativity of the operator norm in~\eqref{eq:Taylor-approx-c}, and~\eqref{eq:Anormbound} for proving~\eqref{eq:Taylor-approx-d}.
     Now, we bound \eqref{eq:Taylor-approx-e}. Consider the following
     \begin{equation}
         S(d)\defeq \sum_{j={d+1}}^{\infty} \underbrace{{r/2+j\choose j}\qty(\frac{M}{s})^{j}}_{\defeq w_j}\,. %
     \end{equation}
     We will bound this from above by a geometric series, which requires bounding the ratio of successive terms.  We find
     \begin{equation}
         \frac{w_{j+1}}{w_j}=\qty(\frac{M}{s})\frac{{r/2+j+1\choose j+1}}{{r/2+j\choose j}}= \qty(\frac{M}{s})\frac{r/2+j+1}{j+1}=\qty(\frac{M}{s})\left(1+\frac{r}{2(j+1)}\right)\,.
     \end{equation}
     For $j\geq d$, this is at most 
     \begin{equation}
     \label{eq:Q_definition}
         Q\defeq \qty(\frac{M}{s})\left(1+\frac{r}{2(d+1)}\right)\,.
     \end{equation}
    Since $M/s \leq 1/3$ and $r/2(d+1) \leq 1/2$, $Q \leq 1/2$. Therefore, the series $S(d)$ is bounded by
    \begin{equation}
        S(d)=\sum_{j=d+1}^{\infty} a_j\leq w_{d+1} \sum_{k=0}^{\infty} Q^k=2w_{d+1} = 2\cdot {r/2+{d}+1\choose d+1}\left(\frac{M}{s}\right)^{d+1} \leq 2^{r/2+d + 2}\cdot 3^{-(d+1)}\,.
    \end{equation}
    For $d \geq 4 \ln(1/\epsilon) + r$, this can be upper bounded by $\epsilon$. Since $S(d)$ equals the error in~\eqref{eq:error-term-taylor}, this completes the proof.
\end{proof}
\end{mathinlay}

The second family of polynomials we consider is the family of Chebyshev polynomials of the first kind.  These are typically denoted using the symbol $T_k$, but since this choice would overload the character $T$, we use the following notation,
\begin{equation}
	\mathrm{Cheby}_k(z)= \cos(k\arccos(z)),\;\; z\in[-1,1]\,.
\end{equation}
The Chebyshev polynomials form a basis for the space of polynomials, and, therefore, there exist coefficients $a_k^{(s)}$ such that the function $z \mapsto z^s$ in the basis of Chebyshev polynomials is given by
	\begin{equation}
		z^s=\sum_{k=0}^{\infty} a_k^{(s)} \mathrm{Cheby}_k(z)\,.
	\end{equation}
	Denote the truncated Chebyshev expansion of $x^{s}$ by
	\begin{equation}
    \label{eq:truncated_chebyshev}
		\mathrm{TCheby}_{s, d}(z)\defeq \sum_{k=0}^{d} a_k^{(s)} \mathrm{Cheby}_k(z)\,.
	\end{equation}
    We will make use of the following facts about Chebyshev polynomials.
    \begin{mathinlay}
	\begin{fact}[Theorem 3.3 in~\cite{sachdeva2014faster}]\label{fact:monomial-approximation}
		Then, $|a_k^{(s)}|\leq 1$ for all $k\geq 0$ and 
		\begin{equation}
			\abs{z^s-\mathrm{TCheby}_{s, d}(z)}\leq 2\cdot \exp\left(-\frac{d^2}{2s}\right)\;\text{for all}\; x\in[-1,1]\,.
		\end{equation}
	\end{fact}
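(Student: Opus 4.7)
The plan is to convert the problem to the trigonometric setting via the substitution $z = \cos\theta$, under which $\mathrm{Cheby}_k(z) = \cos(k\theta)$ by definition. This turns the Chebyshev expansion of $z^s$ into the ordinary cosine-Fourier expansion of $\cos^s\theta$ on $[0,\pi]$, which is much easier to compute explicitly. Applying the binomial theorem to $\cos^s\theta = 2^{-s}(e^{i\theta}+e^{-i\theta})^s$ and pairing the terms indexed by $j$ and $s-j$ (which contribute complex-conjugate exponentials $e^{\pm i(s-2j)\theta}$) immediately yields a closed form: up to a factor of $2$ for $k\ge 1$, each nonzero coefficient $a_k^{(s)}$ equals $2^{-s}\binom{s}{(s-k)/2}$, nonzero only when $s-k$ is even, with a slightly different normalization at $k=0$ (when $s$ is even).

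From this closed form, the coefficient bound $|a_k^{(s)}|\leq 1$ is immediate, since $\binom{s}{(s-k)/2}\leq 2^{s-1}$ for $k\ge 1$ and $2^{-s}\binom{s}{s/2}\leq 1$ at $k=0$. For the truncation estimate, the triangle inequality gives
\begin{equation}
\abs{z^s - \mathrm{TCheby}_{s,d}(z)} \;\le\; \sum_{k>d}|a_k^{(s)}| \;=\; 2\cdot 2^{-s}\sum_{\substack{k>d\\ s-k~\text{even}}}\binom{s}{(s-k)/2}.
\end{equation}
Changing variables $m=(s-k)/2$, the right-hand side becomes $2\cdot 2^{-s}\sum_{m<(s-d)/2}\binom{s}{m}$, which is exactly twice the cumulative distribution function of a $\mathrm{Binomial}(s,1/2)$ random variable evaluated at $s/2 - d/2$. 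Hoeffding's inequality with deviation $d/2$ from the mean $s/2$ then bounds this by $2\exp(-d^2/(2s))$, matching the stated claim.

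I do not expect a serious obstacle here — this is a classical computation in polynomial approximation theory, essentially the content of Theorem 3.3 in Sachdeva–Vishnoi. The only mild bookkeeping is the parity condition $s-k$ even (which halves the nonzero coefficients) and the slight asymmetry in the $k=0$ case when $s$ is even, neither of which affects the stated bounds. One could alternatively obtain the tail estimate by a direct Stirling/Chernoff calculation on the binomial tail, but invoking Hoeffding seems cleanest.
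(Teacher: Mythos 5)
Your proof is correct. The paper does not prove this statement at all — it is imported verbatim as a Fact citing Theorem 3.3 of Sachdeva–Vishnoi — and your argument is essentially the one in that reference: under $z=\cos\theta$ the binomial expansion of $\cos^s\theta$ shows that $a_k^{(s)}$ is the probability that an $s$-step $\pm 1$ random walk ends at distance $k$ from the origin (hence $|a_k^{(s)}|\le 1$, and in fact $\sum_k a_k^{(s)}=1$), and the truncation error is bounded by the tail probability $\Pr[|X_1+\cdots+X_s|>d]\le 2\exp(-d^2/2s)$ via Hoeffding, exactly as you did.
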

	\begin{fact}[Bounds on the coefficients of the Chebyshev polynomials~\cite{narayanan2024improved})]
    \label{fact:cheby-coefficients-bound}
		For all $k\geq 0$, the coefficients of $\mathrm{Cheby}_k$ in the monomial basis are bounded by $(1+\sqrt{2})^k$.
	\end{fact}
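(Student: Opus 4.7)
The plan is to induct on $k$ using the standard three-term recurrence for Chebyshev polynomials of the first kind, namely $\mathrm{Cheby}_0(z) = 1$, $\mathrm{Cheby}_1(z) = z$, and $\mathrm{Cheby}_{k+1}(z) = 2z\,\mathrm{Cheby}_k(z) - \mathrm{Cheby}_{k-1}(z)$. Writing $\mathrm{Cheby}_k(z) = \sum_{j} c_{k,j} z^j$ and setting $C_k \defeq \max_j |c_{k,j}|$, the recurrence immediately implies the coefficient-wise relation $c_{k+1,j} = 2 c_{k,j-1} - c_{k-1,j}$, and hence the scalar inequality $C_{k+1} \leq 2 C_k + C_{k-1}$.

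The next step is to identify the right geometric upper bound. The characteristic equation $r^2 = 2r + 1$ of this linear recurrence has roots $r = 1 \pm \sqrt{2}$, and the dominant root $1+\sqrt{2}$ satisfies the key algebraic identity $(1+\sqrt{2})^2 = 3 + 2\sqrt{2} = 2(1+\sqrt{2}) + 1$, which is exactly what is needed to propagate a bound of the form $C_k \leq (1+\sqrt{2})^k$ through the recurrence.

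Finally, the induction itself is immediate. The base cases $C_0 = 1 \leq 1$ and $C_1 = 1 \leq 1 + \sqrt{2}$ hold directly. For the inductive step, assuming $C_{k-1} \leq (1+\sqrt{2})^{k-1}$ and $C_k \leq (1+\sqrt{2})^k$, one computes
\begin{equation}
C_{k+1} \;\leq\; 2 C_k + C_{k-1} \;\leq\; 2(1+\sqrt{2})^k + (1+\sqrt{2})^{k-1} \;=\; (1+\sqrt{2})^{k-1}\bigl(2(1+\sqrt{2}) + 1\bigr) \;=\; (1+\sqrt{2})^{k+1},
\end{equation}
which closes the induction and proves the fact.

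There is essentially no genuine obstacle here: the proof is a two-line induction once the correct rate $1+\sqrt{2}$ is identified, and the rate is forced by the characteristic polynomial of the three-term recurrence. As an alternative, one could instead expand the closed-form expression $\mathrm{Cheby}_k(z) = \tfrac{k}{2}\sum_{j=0}^{\lfloor k/2 \rfloor} \tfrac{(-1)^j}{k-j} \binom{k-j}{j} (2z)^{k-2j}$ and bound each coefficient using standard binomial estimates, but the recurrence-based argument above is more direct and produces exactly the constant $1+\sqrt{2}$ claimed.
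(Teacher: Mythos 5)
The paper does not prove this fact itself; it simply cites \cite{narayanan2024improved} as the source. Your proof is a self-contained, elementary alternative, and it is correct. The coefficient-level recurrence $c_{k+1,j} = 2c_{k,j-1} - c_{k-1,j}$ follows directly from $\mathrm{Cheby}_{k+1}(z) = 2z\,\mathrm{Cheby}_k(z) - \mathrm{Cheby}_{k-1}(z)$, and the bound $C_{k+1} \leq 2C_k + C_{k-1}$ on the max absolute coefficient is an immediate triangle-inequality consequence. The identity $(1+\sqrt{2})^2 = 2(1+\sqrt{2}) + 1$ is exactly the algebraic fact that makes the induction close, and your base cases $C_0 = C_1 = 1$ check out. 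This is a clean, tight little argument; the constant $1+\sqrt{2}$ is forced by the characteristic polynomial $r^2 - 2r - 1$ of the recurrence, as you observe. What your route buys over the citation is transparency and self-containment at essentially zero extra length; what the citation buys is not re-proving something already in the literature. If anything, one could remark that the bound is in fact asymptotically tight up to polynomial factors (the largest coefficient of $\mathrm{Cheby}_k$ grows like $\Theta((1+\sqrt{2})^k / \sqrt{k})$), so there is no slack to be recovered by a cleverer argument, but this is not needed for the paper's purposes.
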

    \end{mathinlay}
Then, we have the following polynomial approximation:
\begin{lemma}[Flat approximations of exponential functions] \label{lemma:approxetoA}
Let $W$ be a PSD operator $W \succeq 0$, satisfying the predicates of~\Cref{lemma:Taylor-bound-exponential}. %
	Then, for any $\varepsilon>0$, $r \ge 0$, there is a polynomial, which we call $\mathrm{Flat}_\eps = \mathrm{Flat}_{ \varepsilon, M, r}$ (but we drop the subscripts $M,r$), of degree at most $100 M\ln(1/\varepsilon)(r+1)$ such that 
	\begin{equation}
		\norm{\left(\exp(-W)-\mathrm{Flat}_{\eps}(W)\right)\cdot \mathsf{Con}_r}_{\mathrm{op}}\leq \varepsilon\,.
	\end{equation}
\end{lemma}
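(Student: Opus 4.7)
The plan is to follow the sketch given earlier in the excerpt: write $\exp(-W) = X^s$ with $X = \exp(-W/s)$ for a cleverly chosen $s$, approximate the outer $s$-th power by a truncated Chebyshev expansion of degree $\sim\sqrt{s}$, and then replace each $X^j$ in that expansion by a truncated Taylor expansion of $\exp(-jW/s)$ via \Cref{lemma:Taylor-bound-exponential}. Concretely, I would set $s = 3Md$ and $d = \lceil\sqrt{2s\ln(4/\varepsilon)}\rceil$, which resolves self-consistently to $d = O(M\ln(1/\varepsilon))$ and $s = O(M^2\ln(1/\varepsilon))$.

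The Chebyshev step is immediate from \Cref{fact:monomial-approximation}: since $W \succeq 0$, the operator $X$ is Hermitian with spectrum in $(0,1]$, so functional calculus yields
\begin{equation}
    \norm{X^s - \mathrm{TCheby}_{s,d}(X)}_{\mathrm{op}} \leq 2\exp(-d^2/(2s)) \leq \varepsilon/2.
\end{equation}
Note this bound holds on the full space, not only on $\mathsf{Con}_r$; the condensate structure is irrelevant at this stage because Chebyshev only needs the spectral containment in $[-1,1]$.

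For the Taylor step, I would expand $\mathrm{TCheby}_{s,d}(x) = \sum_{j=0}^d c_j x^j$ in the monomial basis and bound $|c_j| \leq (d+1)(1+\sqrt{2})^d = 2^{O(d)}$ using $|a_k^{(s)}|\leq 1$ (\Cref{fact:monomial-approximation}) and the coefficient bound in \Cref{fact:cheby-coefficients-bound}. For each $j$, the operator $X^j = \exp(-W/(s/j))$ satisfies $s/j \geq s/d = 3M$, so \Cref{lemma:Taylor-bound-exponential} applies with parameter $s/j$ and provides
\begin{equation}
    \norm{(P_j(W) - X^j)\,\mathsf{Con}_r}_{\mathrm{op}} \leq \varepsilon_j,
\end{equation}
where $P_j(W) \defeq \mathrm{Taylor}_{d_j}(-jW/s)$ and $d_j \geq 4\ln(1/\varepsilon_j) + r$. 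Setting $\varepsilon_j = \varepsilon/(2(d+1)|c_j|)$ and applying the triangle inequality yields $\sum_j |c_j|\,\varepsilon_j \leq \varepsilon/2$, so defining $\mathrm{Flat}_\varepsilon(W) \defeq \sum_{j=0}^d c_j P_j(W)$ gives a total error at most $\varepsilon$ on $\mathsf{Con}_r$. Combining the estimates, $d_j = O(\ln|c_j| + \ln(1/\varepsilon) + r) = O(M\ln(1/\varepsilon) + r)$, and accounting for constants this is comfortably below the claimed bound $100\,M\ln(1/\varepsilon)(r+1)$ (using $M\ln(1/\varepsilon) \geq 1$, which is the nontrivial regime).

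The main technical subtlety I anticipate is avoiding a blow-up of the condensate parameter under composition. A na\"ive substitution $\mathrm{TCheby}_{s,d}(\mathrm{Taylor}_{d'}(-W/s))$ would require the inner Taylor approximation to be accurate on $\mathsf{Con}_{r+2dd'}$, because each application of the inner polynomial of degree $d'$ pushes bosons out of the condensate by up to $2d'$ (via \Cref{fact:double-hop-doesnt-hop-much}) and the outer Chebyshev composes this $d$ times. Through \Cref{lemma:Taylor-bound-exponential} this would force the infeasible self-referential constraint $d' \geq 4\ln(1/\varepsilon) + r + 2dd'$, which has no solution once $d \geq 1$. Expanding the Chebyshev polynomial in the monomial basis sidesteps this entirely: each $X^j$ is a \emph{single} exponential $\exp(-W/(s/j))$ that the Taylor lemma handles directly on $\mathsf{Con}_r$, so no composition of operator-valued approximations occurs, and the Taylor degree only needs to grow linearly in $r$ rather than multiplicatively with $d$.
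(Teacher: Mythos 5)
Your proposal is correct and follows essentially the same route as the paper's proof: write $\exp(-W)=(\exp(-W/s))^{s}$, truncate the $s$-th power in the Chebyshev basis (valid on the whole space since $W\succeq 0$ puts the spectrum of $\exp(-W/s)$ in $[0,1]$), expand in monomials with the coefficient bounds from \Cref{fact:monomial-approximation} and \Cref{fact:cheby-coefficients-bound}, and apply \Cref{lemma:Taylor-bound-exponential} to each single exponential $\exp(-jW/s)$ with effective scalar $s/j\geq 3M$, exactly as the paper does with its choice $w=36M^{2}\ln(2/\varepsilon)$, $d'=\lceil\sqrt{72M^{2}\ln^{2}(2/\varepsilon)}\rceil$. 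Your self-consistent parameter choice and the observation that expanding in monomials avoids composing operator-valued approximations (and hence a multiplicative blow-up in the condensate parameter) match the paper's argument, and your additive degree bound $O(M\ln(1/\varepsilon)+r)$ is consistent with the stated bound $100M\ln(1/\varepsilon)(r+1)$.
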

\begin{proof}
    We employ the fact that $W \succeq 0$ to ensure that the spectrum of $\exp(-W/w)$ is contained in $[0,1]$. To apply Lemma~\ref{lemma:Taylor-bound-exponential}, fix $w \defeq 36M^2 \ln(2/\epsilon)$, and $d' \defeq \left\lceil \sqrt{72 M^2 \ln^2 (2/\epsilon)}\right\rceil$. Then,
	\begin{equation}
		\exp(-W)\cdot \mathsf{Con}_r=\left(\exp(-W/w)\right)^w \cdot \mathsf{Con}_r\approx_{\varepsilon/2} \mathrm{TCheby}_{w, d'}\left(\exp(-W/w)\right)\cdot \mathsf{Con}_r \label{eq:exp-is-close-to-tcheby}
	\end{equation}
	in operator norm for all $d'\geq \sqrt{2w\ln(2/\varepsilon)}$ by Fact~\ref{fact:monomial-approximation}. Here, we use the following notation: $X \approx_\eps X'$ is used to denote that $\norm{X - X'}_{\mathrm{op}} \leq \eps$. 
	Now let $b_{k}$ be the coefficient of the monomial $z^{k}$ in the decomposition of the polynomial $\mathrm{TCheby}_{w, d'}(z)$---i.e.,
    \begin{equation}
        \mathrm{TCheby}_{w, d'}(z) = \sum_k b_k z^k\,. \label{eq:tcheby-in-terms-of-coefficients}
    \end{equation}
    For our choice of $w$ and $d'$, we have
    \begin{equation}
    \label{eq:d_prime_equation}
    d'=\left\lceil \sqrt{2w\ln(2/\varepsilon)}\right\rceil,
    \end{equation}
    and thus we can bound for all $k \leq d'$,
    \begin{equation} 
    \label{eq:bound_on_s_in_truncated_exponential}
    \frac{k}{w} \leq \frac{d'}{w} \leq \frac{\sqrt{4 w\ln(2/\varepsilon)}}{w}\,.
    \end{equation}
	Now we fix $d \defeq \left(4\left((9M+1)\ln\left(1/{\varepsilon}\right)+\ln(2)\right)+r\right)$. By \Cref{lemma:Taylor-bound-exponential} that whenever 
    \begin{equation}
        \label{eq:choice-of-d-given-d'}
        d \geq 4\ln\left(2 \cdot \left(\max_{k}\abs{b_k}\right) \cdot d' / \epsilon\right) + r\text{~and~}s\defeq \sqrt{\frac{w}{4\ln(2/\varepsilon)}} \geq 3M\,,
    \end{equation}
    we have that
	\begin{xalign}
		\mathrm{TCheby}_{w, d'}\left(\exp(-W/w)\right)\cdot \mathsf{Con}_r&= \sum_{k=1}^{d'}b_k \cdot \exp(-W/w)^k \cdot \mathsf{Con}_r\\
		&=\sum_{k=1}^{d'} b_k \cdot \exp\left(-\frac{k}{w}W\right)\cdot \mathsf{Con}_r\\
		&\approx_{\varepsilon/2} \sum_{k=1}^{d'} b_k\cdot \mathrm{Taylor}_{d}\left(-\frac{k}{w}W\right) \cdot \mathsf{Con}_r\,, \label{eq:tcheby-approximation-origin} 
	\end{xalign}

    Here, we recall that $\mathrm{Taylor}_{d}$ is the truncated Taylor series for the exponential.
    In the final step, we use the fact that $W$ satisfies the conditions required of the operator, and recall that we chose $w=36M^2\ln(2/\varepsilon)$, combined with \eqref{eq:bound_on_s_in_truncated_exponential}, to satisfy the constraint on $s$ required for \Cref{lemma:Taylor-bound-exponential}. The next part of the proof is to show that $d$ satisfies the required bound.  We will take the following upper bound on $d'$ for sufficiently small $\eps$,
    \begin{equation}
     d' = \left\lceil \sqrt{72M^2 \ln^2(2/\epsilon)} \right\rceil \leq 9M\ln(1/\varepsilon)\,. \label{eq:120}
     \end{equation}
Observe that, by definition,
\begin{equation}
    b_k = \sum_{k' = k}^{d'} a_{k'}^{(s)} \cdot [\text{Coefficient of}~z^{k'}~\text{in}~\mathrm{Cheby}_{k'}(z)]\,,
\end{equation}
and by Fact~\ref{fact:cheby-coefficients-bound}, for all $k$, the coefficients $b_k$ can be bounded by
	\begin{equation}
		|b_k|\leq \qty(\abs*{a^{(s)}_1}+\cdots+\abs*{a^{(s)}_{d'}})\qty(1+\sqrt{2})^{d'}\leq  d' (2.5)^{d'}\,.
	\end{equation}
	Taking the logarithm and applying the choice from~\eqref{eq:120} for any $d' \geq 10$, we have that $\ln(|b_k|) \leq \ln(d') + d' \ln(2.5)$, this is equivalent to
	\begin{xalign}
		\ln(|b_k|) + \ln(d') &\leq 2\ln(d') + d'\ln(2.5)\,.
        \\
        & \leq 2\ln(d') + 9M\ln(1/\varepsilon) \cdot 0.89 \\
        &\leq 9M\ln(1/\varepsilon)\,.
	\end{xalign}
	Plugging into \eqref{eq:choice-of-d-given-d'}, we find that $d$ satisfies the necessary lower bound. Now, take $\mathrm{Flat}_\eps(z)$ to be the polynomial defined to be
	\begin{equation}
		\mathrm{Flat}_\eps(z)\defeq \sum_{k=1}^{d'} b_k \cdot \mathrm{Taylor}_{d}\left(-\frac{k}{w}\cdot z\right)\,.
	\end{equation}
    So $\mathrm{Flat}_\eps$ is a degree $\le d$ polynomial.
    Where again, $d = \left(4\left((9M+1)\ln\left(1/{\varepsilon}\right)+\ln(2)\right)+r\right)$, and the coefficients $b_k$ come from $\mathrm{TCheby}_{w, d'}$ for our choice of $w$ and $d'$ from the beginning of the proof.  It has degree at most 
    \begin{align}
        \deg(\mathrm{Flat}_\eps) \leq d \le 
\left(4\left((9M+1)\ln\left(1/{\varepsilon}\right)+\ln(2)\right)+r\right)
        {\leq 100M\ln(1/\varepsilon)(r+1)\,.}
    \end{align} 
    Here, we use loose upper bounds.  This polynomial appears on the right-hand side of~\eqref{eq:tcheby-approximation-origin},
    then
    \begin{equation}
        \mathrm{TCheby}_{w, d'}(\exp(-W/w)) \cdot \mathsf{Con}_r \approx_{\eps/2} \mathrm{Flat}(W) \cdot \mathsf{Con}_r\,.
    \end{equation}
    The previous equation and~\eqref{eq:exp-is-close-to-tcheby} combine via triangle inequality for the lemma statement.
\end{proof}
We can now finally consider the operators $\displaystyle
	\prod_{i=1}^T e_{x_i}(\tilde{\G}_{y_i}^2)$,
where, recall, $e_{x_i}(z)$ are terms
\begin{xalign}
    e_0(z) &\defeq 1-\e^{-\kappa z}, \quad\text{and} \\
    e_1(z) &\defeq \sqrt{\e^{-\kappa z}(2-\e^{-\kappa z})} = \sqrt{2} \e^{-\kappa z/2} \sqrt{1 - \e^{-\kappa z}/2}\,,
\end{xalign}
to match the definitions of the Kraus operators $E_0^{(y_i)}$ and $E_1^{(y_i)}$ (\Cref{cor:krausoperators}). 
Towards the proof of~\Cref{cor:sandwiching}, we prove the following more general lemma: 
\begin{lemma}\label[lemma]{lemma:Krausapproximation}
  Let $W_1,\ldots, W_T \succeq 0$ be PSD operators that each satisfy the predicates of~\Cref{lemma:Taylor-bound-exponential} and pairwise commute.
  For any $r\geq 0$ and every $\x \in \bits^T$, there exists a multivariate polynomial $\mathrm{AKraus}_{\eps,\x}$ such that  
  \begin{equation}
      \norm{\left(\prod_{i=0}^T e_{x_i}(W_i) - \mathrm{AKraus}_{\eps,\x}(W_1,\ldots,W_T)\right)\cdot \mathsf{Con}_r}_{\mathrm{op}}\leq \varepsilon\,.
  \end{equation}
  Moreover, $\mathrm{AKraus}_{\eps,\x}$ is of degree $O(M T^3 \ln(T) \ln^2(1/\eps) (r+1))$.
\end{lemma}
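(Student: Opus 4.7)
My plan is to build a polynomial approximation of the product $\prod_i e_{x_i}(W_i)$ by first approximating each Kraus factor $e_{x_i}(W_i)$ individually and then combining them via a telescoping hybrid. The single-variable building block is~\Cref{lemma:approxetoA}.

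First, I would construct polynomial approximations for the two functional forms $e_0(z) = 1 - e^{-\kappa z}$ and $e_1(z) = \sqrt{2}\,e^{-\kappa z/2}\sqrt{1 - e^{-\kappa z}/2}$ on $\mathsf{Con}_{r'}$. The $x_i = 0$ case follows directly from~\Cref{lemma:approxetoA} applied to $\kappa W$ (which inherits the predicates of $W$ with constant $\kappa M$), giving $p_0 \defeq \id - \mathrm{Flat}_{\eps'}(\kappa W)$ of degree $O(M\log(1/\eps')(r'+1))$. For $x_i = 1$, I would expand the inner square root via the binomial series $\sqrt{1-x} = \sum_{k\ge 0}\binom{1/2}{k}(-x)^k$, which converges for $|x|\le 1$ and hence for $x = e^{-\kappa z}/2 \le 1/2$. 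Since $|\binom{1/2}{k}| \le 1$ (from the preliminaries), truncating at $K = O(\log(1/\eps'))$ leaves a tail of operator norm $\le \sqrt{2}\cdot 2^{-K} \le \eps'/2$. The truncated sum is $\sqrt{2}\sum_{k\le K}\binom{1/2}{k}(-1)^k 2^{-k}e^{-\kappa(k+1/2)z}$, a linear combination of $O(K)$ exponentials, each of which I approximate via~\Cref{lemma:approxetoA} with precision $\Theta(\eps')$ at degree cost $O(KM\log(1/\eps')(r'+1))$. Summing yields $p_1$ of total degree $O(M\log^2(1/\eps')(r'+1))$.

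To combine, I would use the telescoping hybrid $H_k \defeq \prod_{i<k} e_{x_i}(W_i) \cdot \prod_{i\ge k} p_{x_i}(W_i)$, so that $H_{T+1} = \prod_i e_{x_i}(W_i)$ and $H_1$ is the target polynomial $\mathrm{AKraus}_{\eps,\x}$. Since the $W_i$ commute and $\|e_{x_j}(W_j)\|_{\mathrm{op}} \le 1$ for all $j$, the telescoping inequality
\[
\big\|(H_{T+1} - H_1)\,\mathsf{Con}_r\big\|_{\mathrm{op}} \le \sum_{k=1}^{T} \big\|(e_{x_k} - p_{x_k})(W_k) \cdot \textstyle\prod_{i>k} p_{x_i}(W_i) \cdot \mathsf{Con}_r\big\|_{\mathrm{op}}
\]
holds. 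By~\Cref{fact:double-hop-doesnt-hop-much}, the polynomial tail $\prod_{i>k} p_{x_i}(W_i)$ of total degree $\sum_{i>k}d_i$ maps $\mathsf{Con}_r$ into $\mathsf{Con}_{r+2\sum_{i>k}d_i}$, so each hybrid term reduces to the single-factor approximation error of $p_{x_k}$ on this larger condensate. Setting the per-step precision to $\eps/T$ makes the total error at most $\eps$.

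The main obstacle is the degree analysis. The above setup forces the recursion $d_k \ge C M\log^2(T/\eps)\,(r + 1 + 2\sum_{i>k}d_i)$, which under a naive uniform choice $d_k = d$ is feasible only when $M T \log^2(T/\eps)$ is tiny, and a geometric ansatz makes the $d_k$ grow exponentially in $T$. To obtain the claimed $O(MT^3\log(T)\log^2(1/\eps)(r+1))$ bound, one must calibrate the $d_k$ to grow polynomially with $T-k$ so that the cumulative condensate spread $\sum_{i>k} d_i$ feeding back into the next approximation remains polynomially bounded, and exploit the slack that $\eps'$ can be relaxed to $\eps/T$ only once rather than at every telescoping layer. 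Working through this bookkeeping---rather than the construction of the approximating polynomials themselves---is where the bulk of the technical effort lies.
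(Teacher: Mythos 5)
Your single-factor-then-telescope strategy hits an obstruction that your proposed fix cannot repair. In your hybrid, the error term at step $k$ is $(e_{x_k}-p_{x_k})(W_k)$ applied to $\prod_{i>k}p_{x_i}(W_i)\cdot\mathsf{Con}_r$, and since your $p_{x_i}$ are polynomials \emph{in $W_i$}, that tail spreads the condensate to $\mathsf{Con}_{r+2\sum_{i>k}d_i}$, so the guarantee you need for $p_{x_k}$ forces $d_k\gtrsim C\,(r+1+2\sum_{i>k}d_i)$ with $C=\Theta(M\ln^2(T/\eps))\geq 1$. Writing $S_k=\sum_{i\geq k}d_i$, this recursion gives $S_k\geq(1+2C)S_{k+1}$, i.e.\ total degree $\exp(\Omega(T))$ no matter how you schedule the $d_k$; "calibrating the $d_k$ to grow polynomially in $T-k$" is inconsistent with the recursion, and relaxing the per-step precision to $\eps/T$ does not touch the multiplicative feedback, which comes from the condensate parameter, not from the accuracy. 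Reordering the hybrid so the exact factors sit on the right does not help either, because $e_{x_i}(W_i)$ does not preserve any finite condensate subspace, so the sandwiched approximation guarantee cannot be invoked there.

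The paper escapes this by never approximating the individual factors as polynomials in $W_i$. Instead, each $e_{x_i}(W_i)$ is written (up to error $\eps/4T$, with \emph{no} condensate restriction) as a low-degree polynomial $p_{x_i}$ in the bounded operator $\exp(-\kappa W_i/2)$: $p_0(z)=1-z^2$ and $p_1(z)=\sqrt{2}\,z\cdot\mathrm{TSqrt}(z^2)$ with a truncated binomial series of degree $O(\ln T+\ln(1/\eps))$. Because these arguments have operator norm at most $1$, a hybrid over the $T$ factors costs only $\eps/2$ with no degree feedback. Then the product $\prod_i p_{x_i}(\exp(-\kappa W_i/2))$ is expanded in the monomial basis; commutativity collapses each monomial $\prod_i\exp(-\kappa k_iW_i/2)$ into a \emph{single} exponential $\exp(-\kappa\sum_ik_iW_i/2)$ whose exponent still satisfies the predicates of \Cref{lemma:Taylor-bound-exponential} with constant $JM$, $J\leq 3d''T$, and only at this point is the flat approximation of \Cref{lemma:approxetoA} applied, once per monomial, at precision $\eps/(2\cdot2^T)$ and with condensate parameter still $r$. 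The total absolute coefficient mass of the expansion is at most $2^T$, so it enters the error additively and the degree only logarithmically (a factor of $T$), yielding the claimed $O(MT^3\ln(T)\ln^2(1/\eps)(r+1))$ bound. Your construction of the individual approximants (binomial truncation plus \Cref{lemma:approxetoA}) is fine, but the combination step needs this collapse-then-flatten structure rather than a telescoping of per-factor polynomial approximations.
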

\begin{proof}
We begin with a single PSD operator $W$ that satisfies the predicates of~\Cref{lemma:Taylor-bound-exponential}.
We replace the function $z \mapsto \sqrt{1-z/2}$ in the definition of $e_1$ with the truncated binomial expansion\footnote{The usual binomial expansion $(1+a)^r=\sum_k \binom{r}{k}a^k$ also extends to real-valued $r$. Here, we are using $r=1/2$ and $a=-z/2$, and truncating to $k\leq d''$.}:
\begin{equation}
	\mathrm{TSqrt}\left(z\right)\defeq \sum_{k = 0}^{d''} \binom{1/2}{k} \left(\frac{-1}{2}\right)^{k} z^k\,,
\end{equation}
with the choice of $d'' = 4 + \frac{3}{2} (\ln(T) + \ln(1/\varepsilon))$. The nomenclature $\mathrm{TSqrt}(\cdot)$ comes from it being a truncated Taylor expansion. Recall that the square-root function does not have a good Taylor series expansion about $0$, which is why we build it from $z \mapsto \sqrt{1 -z/2}$. 
We define the following polynomial approximations to make progress towards approximation $e_0(z)$ and $e_1(z)$.
\begin{xalign}[eq:definition-of-outside-p]
    p_0(z) &\defeq 1 - z^2\,, \\
    p_1(z) &\defeq \sqrt{2} z \cdot \mathrm{TSqrt}(z^2)\,.
\end{xalign}
Then $e_x(z)$ should be approximated by $p_x(e^{-\kappa z/2})$ whenever $\mathrm{TSqrt}$ is a good approximation of $\sqrt{1 - z/2}$. 

\begin{mathinlay}
    \begin{claim} \label[claim]{claim:e-to-p-bound}
        It follows that
        \begin{equation}
            \norm{\prod_{i=1}^T e_{x_i}(W_i)-\prod_{i=1}^T p_{x_i}\left(\exp(-\kappa W_i/2)\right)}_{\mathrm{op}} \leq \eps/2\,.
        \end{equation}
    \end{claim}
    \begin{proof}

Next, observe that by construction for PSD $W$,
\begin{equation}
    p_0(\exp(-\kappa W/2)) = e_0(W).
\end{equation}
The challenge is to prove a similar statement for $p_1$ and $e_1$. Here, we will achieve an approximation.
\begin{xalign}[eq:square_root_approximation]
	&\norm{p_1(\exp(-\kappa W/2))-e_1(W)}_{\mathrm{op}}= \\
    &\hspace{2em}\norm{\sqrt{2}\exp(-\kappa W/2) \qty(\sqrt{1 - \frac{1}{2} \exp(-\kappa W)} -\mathrm{TSqrt}\qty(\exp(-\kappa W)))}_{\mathrm{op}}\\
    &\hspace{2em}\leq \sqrt{2}\norm{\exp(-\kappa W/2)}_{\mathrm{op}}\sum_{k=d''}^{\infty} \abs{{1/2\choose k}}\cdot \frac{1}{2^k}\cdot \norm{\exp(-\kappa W)^k}_{\mathrm{op}}\\
	&\hspace{2em}\leq \frac{\sqrt{2}}{2^{d''}}\sum_{k=0}^{\infty} \frac{1}{2^k} \\
    &\hspace{2em}\leq \frac{\epsilon}{4T}\,,
 \end{xalign}
where we only use the facts that $\abs{{1/2 \choose k}}\leq 1$ and $\norm{\left(\e^{-\kappa W}\right)^k}_{\mathrm{op}}\leq 1$ as $W$ is PSD.
So far we have produced approximations of single terms. We now lift these approximations to products of terms using a hybrid argument. We define hybrids $B_j$:
\begin{equation}
    B_j \defeq \prod_{k=T-j}^T e_{x_k}(W_k) \prod_{k=1}^{T-j-1} p_{x_i}(\exp(-\kappa W_k/2))\,.
\end{equation}
The hybrids combine to yield a total bound of 
\begin{xalign}[eq:e-to-p-hybrid]
    &\norm{\prod_{i=1}^T e_{x_i}(W_i)-\prod_{i=1}^T p_{x_i}\left(\exp(-\kappa W_i/2)\right)}_{\mathrm{op}}\\
    &\hspace{5mm}\leq \sum_{j=1}^T \norm{B_j-B_{j-1}}_{\mathrm{op}}\\
    &\hspace{5mm}\leq\sum_{j=1}^T \norm{e_{x_{T-j}}(W_{T-j})-p_{x_{T-j}}(\exp(-\kappa W_{T-j}/2))}_{\mathrm{op}}\prod_{k=1}^{T-j}\norm{p_{x_{k}}(\exp(-\kappa W_{k}/2))}_{\mathrm{op}}\\
    &\hspace{5mm}\leq T \cdot \frac{\eps}{4T} \cdot \left(1+\frac{\epsilon}{4T}\right)^{T-j}\\
    &\hspace{5mm}\leq \frac{\varepsilon}{2}, \label{eq:easy-algebraic-manipulation}
\end{xalign}
Here, we use the facts that $\norm{e_{x_k}(W_{k})}_{\mathrm{op}} \leq 1$, and that $e_{0}(W_i) = p_{0}(W_i)$, and from \eqref{eq:square_root_approximation} we know that $\norm{e_1(W_i) - p_1(W_i)}_{\mathrm{op}} \leq \epsilon / 4T$.  We also used that via the triangle inequality,
\begin{equation}
\norm{p_{x_k}(\exp(-\kappa W_{k}/2))}_{\mathrm{op}}\leq \frac{\epsilon}{4T}+\norm{e_{x_k}(W_{k})}_{\mathrm{op}}\leq 1+\frac{\epsilon}{4T}\,.
\end{equation}
Lastly, we resolved~\eqref{eq:easy-algebraic-manipulation} using
\begin{equation}
\left(1+\frac{\epsilon}{4T}\right)^{T-j}\leq \left(1+\frac{\epsilon}{4T}\right)^{T}\leq \e^{\frac{\epsilon}{4T} T}\leq \e^{1/4}\leq 2\,.
\end{equation}
    \end{proof}
\end{mathinlay}

We then observe that  $\prod_{i=1}^Tp_{x_i}(\exp(-\kappa W_{i} / 2))$ is a polynomial in the variables $\exp(-\kappa W_{k}/2)$ of degree at most $3d''T$ due to $T$ terms of the form $z \mathrm{TSqrt}(z^2)$ which will yield a degree of $(1 + 2 d'') \le 3d''$. 
For any collection of $J$ many operators $W_{i_j}$, we can write 
\begin{equation}\prod_{j=1}^J \exp\qty({-\kappa W_{i_j}/2})=\exp\qty({-\kappa \sum_{j=1}^J W_{i_j}/2})
\end{equation}
as by assumption, the $W_i$ pairwise commute.
Notice also that $\sum_{j=1}^J W_{i_j}$ satisfies~\eqref{eq:A-kicks-few-bosons} (because each term maps the subspace of $m$-condensates to the subspace of $m+2$-condensates) and by the triangle inequality, we have that
\begin{equation}\label{eq:normboundkrauslemma}
    \norm{\mathsf{Con}_m \cdot \left(\frac{\kappa}{2}\sum_{j=1}^J W_{i_j}\right) \cdot \mathsf{Con}_m}_{\mathrm{op}}\leq \kappa JMm \leq JMm\,.
\end{equation}
In the last line, we dropped the $\kappa$ as it is a constant $< 1$ (think $1/10$).
Next, we take the univariate polynomials $p_x(z)$ defined in~\eqref{eq:definition-of-outside-p} and write them in the monomial basis in terms of families of coefficients $\{c_k^{(0)}\}$ and $\{c_k^{(1)}\}$:
\begin{equation}
    p_x(z) = \sum_{k = 0}^{3d''} c_k^{(x)} z^k.
\end{equation}
This lets us construct multivariate polynomials $\mathrm{AKraus}_{\eps,\x}$ in terms of the $\mathrm{Flat}_\eps$ polynomials built in~\Cref{lemma:approxetoA}. We define $\mathrm{AKraus}_{\eps,\x}$ for $\x = (x_1, \ldots, x_T)$ as
\begin{equation}
    \mathrm{AKraus}_{\eps,\x}(z_1,\ldots,z_T) \defeq \sum_{k_1,\ldots, k_T = 0}^{3d''} \left(\prod_{i=1}^T c_{k_i}^{(x_i)} \cdot \mathrm{Flat}_{\frac{\eps}{2\cdot2^T}}\qty(-\kappa \sum_{i=1}^T k_i z_i/2)\right)\,,
\end{equation}
Using the fact that in the sum in the argument of $\mathrm{Flat}_{\frac{\epsilon}{2 \cdot 2^{T}}}$, we are summing $T$ many operators with coefficients that are integers and range from $0$ to $3d''$, the largest value of $J$ which we need to consider in \eqref{eq:normboundkrauslemma} is upper bounded by $J \leq 3d'' T$.  

The polynomials $\mathrm{AKraus}_{\eps,\x}$ are designed to approximate the polynomials $p_{0}$ and $p_{1}$ in the exponential function when evaluated on operators $(W_1, \ldots, W_T)$, which we prove in the following claim.

\begin{mathinlay}
    \begin{claim} \label[claim]{claim:p-to-akraus} It follows that
        \begin{equation}
            \norm{\left(\prod_{i=1}^T p_{x_i}(\exp(-\kappa W_i/2)) - \mathrm{AKraus}_{\varepsilon,\x}(W_1,\ldots,W_T)\right)\cdot \mathsf{Con}_r}_{\mathrm{op}} \leq \eps/2\,.
        \end{equation}
    \end{claim}

    \begin{proof}
    We use that for any monomial $\prod_{j=1}^J x_{i_j}$, we can apply~\Cref{lemma:approxetoA} to achieve
\begin{equation}
\prod_{j=1}^J \exp(-\kappa W_{i_j}/2)\cdot\mathsf{Con}_r\approx_{\frac{\varepsilon}{2\cdot 2^T}} \mathrm{Flat}_{\frac{\eps}{2\cdot2^T}}\qty(\sum_{j=1}^J \kappa W_{i_j}/2)\cdot \mathsf{Con}_r\,.
\end{equation}
Moreover, we used that the absolute coefficients of $\prod_{i=1}^T p_{x_i}$ add up to at most $2^T$.
To see this, observe that the absolute coefficients of $p_0$ add up to $2$
Moreover, all coefficients in $\mathrm{TSqrt}(z^2)$ are negative except the first, which equals 1. Therefore, the absolute coefficients of $p_1$ add up to 
\begin{equation}
\sum_{k_i=0}^{3d''} \abs{c_{k_i}^{(x_i)}}=\sqrt{2}(2-\mathrm{TSqrt}(1))\leq \sqrt{2}(2-\sqrt{1-1/2}))\leq 2.
\end{equation}
Finally, we can apply a triangle inequality to find
 \begin{xalign}
      &\norm{\left(\prod_{i=1}^T p_{x_i}(\exp(-\kappa W_i/2)) - \mathrm{AKraus}_{\varepsilon,\x}(W_1,\ldots,W_T)\right)\cdot \mathsf{Con}_r}_{\mathrm{op}}\\
      &\hspace{5mm}\leq \sum_{k_1,\ldots,k_T=0}^{3d''}\abs{\prod_{i=1}^T c_{k_i}^{(x_i)} }\cdot \norm{\left(\mathrm{Flat}_{\frac{\eps}{2\cdot2^T}}\qty(-\kappa \sum_{i=1}^T k_i W_i/2)-\exp(-\kappa\sum_{i=1}^T k_i W_i/2)\right)\mathsf{Con}_r}\\
      &\hspace{5mm}\leq \sum_{k_1,\ldots,k_T=0}^{3d''}\abs{\prod_{i=1}^T c_{k_i}^{(x_i)} }\cdot  \frac{\varepsilon}{2\cdot 2^T}\\
         &\hspace{5mm}\leq \sum_{k_1,\ldots,k_T=1}^{3d''}\prod_{i=1}^T \abs{c_{k_i}^{(x_i)} }\cdot  \frac{\varepsilon}{2\cdot 2^T}\\
         &\hspace{5mm}= \prod_{i=1}^T \sum_{k=0}^{3d''}\abs{c_{k_i}^{(x_i)} }\cdot  \frac{\varepsilon}{2\cdot 2^T}\\
      &\hspace{5mm}\leq 2^T \frac{\varepsilon}{2\cdot 2^T}\\
      &\hspace{5mm}\leq \frac{\varepsilon}{2}\,.
 \end{xalign}
    \end{proof}
\end{mathinlay}

Now we bound the distance of $\mathrm{AKraus}_{\eps,\x}$ from the product of exponentials via a middleman:
\begin{xalign}
      &\norm{\left(\prod_{i=0}^T e_{x_i}(W_i) - \mathrm{AKraus}_{\varepsilon,\x}(W_1,\ldots,W_T)\right)\cdot \mathsf{Con}_r}_{\mathrm{op}} \\
      &\hspace{5mm}\begin{matrix*}[l]\displaystyle\leq \norm{\left(\prod_{i=0}^T e_{x_i}(W_i) - \prod_{i=0}^T p_{x_i}(\exp(-\kappa W_i/2)) \right)\cdot \mathsf{Con}_r}_{\mathrm{op}}  \\
      \hspace{5mm} \displaystyle+\norm{\left(\prod_{i=0}^T p_{x_i}(\exp(-\kappa W_i/2)) - \mathrm{AKraus}_{\varepsilon,\x}(W_1,\ldots,W_T)\right)\cdot \mathsf{Con}_r}_{\mathrm{op}} \end{matrix*}\\
      &\hspace{5mm}\leq \frac{\varepsilon}{2}+\frac{\varepsilon}{2}=\varepsilon\,,
\end{xalign}
from a combination of~\Cref{claim:e-to-p-bound} and~\Cref{claim:p-to-akraus}.
The overall degree of $\mathrm{AKraus}_{\eps,\x}$ is then given by  $J \leq 3d'' T$ and $d'' = 4 + \frac{3}{2} (\ln(T) + \ln(1/\varepsilon))$ to yield 
\begin{xalign}
\deg(\mathrm{AKraus}_{\eps,\x}) &\le T \deg \mathrm{Flat}_{\frac{\eps}{2\cdot 2^T},JM, r} \\
&\le 100 JMT (\ln(1/\eps) + T + 1) (r+1) \\
&\le 300 d''MT^3 \ln(1/\eps) (r+1) \\
&\le O(M T^3 \ln(T) \ln^2(1/\eps) (r+1))\,,
\end{xalign}
completing the proof.
\end{proof}
\Cref{theorem:poly_approximation_v2} will follow from the following lemma (\Cref{lemma:otherregisters}) combined with~\Cref{lemma:Krausapproximation} for the choice of $W_i=\tilde{\G}^2_{y_i}$, where $y_i$ corresponds to the query of the verification algorithm. The following lemma applies triangle inequalities to bound the difference in the overall states when the oracle action is lightly replaced.
\begin{lemma}\label[lemma]{lemma:otherregisters}
    Let $B_{\x,\y, \bb}$ and $B'_{\x,\y,\bb}$ be two families of (possibly unnormalized) operators acting on the $\reg{S}$ register. 
    Define
\begin{equation}
   \ket{\psi_B} \defeq \sum_{\y, \bb} \qty[\qty(A_{\y, \bb})_{\reg{A}} \otimes \qty(\sum_{\x} (K^\x)_{\y, \bb} \otimes B_{\x,\y, \bb})] A \ket{0}_{\reg{A}} \ket{\mathrm{init}}_{\reg{US}}
\end{equation}
and $\ket{\psi_{B'}}$ similarly.
    Then for all suitably large $n$, 
    \begin{equation}
    \norm{\ket{\psi_B}-\ket{\psi_{B'}}}\leq 2^{2nT} \max_{\x,\y, \bb}\norm{(B_{\x,\y, \bb}-B'_{\x,\y, \bb})\ket{\mathrm{init}_{S}}}.
    \end{equation}
\end{lemma}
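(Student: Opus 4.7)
\bigskip

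\noindent\textbf{Proof plan for \Cref{lemma:otherregisters}.} The strategy is a straightforward triangle inequality argument, followed by bookkeeping of norms and a count of the number of terms. First, I would observe that since both $\ket{\psi_B}$ and $\ket{\psi_{B'}}$ are given by the same formula except for the $B$ vs.\ $B'$ operators on $\reg{S}$, their difference can be written as
\begin{equation}
\ket{\psi_B}-\ket{\psi_{B'}} = \sum_{\y,\bb} \qty[(A_{\y,\bb})_{\reg{A}} \otimes \sum_{\x}(K^{\x})_{\y,\bb} \otimes \qty(B_{\x,\y,\bb}-B'_{\x,\y,\bb})]\, A\ket{0}_{\reg{A}}\ket{\mathrm{init}}_{\reg{US}}.
\end{equation}
Applying the triangle inequality across the sums over $\y, \bb, \x$, it suffices to bound the norm of each summand.

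Next, I would use the tensor product structure of the initial state $\ket{\mathrm{init}}_{\reg{US}} = \ket{\mathrm{init}_{U}}_{\reg U}\otimes\ket{\mathrm{init}_{S}}_{\reg S}$ and sub-multiplicativity of the norm across the three registers. On the $\reg{A}$ register, $A_{\y,\bb}\cdot A\ket{0}$ is a product of unitaries with interleaved rank-one projectors $\ketbra{b_i,y_i}$; each factor has operator norm at most $1$, so this vector has norm $\le 1$. On the $\reg{U}$ register, $(K^{\x})_{\y,\bb}$ is a product of $\tilde X$ and $\tilde Z$ Paulis acting on distinct single-qubit registers (with identity elsewhere), so it is unitary and $(K^{\x})_{\y,\bb}\ket{\mathrm{init}_{U}}$ has unit norm. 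Thus each summand is bounded in norm by $\norm{(B_{\x,\y,\bb}-B'_{\x,\y,\bb})\ket{\mathrm{init}_{S}}}$, which in turn is bounded by the maximum over all $(\x,\y,\bb)$ of that quantity.

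Finally, I would count the number of terms: $\y$ ranges over $[N]^{T} = 2^{nT}$ tuples, $\bb$ ranges over $\bits^{T}$ yielding $2^{T}$ options, and for each $\bb$, $\x$ ranges over $\bits^{\abs{\bb}}$ with at most $2^{T}$ options. The total number of terms is therefore at most $2^{nT}\cdot 2^{T}\cdot 2^{T} = 2^{(n+2)T}$. For $n\geq 2$ we have $(n+2)T \le 2nT$, so $2^{(n+2)T}\le 2^{2nT}$, giving the claimed bound. There is no substantive obstacle here; the lemma is really just a uniform triangle-inequality bound that converts a pointwise operator difference into a global state difference, and the loose factor $2^{2nT}$ is precisely what absorbs the crude enumeration of Kraus branches. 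Any finer bound (e.g.\ exploiting near-orthogonality of different $(\y,\bb)$ branches through the $\ketbra{b_i,y_i}$ projectors) is not needed since this factor will be dwarfed by the inverse-exponential error produced by the polynomial approximation theorems when they are substituted for the $B,B'$ operators.
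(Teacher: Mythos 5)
Your proposal is correct and follows essentially the same route as the paper: triangle inequality over the $\y,\bb,\x$ branches, factor the per-term norm across the $\reg{A}$, $\reg{U}$, $\reg{S}$ registers, and absorb the crude count of at most $2^{(n+O(1))T}$ terms into $2^{2nT}$ for suitably large $n$. The only deviation is that you treat $(K^{\x})_{\y,\bb}$ as unitary, while the paper (consistent with its written definition $K^x = \tilde{Z}^{1-x}+\tilde{X}^{x}$) only uses $\norm{K^{x_i}}_{\mathrm{op}}\le 2$ and pays an extra $2^{T}$ factor; either reading is harmless since the slack in $2^{2nT}$ absorbs it.
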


\begin{proof}
We apply the triangle inequality twice (once for the sum over $\y$ and then for the sum over $\x$).
We use that $\norm{K^{x_i}}_{\mathrm{op}}\leq 2$ and that $\norm{\ketbra{y_i} \cdot A}_{\mathrm{op}}\leq 1$ as well as the submultiplicativity of the Schatten $\infty$-norm. 
\begin{xalign}[eq:triangle_psi_psi'_v2]
	&\norm{\ket*{\psi_B} - \ket*{\psi_{B'}}} \nonumber \\
	&\hspace{0.2em} = \norm{\sum_{\substack{\y \in \left(\bits^{n}\right)^{T}\\\bb \in \bits^{T}}} \qty[\qty(A_{\y, \bb})_{\reg{A}} \otimes \qty(\sum_{\x\in \bits^{\abs{\bb}}} (K^\x)_{\y, \bb} \otimes \qty(B_{\x,\y, \bb}-B'_{\x,\y, \bb}) )] A \ket{0}_{\reg{A}} \ket{\mathrm{init}}_{\reg{US}}}\label{eq:tri-psi-psi'-im1_1} \\
    &\hspace{0.2em} \leq \sum_{\substack{\y \in \left(\bits^{n}\right)^{T}\\\bb \in \bits^{T}}} \norm{\qty[A_{\y,\bb} \cdot A \ket{0} \otimes \qty(\sum_{\x\in \bits^{\abs{\bb}}} (K^\x)_{\y, \bb} \ket{\bot}^{\otimes 2^n} \otimes \qty(B_{\x,\y, \bb}-B'_{\x,\y, \bb})\ket{\mathrm{init}_S} )] } \label{eq:tri-psi-psi'-im1_2}\\
    &\hspace{0.2em} = \sum_{\substack{\y \in \left(\bits^{n}\right)^{T}\\\bb \in \bits^{T}}} \qty[\norm{A_{\y, \bb} \cdot A \ket{0}} \cdot \norm{\qty(\sum_{\x\in \bits^{\abs{\bb}}} (K^\x)_{\y, \bb} \ket{\bot}^{\otimes 2^n} \otimes \qty(B_{\x,\y, \bb}-B'_{\x,\y, \bb})\ket{\mathrm{init}_S} )}] \label{eq:tri-psi-psi'-im1_3}\\
    &\hspace{0.2em} \leq \sum_{\substack{\y \in \left(\bits^{n}\right)^{T}\\\bb \in \bits^{T}}} \qty[\norm{A_{\y, \bb} \cdot A \ket{0}} \cdot \sum_{\x\in \bits^{\abs{\bb}}} \norm{\qty((K^\x)_{\y, \bb} \ket{\bot}^{\otimes 2^n} \otimes \qty(B_{\x,\y, \bb}-B'_{\x,\y, \bb})\ket{\mathrm{init}_S} )}] \label{eq:tri-psi-psi'-im1_4}\\
	&\hspace{0.2em} = \sum_{\substack{\y \in \left(\bits^{n}\right)^{T}\\\bb \in \bits^{T}}} \qty[\norm{A_{\y, \bb} \cdot A \ket{0}}\cdot \qty(\sum_{\x\in \bits^{\abs{\bb}}} \norm{(K^\x)_{\y, \bb} \ket{\bot}^{\otimes 2^n}} \cdot \norm{\qty(\left(B_{\x,\y, \bb}-B'_{\x,\y, \bb}\right)\ket{\mathrm{init}_S})} )] \label{eq:tri-psi-psi'-im3_5} \\
	&\hspace{0.2em} \leq 2^T \sum_{\substack{\y\in \left(\bits^{n}\right)^{T}\\\bb \in \bits^{T}\\\x\in \bits^{\abs{\bb}}}} \norm{\qty(B_{\x,\y, \bb}-B'_{\x,\y, \bb})\ket{\mathrm{init}_S}}\label{eq:tri-psi-psi'-im4_6} \\ 
	&\hspace{0.2em} \leq 2^{3T + nT} \max_{\y, \bb, \x}\norm{\qty(B_{\x,\y, \bb}-B'_{\x,\y,\bb})\ket{\mathrm{init}_S}} \label{eq:tri-psi-psi'-im1_7} \\
    &\hspace{0.2em} \leq 2^{2nT} \max_{\y, \x, \bb}\norm{\qty(B_{\x,\y, \bb}-B'_{\x,\y, \bb})\ket{\mathrm{init}_S}} \label{eq:tri-psi-psi'-im1_7}\,.
\end{xalign}
Here,~\eqref{eq:tri-psi-psi'-im1_2} and~\eqref{eq:tri-psi-psi'-im1_4}
are applications of the triangle inequality,~\eqref{eq:tri-psi-psi'-im1_3} and~\eqref{eq:tri-psi-psi'-im3_5} are applying that norm of a tensor product is the tensor product of the norms, and~\eqref{eq:tri-psi-psi'-im4_6} comes from bounding $\norm{A_{\y, \bb} A \ket{0}} \leq 1$, and $\norm*{(K^\x)_{y, b} \ket{\bot}^{\otimes 2^n}} \leq 2^T$ using the bound on $\norm{K_{x_i}}_{\mathrm{op}}$ and the submultiplicativity of the operator norm.
\end{proof}
Finally, we proof~\Cref{theorem:poly_approximation_v2}, which we re-state here.

\begin{theorem*}[Polynomial approximation of post-query state, \Cref{theorem:poly_approximation_v2}, restated]
	For every $\iota > 0$, there exists a family of polynomials $\mathrm{AKraus}_{\x}$ (that implicitly depends on $\iota$) in the operators $\tilde{\G}_{y_i}^2$ with 
        \begin{equation}
    \deg \left(\mathrm{AKraus}_{\x}\right) =O(n^2T^5\ln(T)\ln^2(1/\iota))\,,
    \end{equation}
    such that we can bound the distance between the true post-query state and the polynomial approximation by
	\begin{equation}
		\norm{\ket*{\psi_\mathrm{PQ}} - \ket*{\psi_{\mathrm{AKraus}}}} \leq \iota\,.
	\end{equation}
\end{theorem*}
\begin{proof}
Since $\ket{\mathrm{init}_S}=\mathsf{Con}_0 \ket{\mathrm{init}_S}$, we can directly apply~\Cref{lemma:Krausapproximation} for the operators $W_i=\tilde{\G}_{y_i}^2$ with $r=0$. 
Notice that the $\tilde{\G}_{y_i}^2$ are PSD, pairwise commute as they are all diagonal in the position basis, and satisfy~\eqref{eq:A-kicks-few-bosons} as well as~\eqref{eq:Anormbound} with a constant $M$ by~\Cref{fact:h-tilde-norm-bound}.
The result then follows from~\Cref{lemma:otherregisters} by choosing $\eps$ such that $\eps=\iota/(2^{2nT})$.
\end{proof}

\subsection{Extending the approximation to the sandwiched operator}
In this subsection, we prove~\Cref{cor:sandwiching}, which we first re-state. 
\begin{theorem*}[Sandwiching theorem,~\Cref{cor:sandwiching}, restated]
For every $\iota > 0$, there exist integers \linebreak $r =O(n^2T^5\ln^3(T)\ln^2(1/\iota))$ and $R=O(n^3T^6\ln^4(T)\ln^3(1/\iota))$, such that 
\begin{equation}
\norm{\ket*{\psi_{\mathrm{PQ}}} - \ket*{\tilde{\psi}_{R,r}}} \leq \iota.
\end{equation}
\end{theorem*}

\begin{proof}
	The proof consists of two hybrid arguments on the $\reg{S}$ register. 
    We will use the approximation in~\Cref{lemma:Krausapproximation} in both hybrids. 
    The first hybrid uses~\Cref{lemma:Krausapproximation} starting from a 0-condensate to obtain an approximation $\mathrm{AKraus}_{\varepsilon,\x_{T-j}}(\tilde{\G}^2_{y_1},\ldots \tilde{\G}^2_{y_{T-j}})$ of the operators $\prod_{i=1}^{T-j} e_{x_i}(\tilde{\G}^2_{y_i})$ restricted to $\mathsf{Con}_0$.
    Here, we define the substring $\x_{k}=(x_k,\ldots,x_1)$. Note that these approximation are for products of $e_{x_i}(\tilde{\G}_{y_i}^2)$ of varying lengths.
    By~\Cref{lemma:Krausapproximation} with error $\epsilon / 2T$, we have that the degree of $\mathrm{AKraus}_{\frac{\varepsilon}{2T},\x_{T-j}}$ is bounded by the following for all $j$,
    \begin{equation}
        \deg(\mathrm{AKraus}_{\frac{\varepsilon}{2T},\x_{T-j}})=O(T^3\ln(T)\ln^2(2T/\varepsilon)) =O(T^3\ln^3(T)\ln^2(1/\varepsilon))\,.
    \end{equation}
    Now set $r=2\deg(\mathrm{AKraus}_{\frac{\varepsilon}{2T},\x}) \geq 2 \max_{j} \deg(\mathrm{AKraus}_{\frac{\varepsilon}{2T},\x_{T-j}})$, and define the following hybrids:
	\begin{equation}
		D_j\defeq \prod_{i=T}^{T-j+1} \mathsf{Con}_r \cdot e_{x_i}(\tilde{\G}_{y_i}^2)\prod_{i=1}^{T-j} e_{x_i}(\tilde{\G}_{y_i}^2)\ket{\mathrm{init}_S}\,.
	\end{equation}
Note that the first $T-j$ terms commute so the order of expansion of the product does not matter.  Then because the initial state is a $0$-condensate,
\begin{align}
	\norm{\left(\prod_{i=T}^1 \mathsf{Con}_r \cdot e_{x_i}(\tilde{\G}_{y_i}^2) \cdot\mathsf{Con}_r-\prod_{i=1}^{T} e_{x_i}(\tilde{\G}_{y_i}^2)\right)\ket{\mathrm{init}_S}} = \norm{D_T - D_0} \leq \sum_{j=1}^T \norm{D_{j}-D_{j-1}}\,.
\end{align}
Then, we have
\begin{xalign}
	\norm{D_j-D_{j-1}}&\leq \norm{(\id - \mathsf{Con}_r)\cdot\prod_{i=1}^{T-j} e_{x_i}(\tilde{\G}_{y_i}^2)\ket{\mathrm{init}_S}}\\
	&\leq \frac{\varepsilon}{2T} +\norm{(\id - \mathsf{Con}_r)\cdot \mathrm{AKraus}_{\varepsilon,\x_{T-j}}(\tilde{\G}_{y_1}^2,\ldots,\tilde{\G}_{y_{T-j}}^2)\ket{\mathrm{init}_S}}\\
	&\leq \frac{\varepsilon}{2T} +\norm{(\id - \mathsf{Con}_r)\cdot \mathsf{Con}_r\cdot \mathrm{AKraus}_{\varepsilon\x_{T-j}}(\tilde{\G}_{y_1}^2,\ldots,\tilde{
    \G}_{y_{T-j}}^2)\ket{\mathrm{init}_S}}\\
	&\leq \frac{\varepsilon}{2T}\,.
\end{xalign}
Here we use the fact that $\mathrm{AKraus}_{\varepsilon,\x_{T-j}}$ is a polynomial of degree at most $r/2$ in $\tilde{\G}_y^2$ and therefore moves at most $r$ bosons from the $0$-momentum mode, so the state after applying $\mathrm{AKraus}_{\eps,\x_{T-j}}$ to the initial state is an $r$-condensate.  
Next, since the condensate projectors commute, we have that for all $m\geq 0$ that
\begin{equation} 
\norm{\mathsf{Con}_m\cdot \mathsf{Con}_R\cdot \tilde{\G}_y^2 \cdot \mathsf{Con}_R \cdot \mathsf{Con}_m}_{\mathrm{op}}=\norm{\mathsf{Con}_R \cdot \mathsf{Con}_m \cdot \tilde{\G}_y^2 \cdot \mathsf{Con}_m \cdot \mathsf{Con}_R}_{\mathrm{op}}\leq \norm{\mathsf{Con}_m\cdot \tilde{\G}_y^2\cdot \mathsf{Con}_m}_{\mathrm{op}}\,.
\end{equation}
Therefore, by~\Cref{fact:h-tilde-norm-bound}, the operator $W=\mathsf{Con}_R \cdot \tilde{\G}_y^2 \cdot \mathsf{Con}_R$ satisfies~\eqref{eq:Anormbound} for all $m>0$ with a constant $M = O(1)$.
Moreover, it is easy to check that $W$ is also PSD and satisfies~\eqref{eq:A-kicks-few-bosons} (again, because $\mathsf{Con}_R$ and $\mathsf{Con}_m$ commute).
Hence, from~\Cref{lemma:Krausapproximation} for $T=1$ (here, $T$ is the parameter in \Cref{lemma:Krausapproximation}, not the number of queries as in the rest of the paper) and a constant $M=O(1)$, there exists a polynomial $\mathrm{AKraus}_{\frac{\eps}{4T},x}$ such that:
\begin{xalign}
\label{eq:single_sandwiched_approximation}
    \mathrm{AKraus}_{\frac{\eps}{4T},x}(\mathsf{Con}_R \cdot \tilde{\G}_y^2 \cdot \mathsf{Con}_R) \cdot \mathsf{Con}_r &\approx_{\varepsilon/4T} e_x(\mathsf{Con}_R \cdot \tilde{\G}_y^2 \cdot \mathsf{Con}_R) \cdot \mathsf{Con}_r,~\text{and}\\
     \mathrm{AKraus}_{\frac{\eps}{4T},x}(\tilde{\G}_y^2) \cdot \mathsf{Con}_r &\approx_{\varepsilon/4T} e_x( \tilde{\G}_y^2 )\cdot \mathsf{Con}_r\,,
\end{xalign}
where the degree of $\mathrm{AKraus}_{\frac{\eps}{4T},x}$ will be $O((r+1) \ln(4T/\varepsilon))$.  Since this holds for all $R \geq r$, it will hold for $R=r+2\deg(\mathrm{AKraus}_{\frac{\eps}{4T},x})=r+O((r+1)\ln(4T/\varepsilon))$.
We now proceed with the second hybrid.  Let $C_{j}$ be defined as
\begin{equation}
	C_j \defeq \prod_{i=T}^{T-j+1} \mathsf{Con}_r \cdot e_{x_i}(\mathsf{Con}_R\cdot \tilde{\G}_{y_i}^2 \cdot \mathsf{Con}_R) \cdot \mathsf{Con}_r\cdot \prod_{i=T-j}^{0}\mathsf{Con}_r \cdot e_{x_i}(\tilde{\G}_{y_{i}}^2)\cdot \mathsf{Con}_r\,.
\end{equation}
Then $\norm{C_T - C_0}$ equals
\begin{equation}
	\norm{\prod_{i=T}^1\mathsf{Con}_r\cdot e_{x_i}(\mathsf{Con}_R \cdot \tilde{\G}_{y_i}^2 \cdot \mathsf{Con}_R)\cdot \mathsf{Con}_r-\prod_{i=T}^1 \mathsf{Con}_r \cdot e_{x_i}(\tilde{\G}_{y_i}^2)\cdot \mathsf{Con}_r}_{\mathrm{op}}\leq \sum_{j=1}^T \norm{C_j-C_{j-1}}_{\mathrm{op}}\,.
\end{equation}
Then, we find
\begin{xalign}
	&\norm{C_j-C_{j-1}}_{\mathrm{op}}\\
    &\hspace{5mm}\leq \norm{\mathsf{Con}_r\cdot e_{x_{T-j+1}}(\mathsf{Con}_R\cdot  \tilde{\G}_{y_{T-j+1}}^2 \cdot \mathsf{Con}_R) \cdot \mathsf{Con}_r-\mathsf{Con}_r\cdot e_{x_{T-j+1}}(\tilde{\G}^2_{y_{T-j+1}})\cdot \mathsf{Con}_r}_{\mathrm{op}}\\
	&\hspace{5mm}\begin{matrix*}\leq \norm{\left(e_{x_{T-j+1}}(\mathsf{Con}_R \cdot \tilde{\G}^2_{y_{T-j+1}} \cdot \mathsf{Con}_R)-\mathrm{AKraus}_{\frac{\eps}{4T},x_{T-j+1}}(\mathsf{Con}_R\cdot\tilde{\G}^2_{y_{T-j+1}}\cdot\mathsf{Con}_R)\right) \cdot \mathsf{Con}_r}_{\mathrm{op}}\\
    \hspace{5mm}+\norm{\left(e_{x_{T-j+1}}\qty(\tilde{\G}^2_{y_{T-j+1}})-\mathrm{AKraus}_{\frac{\eps}{4T},x_{T-j+1}}\qty(\tilde{\G}^2_{y_{T-j+1}})\right)\cdot\mathsf{Con}_r}_{\mathrm{op}} \\
	\hspace{5mm}+\norm{\begin{matrix*}[l]~\mathsf{Con}_r \cdot \mathrm{AKraus}_{\frac{\eps}{4T},x_{T-j+1}}\qty(\tilde{\G}^2_{y_{T-j+1}})\cdot\mathsf{Con}_r \\ ~~~~-\mathsf{Con}_r \cdot \mathrm{AKraus}_{\frac{\eps}{4T},x_{T-j+1}}\qty(\mathsf{Con}_R\cdot\tilde{\G}^2_{y_{T-j+1}}\cdot\mathsf{Con}_R)\cdot\mathsf{Con}_r\end{matrix*}}_{\mathrm{op}} \label{eq:c_hybrids_c}\end{matrix*}\\
	&\hspace{5mm}\leq \frac{\varepsilon}{4T}+\frac{\varepsilon}{4T}+0\\
	&\hspace{5mm}\leq \frac{\varepsilon}{2T}\,,
\end{xalign}
    for any $R\geq r+2\deg(\mathrm{AKraus}_{\frac{\eps}{4T},x_{T-j}})$.  Here, in~\eqref{eq:c_hybrids_c}, observe that when the  polynomials $\mathrm{AKraus}_{\frac{\eps}{4T},x}$ are being invoked, the state remains a $r + 2\deg(\mathrm{AKraus}_{\frac{\varepsilon}{4T}x_{T-j}}) \leq R$-condensate so the final term in the 3-fold expression is 0.  
    We are also using the definitions of $C_{j}$ and $C_{j-1}$, together with standard properties of the operator norm, and the polynomial approximations from \eqref{eq:single_sandwiched_approximation} and \Cref{lemma:Krausapproximation}. The result then follows from choosing $\varepsilon=\iota /(2^{2nT})$ and applying~\Cref{lemma:otherregisters}.
    We then upper bound $\mathrm{polylog}(T)\leq O(T)$ for notational convenience.
\end{proof}

\color{black}
\section{Proving the quasi-even property}

In this section, we prove that the purified state of any verification algorithm making $T$-queries to the $U$ oracle is also incredibly close to a $v/4$-quasi-even state, i.e., that the state is almost entirely supported on momentum Fock states that have at most $v/4$ modes occupied by an odd number of bosons.  The proof follows the general framework of \cite{hamoudi2023quantum} for unconstrained search.  At a high level, the authors first define a sequence of projectors $\Pi_0 \preceq \Pi_1 \preceq \ldots $, and then show that querying their oracle maps a state supported on $\Pi_{k}$ to a state that it mostly supported to $\Pi_k$ but with an extremely small support on $\Pi_{k+1}$.  

Our proof will utilize their framework, with the projectors $\mathsf{QE}_o$ playing the role of the $\Pi_{k}$ from \cite{hamoudi2023quantum}.  The main goal of the section will be to argue that the double-hopping operator is very unlikely to create unpaired bosons.  At a high level, the idea is to notice that the double-hopping operator $\tilde{\H}_y$ picks two momentum modes $x$ and $x'$, and moves two bosons from modes $x$ and $x'$ to $x \oplus y$ and $x' \oplus y$, with a normalization that corresponds roughly to the square root of product of the number of bosons that occupy modes $x$, $x'$, $x\oplus y$, and $x' \oplus y$.  Since the ``weight'' of a pair of modes corresponds to the number of bosons in the modes, when we apply the double-hopping operator to a condensate, most of the mass of the post-operation state corresponds to moving bosons in or out of the $0$-momentum mode. This mostly preserves the ``oddness'' of the quasi-even condensate.  

A subtle difference between our analysis is the fact that the oracle from \cite{hamoudi2023quantum} can only output states from $\Pi_{k+1}$ when given a state in $\Pi_{k}$ as input, which means that concluding the proof only requires a simple inductive argument.  Oracle queries in our setting will roughly correspond to applying $\e^{- \kappa \tilde{\H}_y}$ for some $y$.  While $\tilde{\H}_y$ itself only ``moves'' $4$ bosons, the Taylor series of $\e^{- \kappa \tilde{\H}_y}$ includes arbitrarily high powers of $\tilde{\H}_y$, although we can bound these tails using a convenient expansion of the exponential from perturbation theory.  

\subsection{The double-hopping operator is almost always paired on condensates}

The central premise of this next subsection is that the action of the double-hopping operator on condensates is dominated by the action of hopping two bosons into or out of the 0-momentum mode. This is because the action of a creation or annihilation operator on a Fock state is proportional to the number of bosons in said mode. Since almost all the bosons are in the 0-momentum mode, it follows that the behavior of the 0-momentum mode dominates the actions. However, it is important to note that ignoring the other modes would only produce an approximation that would be too coarse for the sampling probability upper bounds we need to achieve. Nevertheless, we can show that on a quasi-even condensate, the action will, with almost certainty, produce another quasi-even condensate where the ``oddness'' is unlikely to be changed. We emphasize that the ``invariance'' we prove is only true if the original state is a condensate. To see otherwise, consider a state where the $\ell$ bosons are paired up in $\ell/2$ distinct momentum modes. Then the action of a double-hopping operator will almost certainly increase the ``oddness'' of the state by 2.

Concretely, in this subsection, we show that applying one double-hopping operator to a quasi-even condensate produces another quasi-even condensate, where the oddness is unlikely to be changed.  This section captures the main idea behind the proof of the quasi-even property.  

We first define the $y$-double difference set of a tuple $w$ and prove a bound on the number of elements of the double-difference set. For simplicity, we will use the following notation:
\begin{equation}
    \Delta_{x, x'}^{(y)} \defeq 1_{x \oplus y} + 1_{x' \oplus y} - 1_{x} - 1_{x'} \in \ZZ^{2^n} \,.
\end{equation}
With this, we define the double difference set.
\begin{definition}[Double difference set]
    Given $y \in \{0, 1\}^{n}$ and $y \neq 0^n$, define the $y$-double difference set of a tuple $w$ as follows.
    \begin{equation}
        \mathrm{diff}^2_y(w) \defeq \qty{ u \in \ZZ_{\geq 0}^{2^{n}}~:~ \exists ~(x, x')~\text{with}~x \not\in \{x', x'\oplus y\}~\text{and}~w = u + \Delta^{(y)}_{x, x'}}\,.
    \end{equation}
    Furthermore, when $u \in \mathrm{diff}^2_y(w)$, we write $(x, x') = \mathrm{diff}^2_y(u, w)$ to represent the choice of $x, x'$ that map from $w$ to $u$.  Without loss of generality, we take $x$ to be the lower of the two in lexicographical ordering (since $x \neq x'$ by assumption).
\end{definition} 
We make the following observation.
\begin{claim}
\label{claim:double_difference_set_bound}
    For every $(R, o)$-quasi-even condensate $u$, and every $y \in \{0, 1\}^{n}$ and $y \neq 0^n$,
    \begin{equation}
        \abs{\mathrm{diff}^2_y(u)} \leq (R+1)^2\,.
    \end{equation}
\end{claim}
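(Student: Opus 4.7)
The plan is to reduce the counting of $|\mathrm{diff}_y^2(u)|$ to counting valid pairs of support indices of $u$, and then exploit the condensate property of $u$ to bound the size of its support. Concretely, each element $v \in \mathrm{diff}_y^2(u)$ arises from a pair $(x, x')$ with $x \notin \{x', x'\oplus y\}$ and $u = v + \Delta_{x,x'}^{(y)}$, which rearranges to $v = u - 1_{x\oplus y} - 1_{x'\oplus y} + 1_x + 1_{x'}$. For $v \in \ZZ_{\ge 0}^{2^n}$, we must have $u_{x\oplus y} \ge 1$ and $u_{x'\oplus y} \ge 1$.

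Making the change of variables $a = x \oplus y$ and $a' = x' \oplus y$, the condition $x \notin \{x', x' \oplus y\}$ becomes $a \neq a'$ (the condition $x \neq x' \oplus y$ is redundant with $a \neq a'$ after applying $\oplus y$ appropriately), and the nonnegativity requirements become $a, a' \in \mathrm{supp}(u) \defeq \{z : u_z > 0\}$. Thus, each $v \in \mathrm{diff}_y^2(u)$ arises from a pair $(a, a')$ of (possibly equal) elements of $\mathrm{supp}(u)$, so the number of such $v$ is at most the number of such pairs, which is at most $|\mathrm{supp}(u)|^2$.

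It remains to bound $|\mathrm{supp}(u)|$ using the fact that $u$ is an $R$-condensate. By definition, $u_0 \ge \ell - R$, so the total number of bosons residing outside mode $0$ is at most $R$. Since every non-zero mode other than mode $0$ contributes at least one boson, there can be at most $R$ such non-zero modes. Together with possibly the $0$-mode itself, this gives $|\mathrm{supp}(u)| \le R + 1$, and so $|\mathrm{diff}_y^2(u)| \le (R+1)^2$ as required.

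This proof is essentially combinatorial bookkeeping; there is no real obstacle beyond carefully tracking the change of variables $a = x \oplus y$, $a' = x' \oplus y$ and confirming that the excluded pairs $x \in \{x', x' \oplus y\}$ correspond exactly to excluded pairs $a \in \{a', a' \oplus y\}$, which only strengthens the bound (we are over-counting by allowing $a = a'$ when we estimate by $|\mathrm{supp}(u)|^2$). The quasi-even parameter $o$ plays no role in the argument, which only relies on the condensate structure of $u$.
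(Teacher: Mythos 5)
Your proof is correct and follows essentially the same route as the paper: both arguments observe that the two modes receiving the $-1$ entries (i.e., $x\oplus y$ and $x'\oplus y$) must lie in the support of $u$, which has size at most $R+1$ by the condensate property, giving at most $(R+1)^2$ pairs and hence elements of $\mathrm{diff}^2_y(u)$. Your aside that the constraint $x\neq x'\oplus y$ becomes ``redundant'' after the change of variables is not quite accurate (it becomes $a\neq a'\oplus y$), but this is harmless since you only use the exclusions to over-count, exactly as the paper does.
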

\begin{proof}
    Since $u$ is a condensate with at most $R$ bosons not in the $0$-momentum mode, there are at most $R+1$ ways to choose $x$ and $R+1$ was to choose $x'$ so that the resulting tuple has non-negative entries after adding $\Delta_{x, x'}^{(y)}$, because they have to be taken from modes of $u$ that are strictly positive.  We will typically take the crude upper bound of $(R+1)^2 \leq 2R^2$ for suitably large $R$.
\end{proof}
Recall the definition of $\mathsf{QEC}_{r,o}$ as the projector onto momentum Fock states that are $\leq r$ condensates with $\leq o$ many non-zero modes having an odd number of bosons. And $\mathsf{Con}_r$ and $\mathsf{QE}_o$ are the projectors only ensuring the former and latter properties, respectively. Lastly, $\mathsf{QEC}_{r,=o}$ and $\mathsf{QE}_{=o}$ are the projectors requiring exactly $o$ many non-zero modes.
\begin{lemma}
\label[lemma]{lem:double_hopping_preserves_quasi_even}
    For every $R \in [\ell]$, $y \in \{0, 1\}^{n}$, with $y \neq 0^n$, the following holds.
    \begin{equation}
        \norm{\sum_{o \in [R]} \left(\id - \mathsf{QE}_{=o}\right) \cdot \tilde{\H}_y \cdot \mathsf{QEC}_{(R, =o)}}_{\mathrm{op}} \leq \frac{R^5}{\sqrt{\ell}}\,.
    \end{equation}
\end{lemma}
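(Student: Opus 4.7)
The idea is that $\tilde{\H}_y$ acts almost like a ``paired'' operator on a condensate: to leading order it either moves two bosons in or out of the $0$-momentum mode together, or it simply exchanges a pair of occupied modes, leaving every mode parity intact. Concretely, I decompose $\tilde{\H}_y = \tilde{\H}_y^{\mathrm{par}} + \tilde{\H}_y^{\mathrm{bad}}$, where $\tilde{\H}_y^{\mathrm{par}}$ collects the terms $\tilde{a}_{x\oplus y}^\dagger \tilde{a}_{x'\oplus y}^\dagger \tilde{a}_x \tilde{a}_{x'}/\ell$ with $x=x'$ or $x\oplus x'=y$, and $\tilde{\H}_y^{\mathrm{bad}}$ gathers the rest. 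A short case check shows that these are exactly the terms for which the multiset $\{x,x',x\oplus y,x'\oplus y\}$ consists of two pairs, so $\tilde{\H}_y^{\mathrm{par}}$ preserves the parity of every momentum-mode occupation; in particular $(\id - \mathsf{QE}_{=o})\,\tilde{\H}_y^{\mathrm{par}}\,\mathsf{QE}_{=o} = 0$, and the entire par contribution to the displayed sum vanishes. Conversely, in every bad term the four indices are all distinct (using $y\neq 0$) and at most one of them equals $0^n$, so exactly four mode parities flip, and therefore $\tilde{\H}_y^{\mathrm{bad}}$ shifts the parity count $o$ by an integer $\Delta o$ with $|\Delta o|\leq 4$.

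Next I use this $4$-locality in $o$ to reduce the displayed operator norm to a single bound on $\|\tilde{\H}_y^{\mathrm{bad}} \cdot \mathsf{Con}_R\|_{\mathrm{op}}$. The projectors $\{\mathsf{QEC}_{(R,=o)}\}_o$ on the input and $\{\mathsf{QE}_{=o'}\}_{o'}$ on the output are each orthogonal families. For a unit vector $v$, set $v_o = \mathsf{QEC}_{(R,=o)}v$; grouping the output of $\sum_o (\id - \mathsf{QE}_{=o})\,\tilde{\H}_y^{\mathrm{bad}}\,v_o$ by $o'$, only those $o$ with $|o-o'|\leq 4$ and $o\neq o'$ survive (the $o'=o$ piece is annihilated by $\id - \mathsf{QE}_{=o}$), so at most $8$ values of $o$ contribute to each $o'$-block. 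A single Cauchy--Schwarz over these $\leq 8$ values of $o$, combined with Pythagorean summation across the orthogonal $o'$-blocks, yields
\begin{equation}
\norm{\sum_{o\in[R]} (\id - \mathsf{QE}_{=o})\,\tilde{\H}_y^{\mathrm{bad}}\,\mathsf{QEC}_{(R,=o)}}_{\mathrm{op}} \;\leq\; 8\,\norm{\tilde{\H}_y^{\mathrm{bad}} \cdot \mathsf{Con}_R}_{\mathrm{op}}.
\end{equation}

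To bound $\|\tilde{\H}_y^{\mathrm{bad}}\cdot \mathsf{Con}_R\|_{\mathrm{op}}$, I will exploit Hermiticity of $\tilde{\H}_y^{\mathrm{bad}}$ and the standard inequality $\|M\|_{\mathrm{op}}\leq \sqrt{\|M\|_{1,\infty}\|M\|_{\infty,1}}$ in the momentum Fock basis, with both maxima restricted to the (at most) $(R+4)$-condensate support that is actually reached. For a Fock condensate $\ket{u}$, the $u$-column $1$-norm of $\tilde{\H}_y^{\mathrm{bad}}$ equals $\ell^{-1} \sum \sqrt{n_x n_{x'}(n_{x\oplus y}+1)(n_{x'\oplus y}+1)}$, summed over bad ordered pairs $(x,x')$. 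The dominant case is $x=0$ (symmetrically $x'=0$): the factor $\sqrt{n_0}\leq\sqrt{\ell}$ cancels the $1/\ell$ prefactor, and a Cauchy--Schwarz over the at most $R$ non-zero modes of $u$ bounds the remaining sum by $\sqrt{R}\cdot\sqrt{R(R+1)}$, giving $O(R^2/\sqrt{\ell})$. The ``both $x,x'$ non-zero'' case contributes only $O(R^4/\ell)$, which is subdominant in the regime $\ell \gg R^4$ that we care about. Combining, $\|\tilde{\H}_y^{\mathrm{bad}}\cdot\mathsf{Con}_R\|_{\mathrm{op}} = O(R^2/\sqrt{\ell})$, which multiplied by the factor $8$ above is comfortably below the claimed $R^5/\sqrt{\ell}$.

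The main obstacle is the $o$-sum reduction in the second paragraph: a naive triangle inequality would cost a factor of $\sqrt{R+1}$, which is too much for large $R$. The trick is to use that $\tilde{\H}_y^{\mathrm{bad}}$ is $4$-local in the parity count, so each output level $o'$ receives contributions from only $O(1)$ input levels, replacing the $\sqrt{R+1}$ by an absolute constant. Once this reduction is in place, the column-$1$-norm calculation is a direct counting exercise that crucially exploits the condensate bound (at most $R$ occupied non-zero modes) to beat the naive $N^2$-size sum over $(x,x')$.
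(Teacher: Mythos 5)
Your proposal is sound and reaches the stated bound, but it is organized differently from the paper's argument, so a comparison is worthwhile. The shared core is the observation that the terms of $\tilde{\H}_y$ with $x = x'$ or $x = x' \oplus y$ preserve every mode parity and are therefore annihilated by $(\id - \mathsf{QE}_{=o})(\,\cdot\,)\mathsf{QE}_{=o}$; both proofs also exploit that in every surviving term at most one of the four indices $x, x', x\oplus y, x'\oplus y$ is the $0$-mode, so the condensate structure caps three of the four occupation factors by $R$ while only one can contribute the $\sqrt{\ell}$ that cancels the $1/\ell$ prefactor. Where you diverge: the paper fixes one value of $o$, acts on an arbitrary state supported on $\mathsf{QEC}_{(R,=o)}$, regroups amplitudes by the output Fock state, invokes the bound $\abs{\mathrm{diff}_y^2(w)} \leq 2R^2$ together with Cauchy--Schwarz and the occupation bound $u_x u_{x'}(u_{x\oplus y}+1)(u_{x'\oplus y}+1) \leq \ell R^3$, and then pays a factor of $R$ by a triangle inequality over the $o$-sum (obtaining $4R^{9/2}/\sqrt{\ell} \leq R^5/\sqrt{\ell}$). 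You instead split off the parity-preserving part as an operator identity, bound the remaining ``bad'' part by a Schur-type row/column $1$-norm estimate on the condensate support, and handle the $o$-sum by a band argument using $4$-locality in the parity count, which costs only an absolute constant instead of $R$. Your route is slightly sharper in that last step and cleanly separates the two mechanisms (parity preservation versus condensate suppression); the paper's route avoids any operator-level decomposition and keeps everything as a single state-norm computation.

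One accounting slip should be fixed: after noting that at most one of the four indices is $0^n$, you sort the bad terms into ``$x=0$ or $x'=0$'' versus ``both $x,x'$ non-zero'' and claim the latter contributes only $O(R^4/\ell)$. That misses the subcase $x = y$ or $x' = y$, where both annihilation indices are non-zero but a creation index is $0$, so the factor $(u_0+1) \approx \ell$ reappears and that subcase contributes $O(R^{5/2}/\sqrt{\ell})$, not $O(R^4/\ell)$. The correct split is ``exactly one of the four indices is $0$'' (contribution $O(R^{5/2}/\sqrt{\ell})$) versus ``none is $0$'' (contribution $O(R^4/\ell)$). With that correction your column-sum bound, the Schur test, and the factor $8$ from the band argument still land comfortably below $R^5/\sqrt{\ell}$ for the polynomially large $R$ used downstream (for very small $R$ your constants, like the paper's own $4R^{9/2}$, exceed $R^5$, but this is immaterial to the application).
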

\begin{proof}
    We argue that the operator norm is small by showing that for every input state $\ket{\psi}$, the norm of the state shrinks by an appropriate factor.  Since there are at most $R$ choices of $o$ that we are considering, it will suffice to bound the norm of $(\id - \mathsf{QE}_o) \cdot \tilde{\H}_y \ket{\psi} $ for a state $\ket{\psi}$ supported on $\mathsf{QEC}_{(R, =o)}$ for a fixed $o$ and then apply the Cauchy-Schwarz inequality at the end. We can write $\ket{\psi}$ as a superposition of momentum Fock states $\ket{u}$, where $u$ is a $(R, =o)$-quasi-even condensate, as follows,
    \begin{equation}
        \ket{\psi} = \sum_{u\in \mathsf{QEC}_{(R, =o)}} \alpha_u \ket{u}\,.
    \end{equation}
    Expanding out the definition of the double-hopping operator, after applying $\tilde{\H}_y$, we have the following state.
    \begin{equation}
        \tilde{\H}_y \ket{\psi} = \sum_{u\in \mathsf{QEC}_{(R, =o)}} \alpha_{u} \left(\frac{1}{\ell} \sum_{x, x' \in \{0, 1\}^{n}} \sqrt{u_x u_{x'} (u_{x \oplus y} + 1) (u_{x'\oplus y} + 1)} \ket*{u + \Delta^{(y)}_{x, x'}}\right)\,.
    \end{equation}
    Applying the projector onto states that have $o' \neq o$ many odd entries, will remove all terms in the sum that correspond to $x= x'$ or $x=x' \oplus y$, as these never change the number of odd indices.  Thus, we can re-group the terms and use the definition of the double difference set to get the following.
    \begin{xalign}
        &\left(\id - \mathsf{QE}_{=o}\right) \cdot \tilde{\H}_y \ket{\psi} \\
        &\hspace{5mm}= \sum_{u\in \mathsf{QEC}_{(R, =o)}}  \left(\frac{\alpha_u}{\ell} \sum_{x, x' \in \{0, 1\}^{n}} \delta({u + \Delta_{x, x'}^{(y)} \not\in \mathsf{QE}_{=o}}) \sqrt{u_x u_{x'} (u_{x \oplus y} + 1) (u_{x'\oplus y} + 1)} \ket*{u + \Delta^{(y)}_{x, x'}}\right)\\
        &\hspace{5mm}= \sum_{u \in \mathsf{QEC}_{(R, =o)}} \frac{\alpha_{u}}{\ell} \sum_{\substack{x, x' \in \{0, 1\}^{n}\\ x \not\in \{x', x'\oplus y\}}} \delta({u + \Delta_{x, x'}^{(y)} \not\in \mathsf{QE}_{=o}}) \sqrt{u_x u_{x'} (u_{x \oplus y} + 1)(u_{x' \oplus y} + 1)} \ket*{u + \Delta_{x, x'}^{(y)}}\\
        &\hspace{5mm}=\frac{1}{\ell} \sum_{w \in \mathsf{QEC}_{(R+2, o+2)}} \delta({w \not\in \mathsf{QE}_{=o}})\left(\sum_{\substack{u \in \mathsf{QEC}_{(R,=o)}\\
        u \in \mathrm{diff}_y^2(w)\\
        (x, x') = \mathrm{diff}_y^2(u, w)}} 2 \alpha_{u} \sqrt{u_x u_{x'} (u_{x \oplus y} + 1)(u_{x' \oplus y} + 1)}\right) \ket{w}\,. \label{eq:odd_increase_state_after_projecting}
    \end{xalign}
    Here, we first applied the definition of $\tilde{\H}_y$ and used the fact that $\mathsf{QE}_{=o}$ is diagonal in the momentum Fock basis.  Then we use the fact that whenever $x = x'$ or $x' \oplus y$, the number of odd indices is preserved, which means we can remove those terms from the sum. Then we regrouped terms that have the same value of $u + \Delta_{x, x'}^{(y)}$, noting that every pair $(x, x') = \mathrm{diff}^2_y(u, w)$ appears twice in the sum over $x$ and $x'$ in $\tilde{\H}_y$, which gives us the multiplicative factor of $2$.  
    
    We now bound the squared norm of this state, using the fact that the $\ket{w}$ are orthogonal to each other, and the number of $u$ in $\mathrm{diff}^2_y(w)$ is bounded.  
    
    \begin{xalign}
        &\norm{(\id - \mathsf{QE}_{=o}) \cdot \tilde{\H}_y \ket{\psi}}^2 \\
        &\hspace{5mm}= \frac{1}{\ell^2} \norm{\sum_{w \in \mathsf{QEC}_{(R+2, o+2)}} \delta(w \not\in \mathsf{QE}_{=o})\left(\sum_{\substack{u \in \mathsf{QEC}_{(R, =o)}\\
        u \in \mathrm{diff}_y^2(w) \\ (x, x') = \mathrm{diff}_y^2(u, w)}}2\alpha_{u} \sqrt{u_x u_{x'} (u_{x \oplus y} + 1)(u_{x' \oplus y} + 1)}\right) \ket{w}}^2\\
        &\hspace{5mm}= \frac{4}{\ell^2} \sum_{w \in \mathsf{QEC}_{(R+2, o+2)}} \delta(w \not\in \mathsf{QE}_{=o})\abs{\sum_{\substack{u \in \mathsf{QEC}_{(R, =o)}\\
        u \in \mathrm{diff}_y^2(w)\\ (x, x') = \mathrm{diff}_y^2(u, w)}} \alpha_{u} \sqrt{u_x u_{x'} (u_{x \oplus y} + 1)(u_{x' \oplus y} + 1)}}^2\\
        &\hspace{5mm}\leq \frac{4}{\ell^2} \sum_{w \in \mathsf{QEC}_{(R+2, o+2)}} \sum_{\substack{u \in \mathsf{QEC}_{(R, =o)}\\
        u \in \mathrm{diff}_y^2(w)}} 2R^2 \cdot u_x u_{x'} (u_{x \oplus y} + 1)(u_{x' \oplus y} + 1)\cdot \abs{\alpha_u}^2\label{eq:product_of_u_x_in_sum}\\
        &\hspace{5mm}\leq \frac{8}{\ell^2} \sum_{w \in \mathsf{QEC}_{(R+2, o+2)}} \sum_{u \in \mathrm{diff}_y^2(w)} \ell R^5 \cdot \abs{\alpha_u}^2\\
        &\hspace{5mm}\leq \frac{8}{\ell^2} \sum_{u} \abs{\alpha_u}^2 \ell R^5 \abs{\mathrm{diff}_y^2(u)}\\
        &\hspace{5mm}\leq \frac{16 R^7}{\ell}\,.
    \end{xalign}
    Here, in the first line, we expand the definition of the state from \eqref{eq:odd_increase_state_after_projecting}.  Then we use the definition of the squared norm with the fact that $\ket{w}$ are orthogonal to each other.  Then we use Cauchy-Schwarz to bound the square of the sum by $2R^2$ times the sum of the squares, as there are at most $(R+3) \leq 2R^2$ elements of $\mathrm{diff}_y^2(w)$, applying \Cref{claim:double_difference_set_bound} to a $R+2$ condensate. In the second to last line, we switch the order of the sums and use the fact that $\mathrm{diff}_y^2$ is reflexive --- i.e., $w \in \mathrm{diff}_y^2(u) \Leftrightarrow u \in \mathrm{diff}_y^2(w)$. Then, we again use the fact that there are at most $2 R^2$ elements in $\mathrm{diff}_y^2(u)$ and that the $\abs{\alpha_u}^2$ sum to $1$.  
    Going from \eqref{eq:product_of_u_x_in_sum} to the next line, we note that by the definition of $\mathrm{diff}_y^2$, for every $(x, x') \in \mathrm{diff}_y^2(w)$, at most one of $x$, $x'$, $x\oplus y$ and $x' \oplus y$ can be $0$ since $y \neq 0^{n}$.  Therefore, at least three of the entries $u_{x}$, $u_{x'}$, $u_{x\oplus y} + 1$ and $u_{x' \oplus y} + 1$ can be bounded from above by $R$, the number of non-zero bosons, and the fourth one can be bounded by $\ell$, the total number of bosons.  Therefore, for all $x, x' \in \mathrm{diff}_y^2(u, w)$, $u_x u_{x'} (u_{x \oplus y} + 1)(u_{x' \oplus y} + 1) \leq \ell R^3$.  

    Taking the square root of this expression gives us the following bound for all $o$. 
    \begin{equation}
        \norm{(\id - \mathsf{QE}_{=o}) \cdot \tilde{\H}_y \cdot \mathsf{QEC}_{(R, =o)}}_{\mathrm{op}} \leq \frac{4 R^{7/2}}{\sqrt{\ell}}\,.
    \end{equation}
    Since $o$ can only be as large as $R$, with one more application of the triangle inequality, we can bound the norm of the sum over all $o$ from $0$ to $R$ by $4 R^{9/2} / \sqrt{\ell}$.  To make the numbers cleaner, we take $R^{5} / \sqrt{\ell}$ as an upper bound. 
\end{proof}

We note that the above bound holds for $\tilde{\H}_y$, but for the next section we will want to apply it to $\tilde{\G}_y = \tilde{\H}_y + \id$, which is the positive semi-definite operator that corresponds to applying $\gamma_y^{(S)}$ in the position Fock basis.  

\begin{mathinlay}
\begin{corollary}\label[corollary]{cor:QEboundGsquared}
    For all $R \in [\ell]$ and $y \in \{0, 1\}^{n}$ with $y \neq 0^n$, the following holds
        \begin{equation}
        \norm{\sum_{o \in [R]} \left(\id - \mathsf{QE}_{=o}\right) \cdot \tilde{\G}_y^2 \cdot \mathsf{QEC}_{(R, =o)}}_{\mathrm{op}} \leq \frac{R^5}{\sqrt{\ell}}\,.
    \end{equation}
\end{corollary}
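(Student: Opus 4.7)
My plan is to leverage the identity $\tilde{\G}_y^2 = \tilde{\H}_y + \id$, valid on the $\ell$-boson subspace, to reduce the claim to \Cref{lem:double_hopping_preserves_quasi_even}. The crucial observation is that the identity contribution does not change the odd-mode count, so after projecting onto states with a different odd-count via $\id - \mathsf{QE}_{=o}$, it vanishes entirely.

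Concretely, I would write
\begin{equation}
\sum_{o \in [R]} (\id - \mathsf{QE}_{=o}) \cdot \tilde{\G}_y^2 \cdot \mathsf{QEC}_{(R, =o)}
= \sum_{o \in [R]} (\id - \mathsf{QE}_{=o}) \cdot \tilde{\H}_y \cdot \mathsf{QEC}_{(R, =o)}
+ \sum_{o \in [R]} (\id - \mathsf{QE}_{=o}) \cdot \mathsf{QEC}_{(R, =o)}.
\end{equation}
For the second sum, recall that $\mathsf{QEC}_{(R, =o)} = \mathsf{Con}_R \cdot \mathsf{QE}_{=o}$, and that $\mathsf{QE}_{=o}$ and $\mathsf{Con}_R$ are both diagonal in the momentum Fock basis, so they commute. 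Then
\begin{equation}
(\id - \mathsf{QE}_{=o}) \cdot \mathsf{Con}_R \cdot \mathsf{QE}_{=o}
= \mathsf{Con}_R \cdot (\id - \mathsf{QE}_{=o}) \cdot \mathsf{QE}_{=o} = 0,
\end{equation}
since $(\id - \mathsf{QE}_{=o}) \mathsf{QE}_{=o} = 0$ for any projector $\mathsf{QE}_{=o}$. Thus the identity-induced term contributes nothing, and
\begin{equation}
\sum_{o \in [R]} (\id - \mathsf{QE}_{=o}) \cdot \tilde{\G}_y^2 \cdot \mathsf{QEC}_{(R, =o)}
= \sum_{o \in [R]} (\id - \mathsf{QE}_{=o}) \cdot \tilde{\H}_y \cdot \mathsf{QEC}_{(R, =o)}.
\end{equation}
Applying \Cref{lem:double_hopping_preserves_quasi_even} then immediately yields the claimed bound of $R^5/\sqrt{\ell}$.

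There is essentially no obstacle here — the step is a short algebraic manipulation. The only thing to double check is that we are indeed working within the $\ell$-boson subspace (so that the identity $\tilde{\G}_y^2 = \tilde{\H}_y + \id$ holds exactly rather than $\tilde{\G}_y^2 = \tilde{\H}_y + \tilde{N}/\ell$), which is guaranteed by how $\mathsf{Con}_R$ and hence $\mathsf{QEC}_{(R,=o)}$ are defined in \Cref{def:low-collision-subspace}. No additional estimates are needed and the constants carry over verbatim.
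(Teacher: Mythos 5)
Your proposal is correct and matches the paper's own proof essentially verbatim: both expand $\tilde{\G}_y^2 = \tilde{\H}_y + \id$ on the $\ell$-boson subspace, kill the identity term via $(\id - \mathsf{QE}_{=o})\,\mathsf{QE}_{=o} = 0$ (using that the projectors are diagonal in the momentum Fock basis and commute with $\mathsf{Con}_R$), and then invoke \Cref{lem:double_hopping_preserves_quasi_even}. No gaps.
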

\begin{proof}
    Using, $\tilde{\G}_y^2 = \tilde{\H}_y + \id$, we can apply \Cref{lem:double_hopping_preserves_quasi_even} in a straight-forward way as follows.
    \begin{xalign}
        &\norm{\sum_{o \in [R]} \left(\id - \mathsf{QE}_{=o}\right) \cdot \tilde{\G}_y^2 \cdot \mathsf{QEC}_{(R, =o)}}_{\mathrm{op}} \\
        &\hspace{1em}= \norm{\sum_{o \in [R]} \left(\id - \mathsf{QE}_{=o}\right) \cdot \tilde{\H}_y \cdot \mathsf{QEC}_{(R, =o)} + \sum_{o \in [R]} (\id - \mathsf{QE}_{=o})\cdot \mathsf{QE}_{=o} \cdot \mathrm{Con}_{R}}_{\mathrm{op}}\\
        &\hspace{1em}= \norm{\sum_{o \in [R]} \left(\id - \mathsf{QE}_{=o}\right) \cdot \tilde{\H}_y \cdot \mathsf{QEC}_{(R, =o)}}_{\mathrm{op}}\\
        &\hspace{1em}\leq \frac{R^5}{\sqrt \ell}\,.
    \end{xalign}
    Here, we simply expand out the definition of $\tilde{\G}_y^2$ and use the fact that $\mathsf{QE}_{=o}$ is a projector and therefore $(\id - \mathsf{QE}_{=o}) \mathsf{QE}_{=o} = 0$.  
\end{proof}
\end{mathinlay}

\subsection{Dyson series expansion of the exponential}

The previous section demonstrates that applying the double-hopping operator on a quasi-even condensate is likely to preserve the number of odd entries.  Unfortunately, although applying the double-hopping operator does not yield a state with high overlap on states with $\neq o$ odd entries, it does potentially increase the norm of the input state within the $=o$ subspace.  This is problematic to a naive inductive argument because after applying many double-hopping operators, one might have to scale the bound from \Cref{lem:double_hopping_preserves_quasi_even} by the growing norm of the state.

On the other hand, we expect that this is not a fundamental problem, because the operators that the algorithm applies are unitary, and even more so, operators of the form $\exp(-\tilde{\G}^2_y )$ clearly have bounded operator norm.  In this section, we show how to apply the Dyson series for the exponential function to lift \Cref{lem:double_hopping_preserves_quasi_even} to certain functions of the double-hopping operator.

\begin{mathinlay}
\begin{fact}[Application of Duhamel's principle]
    The following identity holds\footnote{To prove this statement, construct a function $f$ such that $f(0) = \e^{-tA}$ and $f(t) = \e^{-t(A+V)}$. Then the proof follows by observing $f(t) = f(0) + \int_{0}^t f'(s) \dd s$.} for all matrices $A$ and $V$, and $t \geq 0$,
    \begin{equation}
        \exp\left(-t\cdot (A+V)\right) = \exp\left(-t\cdot A\right) - \int_{0 \leq s \leq t} \dd s\ \exp\left(-(t-s) \cdot (A+V)\right) \cdot V \cdot \exp\left(-s \cdot A\right)\,.
    \end{equation}
\end{fact}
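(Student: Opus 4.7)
The plan is to apply the fundamental theorem of calculus to a carefully chosen operator-valued auxiliary function, exactly as suggested by the footnote. Define
\begin{equation}
    f(s) \defeq \exp(-s(A+V)) \cdot \exp(-(t-s) A)
\end{equation}
for $s \in [0, t]$. Direct evaluation at the endpoints gives $f(0) = \exp(-tA)$ and $f(t) = \exp(-t(A+V))$, which matches the normalization suggested in the footnote. So the target identity is equivalent to establishing
\begin{equation}
    f(t) - f(0) = -\int_{0}^{t} \exp(-s(A+V)) \cdot V \cdot \exp(-(t-s)A) \, ds,
\end{equation}
since the change of variables $s \mapsto t-s$ in this integral (which leaves $ds$ invariant up to the sign absorbed by flipping the limits) produces precisely the right-hand side of the claim.

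Next I would compute $f'(s)$ with the product rule, using that for any matrix $M$, $\tfrac{d}{ds} \exp(-sM) = -M \exp(-sM) = -\exp(-sM) \cdot M$ (so $M$ commutes with $\exp(-sM)$ by the power-series definition). Applying this to both factors gives
\begin{equation}
    f'(s) = -(A+V)\exp(-s(A+V))\exp(-(t-s)A) + \exp(-s(A+V)) \cdot A \cdot \exp(-(t-s)A).
\end{equation}
Since $A+V$ commutes with $\exp(-s(A+V))$, pulling it through and cancelling the $A$ terms yields the telescoping identity
\begin{equation}
    f'(s) = -\exp(-s(A+V)) \cdot V \cdot \exp(-(t-s)A).
\end{equation}
Integrating this expression from $0$ to $t$ via the fundamental theorem of calculus and rearranging recovers the Duhamel identity, up to the aforementioned change of variables in the dummy integration variable.

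There is essentially no obstacle here: the entire argument is a clean application of the product rule together with the elementary commutativity of $M$ with $\exp(-sM)$, and all derivatives/integrals are well-defined because the exponentials of matrices (on finite-dimensional spaces) are analytic in $s$. The only care needed is in the bookkeeping of the two interpolation variables $s$ and $t-s$, and in making the change of variables at the end to match the exact form of the claimed identity.
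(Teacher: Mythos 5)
Your proof is correct and follows exactly the approach sketched in the paper's footnote: you define the interpolating function $f(s) = \exp(-s(A+V))\exp(-(t-s)A)$ with the prescribed endpoints, differentiate using the commutativity of a matrix with its own exponential to obtain the telescoping expression $f'(s) = -\exp(-s(A+V))\,V\,\exp(-(t-s)A)$, integrate, and change variables $s \mapsto t-s$ to match the stated form. This is the same argument the paper intends.
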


Applying this identity repeatedly gives us the following Dyson series for a small perturbation to $A$.  

\begin{fact}[Dyson's formula for the exponential function]
\label[fact]{lem:dyson_formula}
    For all matrices $A$ and $V$ and $\kappa \geq 0$, we can express $\exp\left(-\kappa \cdot (A + V)\right) - \exp\left(-\kappa \cdot A\right)$ as
    \begin{multline}
        \sum_{k = 1}^{\infty} (-1)^{k} \int_{0\leq s_1 \leq \ldots \leq s_k \leq \kappa} \dd \mathbf{s}\ \exp\left(-(\kappa-s_{k}) A\right) \cdot \underbrace{V\cdot \exp\left(-(s_k - s_{k-1}) A\right)\cdot V \ldots V}_{k~\mathrm{times}} \exp\left(-s_1 A\right) \,.
    \end{multline}
\end{fact}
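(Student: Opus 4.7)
The plan is to iterate Duhamel's principle and then verify that the leftover remainder shrinks to zero. Introduce the propagator notation $F(t) \defeq \exp(-tA)$ and $G(t) \defeq \exp(-t(A+V))$, so that Duhamel reads $G(t) = F(t) - \int_0^t G(t-s)\, V\, F(s)\, ds$. Applying this identity once at $t = \kappa$ already gives the $k=1$ term of the Dyson series together with a remainder $R_1 = \int_0^\kappa G(\kappa - s_1)\, V\, F(s_1)\, ds_1$ sitting outside.

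Next I would apply Duhamel a second time to the inner $G(\kappa - s_1)$, which produces an integral over $u \in [0,\kappa - s_1]$; the substitution $s_2 = s_1 + u$ reshapes it into an integral over the ordered simplex $\{0 \leq s_1 \leq s_2 \leq \kappa\}$, matching the shape the lemma predicts for the $k=2$ term, plus a new remainder with one extra copy of $V$ and one more time variable. Iterating this substitution $k$ times yields the partial-sum identity
\begin{equation}
\exp(-\kappa(A+V)) - \exp(-\kappa A) = \sum_{j=1}^{k} (-1)^j I_j + (-1)^{k+1} R_k,
\end{equation}
where $I_j$ is the $j$-th Dyson integrand displayed in the statement and $R_k$ is an integral over the ordered $k$-simplex whose integrand has exactly one outermost factor of $G(\kappa - s_k)$ with all other exponentials being copies of $F$.

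The main obstacle is controlling $R_k$ in the limit $k \to \infty$. The ordered simplex $\{0 \leq s_1 \leq \cdots \leq s_k \leq \kappa\}$ has Lebesgue volume $\kappa^k / k!$, and the integrand is bounded in operator norm by $\|V\|^k$ times $\sup_{s \in [0,\kappa]} \|G(s)\| \cdot \sup_{s \in [0,\kappa]} \|F(s)\|^k$. In every use of this fact later in the paper, both $A$ (e.g.\ the sandwiched $\mathsf{Con}_R \cdot \tilde{\H}_y \cdot \mathsf{Con}_R$) and $V$ act on the fixed $\ell$-boson subspace, so $\|A\|$ and $\|V\|$ are finite and $\|F(s)\|, \|G(s)\|$ are uniformly bounded on $[0,\kappa]$ by some constant $C$. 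Hence $\|R_k\| \leq C^{k+1} (\kappa \|V\|)^k / k!$, which tends to $0$ superexponentially. The partial sums therefore converge absolutely to $\exp(-\kappa(A+V)) - \exp(-\kappa A)$, giving Dyson's formula as claimed.
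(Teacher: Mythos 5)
The paper states the Dyson formula as a Fact, invoking Duhamel's principle and saying only that "applying this identity repeatedly gives" the expansion; it does not spell out the iteration. Your proposal fills in exactly the intended argument — iterate Duhamel, reshape each nested integral into the ordered simplex by the substitution $s_{j+1} = s_j + u$, and kill the remainder by comparing the simplex volume $\kappa^k/k!$ against $\|V\|^k$ times uniform propagator bounds. This is the standard proof, and it is sound.

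Two bookkeeping points, neither of which affects validity. First, a single application of Duhamel at $t=\kappa$ does \emph{not} yet produce the $k=1$ Dyson term: it yields $G(\kappa)-F(\kappa) = -\int_0^\kappa G(\kappa - s_1)\,V\,F(s_1)\,ds_1$, whose integrand still has $G$ (not $F$) in the outermost slot, so it is pure remainder; the $k=1$ integrand $\int_0^\kappa F(\kappa-s_1)\,V\,F(s_1)\,ds_1$ emerges only after expanding that inner $G$ once more. Second, if you set $R_m \defeq \int_{0\leq s_1\leq\cdots\leq s_m\leq\kappa} G(\kappa-s_m)\,V\,F(s_m-s_{m-1})\,V\cdots V\,F(s_1)\,d\mathbf{s}$ (with $m$ factors of $V$), the recursion reads $R_m = I_m - R_{m+1}$, and the partial-sum identity is
\begin{equation}
  G(\kappa)-F(\kappa) \;=\; \sum_{j=1}^{k}(-1)^j I_j \;+\; (-1)^{k+1} R_{k+1}\,,
\end{equation}
i.e.\ the remainder after extracting $k$ Dyson terms carries $k+1$ (not $k$) copies of $V$. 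Your sign exponent $(-1)^{k+1}$ is right; only the subscript on $R$ is off by one. Since the bound $\|R_{m}\| \leq C\,(C\kappa\|V\|)^m/m!$ holds for every $m$ and vanishes as $m\to\infty$ regardless, the conclusion is unchanged. Your remark that boundedness of the relevant operators is automatic in the paper's applications (everything lives on the fixed $\ell$-boson subspace, and the statement is for matrices) is correct and is exactly what is needed for the series to converge absolutely.
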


The goal of Dyson's formula is to expand the exponential $\exp\left(-\kappa \cdot (A+V)\right)$ in such a way that the expansion only includes products of the term $V$ and exponentials of $A$. The use case is when $V$ is a small perturbation applied with $A$, and Dyson's formula allows us to expand the exponential and apply bounds we know about $V$.
In mathematical physics, $A$ is sometimes called the ``free'' part of the Hamiltonian, and $V$ the ``interacting'' part.  
\end{mathinlay}

We apply this to show that the exponential of the double-hopping operator does not change the number of odd entries, except with low probability.

\begin{lemma}
\label[lemma]{lem:exponential_does_not_increase_odds}
    The following bound holds for all $\kappa \leq 1$, $R \leq \ell^{1/10}/2$, and $1 \leq d \leq R$.
    \begin{equation}
        \norm{\sum_{o \in [R]} \mathsf{QE}_{\geq o+d} \cdot  \exp\left(-\kappa \left(\mathsf{Con}_{R} \cdot \tilde{\G}_y^2 \cdot \mathsf{Con}_{R} \right) \right) \cdot \mathsf{QE}_{o}}_{\mathrm{op}} \leq \left(\frac{2R^{5}}{\sqrt{\ell}}\right)^{d/4}\,.
    \end{equation}
\end{lemma}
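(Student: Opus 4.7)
The plan is to apply Dyson's formula (Fact~\ref{lem:dyson_formula}) after decomposing the operator $M \defeq \mathsf{Con}_{R} \cdot \tilde{\G}_y^2 \cdot \mathsf{Con}_{R}$ into a part that preserves the quasi-even block structure and a small perturbation that changes it. Since the projectors $\mathsf{QE}_{=o}$ are pairwise orthogonal and commute with $\mathsf{Con}_R$, we can write $M = A + V$ where $A \defeq \sum_o \mathsf{QE}_{=o}\, M\, \mathsf{QE}_{=o}$ is a direct sum of PSD blocks (hence PSD itself) and $V \defeq M - A = \sum_o (\id - \mathsf{QE}_{=o})\, M\, \mathsf{QE}_{=o}$ captures the transitions between distinct QE-blocks. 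By \Cref{cor:QEboundGsquared} (pulling $\mathsf{Con}_R$ out on the left costs no norm), we obtain $\|V\|_{\mathrm{op}} \leq R^5/\sqrt{\ell}$. Moreover, since $\tilde{\G}_y^2 = \tilde{\H}_y + \id$ and the double-hopping operator alters the parities of at most four momentum modes per application, each factor of $V$ can shift the quasi-even index by at most $4$.

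Applying Dyson's formula yields
\begin{equation}
\exp(-\kappa M) = \exp(-\kappa A) + \sum_{k \geq 1}(-1)^k \int_{0 \leq s_1 \leq \cdots \leq s_k \leq \kappa} \dd\mathbf{s}\, \exp(-(\kappa-s_k)A)\,V\cdots V\,\exp(-s_1 A).
\end{equation}
The zeroth term vanishes under the sandwich $\mathsf{QE}_{\geq o+d}(\cdot)\mathsf{QE}_o$ because $A$ preserves each $\mathsf{QE}_{=o}$ block. For the $k$-th term, the exponentials $\exp(-sA)$ commute with the QE projectors while each of the $k$ factors of $V$ can change the odd count by at most $4$, so the term contributes between $\mathsf{QE}_o$ and $\mathsf{QE}_{\geq o+d}$ only when $k \geq \lceil d/4 \rceil$. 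The standard norm bounds $\|\exp(-sA)\|_{\mathrm{op}} \leq 1$ (using $A \succeq 0$ and $s \geq 0$), submultiplicativity, and the measure $\kappa^k/k!$ of the ordered simplex combine to bound the operator norm of each $k$-th Dyson term by $(\kappa\|V\|_{\mathrm{op}})^k/k!$.

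To control the sum over $o \in [R]$, we decompose $\mathsf{QE}_o = \sum_{j' \leq o} \mathsf{QE}_{=j'}$ and $\mathsf{QE}_{\geq o+d} = \sum_{j \geq o+d} \mathsf{QE}_{=j}$ and regroup by transition height $m = j - j'$; the number of $o$-values giving each such pair is bounded by $\min(m - d + 1, \, 4k)$ because of the reach of the $k$-th Dyson term, and a Cauchy--Schwarz step across the shifts $m$ yields a clean $\mathrm{poly}(k)$ prefactor. Summing the resulting tail $\sum_{k \geq \lceil d/4\rceil} \mathrm{poly}(k) \cdot (\kappa\|V\|_{\mathrm{op}})^k/k!$ is geometric-like and dominated by its leading term: the hypothesis $R \leq \ell^{1/10}/2$ forces $\kappa\|V\|_{\mathrm{op}} \leq R^5/\sqrt{\ell} \leq 1/32$, so the polynomial prefactor is absorbed into the $2^{d/4}$ slack present in the target bound $(2R^5/\sqrt{\ell})^{d/4}$. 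The main obstacle will be executing this last bookkeeping step cleanly enough that the combinatorial factor from the sum over $o$---which via a naive triangle inequality would introduce a factor of $R$ and ruin the bound for small $d$---is properly absorbed through the band-width structure of the $k$-th Dyson term; this is precisely where the constraint on $R$ is invoked.
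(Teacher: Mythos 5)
Your proposal follows essentially the same route as the paper's proof: the same splitting of $\mathsf{Con}_R\cdot\tilde{\G}_y^2\cdot\mathsf{Con}_R$ into a QE-block-diagonal part $A$ and an off-block perturbation $V$ with $\norm{V}_{\mathrm{op}}\le R^5/\sqrt{\ell}$ via \Cref{cor:QEboundGsquared}, the same Dyson expansion in which every term containing fewer than $d/4$ factors of $V$ is annihilated by the sandwich $\mathsf{QE}_{\geq o+d}(\cdot)\mathsf{QE}_o$, and the same geometric tail bound using $\norm{\e^{-sA}}_{\mathrm{op}}\le 1$, the simplex volume $\kappa^k/k!\le 1$, and $R^5/\sqrt{\ell}\le 1/32$. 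The only divergence is your explicit band-regrouping to control the sum over $o$, which the paper sidesteps by simply proving the uniform fixed-$o$ bound (all that is used downstream), so this is a presentational rather than substantive difference.
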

\begin{proof}
    We use \Cref{lem:dyson_formula} applied to the following decomposition of $\mathsf{Con}_{R} \cdot \tilde{\G}_y^2 \cdot \mathsf{Con}_{R}$,
    \begin{align}
        \mathsf{Con}_{R} \cdot \tilde{\G}_y^2 \cdot \mathsf{Con}_{R} &= \underbrace{\sum_{o \in [R]} \mathsf{QEC}_{(R, =o)} \cdot \tilde{\G}_y^2 \cdot \mathsf{QEC}_{(R, =o)}}_{A} + \underbrace{\sum_{\substack{o, o' \in [R]\\o \neq o'}} \mathsf{QEC}_{(R, =o)} \cdot \tilde{\G}_y^2 \cdot \mathsf{QEC}_{(R, =o')}}_{V}\,.
    \end{align}
    We note that $V$ is equal to $\mathsf{Con}_{R} \sum_{o'} (\id - \mathsf{QE}_{=o'}) \cdot \tilde{\G}_y^2 \cdot \mathsf{QE}_{=o'}$, whose operator norm is bounded by \Cref{cor:QEboundGsquared}.  We further note that using the fact that $\tilde{\G}_y^2$ only moves $2$ bosons (and therefore affects at most $4$ momentum modes), and applying the definition of $A$, we have that
    \begin{equation}
        V \cdot \mathsf{QE}_{o} = \mathsf{QE}_{o+4} \cdot V \cdot \mathsf{QE}_o\quad \text{and}\quad A \cdot \mathsf{QE}_o = \mathsf{QE}_o \cdot A \cdot \mathsf{QE}_o\,.
    \end{equation}
    Repeatedly pushing the $\mathsf{QE}_o$ operator through, this implies that $\mathsf{QE}_{\geq o+d} \cdot \e^{-\kappa A} \cdot \mathsf{QE}_o = 0$ and for all $d' < d/4$ and $0\leq s_{1}\leq \ldots\leq s_{d'} \leq \kappa$,
    \begin{equation}
        \mathsf{QE}_{\geq o+d} \cdot \exp\left(-(\kappa-s_{d'}) \cdot A\right) \cdot \underbrace{V \ldots V}_{d' \text{ times}} \cdot \exp\left(-s_1 \cdot A\right) \cdot \mathsf{QE}_{o} = 0\,.
    \end{equation}
    Now applying \Cref{lem:dyson_formula} with $A$ and $V$ as defined in the equation above, we have the following bound on the norm for a fixed $o$,
    \begin{xalign}
        &\norm{\mathsf{QE}_{\geq o+d} \cdot \exp\left(-\kappa \left(\mathsf{Con}_{R} \cdot \tilde{\G}_y^2 \cdot \mathsf{Con}_{R} \right) \right) \cdot \mathsf{QE}_{o}}_{\mathrm{op}} \\
        &\hspace{2mm}= \norm{\sum_{k \geq d/4} (-1)^k \int_{0 \leq s_1 \leq \ldots \leq s_k \leq \kappa} \dd\mathbf{s}\ \mathsf{QE}_{\geq o+d} \cdot \exp\left(-(\kappa-s_k) \cdot A\right) \cdot V \ldots V \cdot \exp\left(-s_1 \cdot A\right) \cdot \mathsf{QE}_{o}}_{\mathrm{op}}\\
        &\hspace{2mm}\leq \sum_{k \geq d/4} \int_{0 \leq s_1 \leq \ldots \leq s_k \leq \kappa} \dd \mathbf{s}\ \norm{\mathsf{QE}_{\geq o+d} \cdot \exp\left(-(\kappa-s_k) \cdot A\right) \cdot V \ldots V \cdot \exp\left(-s_1 \cdot A\right) \cdot \mathsf{QE}_{o}}_{\mathrm{op}}\\
        &\hspace{2mm}\leq \sum_{k \geq d/4} \int_{0 \leq s_1 \leq \ldots \leq s_k \leq \kappa} \dd \mathbf{s}\ \norm{V}^k_{\mathrm{op}} \\
        &\hspace{2mm}\leq \sum_{k \geq d/4} \left(\frac{R^{5}}{\sqrt{\ell}}\right)^{k}=\frac{1}{1-R^5/\sqrt{\ell}}\left(\frac{R^{5}}{\sqrt{\ell}}\right)^{\lceil d/4 \rceil}\\
        &\hspace{2mm}\leq 2\left(\frac{R^{5}}{\sqrt{\ell}}\right)^{d/4} \\
        &\hspace{2mm} \leq \left(\frac{2R^5}{\sqrt{\ell}}\right)^{d/4}\,.
    \end{xalign}
    Here, we first apply Dyson's formula. Recall that $\tilde{\G}_y^2$ can only increase the number of odd indices by $4$. Moreover, by definition, $A$ does not change the number of odd indices and hence neither does any exponential in $A$. Therefore, since the expression is sandwiched between $\mathsf{QE}_{\geq o+d}$ and $\mathsf{QE}_{o}$, the only way to get a non-zero output is to be a product where the number of $V$ terms is at least $d/4$. This means that (1) we can eliminate $\exp(-\kappa \cdot A)$ arising from Dyson's formula, as well as (2) start the sum at $k\geq d/4$. Then we apply the triangle inequality to move the norm into the sum and integral.  Then we use the fact that because $\tilde{\G}_y^2$ is a positive operator (and sandwiching a positive operator by projectors yields a positive operator), $\exp\left(-t \cdot A \right)$ has operator norm at most $1$.  We then use the fact that the operator norm is sub-multiplicative and \Cref{cor:QEboundGsquared} bounds the operator norm of $V$ by $\frac{R^{5}}{\sqrt{\ell}}$.  Finally, we use the fact that $\int_{0 \leq s_1 \leq \ldots \leq s_k \leq \kappa} \dd s = \kappa^{k} / k! \leq 1$ whenever $\kappa \leq 1$ and bound the geometric series.
\end{proof}

\subsection{Recursive bounds on the oddness of products of operators}

The previous lemma bounds the probability that a state having $o$ odd indices ends up in the subspace of momentum Fock states with $o+d$ many odd indices, for every $d$.  We can apply a generalization of the stars-and-bars style counting argument to show that the probability of applying the operator many times does not increase the number of odd indices either.

\begin{mathinlay}
\begin{lemma}
\label[lemma]{lem:stars-and-bars-recursive-drop}
Let $\Pi_0 \preceq \Pi_1 \preceq \ldots$ be a sequence of projectors on a Hilbert space $\Hh$. Let $A_1, \ldots, A_t$ be a family of operators on $\Hh$ such that $\norm{A_i}_{\mathrm{op}} \leq 1$. Furthermore, suppose that for all integers $a, b \geq 0$, $\norm{\qty(\id - \Pi_{a+b}) A_i \Pi_a}_{\mathrm{op}} \leq \eps^{b+1}$. Then for all integers $\lambda \geq 0$ and states $\ket{\psi}$ such that $\Pi_{0} \ket{\psi} = \ket{\psi}$,
\begin{equation}
    \norm{\qty(\id - \Pi_\lambda) A_t \ldots A_1 \ket{\psi}} \leq {{t + \lambda} \choose t - 1} \eps^{\lambda+1}\,.
\end{equation}
\end{lemma}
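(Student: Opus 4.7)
The plan is to proceed by induction on $t$, proving the claim for all $\lambda \ge 0$ at each step. The base case $t = 1$ is immediate: the hypothesis applied with $a = 0$ and $b = \lambda$ gives $\norm{(\id - \Pi_\lambda) A_1 \Pi_0}_{\mathrm{op}} \le \eps^{\lambda+1}$, matching the claimed bound since $\binom{1+\lambda}{0} = 1$.

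For the inductive step, the key move is to split $A_t$ by whether its input lies in the range of $\Pi_\lambda$ or not. Writing $\id = \Pi_\lambda + (\id - \Pi_\lambda)$ and pushing through yields
\begin{equation*}
(\id - \Pi_\lambda) A_t \cdots A_1 \ket{\psi} = (\id - \Pi_\lambda) A_t \Pi_\lambda A_{t-1} \cdots A_1 \ket{\psi} + (\id - \Pi_\lambda) A_t (\id - \Pi_\lambda) A_{t-1} \cdots A_1 \ket{\psi}.
\end{equation*}
The second summand factors as $(\id - \Pi_\lambda) A_t$ acting on $(\id - \Pi_\lambda) A_{t-1} \cdots A_1 \ket{\psi}$; using $\norm{A_t}_{\mathrm{op}} \le 1$ and the inductive hypothesis at parameters $(t-1,\lambda)$ bounds it by $\binom{t+\lambda-1}{t-2}\eps^{\lambda+1}$. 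For the first summand, I would telescope $\Pi_\lambda = \sum_{j=0}^{\lambda} Q_j$ with $Q_j \defeq \Pi_j - \Pi_{j-1}$ and $\Pi_{-1} \defeq 0$, rewriting it as $\sum_{j=0}^{\lambda} (\id - \Pi_\lambda) A_t Q_j \cdot A_{t-1} \cdots A_1 \ket{\psi}$, and then apply the triangle inequality.

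Each term in this sum is a product of two factors that each admit a clean bound. The factor $(\id - \Pi_\lambda) A_t Q_j$ is contained in $(\id - \Pi_\lambda) A_t \Pi_j$, so the hypothesis with $a = j$ and $b = \lambda - j$ gives operator norm at most $\eps^{\lambda - j + 1}$. The factor $Q_j A_{t-1} \cdots A_1 \ket{\psi}$ has norm at most $1$ when $j = 0$; when $j \ge 1$, the containment $Q_j \preceq \id - \Pi_{j-1}$ together with the inductive hypothesis at $(t-1, j-1)$ gives the bound $\binom{t+j-2}{t-2}\eps^{j}$ (and the $j=0$ case agrees with this formula under $\binom{t-2}{t-2} = 1$). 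Multiplying kills the $j$-dependence in the exponent and leaves $\binom{t+j-2}{t-2}\eps^{\lambda+1}$. Summing and invoking the hockey stick identity $\sum_{j=0}^{\lambda} \binom{t+j-2}{t-2} = \binom{t+\lambda-1}{t-1}$ bounds the first summand by $\binom{t+\lambda-1}{t-1}\eps^{\lambda+1}$. Finally, combining the two summand bounds and applying Pascal's rule $\binom{t+\lambda-1}{t-1} + \binom{t+\lambda-1}{t-2} = \binom{t+\lambda}{t-1}$ closes the induction.

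The main subtlety to anticipate is the choice of decomposition. A more naive resolution-of-identity approach---inserting $\sum_j Q_{k_i}$ after every operator and summing over all ``level paths'' $(k_0, \ldots, k_t)$---produces an infinite tail coming from paths that return to high levels many times, which does not telescope to the claimed constant (one only gets something like $\binom{t+\lambda}{t-1}\eps^{\lambda+1}/(1-\eps)$). The two-way split above is what makes the bookkeeping close exactly: it absorbs all the ``already above $\lambda$ before the last operator'' contribution in one shot via the inductive hypothesis, reducing the explicit $j$-sum to the at-most-$\lambda$ contributions where the hockey stick identity applies cleanly.
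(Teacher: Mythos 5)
Your proof is correct and follows essentially the same route as the paper's: induction on the number of operators, splitting the input of the last operator between $\Pi_\lambda$ and $\id - \Pi_\lambda$, telescoping $\Pi_\lambda$ into the projector differences $\Pi_j - \Pi_{j-1}$, and bounding each term as a product of an $\eps^{\lambda-j+1}$ operator-norm factor and an inductive $\binom{t+j-2}{t-2}\eps^{j}$ factor before summing via the hockey-stick and Pascal identities. The paper's proof is the same argument with the induction phrased from $t$ to $t+1$, so there is nothing substantive to reconcile.
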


\begin{proof}
    The statement for $t = 1$ is trivially true. Next define $\Pi_{-1} = 0$ and for$~j = 0, \ldots, \lambda,~\Delta_j = \Pi_j - \Pi_{j-1}$.
    Observe that $\Delta_j = \Pi_j (\id - \Pi_{j-1})$ by containment of the projectors. Then, write
    \begin{equation}\begin{aligned}
        \norm{\qty(\id - \Pi_\lambda) A_{t+1} \ldots A_1 \ket{\psi}} &\leq  \underbrace{\sum_{j = 0}^\lambda \norm{\qty(\id - \Pi_\lambda) A_{t+1} \Delta_j A_t \ldots A_1 \Pi_0\ket{\psi}} }_{(A)} \\
        &\hspace{2em}+ \underbrace{\norm{\qty(\id - \Pi_\lambda) A_{t+1} (\id - \Pi_\lambda) A_t \ldots A_1 \Pi_0 \ket{\psi}}}_{(B)}
    \end{aligned}\end{equation}
    We handle each of the terms separately. To bound $(B)$, we can use induction to bound
    \begin{xalign}
        \norm{\qty(\id - \Pi_\lambda) A_{t+1} (\id - \Pi_\lambda) A_t \ldots A_1 \Pi_0 \ket{\psi}} &\leq \norm{\qty(\id - \Pi_\lambda) A_{t+1} (\id - \Pi_\lambda)}_{\mathrm{op}} \cdot \norm{(\id - \Pi_\lambda) A_t \ldots A_1 \Pi_0}_{\mathrm{op}} \\
        &\leq \underbrace{\norm{A_{t+1}}}_{\le 1} \cdot {{t + \lambda} \choose t - 1} \eps^{\lambda+1}\,.
    \end{xalign} 
    For the terms in $(A)$, by induction,
    \begin{xalign}
        \sum_{j = 0}^\lambda \norm{\qty(\id - \Pi_\lambda) A_{t+1} \Delta_j A_t \ldots A_1 \Pi_0}_{\mathrm{op}} &= \sum_{j = 0}^\lambda \norm{\qty(\id - \Pi_\lambda) A_{t+1} \Pi_j (\id - \Pi_{j-1}) A_t \ldots A_1 \Pi_0}_{\mathrm{op}} \\
        &\leq \sum_{j = 0}^\lambda \norm{\qty(\id - \Pi_\lambda) A_{t+1} \Pi_j}_{\mathrm{op}} \cdot \norm{(\id - \Pi_{j-1}) A_t \ldots A_1 \Pi_0}_{\mathrm{op}} \\
        &\leq \sum_{j = 0}^\lambda \eps^{\lambda - j+1} \cdot {{t + j - 1} \choose {t-1}} \cdot \eps^{j} \\
        &= {{t + \lambda} \choose {t}} \eps^{\lambda +1}.
    \end{xalign}
    Adding up all the terms
    \begin{equation}
        (A)+ (B) \leq \qty({{t + \lambda } \choose t} + {{t + \lambda} \choose {t-1}}) \eps^{\lambda +1} = {{t + \lambda + 1} \choose t} \eps^{\lambda + 1}\,.
    \end{equation}
    completing the inductive proof.
\end{proof}
\end{mathinlay}

\subsection{Wrapping up the proof}

In this section, we apply the previous two lemmas to show that algorithms that make polynomial in $n$ queries to the $U$ oracle are exponentially close to the $\mathsf{QE}_{v/4}$ subspace.

\begin{lemma}
    The following bound holds for all $y \in \{0, 1\}^{n}$, with $y \neq 0^{n}$, $\kappa \leq 1$, $R \leq \ell^{1/10}/2$ and $1 \leq d \leq R$,
    \begin{equation}
        \norm{\sum_{o} \mathsf{QE}_{\geq o + d} \cdot \sqrt{1 - \frac{1}{2} \exp\left(-\kappa \left(\mathsf{Con}_{R} \cdot 
        \tilde{\G}_y^2 \cdot \mathsf{Con}_R\right)\right)} \cdot  \mathsf{QE}_{o}}_{\mathrm{op}} \leq \left(\frac{64R^5}{\sqrt{\ell}}\right)^{d/4}\,.
        \label{eq:square_root_of_exponential}
    \end{equation} 
    \label[lemma]{lem:crefdoesnotwork}
\end{lemma}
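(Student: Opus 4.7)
The plan is to reduce the statement to Lemma~\ref{lem:exponential_does_not_increase_odds} by expanding the square root as a convergent binomial series and applying that lemma (or a minor strengthening of it) termwise.

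First, I observe that inspecting the proof of Lemma~\ref{lem:exponential_does_not_increase_odds} reveals it gives the following slightly stronger statement: for every $\beta \geq 0$,
\begin{equation}
\norm{\sum_{o \in [R]} \mathsf{QE}_{\geq o+d} \cdot \exp(-\beta\, (\mathsf{Con}_R \cdot \tilde{\G}_y^2 \cdot \mathsf{Con}_R)) \cdot \mathsf{QE}_{o}}_{\mathrm{op}} \leq \sum_{j \geq d/4} \frac{\beta^j}{j!}\left(\frac{R^{5}}{\sqrt{\ell}}\right)^{j}.
\end{equation}
The hypothesis $\kappa \leq 1$ in Lemma~\ref{lem:exponential_does_not_increase_odds} was only used at the last step, to collapse the simplex integral $\int_{0 \le s_1 \le \cdots \le s_j \le \kappa} d\mathbf{s} = \kappa^j/j!$ into the trivial bound $1$; keeping this factor gives the statement above for arbitrary $\beta$.

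Second, writing $W \defeq \mathsf{Con}_R \cdot \tilde{\G}_y^2 \cdot \mathsf{Con}_R$, the operator $W$ is PSD, so $\norm{\exp(-\kappa W)}_{\mathrm{op}} \leq 1$ and the generalized binomial series
\begin{equation}
\sqrt{1 - \tfrac{1}{2}\exp(-\kappa W)} = \sum_{k=0}^{\infty} \binom{1/2}{k}\left(-\tfrac{1}{2}\right)^{k} \exp(-k\kappa W)
\end{equation}
converges in operator norm. The $k=0$ term is the identity, which is annihilated when sandwiched between $\mathsf{QE}_{\geq o+d}$ and $\mathsf{QE}_{o}$ for $d \geq 1$. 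Using $\abs{\binom{1/2}{k}} \leq 1$ and applying the first step termwise with $\beta = k\kappa$, the left-hand side of \eqref{eq:square_root_of_exponential} is bounded by
\begin{equation}
\sum_{k \geq 1} \frac{1}{2^{k}} \sum_{j \geq d/4} \frac{(k\kappa)^{j}}{j!}\left(\frac{R^{5}}{\sqrt{\ell}}\right)^{j} = \sum_{j \geq d/4} \frac{(\kappa R^{5}/\sqrt{\ell})^{j}}{j!} \sum_{k \geq 1} \frac{k^{j}}{2^{k}}.
\end{equation}

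Third, I use the Eulerian-polynomial identity $\sum_{k \geq 0} k^{j} x^{k} = x A_{j}(x)/(1-x)^{j+1}$ at $x = 1/2$, together with $A_{j}(1/2) \leq A_{j}(1) = j!$, to bound $\sum_{k \geq 1} k^{j}/2^{k} \leq 2^{j} j!$. This turns the double sum into a geometric series:
\begin{equation}
\sum_{j \geq d/4} \frac{(\kappa R^{5}/\sqrt{\ell})^{j}}{j!} \cdot 2^{j} j! = \sum_{j \geq d/4}\left(\frac{2\kappa R^{5}}{\sqrt{\ell}}\right)^{j}.
\end{equation}
The hypothesis $R \leq \ell^{1/10}/2$ gives $R^{5}/\sqrt{\ell} \leq 1/32$, so together with $\kappa \leq 1$ the ratio $2\kappa R^{5}/\sqrt{\ell} \leq 1/16$ and the tail of the geometric series is at most $2\bigl(2\kappa R^{5}/\sqrt{\ell}\bigr)^{d/4} \leq 2\bigl(2 R^{5}/\sqrt{\ell}\bigr)^{d/4}$. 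Since $2 \leq 32^{d/4}$ for every $d \geq 1$, this is at most $(64 R^{5}/\sqrt{\ell})^{d/4}$, as claimed.

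The main obstacle I anticipate is the bookkeeping in the first step: justifying that the generalized Dyson argument produces the stated $\beta^{j}/j!$ factor, including checking that the auxiliary operator $A$ from that proof remains PSD (so the exponentials $\exp(-tA)$ retain norm at most $1$ even for $t \gg 1$). Once that is in place, everything else is a routine estimate of the Eulerian-type sum and a geometric-series tail bound, with very loose constants.
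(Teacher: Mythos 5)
Your proposal is correct, and it reaches the paper's bound with the same constant, but it handles the key intermediate step differently. Both you and the paper expand $\sqrt{1-\tfrac12 e^{-\kappa W}}$ (with $W=\mathsf{Con}_R\cdot\tilde{\G}_y^2\cdot\mathsf{Con}_R$) as a generalized binomial series, kill the $k=0$ term against the orthogonal projectors, use $\abs{\binom{1/2}{k}}\le 1$, and finish with a geometric-type tail estimate. Where you diverge is in bounding the $k$-th term $\mathsf{QE}_{\ge o+d}\,\e^{-k\kappa W}\,\mathsf{QE}_o$: the paper treats $\e^{-k\kappa W}=(\e^{-\kappa W})^k$ as a product of $k$ operators each satisfying the fixed-$\kappa$ bound of \Cref{lem:exponential_does_not_increase_odds}, and composes these via the recursive ``stars-and-bars'' drop lemma (\Cref{lem:stars-and-bars-recursive-drop}), which produces the combinatorial factor $\binom{k+d}{k-1}$ and the sum $\sum_k \abs{\binom{1/2}{k}} 2^{-k}\binom{k+d}{k-1}\le 2^{d+1}$. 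You instead re-open the Dyson argument at time $\beta=k\kappa$, retaining the simplex-integral factor $\beta^j/j!$ that the paper discards via $\kappa\le 1$, and then control $\sum_{k\ge1}k^j 2^{-k}\le 2^j j!$ with the Eulerian-polynomial identity. Your strengthening is legitimate: the only $\kappa$-dependence in the proof of \Cref{lem:exponential_does_not_increase_odds} is the step $\kappa^k/k!\le 1$, and the ingredients you need ($A\succeq 0$ so $\norm{\e^{-tA}}_{\mathrm{op}}\le 1$ for all $t\ge 0$, $\norm{V}_{\mathrm{op}}\le R^5/\sqrt{\ell}$ from \Cref{cor:QEboundGsquared}, and the vanishing of terms with fewer than $d/4$ insertions of $V$) are all independent of the time parameter. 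Your arithmetic at the end also checks out ($2\kappa R^5/\sqrt{\ell}\le 1/16$ under $R\le\ell^{1/10}/2$, tail at most $2(2R^5/\sqrt{\ell})^{d/4}\le(64R^5/\sqrt{\ell})^{d/4}$ since $2\le 32^{d/4}$). The trade-off: your route avoids invoking \Cref{lem:stars-and-bars-recursive-drop} for this lemma entirely, at the cost of not being able to cite \Cref{lem:exponential_does_not_increase_odds} as a black box — you must restate and reprove it with the $\beta^j/j!$ factor — whereas the paper's route reuses its existing lemmas unchanged, which is also why it keeps the recursive-drop machinery (it needs it again later anyway, e.g.\ in \Cref{lem:single_query_low_collision} and \Cref{thm:low-collision-overlap}). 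One shared caveat you inherit rather than introduce: the passage from the fixed-$o$ bound to the operator with $\sum_o$ inside the norm is treated loosely in the paper's own proofs, and your write-up handles it at the same level of rigor.
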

\begin{proof}
    We perform the proof by applying the previous two lemmas to the polynomial expansion of the square root.  The Taylor/binomial expansion of the square root is given by
    \begin{equation}
        \sqrt{1 + x} = \sum_{k = 0}^{\infty} \binom{1/2}{k} x^{k}\,.
    \end{equation}
    Applying it to the expression in \eqref{eq:square_root_of_exponential}, we have the following
    \begin{equation}
    \sqrt{1 - \frac{1}{2} \exp\left(-\kappa \left(\mathsf{Con}_R \cdot \tilde{\G}_y^2 \cdot \mathsf{Con}_R\right)\right)} = \sum_{k = 0}^{\infty} \binom{1/2}{k} \left(\frac{-1}{2}\right)^k \exp\left(-\kappa \left(\mathsf{Con}_R \cdot \tilde{\G}_y^2 \cdot \mathsf{Con}_R\right)\right)^{k}\,.
    \end{equation}
    We can bound the norm in \eqref{eq:square_root_of_exponential} using \Cref{lem:stars-and-bars-recursive-drop}, with $\lambda = d-1$, projectors $\Pi_{0} = \mathsf{QE}_{o} \preceq \Pi_{1} = \mathsf{QE}_{o + 1} \preceq \ldots \preceq \mathsf{QE}_{o + d - 1}$, and $A_1 = \ldots = A_{t} = \exp\qty(-\kappa\qty(\mathsf{Con}_R \cdot \tilde{\G}_y^2 \cdot \mathsf{Con}_R))$.
    \begin{xalign}
        &\norm{\mathsf{QE}_{\geq o + d} \cdot \sqrt{1 - \frac{1}{2}\exp\left(-\kappa \left(\mathsf{Con}_R \cdot \tilde{\G}_y^2 \cdot \mathsf{Con}_R\right)\right) } \cdot \mathsf{QE}_o}_{\mathrm{op}}\\
        &\hspace{5mm}= \norm{\sum_{k = 0}^{\infty} \binom{1/2}{k} \left(\frac{-1}{2}\right)^k \cdot \mathsf{QE}_{\geq o + d} \cdot \exp\left(-\kappa \left(\mathsf{Con}_R \cdot 
        \tilde{\G}_y^2 \cdot \mathsf{Con}_R\right)\right)^{k} \cdot \mathsf{QE}_o}_{\mathrm{op}}\\
        &\hspace{5mm}\leq \sum_{k = 1}^{\infty} \abs{\binom{1/2}{k}}\left(\frac{1}{2}\right)^k \cdot \norm{\mathsf{QE}_{\geq o + d} \cdot \exp\left(-\kappa\left(\mathsf{Con}_R \cdot \tilde{\G}_y^2 \cdot \mathsf{Con}_R\right)\right)^{k} \cdot \mathsf{QE}_o}_{\mathrm{op}}\label{eq:square_root_bound_line_c}\\
        &\hspace{5mm}\leq \left(\frac{2R^5} {\sqrt{\ell}}\right)^{d/4} \sum_{k = 1}^{\infty} \abs{\binom{1/2}{k}} \left(\frac{1}{2}\right)^{k} \binom{k + d}{k-1}\\
        &\hspace{5mm}\leq \left(\frac{2R^5} {\sqrt{\ell}}\right)^{d/4} \cdot 2^{d+1}\\
        &\hspace{5mm}\leq \left(\frac{64R^{5}}{\sqrt{\ell}}\right)^{d/4}\,.
    \end{xalign}
    In the first line, we apply the Taylor expansion of the square root from before, noting that the norm in the $k = 0$ term in \eqref{eq:square_root_bound_line_c} is $\norm{\mathsf{QE}_{\geq o + d} \cdot \mathsf{QE}_o}_{\mathrm{op}} = 0$. Then, we apply the triangle inequality.  In the third line, we apply \Cref{lem:stars-and-bars-recursive-drop}. together with the bound of $(2R^5/\sqrt{\ell})^{d/4}$ from \Cref{lem:exponential_does_not_increase_odds}.  For the final line, we bound the infinite sum by
    \begin{xalign}
        \sum_{k = 1}^{\infty} \abs{\binom{1/2}{k}} \left(\frac{1}{2}\right)^k\binom{k + d}{k-1} &\leq \sum_{k = 1}^{\infty} \left(\frac{1}{2}\right)^{k} \binom{k+d}{k-1}\\
        &=\frac{1}{2} \sum_{k = 0}^{\infty} \left(\frac{1}{2}\right)^{k} \binom{(d+2)+k-1}{k}\\
        &= \frac{1}{2} \sum_{k = 0}^{\infty} \left(-\frac{1}{2}\right)^{k} \binom{-(d+2)}{k}\\
        &= \frac{1}{2} \left(1 - \frac{1}{2}\right)^{-d-2}\\
        &= 2^{d+1}\,.
    \end{xalign}
    Here, in the first line, we use the fact that $\abs{\binom{1/2}{k}}\leq 1$ for all $k \geq 1$. Then we re-index the sum to start from $k = 0$, and write $k + d + 1 = (d+2) + k - 1$.  Then we use the equality $\binom{-a}{b} = (-1)^{b} \binom{a + b - 1}{b}$, with $a = d+2$ and $b = k$.  Finally, we use the binomial expansion, $(1-x)^{n} = \sum_{k = 0}^{\infty} \binom{n}{k} x^{k}$, with $n = -(d+2)$ and $x = -1/2$.  
\end{proof}
Recall, $e_0(z)$ and $e_1(z)$ were defined as
\begin{xalign}
    e_0(z) &\defeq 1-\e^{-\kappa z}, \quad\text{and} \\
    e_1(z) &\defeq \sqrt{\e^{-\kappa z}(2-\e^{-\kappa z})} = \sqrt{2} \e^{-\kappa z/2} \sqrt{1 - \e^{-\kappa z}/2}\,.
\end{xalign}
Since the Kraus operators $E_0$ and $E_1$ are defined in terms of the expressions given in~\Cref{lem:exponential_does_not_increase_odds} and~\Cref{lem:crefdoesnotwork}, we can combine these two lemmas to show that the two Kraus operators $E_0$ and $E_1$ also do not increase the number of odd entries by too much, and we then have the following lemma.
\begin{lemma}[Single query preserves quasi-evenness]
\label[lemma]{lem:single_query_low_collision}
    For all $o$, $R \le \ell^{1/10}/2$, $1 \le d \le R$, and operators $A$ with $\norm{A}_{\mathrm{op}} \leq 1$ acting on $\reg{A}$, the following inequality holds.
    \begin{equation}
        \norm{\mathsf{QE}_{\geq o+d} \left(\sum_{y} A \cdot \ketbra{y} \otimes \begin{pmatrix}\tilde{X}_{\reg{U}_y} \otimes e_1(\mathsf{Con}_R \cdot \tilde{\G}^{2}_{y} \cdot \mathsf{Con}_R) \\
        \hspace{1em}+ \tilde{Z}_{\reg{U}_y} \otimes e_{0}(\mathsf{Con}_R \cdot \tilde{\G}^2_{y} \cdot \mathsf{Con}_R)\end{pmatrix}\right) \mathsf{QE}_{o}}_{\mathrm{op}}
        \leq \left(\frac{2^{14}R^{5}d}{\sqrt{\ell}}\right)^{d/4}\,.
    \end{equation}
    Here, we have omitted $\id_{\reg{A}}$ on the projectors $\mathsf{QE}_{o}$ and $\mathsf{QE}_{\geq o+d}$.
\end{lemma}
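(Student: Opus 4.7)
The plan is to reduce the single-query bound to per-$y$ bounds on the Kraus components $e_0(W_y)$ and $e_1(W_y)$ sandwiched by $\mathsf{QE}$ projectors, where $W_y = \mathsf{Con}_R \cdot \tilde{\G}_y^2 \cdot \mathsf{Con}_R$, and then handle each component using the two gap lemmas already established (\Cref{lem:exponential_does_not_increase_odds} and \Cref{lem:crefdoesnotwork}). Because $\{\ketbra{y}\}_{y \in \bits^n}$ are mutually orthogonal projectors on $\reg{A}$, while $\mathsf{QE}_{\geq o+d}$ and $\mathsf{QE}_o$ act only on $\reg{S}$, they commute with the $\tilde{X}_{\reg{U}_y}, \tilde{Z}_{\reg{U}_y}$ and with $A\ketbra{y}$, and I can write the full operator as $\sum_y A\ketbra{y} \otimes \mathsf{QE}_{\geq o+d} \mathcal{B}_y \mathsf{QE}_o$ where $\mathcal{B}_y = \tilde{X}_{\reg{U}_y} \otimes e_1(W_y) + \tilde{Z}_{\reg{U}_y} \otimes e_0(W_y)$. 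Using that $\norm{A} \leq 1$ and $\norm{\sum_y \ketbra{y} \otimes X_y}_{\mathrm{op}} = \max_y \norm{X_y}_{\mathrm{op}}$ for any family $\{X_y\}$, the full norm is bounded by $\max_y \norm{\mathsf{QE}_{\geq o+d} \mathcal{B}_y \mathsf{QE}_o}_{\mathrm{op}}$, and then triangle inequality together with $\norm{\tilde{X}_{\reg{U}_y}} = \norm{\tilde{Z}_{\reg{U}_y}} = 1$ further reduces the problem to showing
\begin{equation}
\max_y \left(\norm{\mathsf{QE}_{\geq o+d}\, e_0(W_y)\, \mathsf{QE}_o}_{\mathrm{op}} + \norm{\mathsf{QE}_{\geq o+d}\, e_1(W_y)\, \mathsf{QE}_o}_{\mathrm{op}}\right) \leq \left(\frac{2^{14} R^5 d}{\sqrt{\ell}}\right)^{d/4}.
\end{equation}

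The $e_0$ contribution is immediate. Since $e_0(z) = 1 - \exp(-\kappa z)$ and $d \geq 1$ forces $\mathsf{QE}_{\geq o+d}\mathsf{QE}_o = 0$, the $1$ drops out, and \Cref{lem:exponential_does_not_increase_odds} directly yields a bound of $(2R^5/\sqrt{\ell})^{d/4}$. The $e_1$ contribution is the main obstacle and requires a factorization argument. I use the identity $e_1(z) = \sqrt{2}\exp(-\kappa z/2)\sqrt{1 - \exp(-\kappa z)/2}$, where both factors commute because both are functions of $W_y$. Inserting the resolution of identity $\id = \sum_{k \geq 0} \mathsf{QE}_{=k}$ between the two factors produces
\begin{equation}
\mathsf{QE}_{\geq o+d}\, e_1(W_y)\, \mathsf{QE}_o = \sqrt{2} \sum_{k \geq 0} \mathsf{QE}_{\geq o+d}\, e^{-\kappa W_y/2}\, \mathsf{QE}_{=k}\, \sqrt{1 - e^{-\kappa W_y}/2}\, \mathsf{QE}_o.
\end{equation}

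I then split this sum into three regimes by $k$ and apply the triangle inequality term-by-term. For $k \leq o$, the left factor has gap at least $d$, so \Cref{lem:exponential_does_not_increase_odds} (applied with coefficient $\kappa/2 \leq 1$) bounds it by $(2R^5/\sqrt{\ell})^{(o+d-k)/4}$, while the right factor is bounded by $1$; the sum over these $k$ is a geometric series in $(o-k)$, dominated by its smallest-exponent term, and contributes $O((2R^5/\sqrt{\ell})^{d/4})$. For $o < k < o+d$, the left factor is bounded by $(2R^5/\sqrt{\ell})^{(o+d-k)/4}$ and the right factor by $(64R^5/\sqrt{\ell})^{(k-o)/4}$ via \Cref{lem:crefdoesnotwork}, and the product of exponents is minimized in absolute exponent at the boundary, giving $(64R^5/\sqrt{\ell})^{d/4}$ per term, with at most $d$ terms, contributing $O(d \cdot (64R^5/\sqrt{\ell})^{d/4})$. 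For $k \geq o+d$, the left factor is bounded by $1$ and the right by $(64R^5/\sqrt{\ell})^{(k-o)/4}$, summing to $O((64R^5/\sqrt{\ell})^{d/4})$ as another geometric series.

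Adding everything gives $\norm{\mathsf{QE}_{\geq o+d}\, e_1(W_y)\, \mathsf{QE}_o} = O(d \cdot (64R^5/\sqrt{\ell})^{d/4})$, and together with the $e_0$ bound the sum is at most $O(d) \cdot (64R^5/\sqrt{\ell})^{d/4}$. The comparison to the claimed $(2^{14} R^5 d/\sqrt{\ell})^{d/4}$ then follows from $(64)^{d/4} \cdot d \leq (2^{14} d)^{d/4}$ (noting that $2^{14}/64 = 256$ contributes $4^{d/4}$ of slack per unit gap, which absorbs the linear factor of $d$ after elementary algebra). The main obstacle throughout is the middle-$k$ regime of the $e_1$ expansion, where neither factor dominates and the product of the two lemmas' exponential bounds must be tracked carefully; the choice to replace the weaker constant $2$ by $64$ uniformly lets the middle terms be absorbed without destroying the $d/4$ exponent.
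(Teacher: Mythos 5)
Your overall skeleton matches the paper's: reduce to a maximum over $y$ using block-diagonality and $\norm{A}_{\mathrm{op}}\le 1$, split off the $e_0$ term (where the identity is killed by $\mathsf{QE}_{\ge o+d}\cdot\mathsf{QE}_o=0$ and \Cref{lem:exponential_does_not_increase_odds} finishes), and handle $e_1$ through its factorization into the two commuting pieces $\e^{-\kappa W_y/2}$ and $\sqrt{1-\e^{-\kappa W_y}/2}$. Where you differ is the recombination step: the paper feeds the two factors into \Cref{lem:stars-and-bars-recursive-drop} (with $t=2$, $\lambda=d-1$), which immediately yields a $(d+1)$ prefactor, whereas you insert the resolution of identity $\sum_k \mathsf{QE}_{=k}$ between the factors and run a three-regime case analysis with geometric series. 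Your route is a hands-on version of the same idea and, in the regime where the lemma is non-trivial, it also gives an $O(d)$ prefactor that the $2^{14}$ slack absorbs. Note only that you are invoking fixed-$o$, variable-gap versions of \Cref{lem:exponential_does_not_increase_odds} and \Cref{lem:crefdoesnotwork} (their statements carry a sum over $o$ and a single gap $1\le d\le R$); their proofs do supply the fixed-$o$ bounds, and gaps exceeding $R$ should be capped at $R$ before applying them.

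There is one genuine hole as written: your geometric-series bounds in the $k\le o$ and $k\ge o+d$ regimes assume the base is comfortably below $1$. The hypothesis $R\le \ell^{1/10}/2$ only guarantees $2R^5/\sqrt{\ell}\le 1/16$, while $64R^5/\sqrt{\ell}$ can be as large as $2$; in that case the tail sum $\sum_{k\ge o+d}(64R^5/\sqrt{\ell})^{(k-o)/4}$ is not dominated by its first term (it can grow with $k$), and even when the base is slightly below $1$ the factor $1/(1-\mathrm{ratio})$ is uncontrolled, so the claimed $O\bigl((64R^5/\sqrt{\ell})^{d/4}\bigr)$ for that regime fails. The fix is short but must be stated: the bracketed operator is a unitary involution, since $e_0(W_y)^2+e_1(W_y)^2=\id$, the two Kraus parts commute, and $\tilde{X},\tilde{Z}$ anticommute, so the left-hand side is trivially at most $1$; hence you may assume $2^{14}R^5 d/\sqrt{\ell}<1$, which forces $64R^5/\sqrt{\ell}\le 1/(256d)$, making every ratio in your series at most $1/4$. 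With that restriction in place all three regimes sum as you claim, and the constants (which are tight at $d=1$) still fit under the available factor $(256d)^{d/4}$.
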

\begin{proof}
    We expand out the operator norm as follows.
    \begin{xalign}
        &\norm{\mathsf{QE}_{\geq o+d} \left(\sum_{y} A \cdot \ketbra{y} \otimes \begin{pmatrix}\tilde{X}_{\reg{U}_y} \otimes e_1(\mathsf{Con}_R \cdot \tilde{\G}^{2}_{y} \cdot \mathsf{Con}_R) \\
        \hspace{1em}+ \tilde{Z}_{\reg{U}_y} \otimes e_{0}(\mathsf{Con}_R \cdot \tilde{\G}^2_{y} \cdot \mathsf{Con}_R)\end{pmatrix}\right) \mathsf{QE}_{o}}_{\mathrm{op}}\\
        &\hspace{5mm} = \norm{\sum_{y}A \cdot \ketbra{y} \otimes \mathsf{QE}_{\geq o+d}  \begin{pmatrix}\tilde{X}_{\reg{U}_y} \otimes e_1(\mathsf{Con}_R \cdot \tilde{\G}^{2}_{y} \cdot \mathsf{Con}_R) \\
        \hspace{1em}+ \tilde{Z}_{\reg{U}_y} \otimes e_{0}(\mathsf{Con}_R \cdot \tilde{\G}^2_{y} \cdot \mathsf{Con}_R)\end{pmatrix} \mathsf{QE}_{o}}_{\mathrm{op}}\\
        &\hspace{5mm}\leq \norm{\sum_{y} \ketbra{y} \otimes \mathsf{QE}_{\geq o+d}  \begin{pmatrix}\tilde{X}_{\reg{U}_y} \otimes e_1(\mathsf{Con}_R \cdot \tilde{\G}^{2}_{y} \cdot \mathsf{Con}_R) \\
        \hspace{1em}+ \tilde{Z}_{\reg{U}_y} \otimes e_{0}(\mathsf{Con}_R \cdot \tilde{\G}^2_{y} \cdot \mathsf{Con}_R)\end{pmatrix} \mathsf{QE}_{o}}_{\mathrm{op}}\\
        &\hspace{5mm} \leq \max_{y} \norm{\mathsf{QE}_{\geq o+d}  \begin{pmatrix}\tilde{X}_{\reg{U}_y} \otimes e_1(\mathsf{Con}_R \cdot \tilde{\G}^{2}_{y} \cdot \mathsf{Con}_R) \\
        \hspace{1em}+ \tilde{Z}_{\reg{U}_y} \otimes e_{0}(\mathsf{Con}_R \cdot \tilde{\G}^2_{y} \cdot \mathsf{Con}_R)\end{pmatrix} \mathsf{QE}_{o}}_{\mathrm{op}}\\
        &\hspace{5mm}\begin{matrix*}[l]\leq \max_{y} \norm{\mathsf{QE}_{\geq o+d} \left(\tilde{X}_{\reg{U}_y} \otimes e_1(\mathsf{Con}_R \cdot \tilde{\G}^{2}_{y} \cdot \mathsf{Con}_R)\right) \mathsf{QE}_{o}}_{\mathrm{op}}  \\
        \hspace{5mm}+ \max_{y}\norm{\mathsf{QE}_{\geq o+d} \left(\tilde{Z}_{\reg{U}_y} \otimes e_{0}(\mathsf{Con}_R \cdot \tilde{\G}^2_{y} \cdot \mathsf{Con}_R)\right) \mathsf{QE}_{o}}_{\mathrm{op}}\end{matrix*}\\
        &\hspace{5mm}\begin{matrix*}[l]\leq \underbrace{\max_{y} \norm{\mathsf{QE}_{\geq o+d} \cdot \left(e_1(\mathsf{Con}_R \cdot \tilde{\G}^{2}_{y} \cdot \mathsf{Con}_R)\right) \cdot  \mathsf{QE}_{o}}_{\mathrm{op}}}_{(A)} \\
        \hspace{5mm}+ \underbrace{\max_{y}\norm{\mathsf{QE}_{\geq o+d} \cdot\left(e_{0}(\mathsf{Con}_R \cdot \tilde{\G}^2_{y} \cdot \mathsf{Con}_R)\right)\cdot \mathsf{QE}_{o}}_{\mathrm{op}}}_{(B)}\,.\end{matrix*}
    \end{xalign}
    Here, we first use the fact that $\mathsf{QE}_{\geq o+d}$ acts only on the purifying register, then use the fact that $\norm{A}_{\mathrm{op}} \leq 1$.  Then we use the fact that for any operator that is block-diagonal, the operator norm is the max of the operator norms restricted to each block.  Finally, we apply the triangle inequality.  Now we bound parts $(A)$ and $(B)$ separately using~\Cref{lem:exponential_does_not_increase_odds} and~\Cref{lem:crefdoesnotwork}.  We have the following
    \begin{xalign}
        (A) &= \max_{y} \sqrt{2}\norm{\mathsf{QE}_{\geq o + d} \cdot \left(\e^{-(\kappa/2)\cdot \qty(\mathsf{Con}_R \cdot \tilde{\G}^2_{y} \cdot \mathsf{Con}_R)})\sqrt{1 - \frac{1}{2} \e^{-\kappa \qty(\mathsf{Con}_R \cdot \tilde{\G}^2_{y} \cdot \mathsf{Con}_R)}}\right) \cdot \mathsf{QE}_{o}}_{\mathrm{op}}\\
        &\leq \sqrt{2} (d + 1)\left(\frac{64 R^5}{\sqrt{\ell}}\right)^{d/4}\,.
    \end{xalign}
    Here, for every $y$, we use \Cref{lem:stars-and-bars-recursive-drop} with $\lambda = d-1$, the same projectors $\Pi_{1} = \mathsf{QE}_o \preceq \ldots \Pi_{d-1} = \mathsf{QE}_{o+d}$, and the $2$ operators $A_{2} = \e^{-(\kappa/2) \qty(\mathsf{Con}_R \cdot \tilde{\G}^2_{y} \cdot \mathsf{Con}_R)}$ and $A_1 = \sqrt{1 - \frac{1}{2} \e^{-\kappa \qty(\mathsf{Con}_R \cdot \tilde{\G}^2_{y} \cdot \mathsf{Con}_R)}}$.  \Cref{lem:exponential_does_not_increase_odds} and \Cref{lem:crefdoesnotwork} bound each $\norm{(\id - \Pi_{d-1}) A_i \Pi_o}_{\mathrm{op}} \leq (64R^5 / \sqrt{\ell})^{d/4}$, completing the bound.  
    
    Similarly, we bound $(B)$ as follows.
    \begin{xalign}
        (B) &\leq \max_{y} \norm{\mathsf{QE}_{\geq o + d} \cdot \left(\id - \e^{-\kappa \qty(\mathsf{Con}_R \cdot \tilde{\G}^2_{y} \cdot \mathsf{Con}_R)}\right) \cdot \mathsf{QE}_o}\\
        &\leq \left(\frac{2R^5}{\sqrt{\ell}}\right)^{d/4}\,.
    \end{xalign}
    Here, we directly apply \Cref{lem:exponential_does_not_increase_odds}, together with the fact that $\mathsf{QE}_{\geq o + d} \cdot \mathsf{QE}_{o} = 0$ so we can remove the identity term.  
    Putting things together, we have an upper bound on the operator norm of
    \begin{equation}
        (A) + (B) \leq 2d\left(\frac{64R^5}{\sqrt{\ell}}\right)^{d/4} \le \left(\frac{2^{14} R^5}{\sqrt{\ell}}\right)^{d/4} \,.
    \end{equation}
    Here, we take the upper bound $2d \leq 2^{d+1}$ in the final line.
\end{proof}

The previous lemma applies to a standard query $y$. However, if the verification algorithm applies a conditional query and in the case that the conditional query is not applied, it is clear to see that the algorithm cannot change the quasi-evenness of the $\reg{S}$ register.  Formally, we have the following.

\begin{corollary}[Single controlled query preserves quasi-evenness]
\label[corollary]{cor:controlled_query_quasi_even}
    For all $o$, $R \le \ell^{1/10}/2$, $1 \le d \le R$, and unitaries $A$ acting on $\reg{A}$, the following inequality holds.
    \begin{equation}
        \norm{\mathsf{QE}_{\geq o+d} \left(\sum_{y, b} A \cdot \ketbra{b, y} \otimes \begin{pmatrix}\tilde{X}_{\reg{U}_y} \otimes e_1(\mathsf{Con}_R \cdot \tilde{\G}^{2}_{y} \cdot \mathsf{Con}_R) \\
        \hspace{1em}+ \tilde{Z}_{\reg{U}_y} \otimes e_{0}(\mathsf{Con}_R \cdot \tilde{\G}^2_{y} \cdot \mathsf{Con}_R)\end{pmatrix}^{b}\right) \mathsf{QE}_{o}}_{\mathrm{op}}
        \leq \left(\frac{2^{14} R^{5}d}{\sqrt{\ell}}\right)^{d/4}\,.
    \end{equation}
    Here, we have omitted $\id_{\reg{A}}$ on the projectors $\mathsf{QE}_{o}$ and $\mathsf{QE}_{\geq o+d}$.
\end{corollary}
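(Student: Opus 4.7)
The plan is to reduce this corollary directly to \Cref{lem:single_query_low_collision} by splitting the sum over the control bit $b \in \{0,1\}$. Writing the operator in the parenthesis as $M \defeq \sum_{y,b} A \cdot \ketbra{b,y} \otimes \bigl(\tilde X_{\reg{U}_y}\otimes e_1(\cdot) + \tilde Z_{\reg{U}_y}\otimes e_0(\cdot)\bigr)^{b}$, the exponent $b$ means that when $b=0$ the inner operator is the identity on $\reg{US}$, while when $b=1$ it matches exactly the operator bounded in \Cref{lem:single_query_low_collision}. Since the two values of $b$ correspond to orthogonal branches on the $\reg{A}$ register (via $\ketbra{0,y}$ versus $\ketbra{1,y}$), I can bound $\norm{\mathsf{QE}_{\geq o+d}\cdot M\cdot \mathsf{QE}_o}_{\mathrm{op}}$ by the maximum of the two branch norms.

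First I would handle the $b=0$ branch. Here the operator is $M_0 \defeq \sum_y A\cdot\ketbra{0,y}\otimes \id_{\reg{US}}$. Since $\id_{\reg{US}}$ acts trivially on the bosonic register, it commutes with the momentum-Fock-diagonal projector $\mathsf{QE}_o$, so
\begin{equation}
\mathsf{QE}_{\geq o+d}\cdot M_0\cdot \mathsf{QE}_o \;=\; \Bigl(\sum_y A\cdot\ketbra{0,y}\Bigr)\otimes \mathsf{QE}_{\geq o+d}\cdot \mathsf{QE}_o \;=\; 0,
\end{equation}
because $\mathsf{QE}_{\geq o+d}$ and $\mathsf{QE}_o$ are orthogonal projectors (by \Cref{def:low-collision-subspace}, a state with exactly $o$ odd non-zero modes cannot also have at least $o+d \geq o+1$ of them). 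So this branch contributes nothing.

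For the $b=1$ branch, $M_1 \defeq \sum_y A\cdot\ketbra{1,y}\otimes\bigl(\tilde X_{\reg{U}_y}\otimes e_1(\mathsf{Con}_R\cdot\tilde\G_y^2\cdot\mathsf{Con}_R) + \tilde Z_{\reg{U}_y}\otimes e_0(\mathsf{Con}_R\cdot\tilde\G_y^2\cdot\mathsf{Con}_R)\bigr)$. The projector $\ketbra{1,y}$ on $\reg{A}$ simply restricts the outer ket to the control value $1$, which is precisely the form handled in \Cref{lem:single_query_low_collision} (where the control bit is already fixed to apply the query). Applying that lemma with the same operator $A$ (treating $A\cdot\ketbra{1}$ as an operator of norm at most $1$ on the control qubit register, which does not affect the argument) immediately gives
\begin{equation}
\norm{\mathsf{QE}_{\geq o+d}\cdot M_1\cdot \mathsf{QE}_o}_{\mathrm{op}} \;\leq\; \left(\frac{2^{14}R^5 d}{\sqrt{\ell}}\right)^{d/4}.
\end{equation}

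Finally, since $M = M_0 + M_1$ with the two branches supported on orthogonal subspaces of the control register (so $M_0^\dagger M_1 = 0$ when projected appropriately), $\norm{\mathsf{QE}_{\geq o+d}\cdot M\cdot \mathsf{QE}_o}_{\mathrm{op}} \leq \max(\norm{\mathsf{QE}_{\geq o+d}\cdot M_0\cdot\mathsf{QE}_o}_{\mathrm{op}},\norm{\mathsf{QE}_{\geq o+d}\cdot M_1\cdot \mathsf{QE}_o}_{\mathrm{op}})$, which yields the claimed bound. There is no main obstacle here---the corollary is essentially a bookkeeping extension of the previous lemma; the only point to be careful about is verifying that the $b=0$ branch is truly harmless because the Kraus operators $E_0, E_1$ only appear when $b=1$, making the oracle act as identity on $\reg{S}$ whenever the control bit is $0$.
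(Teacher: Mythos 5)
Your proposal is correct and follows essentially the same route as the paper: split on the control bit $b$, kill the $b=0$ branch via $\mathsf{QE}_{\geq o+d}\cdot\mathsf{QE}_{o}=0$ (note $\mathsf{QE}_o$ projects onto \emph{at most} $o$ odd modes, not exactly $o$, but orthogonality holds regardless since $d\geq 1$), and bound the $b=1$ branch by \Cref{lem:single_query_low_collision} using $\norm{A\cdot\ketbra{1}}_{\mathrm{op}}\leq 1$. The paper combines the branches with a plain triangle inequality rather than your orthogonal-support maximum, but since the $b=0$ term vanishes the two yield the identical bound.
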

\begin{proof}
    Applying the triangle inequality, we have that
    \begin{xalign}
        &\norm{\mathsf{QE}_{\geq o+d} \left(\sum_{y, b} A \cdot \ketbra{b, y} \otimes \begin{pmatrix}\tilde{X}_{\reg{U}_y} \otimes e_1(\mathsf{Con}_R \cdot \tilde{\G}^{2}_{y} \cdot \mathsf{Con}_R) \\
        \hspace{1em}+ \tilde{Z}_{\reg{U}_y} \otimes e_{0}(\mathsf{Con}_R \cdot \tilde{\G}^2_{y} \cdot \mathsf{Con}_R)\end{pmatrix}^{b}\right) \mathsf{QE}_{o}}_{\mathrm{op}}\\
        &\hspace{5mm} \begin{matrix*}[l] \le \norm{\mathsf{QE}_{\geq o+d} \left(\sum_{y} A \cdot \ketbra{1, y} \otimes \begin{pmatrix}\tilde{X}_{\reg{U}_y} \otimes e_1(\mathsf{Con}_R \cdot \tilde{\G}^{2}_{y} \cdot \mathsf{Con}_R) \\
        \hspace{1em}+ \tilde{Z}_{\reg{U}_y} \otimes e_{0}(\mathsf{Con}_R \cdot \tilde{\G}^2_{y} \cdot \mathsf{Con}_R)\end{pmatrix}\right) \mathsf{QE}_{o}}_{\mathrm{op}} \\
        \hspace{5em} +  \norm{\mathsf{QE}_{\geq o+d} \left(\sum_{y} A \cdot \ketbra{0, y} \otimes \id\right) \mathsf{QE}_{o}}_{\mathrm{op}}\end{matrix*}\\
        &\hspace{5mm}\leq \left(\frac{2^{14}R^{5}d}{\sqrt{\ell}}\right)^{d/4} + 0\,.
    \end{xalign} 
    Here, we use the fact that when $b = 0$, the inner unitary is the identity, and then we use the fact that $\mathsf{QE}_{\geq o + d}$ and $\mathsf{QE}_{o}$ are projectors onto orthogonal subspaces since $d \geq 1$ and they commute past $A$ and $\ketbra{0, y}$.  When $b = 1$, we use the fact that $\norm{A \ketbra{1}}_{\mathrm{op}} \leq 1$ and apply \Cref{lem:single_query_low_collision}.
\end{proof}

\begin{theorem}
\label{thm:low-collision-overlap}
    Let $\mathcal{A}$ be any $T$-query algorithm making queries to $U$ and $v \in \mathbb{Z}_{\geq 0}$ be a multiple of 4.  Then there exists a constant $c$ such that for suitably large $n$, the purified state of the algorithm has squared overlap with the complement of the $(c \cdot T^{10}, v/4)$-low collision subspace that is lower bounded by 
    \begin{equation}
        \norm{\left(\id - \mathsf{QEC}_{(c \cdot T^{10}, v/4)}\right) \ket{\psi_{\mathrm{PQ}}}}^2 \leq \left(\left(\frac{T^{4}}{\ell^{1/32}}\right)^{v} + \e^{-5T}\right)^2\,.
    \end{equation}
\end{theorem}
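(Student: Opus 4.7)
My plan is: (i) approximate $\ket*{\psi_{\mathrm{PQ}}}$ by the sandwiched state $\ket*{\tilde\psi_{R,r}}$ of \Cref{cor:sandwiching}, which by construction already lies in $\mathsf{Con}_r$ on the $\reg{S}$ register; (ii) control the overlap of $\ket*{\tilde\psi_{R,r}}$ with the complement of $\mathsf{QE}_{v/4}$ by iterating the single-query bound \Cref{cor:controlled_query_quasi_even} through the stars-and-bars recursion \Cref{lem:stars-and-bars-recursive-drop}; and (iii) combine the two errors by a triangle inequality and square.

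First I would invoke \Cref{cor:sandwiching} with $\iota = e^{-5T}$, which yields a state $\ket*{\tilde\psi_{R,r}}$ with $\norm{\ket*{\psi_{\mathrm{PQ}}} - \ket*{\tilde\psi_{R,r}}} \leq e^{-5T}$ and with both $r$ and $R$ polynomial in $n$ and $T$ (so in particular $r \leq cT^{20}$ for the constant $c$ of the theorem statement in the regime of interest). Because every query in $\ket*{\tilde\psi_{R,r}}$ is bracketed by $\mathsf{Con}_r$ on the $\reg{S}$ register, the state $\ket*{\tilde\psi_{R,r}}$ itself lies in $\mathsf{Con}_r$; since $\mathsf{Con}_r$ and $\mathsf{QE}_{v/4}$ commute (both are diagonal in the momentum Fock basis) and $\mathsf{QEC}_{(r,v/4)} = \mathsf{Con}_r \cdot \mathsf{QE}_{v/4}$, this reduces everything to upper bounding
\[
    \norm{(\id - \mathsf{QE}_{v/4})\,\ket*{\tilde\psi_{R,r}}}\,.
\]

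Second, I would apply \Cref{lem:stars-and-bars-recursive-drop} with projectors $\Pi_k = \mathsf{QE}_k$, initial state $\ket{0}_{\reg{A}}\ket{\mathrm{init}}_{\reg{US}}$ (which lies in $\Pi_0 = \mathsf{QE}_0$ because all $\ell$ bosons initially occupy the zero-momentum mode, so every non-zero momentum mode has zero occupation), and $A_i$ equal to each algorithm unitary composed with the following sandwiched controlled query. Each $A_i$ has operator norm at most $1$, and since the algorithm unitaries act only on $\reg{A}$ and the outer $\mathsf{Con}_r$ projectors commute past $\mathsf{QE}_a$ with norm at most $1$, \Cref{cor:controlled_query_quasi_even} supplies the single-step bound
\[
\norm{(\id - \Pi_{a+b})\, A_i\, \Pi_a}_{\mathrm{op}} \leq \left(\tfrac{2^{14} R^5 (b+1)}{\sqrt\ell}\right)^{(b+1)/4}\,.
\]
Taking $\eps$ of the form $\mathrm{poly}(v) \cdot (R^5/\sqrt\ell)^{1/4}$ absorbs the $(b+1)^{(b+1)/4}$ factor uniformly over the only range of interest, $b+1 \leq v/4$, and puts this bound in the $\eps^{b+1}$ shape the lemma requires. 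Invoking the lemma with $\lambda = v/4$ then gives a bound of the form $\binom{T + v/4}{T-1}\, \eps^{v/4+1}$, which after bounding the binomial by a polynomial in $T$ and substituting the polynomial size of $R$ collapses to at most $(T^4/\ell^{1/32})^v$. A final triangle inequality against the $e^{-5T}$ sandwich error, followed by squaring, then yields the theorem.

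The main obstacle is the shape mismatch between \Cref{cor:controlled_query_quasi_even}, whose bound is of the form $((b+1)\alpha)^{(b+1)/4}$ with $\alpha = 2^{14} R^5/\sqrt\ell$, and the clean $\eps^{b+1}$ form required by \Cref{lem:stars-and-bars-recursive-drop}: the stray combinatorial factor $(b+1)^{(b+1)/4}$ must be absorbed into $\eps$ across the full range $b \leq v/4$, and the resulting $\eps^{v/4+1}$ together with the binomial $\binom{T+v/4}{T-1}$ must still fit under the $T^{4v}$ budget. This delicate bookkeeping is what actually pins down the exponents $T^{20}$ and $1/32$ that appear in the theorem statement; everything else in the proof is mechanical assembly of the three prior ingredients.
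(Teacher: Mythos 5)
Your proposal is correct and follows essentially the same route as the paper's proof: invoke \Cref{cor:sandwiching} with $\iota = e^{-5T}$, run \Cref{lem:stars-and-bars-recursive-drop} with the single-step bound from \Cref{cor:controlled_query_quasi_even}, then combine via triangle inequality and square. Your cosmetic variations (using $\mathsf{QE}_k$ rather than $\mathsf{QEC}_{(r,k)}$ as the projector ladder after noting $\ket*{\tilde\psi_{R,r}}\in\mathsf{Con}_r$, and explicitly absorbing the $(b+1)^{(b+1)/4}$ factor into $\eps$) are harmless and in fact match the slack the paper itself uses.
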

\begin{proof}
    Assume without loss of generality that $T \geq n$.  Then we apply \Cref{cor:sandwiching} with $\iota = \e^{-5T}$ to show that there exist integers $r = O(T^{10})$ and $R = O(T^{13})$ to get that 
    \begin{equation}
        \norm*{\ket*{\psi_{\mathrm{PQ}}} - \ket*{\psi_{R, r}}} \leq \e^{-5T}\,.
    \end{equation}
    Then, we apply~\Cref{lem:stars-and-bars-recursive-drop} with $\lambda = v/4$, $\Pi_{0} = \mathsf{QEC}_{r, 0} \preceq \ldots \preceq \Pi_{v/4} = \mathsf{QEC}_{r, v/4}$ and operators $A_i$ being
    \begin{equation}
        A_i = \left(\sum_{y, b} A \cdot \ketbra{b, y} \otimes \begin{pmatrix}\tilde{X}_{\reg{U}_y} \otimes e_1(\mathsf{Con}_R \cdot \tilde{\G}^{2}_{y} \cdot \mathsf{Con}_R) \\
        \hspace{1em}+ \tilde{Z}_{\reg{U}_y} \otimes e_{0}(\mathsf{Con}_R \cdot \tilde{\G}^2_{y} \cdot \mathsf{Con}_R)\end{pmatrix}^{b}\right)\,,
    \end{equation}
    where $A$ here is the unitary that the verification algorithm applies, and we note that there are $T \leq \ell^{1/10}/2$ of them.  We also use the fact that the initial state is contained in $\mathsf{QE}_{0}$.  Then~\Cref{cor:controlled_query_quasi_even} gives us the bound on the individual norms of $\left(\frac{64R^5 (v/4)}{\sqrt{\ell}}\right)^{(v/4+1)/4}$.  Resolving \Cref{lem:stars-and-bars-recursive-drop}, we have
    \begin{xalign}
        \norm{\left(\id - \mathsf{QEC}_{(r, v/4)}\right) \cdot \ket{\psi_{R, r}}} &\leq \binom{T + v/4}{T - 1}\left(\frac{2^{14} R^5}{\sqrt{\ell}}\right)^{(v/4+1)/4}\\
        &\leq (2T)^{v/4} \left(\frac{2^{14}  R^5}{\sqrt{\ell}}\right)^{v/16}\\
        &\leq \left(\frac{2^{18} R^5T^4}{\sqrt{\ell}}\right)^{v/16}\,.
    \end{xalign}
    Combining these two equations using the triangle inequality, we have that 
    \begin{xalign}
        \norm{\left(\id - \mathsf{QEC}_{(r, v/4)}\right) \cdot \ket*{\psi_{\mathrm{PQ}}}} &\leq \left(\frac{2^{18} R^5 T^4}{\sqrt{\ell}}\right)^{v/16} + \e^{-5T}\\
        &\leq \left(\frac{4 R^{5/16} T^{1/4}}{\ell^{1/32}}\right)^{v} + e^{-5T}\,.
    \end{xalign}
    Substituting $R = c \cdot T^{13}$ and $r = c \cdot T^{10}$ for some constant $c$ (for suitably large $n$), for sufficiently large $T$, we have that $4R^{5/16}T^{1/4} \le T^4$ which completes the theorem.
\end{proof}

 \paragraph{Proving the main sampling upper bound}
Combining the gentle measurement lemma, \Cref{thm:low-collision-overlap}, and \Cref{thm:main-sampling-upper-bound}, which bounds the probability of sampling $v$ many points given that the state is in $\mathsf{QEC}_{(c\cdot T^{10}, v/4)}$ (with $T = vt$ being the number of queries), we achieve the main result of~\Cref{part:sampler-upper-bound,part:poly-query-implies-qec}:~\Cref{thm:big-upper-bound-thm}.
\begin{theorem*}[Sampling probability upper bound,\Cref{thm:big-upper-bound-thm}, restated]
    For all $v$, for all quantum algorithms $\Aa^U$ accessing an oracle $U$ and outputting $v$ distinct outputs, while making $t$ queries per output, if a pair of oracles $(S,U)$ are sampled according to distribution $\mathsf{Strong}$ (defined in~\Cref{def:def-of-strong-dist}), then the probability that all $v$ of the outputs of $\Aa^U$ are elements of $S$ is at most
    \begin{equation}
        \leq 2\left(\frac{4v((vt)^{30} + v (vt)^{20})\sqrt{\ell}}{2^{n/4}}\right)^{v} + \left(\left(\frac{(vt)^{4}}{\ell^{1/32}}\right)^{v} + e^{-5vt}\right)^2\,.
    \end{equation}
\end{theorem*}

\begin{proof}[Proof of~\Cref{thm:big-upper-bound-thm}]
    Let $\ket*{\psi_{\mathrm{PQ}}}$ be the state of the algorithm after making $vt$ queries to $U$, then we have, for $\Pi_{\mathrm{succ}}$ being the success operator from~\eqref{eq:pi_success}.
    \begin{xalign}
        \norm{\Pi_{\mathrm{succ}}\ket{\psi_{\mathrm{PQ}}}}^2 &\leq  \norm{\Pi_{\mathrm{succ}}\cdot \mathsf{QEC}_{(c \cdot v^{10}t^{10}, v/4)} \cdot \ket{\psi_{\mathrm{PQ}}}}^2 + \norm{\Pi_{\mathrm{succ}}\cdot \qty(\id - \mathsf{QEC}_{(c\cdot v^{10}t^{10}, v/4)})\ket{\psi_{\mathrm{PQ}}}}^2\\
        &\leq 2\left(\frac{4v((vt)^{30} + v (vt)^{20})\sqrt{\ell}}{2^{n/4}}\right)^{v} + \left(\left(\frac{(vt)^{4}}{\ell^{1/32}}\right)^{v} + e^{-5vt}\right)^2\,.
    \end{xalign}
    Here, we apply the triangle inequality and bound the second term using \Cref{thm:low-collision-overlap}, and the first term using \Cref{thm:main-sampling-upper-bound} with $T = vt$.
\end{proof}

\newpage
\part{Theorem statements and concluding remarks}
\label{part:concluding-remarks}
\section{Property-testing and oracle separations}

We can now combine both our sampler success probability upper bound with the lower bound on the success probability implied by a $\mathsf{QCMA}$ algorithm to get a lower bound on the witness length of a successful $\mathsf{QCMA}$ algorithm.

\begin{theorem} \label{thm:qcma-lower-bound}
   Consider any choice of constants $a > 0$ and functions $t(n), q(n)$ that satisfy $t(n) \leq a n^a, q(n) \leq a n^a$ for all $n \geq n_0$. Then, let $n_0$ be the smallest integer such that for all $n \geq n_0$, both~\Cref{thm:formal-removal-of-S} and~\Cref{thm:big-upper-bound-thm} apply using for $t = t(n), q = q(n)$, and $v = 1000q$.  Then for any $n \geq n_0$, and $\mathcal{A}$ a binary-output quantum query algorithm with classical witness of length $q(n)$ and making $t(n)$ queries to the oracles $(S, U)$ of size $n$, there exists a pair of oracle $(S^*,U^*)$ of size $n$ such that
\begin{enumerate}
    \item either $(S^*,U^*)$ are at least $\frac{59}{100}$-spectrally Forrelated, but for all witnesses $w$ of length $q(n)$,
    \begin{equation}
        \Pr[ \Aa^{(S^*,U^*)}(w) = 1] < \frac{2}{3}\,.
    \end{equation}
    \item or $(S^*,U^*)$ are at most $\frac{57}{100}$-spectrally Forrelated, but there exists a witness $\tilde w$ of length $q(n)$ such that
    \begin{equation}
        \Pr[ \Aa^{(S^*,U^*)}(\tilde w) = 1] > \frac{1}{3}\,.        
    \end{equation}
\end{enumerate}
\end{theorem}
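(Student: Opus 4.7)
The plan is to argue by contradiction, stitching together the sampling lower bound from Theorem~\ref{thm:formal-removal-of-S} and the sampling upper bound from Theorem~\ref{thm:big-upper-bound-thm}. Assume that for some $n \geq n_0$ no pair $(S^*, U^*)$ satisfying condition (1) or (2) exists. Then on every oracle pair of size $n$, the algorithm $\mathcal{A}$ accepts yes instances with probability $\geq 2/3$ under some witness and rejects no instances with probability $\geq 2/3$ under every witness, i.e., $\mathcal{A}$ satisfies the completeness and soundness hypotheses of Theorem~\ref{thm:formal-removal-of-S} with witness length $q = q(n)$ and query count $t = t(n)$. Applying that theorem with $v = 1000\,q$, I obtain a query algorithm $\mathsf{CumulativeSampler}$ that makes $vt$ queries to $U$, no queries to $S$, and outputs $v$ distinct elements of $S$ with probability at least $2^{-q}(36t^2)^{-v}$ on every $(57/100, 59/100, v)$-strong yes instance $(S, U)$.

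Next, I fix $\kappa = 1/10$ and $\ell = 2^{cn}$ for a sufficiently small constant $c > 0$ (small enough that both $\sqrt{\ell}/2^{n/4}$ and $(vt)^4/\ell^{1/32}$ are $2^{-\Omega(n)}$ for every $n \geq n_0$). Lemma~\ref{lem:strong-instance}, with $\rho$ chosen as in the remark following it, then ensures that $(S,U)\sim\mathsf{Strong}$ is a $(57/100,59/100,v)$-strong yes instance except with probability $\leq 2\ell^6/2^n = 2^{-\Omega(n)}$, using $v = 1000q \leq \ell/100$ which holds for large enough $n$ since $q$ is polynomial. Averaging the per-instance lower bound from Theorem~\ref{thm:formal-removal-of-S} over $\mathsf{Strong}$ then yields
\begin{equation}
\Pr_{(S,U)\sim\mathsf{Strong}}\bigl[\mathsf{CumulativeSampler}^U \text{ outputs } v \text{ distinct elements of } S\bigr] \;\geq\; \tfrac{1}{2}\cdot 2^{-q}(36t^2)^{-v},
\end{equation}
whose logarithm is $-O(q\log n)$ because $\log(36 t^2) = O(\log n)$ and $v = O(q)$.

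On the other hand, since $\mathsf{CumulativeSampler}$ accesses only $U$, makes $vt$ queries, and outputs $v$ distinct guesses, Theorem~\ref{thm:big-upper-bound-thm} directly upper bounds the same probability by a sum of three terms each of the form $(\mathrm{poly}(n)/2^{\Omega(n)})^v$ (plus an $e^{-\Omega(vt)}$ tail), all of which are at most $2^{-\Omega(nv)} = 2^{-\Omega(nq)}$ for our choice of $\ell$ and the polynomial growth of $v$ and $t$. Since $nq$ dominates $q\log n$ for $n$ large, this upper bound is strictly smaller than the lower bound derived above, a contradiction. The main obstacle is genuinely modest once the two headline theorems are in hand: it is to verify that the parameter choices $(\kappa, c, v)$ are consistent in the sense of making Lemma~\ref{lem:strong-instance} apply while preserving a strict gap between the $2^{-O(q\log n)}$ lower bound and the $2^{-\Omega(nq)}$ upper bound. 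This compatibility is precisely what the statement's definition of $n_0$ packages, and constitutes essentially all of the remaining bookkeeping in the proof.
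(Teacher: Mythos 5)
Your proposal is correct and takes essentially the same route as the paper's own proof: negate the statement to obtain a sound-and-complete algorithm for spectral Forrelation, convert it via \Cref{thm:formal-removal-of-S} and \Cref{lem:strong-instance} (with $\kappa = 1/10$, $\ell = 2^{cn}$, $v = 1000q$) into a sampler succeeding over $\mathsf{Strong}$ with probability at least $\tfrac12\, 2^{-q}(36t^2)^{-v} = 2^{-O(q\log n)}$, and contradict the upper bound of \Cref{thm:big-upper-bound-thm}. The only quibble is that the $e^{-\Omega(vt)}$ tail is $2^{-\Omega(qt)}$ rather than $2^{-\Omega(nq)}$ when $t(n) = o(n)$, but it is still strictly below the lower bound because $qt$ dominates $q\log t + q$, so the contradiction stands (the paper's own write-up glosses over this term in the same way).
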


\begin{proof}
    If the stated consequence was false, then the algorithm $\Aa$ properly classifies all spectral Forrelation problems promised that either the instance is at least $59/100$- or at most $57/100$-spectrally Forrelated of size $n$. However, setting $\ell = 2^{n/10}$ and $v = 1000 q$, then we can apply~\Cref{thm:formal-removal-of-S} and~\Cref{lem:strong-instance} with $\kappa = 1/10$ and $\rho= \frac{2\ell^2}{2^{n}} \ln\left(\frac{2^{n}}{2\ell^{4}}\right)$, yielding a sampler that outputs $v$ points with probability $O(t^{-1000q})$ when $(S, U)$ is sampled from the distribution $\mathsf{Strong}$.  Applying~\Cref{thm:big-upper-bound-thm} yields a sampling probability upper bound of $O\left((\poly(n)2^{-n/160})^{1000q}\right)$, whenever $t$ and $q$ are $\leq a n^a$, for the distribution$~\mathsf{Strong}$. For suitably large $n$, $(\poly(n)2^{-n/160})^{1000q} \ll t^{-1000q}$, concluding the proof by contradiction.
\end{proof}

We have shown in~\Cref{thm:qcma-lower-bound} that for every sufficiently large instance size $n$, there exists a property testing problem about size $n$ oracles such that there exists an $n$-qubit quantum witness for the problem verifiable by an efficient quantum algorithm. Still, there does not exist any $t = t(n)$ time quantum algorithm accepting $q = q(n)$-length classical witnesses for the problem, where both $t$ and $q$ are polynomial in $n$. This is technically not yet an oracle separation between $\QCMA$ and $\QMA$.

\paragraph{The oracle separation language}
Let $\fn{S,U}{\bits^*}{\bits}$ be oracles and let $S_n,U_n$ be the restriction  to $n$-bit inputs. We define the \emph{unary} language $\Ll^{S,U}$ as follows:
\begin{xalign}
    &1^n \in \Ll^{S,U}~\iff \fn{S_n,U_n}{\bits^n}{\bits}~\text{are}~\ge \frac{59}{100}\text{-spectrally Forrelated},\\
    &1^n \notin \Ll^{S,U}~\iff \fn{S_n,U_n}{\bits^n}{\bits}~\text{are}~\le \frac{57}{100}\text{-spectrally Forrelated}, \\
    &\text{and,}~\text{all other strings~}\in\bits^*~~\text{are}~\notin \Ll^{S,U}.
\end{xalign}
This language will be what we use to prove~\Cref{thm:main}.  We note that $1^{n}$ being in the language is determined by the oracle of size $n+1$ bits, as the pair of oracles $(S_n, U_n)$ is an $n+1$ bit oracle. Roughly speaking, we will choose the oracles $S,U$ such that the $k$-th polynomial-time uniform quantum query algorithm in an enumeration of quantum query algorithms will incorrectly identify membership of $1^{n_k}$ in $\Ll$ of an appropriately chosen integer $n_k$. \\

 As one might suspect, we will use~\Cref{thm:qcma-lower-bound} to diagonalize against all polynomial-sized classical witness oracle algorithms. The challenge is that the algorithms can query the oracle at any length---intuitively, on input $1^n$, querying at lengths $\neq n$ will not help the algorithm. However, formalizing this intuition is tedious. We do this in the following proof by identifying a family of appropriately chosen integers $n_1 < n_2 < \ldots$ such that the algorithm on input of length $n_k$ will not query at lengths $\geq n_{k+1}$. This will be enough to inductively select the definition of the oracle at size $n_k$  (based on the choices of the oracle at all sizes $< n_k$) to complete the diagonalization.

\begin{proof}[Proof of~\Cref{thm:main}]
    For any choice of oracles $S,U$ such that each restriction to size $n$ inputs encodes either a $\geq 59/100$ or $\leq 57/100$ instance of spectral Forrelation, the containment of $\Ll^{S,U}$ in $\QMA^{S,U}$ follows from \Cref{thm:spectral_forrelation_qma_containment}. Now we prove the lower bound for $\QCMA$ algorithms. \\

     Let $M_1, M_2, \ldots$ be an enumeration of all possible Turing machines. Second, identify any surjective function $\iota: \NN \twoheadrightarrow \NN^2$ and define functions $j, a: \NN \rightarrow \NN$ by $(j(k), a(k)) = \iota(k)$. We will use this surjection to diagonalize against all possible polynomial-time quantum algorithms to prove a $\QCMA$-lower bound. \\

     Third, define a function $F: \NN \rightarrow \NN$ by $F(a)$ is the minimum value such that for all $n \geq F(a)$, any $t(n) = a n^a$ query algorithm with $q(n) = a n^a$ length classical witness must misclassify some pair $(S,U)$ of size $n$. By~\Cref{thm:qcma-lower-bound}, for every integer $a$, $F(a)$ is well-defined. In other words, $F(a)$ is the first integer $n$ for which we can guarantee that an $a n^a$ query and $a n^a$ witness restricted algorithm must misclassify some pair $(S,U)$ of size $n$. \\

     Fourth, we identify integers $n_1, n_2, \ldots$ where the oracles will be defined to be non-zero. Define
    \begin{xalign}
        n_1 &\defeq 1 + F(a(1)), \\
        \forall~k>1,~n_k &\defeq 1 + \max \qty{ F(a(k)), ~a(k-1) \qty(n_{k-1})^{a(k-1)}}.
    \end{xalign}
    We now define the oracles $S,U$ by defining the oracles at each length. For any $n \in \NN \setminus \{n_1, n_2, \ldots\}$, let both oracles $S_n$ and $U_n$ equal $0$ everywhere. Then the spectral-Forrelation problem defined by the $n$-th pair of oracles is trivially a no instance for $n \in \NN \setminus \{n_1, n_2, \ldots\}$. We now go through and define the oracles at sizes remaining sizes: $n_1, n_2, \ldots,$. \\

     For each $k = 1, 2, 3, \ldots$, run Turing machine $M_{j(k)}$ on input $1^{n_k}$ for $a(k) n_k^{a(k)}$ steps and interpret its output as a quantum query circuit $\Aa_{n_k}$ which takes as input a classical witness. By adding the halting conditions to the Turing machine, we have implicitly enforced that the witness length and total number of gates (elementary or oracle) are at most $a(k) n_k^{a(k)}$. Furthermore, by construction, the largest input queryable by this algorithm is size $a(k) n_k^{a(k)}$. This is a standard diagonalization trick to ensure that every polynomial-time and polynomial-query algorithm is considered and that no super-polynomial parameterized algorithms are accidentally considered. \\

     Next, we take this algorithm $\Aa_{n_k}$ and we build from it a query algorithm $\Bb_{n_k}$ that only queries oracles of input length $n_k$.  To construct $\Bb_{n_k}$, take the algorithm $\Aa_{n_k}$ which can make oracle queries of varying sizes and for every query it makes of length $< n_k$, use the previously generated definitions of the oracles $S,U$ and hardcode these answers. This is well defined as we are defining the oracles $S,U$ for progressively larger input sizes. For queries it makes of length $> n_k$, replace the oracle gates with identity circuits. The resulting circuit will be $\Bb_{n_k}$, which only makes queries of length $= n_k$. This new algorithm $\Bb_{n_k}$ can be used to derive a pair $(S_{n_k}, U_{n_k})$ by applying~\Cref{thm:qcma-lower-bound} on $\Bb_{n_k}$ to generate the pair $(S_{n_k}, U_{n_k})$. \\

     This completes the construction of the oracles $S,U$ everywhere. It remains to prove that no $\QCMA^{S,U}$ algorithm exists. Assume, for contradiction, there exists a $\P$-uniform family of quantum oracle algorithms $\{\Aa_n\}$ that solve spectral Forrelation for a witness of length $q(n) =\poly(n)$ with $t(n) = \poly(n)$ queries. Then, the family appears in the Turing machine enumeration as some $M_{j^\star}$ and there exists some $a^\star$ such that $t(n),q(n) \leq a^\star n^{a^\star}$. As $\iota$ is a surjection, there exists a $k^\star$ such that $\iota(k^\star) = (j^\star, a^\star)$. We now prove that this algorithm will misclassify the string $1^{n_{k^\star}}$, thereby proving that it does not solve spectral Forrelation. \\

     To prove this, let $\Aa_{n_{k^\star}}$ be the quantum circuit for inputs of length $n_{k^\star}$. Observe that since the oracles are defined as $= 0$ for inputs $\notin \{n_1, n_2, \ldots\}$, and the fact that $n_{k^\star+1} > a^\star n_{k^\star}^{a^\star}$ (by construction), each query gate for inputs of length $> n_{k^\star}$ is effectively an identity gate as it only makes queries on inputs of length at most $a^\star n_{k^\star}^{a^\star}$. Therefore, by hardcoding the behavior on input sizes $< n_{k^\star}$, the query algorithm $\Bb_{n_{k^\star}}$ (previously defined) has the exact same output as $\Aa_{n_{k^\star}}$ on inputs of size $n_{k^\star}$. However, using~\Cref{thm:qcma-lower-bound}, we specifically constructed a pair $(S_{n_k}, U_{n_k})$ that $\Bb_{n_{k^\star}}$ will misclassify. Therefore, the family $\{\Aa_n\}$ will answer incorrectly on input $1^{n_{k^\star}}$, completing the proof.

\end{proof}

\section{Concluding remarks}
\label{sec:concluding-remarks}
Having concluded the proof, we take a moment to reflect on the path that led us here. In particular, we discuss the role of the bosonic purification of the oracle, its conceptual advantages, and the technical challenges that arose in this formulation.

\paragraph{Zhandry's observation.}
Zhandry's observation~\cite{zhandry2024toward} concerned the \emph{use-once} nature of quantum witnesses versus the \emph{reusability} of classical witnesses. This distinction allowed him to construct a sampler that, given oracle access to $S$, could produce multiple distinct samples from $S$, contingent on a technical conjecture. 

The sampler perspective was appealing, as it delineated the boundary between what a classical witness can explicitly list about $S$ and what additional structure must be inferred by the verifier through oracle queries. However, the argument hinged on a conjecture asserting that queries to the unitary oracle $U$ were \emph{computationally} indistinguishable from random---a conjecture which ultimately fails. While Zhandry’s setting involved the quantum Fourier transform and ours employs a related but distinct Fourier-analytic framework, the key issue was the same: queries to $U$ are \emph{not} random, and an algorithm could detect that they are far from uniformly random. The saving observation is that their behavior is amenable to precise analysis through Fourier and, ultimately, bosonic tools. This is made precise by our analysis in the bosonic framework, which exactly characterizes the action of queries to $U$. Our proof can circumvent Zhandry's issue as it proves a sampling probability upper bound based on the bosonic characterization. In some sense, our proof suggests that the $\QMA$ vs. $\QCMA$ separation problem is not as related to pseudorandomness as prior results~\cite{liu2024qma,zhandry2024toward} have suggested.

Recognizing this failure was serendipitous. It led us to the insight that one could instead adapt the sampler proof to operate solely through access to $U$, while still producing valid samples from $S$. This realization shifted our focus entirely to understanding the detailed structure of oracle queries in the $U$-picture.

\paragraph{Strong sampling upper bounds} Recall that the sampling probability lower bound we derive is incredibly small. Its scaling is roughly $2^{-q} \cdot \Omega(t)^{-2v}$ for $q$, the length of the witness, $t$ the number of queries, and $v$ the number of samples produced. However, as long as $t$ is polynomial, this bound is only quasi-exponentially small. Therefore, a contradicting upper bound should be even smaller in order to contradict the existence of a $\QCMA$ algorithm in all polynomial parameter regimes. This stringent requirement means that it does not suffice to approximate the behavior of the state after each query or guess. In particular, what's important to study is the post-selected state of the first guesses $z_1, \ldots, z_{k}$ being correct. Each post-selection is conditioned on an event of exponentially small probability, and therefore, the post-selection state will be very far from the original state. 

In particular, we spent considerable time chasing the idea that the post-measurement state after the first $k$ guesses have been verified as correct will almost certainly be supported on Fock states with total momentum 0. Total momentum 0, by Noether's theorem, implies that any position guess will only succeed with negligible probability. While one can show using the $\tilde{\H}_y$ formalism that the first guess will be on a total momentum 0 state, for future states this can only be guaranteed up to negligible additive error; the issue is that this error compounds, leading to a trivial probability upper bound. This issue forced us to construct a sampling upper bound technique that constructed a probability upper bound for all $v$ samples at once using the quasi-even condensate structure.

\paragraph{The bosonic perspective.}
Once we accepted that the heart of the matter was to analyze queries to $U$, we turned to a wide array of standard quantum query techniques---compressed oracles, polynomial methods, adversary arguments---only to find each approach mired in technical complexity. The crux of the difficulty was purification: expressing $S$ and $U$ as parts of a single coherent quantum system. 

In compressed oracle techniques, one would ideally purify $S$ via independent amplitudes, e.g., sampling each element’s inclusion in $S$ with probability $\ell/2^n$. Yet this fails spectacularly for our setting, since each evaluation of $U$ depends non-locally on \emph{all} elements of $S$. The natural purification $(\cos\theta\ket{0} + \sin\theta\ket{1})^{\otimes 2^n}$ for $\sin^2\theta \approx \ell/2^n$ simply cannot encode this dependency. We needed a new formalism.

After much frustration with indices and combinatorial cases (often arising from identical versus distinct indices in sums), we found relief in the bosonic perspective. The key insight was to treat the oracle’s purification as a system of indistinguishable bosons whose spatial locations encode membership in $S$. This eliminated the combinatorial clutter of index management and replaced it with clean algebraic manipulations via creation and annihilation operators. 

Within this framework, we were able to reinterpret and re-derive earlier results—such as those of Hamoudi and Magniez~\cite{hamoudi2023quantum}—in the language of bosons, gaining both conceptual clarity and technical leverage. Moreover, the physical picture aligned beautifully with the intended semantics of the quantum witness: the prover’s ideal witness is $\ket{S}$, the uniform superposition over elements of $S$. In the bosonic purification, $S$ directly corresponds to the positions of the bosons, so the verifier’s measurement effectively “grabs, in superposition,” a uniformly random boson and records its position. This unifies the intuition that a quantum witness can represent the entire set $S$ at once, while a classical witness must enumerate its elements.

\paragraph{Challenges within the bosonic formulation.}
Despite its elegance, the bosonic formalism introduced its own difficulties. Our initial hope was that the oracle queries would assume a particularly simple and structured form in the momentum basis. Early calculations suggested that queries to $U$ could be expressed via the \emph{double momentum hopping operator} $\tilde{\H}_y$, which preserves total momentum. By analogy with Noether’s theorem~\cite{Noether1918}, we could then argue that since the total momentum remains conserved (and initially zero), the success probability of the first “guess” at a boson’s position should equal $\ell / 2^n$. This was encouraging—it suggested an inductive argument might succeed.

However, post-selection proved fatal to this line of reasoning. Conditioning on a successful guess perturbs the total momentum, breaking the conservation structure and rendering straightforward induction impossible. We therefore needed a more delicate understanding of how the oracle queries affect the system’s momentum distribution.

A central obstacle was the definition of $U$ itself: each element $y$ must be included with probability proportional to $\gamma_y^{(S)}$, the squared Hadamard amplitude of $\ket{S}$. 
A first question that needed resolving was why $\gamma_y^{(S)}$ was the correct parameter to use for sampling $U$. In particular, as previously noted, using the ``raw'' Hadamard amplitudes of $\ket{S}$ leaks the sign information in the Fourier basis, which is the crucial information for approximately synthesizing $\ket{S}$ in the Fourier basis. We suspect that our choice of $\gamma_y^{(S)}$ isn't unique, but we do not have a second example. $\gamma_y^{(S)}$ gave us enough trouble as it is.
While $\gamma_y^{(S)}$ behaves roughly like the square of a Gaussian random variable, its unbounded support complicates any attempt to interpret it as a probability in $[0,1]$. We explored several approaches—thresholding, randomized truncations, and fully quantum purifications—but most were unsatisfactory. Truncating isn't smooth, which yields difficulties in analysis, as we only knew of techniques for handling polynomials in the hopping operators. Another option is conditioning the distribution $\mathsf{Strong}$ on the $\gamma_y^{(S)}$ being bounded by say $n^{100}$. Chernoff bounds argue that this event occurs with probability only exponentially small. Unfortunately, conditioning the distribution on this event changes the initial state from a state of $\ell$ bosons in 0-momentum to a state exponentially close in additive error. However, as previously discussed, additive approximations are not strong enough to prove exponentially small probability upper bounds as the errors compound.

Ultimately, we adopted an exponential function, guided by the “flat-tail” polynomial approximations introduced by Narayanan~\cite{narayanan2024improved}. This choice was technically crucial, as it allowed us to approximate the exponential behavior using low-degree polynomials in the double momentum hopping operators.
With this analytic machinery in hand, we could begin to generalize beyond the specific oracle construction. Standard intuition from quantum uncertainty principles suggests that any state supported on a few momentum modes should yield a sampling upper bound. Yet, since the algorithm can query all $y$ simultaneously, this intuition breaks down. The key insight was that the double momentum hopping operator typically acts by moving a \emph{pair} of bosons, especially when starting from a condensate. 
A useful way to reinterpret this phenomenon is through momentum conservation. 
Each oracle query preserves total momentum, which—by analogy to translational invariance in many-body systems—acts as a global symmetry constraining how amplitude can flow between momentum modes. 
In the initial condensate, the total momentum is zero, and so any admissible process must preserve this invariance. 
This observation already implies a tight bound on the probability of sampling a single correct element from~$S$, as any non-zero-momentum component must arise from a compensating momentum elsewhere in the system. 
Extending this reasoning to many guesses required recognizing that the double-hopping operator effectively creates and annihilates \emph{pairs} of equal-and-opposite momenta—``Cooper-like'' pairs in condensed-matter language—so that quasi-even condensates retain their global symmetry across queries. In this case, since the Fourier transform is considered, a ``Cooper-like'' pair is a pair of bosons in the same mode.
It is precisely this paired structure, rather than a literal recording of queries in~$U$, that allowed us to control the growth of odd momentum modes and prove the quasi-even property.

Finally, establishing that the quasi-even condensate property persists after a polynomial number of queries required substantial new machinery. Our bounds are almost certainly suboptimal, but they suffice to complete the proof. To our knowledge, this represents one of the first quantum query lower bounds addressing an oracle with as much internal quantum structure as $U$. Consequently, the classical tools of quantum query complexity—those tailored to unstructured oracles such as Grover’s search or collision-finding—were insufficient, necessitating the new bosonic framework developed here.

\paragraph{Connections to set-size estimation and average-case $\mathsf{NP}$}

In \Cref{sec:strong_yes}, we define the $\mathsf{Strong}$ distribution, a distribution over ``yes'' instances of spectral Forrelation, and implicitly define ``no'' instances of spectral Forrelation to have $S$ with size $\leq \ell / 100$.  In addition to solving the spectral Forrelation problem, an algorithm that could distinguish between sets of size $\ell$ and size $\ell / 100$ would be able to distinguish between the particular yes and no instances we use to separate $\mathsf{QMA}$ from $\mathsf{QCMA}$.  This puts our problem in line with the previous attempts of Fefferman and Kimmel~\cite{fefferman2015quantum} and Natarajan and Nirkhe~\cite{natarajan2024distribution}, who more directly used the problem of subset size checking to separate $\mathsf{QMA}$ from $\mathsf{QCMA}$ relative to non-standard oracles.  However, \cite{fefferman2015quantum} notes that the problem of subset size checking is in $\mathsf{AM}$ due to Stockmeyer counting\footnote{Roughly speaking, in a Stockmeyer approximate counting protocol~\cite{10.1145/800061.808740}, Arthur's public key message is interpreted as the key $k$ for a hash function $h: \bits^n \to \{0,\ldots, \ell-1\}$ and Merlin is tasked with answering with $x \in S$ such that $h(x) = 0$. Merlin will be able to answer with probability $\approx 1 - \inv{\e} \ge 0.63$ when $|S| \approx \ell$ and will only be able to answer with probability $1 - \e^{-1/100} \le 0.01$ when $|S| \le \ell/100$.

As the hard instances we construct are based on set-size estimation, the $\AM$ containment holds for our construction. (Meanwhile, we do not know (nor believe) that the generic problem of spectral Forrelation is in $\AM$ as not all spectral Forrelation instances are variants of set-size estimation.) However, since the problem is in $\AM$, the problem is also in average-case $\NP$, by simply amplifying the $\AM$ protocol and then fixing the choice of randomness of Arthur. So, there will exist no instances that are falsely accepted; however, this will only occur for an $\exp(-n)$ small fraction of the no instances.}.  This further implies that the problem of set size checking (and the related problem of distinguishing between our choice of yes and no instances) is in average-case $\mathsf{NP}$, because by fixing the randomness of Stockmeyer counting to some optimal string, a deterministic polynomial-time classical verifier distinguishes between small and large sets for most small and large sets.  

On the surface, this seems to imply a problem with our proof technique, since \Cref{thm:big-upper-bound-thm} is an average-case statement over the $\mathsf{Strong}$ distribution, and there is an average-case $\mathsf{NP}$ verifier for that distribution.  We note that the prior works do not run into this problem as they directly apply the adversary method and diagonalization to the problem of subset size checking (given an in-place permutation oracle), and thus do not talk about a fixed distribution over yes and no instances of the subset size checking problem.  

The resolution for this paper is the following subtle point: while ~\Cref{thm:big-upper-bound-thm} is an average case statement, the construction of the sampler from \Cref{thm:formal-removal-of-S} requires that the $\QCMA$ distinguishing algorithm is able to distinguish between \emph{all} yes and no instances.  Even if the average-case $\mathsf{NP}$ verifier correctly outputs ``no'' on a $1 - 2^{-n}$ fraction of small sets, this will not be enough to get a successful sampler.  To see this, we note that the successful sampler that we get from a $\mathsf{QCMA}$ verifier only samples $v$ points with probability $\left(\frac{1}{36t^2}\right)^{v}$ -- an incredibly small probability. This is consistent with the verifier only outputting $v$ many points from that fixed small set $\Delta$ that is misclassified by the average-case $\mathsf{NP}$ verifier. The fact that our reduction to a sampler \textit{requires} a worst-case distinguishing algorithm for our version of the set size checking problem (where the verifier is additionally given access to $\mathsf{U}$) is the reason we avoid this problem, but also requires us to use a more elaborate diagonalization argument (than that of Aaronson and Kuperberg~\cite{aaronson2007quantum}) in to prove \Cref{thm:main}, as our property-testing contradiction only proves the existence of one misclassified instance per input size $n$.

\ifstoc
\else
\section{Acknowledgments}

We thank James Bartusek, Andrea Coladangelo, Uma Girish, Andrew Huang, Jonas Helsen, William Kretschmer, Anand Natarajan, Barak Nehoran, Fermi Ma, Er-cheng Tang, Umesh Vazirani, and Henry Yuen for helpful discussions. We additionally thank Fermi Ma and Andrew Huang for suggesting comments on an early draft of the result that greatly improved the presentation. We thank Nicholas Kocurek and Joe Slote for assistance in creating figures. 

JB is supported by Henry Yuen's AFORS award FA9550-21-1-036 and NSF CAREER award CCF2144219. JH acknowledges funding from the Harvard Quantum Initiative postdoctoral fellowship. This work was partially completed while all authors were participants in the \textit{Simons Institute for the Theory of Computing}~Summer Clustering on Quantum Computing.
\fi

\bibliographystyle{alpha}
\bibliography{supplementary-files/refs}

\appendix

\end{document}